\newcommand{\floor}[1]{\lfloor #1 \rfloor}
\newcommand*\circled[2][1.6]{\tikz[baseline=(char.base)]{
    \node[shape=circle, draw, inner sep=1pt, 
        minimum height={\f@size*#1},] (char) {\vphantom{WAH1g}#2};}}
\newlist{legal}{enumerate}{10}
\setlist[legal]{label*=\arabic*.}
\newtheoremstyle{mystyle1}
  {.5pt}
  {.5pt}
  {}
  {}
  {\bfseries}
  {.}
  { }
  {\thmname{#1}\thmnumber{ #2}\thmnote{ (#3)}}
\theoremstyle{mystyle1}
\newtheorem{defn}{Definition}
\newtheorem{rem}{Remark}
\newtheorem{exmp}{Example}
\newtheoremstyle{mystyle}
  {.5pt}
  {.5pt}
  {\itshape}
  {}
  {\bfseries}
  {.}
  { }
  {\thmname{#1}\thmnumber{ #2}\thmnote{ (#3)}}
 \theoremstyle{mystyle}
\newtheorem{thm}{Theorem}
\newtheorem{lem}{Lemma}
\newtheorem{prop}{Proposition}
\newtheorem{conj}{Conjecture}
\setlist[legal]{label*=\arabic*.}
\theoremstyle{mystyle1}
\begin{document}

\title{An Update-based Maximum Column Distance Coding Scheme for Index Coding}

\author{\IEEEauthorblockN{Arman Sharififar, Neda Aboutorab, Parastoo Sadeghi}
\\
\IEEEauthorblockA{\textit{\normalsize School of Engineering and Information Technology}, \\
\textit{\normalsize	University of New South Wales, Australia} \\
\normalsize	Email:\{a.sharififar, n.aboutorab,  p.sadeghi\}@unsw.edu.au
}
}

\maketitle
\vspace{1mm}

\begin{abstract}

In this paper, we propose a new scalar linear coding scheme for the index coding problem called update-based maximum column distance (UMCD) coding scheme. The central idea in each transmission is to code messages such that one of the receivers with the minimum size of side information is instantaneously eliminated from unsatisfied receivers. One main contribution of the paper is to prove that the other satisfied receivers can be identified after each transmission, using a polynomial-time algorithm solving the well-known maximum cardinality matching problem in graph theory. This leads to determining the total number of transmissions without knowing the coding coefficients. Once this number and what messages to transmit in each round are found, we then propose a method to determine all coding coefficients from a sufficiently large finite field. We provide concrete instances where the proposed UMCD coding scheme has a better broadcast performance compared to the most efficient existing coding schemes, including the recursive scheme (Arbabjolfaei and Kim, 2014) and the interlinked-cycle cover (ICC) scheme (Thapa  \textit{et al.}, 2017). We prove that the proposed UMCD coding scheme performs at least as well as the MDS coding scheme in terms of broadcast rate.
By characterizing two classes of index coding instances, we show that the gap between the broadcast rates of the recursive and ICC schemes and the UMCD scheme grows linearly with the number of messages. Then, we extend the UMCD coding scheme to its vector version by applying it as a basic coding block to solve the subinstances.\footnote{Preliminary results of this paper are presented, in part, in \cite{Sharififar2021UMCDconference}.}
\end{abstract}

\begin{IEEEkeywords}
Index coding, MDS codes, update-based index coding scheme, broadcast with side information.
\end{IEEEkeywords}

\section{introduction}
\IEEEPARstart{I}{ndex} coding problem (introduced by Birk and Kol \cite{Birk1998}) models an efficient communication system where a single server broadcasts a set of $m$ messages via a noiseless channel to multiple receivers, each demanding a specific message while they may know some other messages a priori as their side information. Exploiting the side information of the receivers, the server can reduce the number of transmissions to satisfy all the receivers by sending coded messages rather than uncoded transmissions. 
Take a simple instance of index coding problem, depicted in Figure \ref{fig:toy-exm} as an example, where the server wishes to satisfy the three receivers. While a trivial solution is to send each message one-by-one (uncoded scheme) in a total of three transmissions, by taking advantage of the receivers' side information and sending two coded messages $x_1+x_2$ and $x_1+x_3$, each receiver is able to decode its desired message. For this instance, one transmission is saved thanks to both the receivers' side information and encoding the messages at the server. 
The main objective of index coding problem is to design efficient coding schemes for any arbitrary index-coding instance so as to minimize the overall number of transmissions, which is still an open problem.

\begin{figure}
    \centering
    \includegraphics [scale=.45]{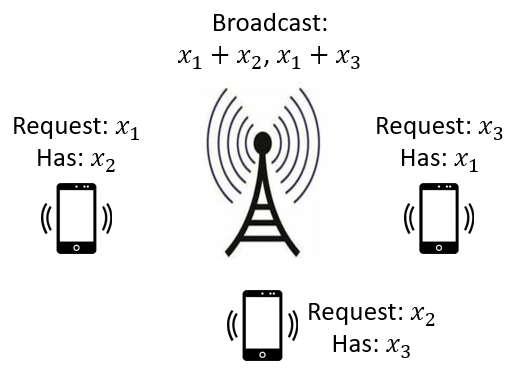}
    \caption{A simple index coding instance which shows that encoding the messages can take advantage of the receivers' side information to save one transmission.}
    \label{fig:toy-exm}
\end{figure}

Index coding problem has so far been extensively studied in the literature and various coding schemes have been proposed. However, to the best of our knowledge, the proposed coding scheme in this paper is the only scheme that considers updating the problem at each step of transmission for the index coding problem. This brings about several advantages. First, it can lead to a lower broadcast rate as will be illustrated through several instances in this paper. Second, this could reduce the average decoding delay due to both the lower broadcast rate and satisfying at least one receiver instantaneously at each step of transmission. Third, in a dynamic system where some receivers may be added to or removed from the system, the update-based scheme can be adapted so that it can deal with the updated system.

The index codes can broadly be categorized into linear and nonlinear codes. Although it has been shown that optimal linear coding can be outperformed by nonlinear codes \cite{Dougherty2005,sharififar2021broadcast, Arman-4-2, Arman-4-3}, owing to its simple and straightforward encoding and decoding processes, it has attracted considerable attention in the literature. While in scalar linear coding each specific message is considered as one variable, so that it is encoded using only one function, in vector linear coding each message can be decomposed into submessages (multi-variables), where each submessage can be encoded by different functions. This can lead to a lower broadcast rate for many index coding instances, but at the cost of increasing the computational complexity.

The existing structured coding schemes can be classified into the following five categories.
\begin{itemize}[leftmargin=*]
    \item Minrank coding scheme \cite{Maleki2014}: The minrank scheme is a combinatorial optimization problem, where its solution can give the optimal linear code over a predetermined finite field size for any instance of index coding problem. However, this comes with two main drawbacks. First, this predetermined field size can lead to a linear code where its rate is far from the optimal linear rate, which is achievable over another field size. In fact, it has been shown in \cite{Lubetzky2009} that for any field size, there is an explicit way of constructing index coding instances, where the minrank can have a much better performance over another selected field size. Second, the computational complexity of the minrank scheme, especially for the vector linear coding is considerably high and intractable. This is why various other methods have so far been introduced for designing lower-complexity index coding schemes.
    \item Composite coding scheme \cite{Arbabjolfaeicomposite,liu2018three,Liu2020}: Inspired by the random coding idea in information theory, in the composite coding scheme, every subset of the message set is randomly mapped to a new composite index set. While the scheme can be used to characterize an achievable rate region, it cannot be used to construct any code due to its random coding nature.
    \item Interference alignment coding scheme \cite{Maleki2014}: Inspired by the methods for solving the interference problem in wireless channels, the interference alignment coding scheme was proposed in the context of index coding. This scheme aims to compress the linear space spanned by the interference messages of each receiver as far as all the desired messages are still decodable. This resulted in two main techniques, namely one-to-one alignment and subspace alignment only for index coding instances with certain structures. However, a systematic method is not available for solving any arbitrary instance of the index coding problem.
    \item Graph-based coding schemes: Since any instance of the index coding problem can be represented as a directed graph, well-known techniques in graph theory have been employed to provide different coding schemes, including the cycle cover \cite{Chaudhry} and the clique cover \cite{Birk2006} schemes. The interlinked-cycle cover (ICC) scheme  was proposed in \cite{Thapa2017}, which includes the cycle and clique cover schemes as a special case. 
    \item MDS-based coding schemes: In the partial clique cover (PCC) scheme \cite{Birk1998}, the index coding instance is first partitioned into subinstances, then each subinstance is solved using the maximum distance separable (MDS) scheme. The vector version of the (PCC) scheme, namely the fractional partial clique cover (FPCC) scheme \cite{Yu2014}, can be achieved by time-sharing over the solution of each subinstance, which results in a lower broadcast rate for many instances. The recursive scheme \cite{Arbabjolfaei2014} is an extension of the FPCC algorithm, in which the local rate of the MDS code is recursively calculated for each subinstance at each stage, which strictly improves upon the FPCC scheme.
\end{itemize} 
In this paper, the index coding problem is approached from a different perspective. In the beginning, the receivers are sorted based on the size of their side information. Then, in each transmission, a linear combination of the messages is designed to instantaneously satisfy one of the receivers with the minimum size of side information. Then, the problem is updated by eliminating all receivers who are able to decode their requested message from the coded messages received so far along with the messages in their side information. This process is repeated until all receivers can successfully decode their requested message.

To design an update-based scheme, the following two questions must be addressed at each step of the transmission. How to design the coding coefficients of the messages? And how to determine whether a receiver can decode its requested message from the information available to it?
\\
Similar to the MDS-based index coding schemes, the proposed UMCD  code can be employed as a modular code for solving subinstances.
We know that in the encoding matrix of the MDS coding scheme with size $k\times m$ where $k\leq m$, each square submatrix of size $k\times k$ is full-rank. Inspired by this idea, 
for the former question, the coefficients of the messages in the UMCD coding scheme are designed so that in the encoding matrix, each column is linearly independent from the space spanned by any other columns to the extent possible. 
For the latter question, this linear independence property is used to prove that the problem of identifying the receivers who are able to decode their requested message at each stage of transmission is equivalent to a well-known problem in graph theory called the maximum cardinality matching (MCM) problem, which can be solved in polynomial-time using the Hopcroft-Karp algorithm \cite{Hopcroft1971}. This leads to determining the broadcast rate of the proposed UMCD coding scheme independent of knowing the exact coefficients of the encoding matrix. Once the broadcast rate of the UMCD coding scheme is found by the MCM algorithm in the first phase, in the next phase, the maximum column distance (MCD) algorithm is proposed to design the coefficients of the encoding matrix from a sufficiently large finite field, such that each subset of its columns will be linearly independent as much as possible (the complexity of the MCD algorithm is in general exponential). That is why our proposed coding scheme is called update-based maximum column distance (UMCD) scheme.

The MCM algorithm is significantly more efficient than the MCD algorithm in terms of computational complexity. Thus, separating the UMCD coding scheme into two phases, (i) finding its broadcast rate using the MCM algorithm and (ii) generating its encoding matrix using the MCD algorithm, reduces the complexity of the UMCD coding scheme. This complexity reduction will be more notable for the vector version of the UMCD coding scheme. This is because, first, the optimal solution of the vector version of the UMCD scheme  will be achieved only by the broadcast rate (not the encoding matrix) of the UMCD coding scheme for subinstances, and second, the broadcast rate of the UMCD coding scheme is obtained independently of using the MCD algorithm (see also Remark \ref{rem:complexity-extention}).

\subsection{Our Contributions}
\begin{enumerate} [leftmargin=*]
    \item We propose a new linear coding scheme, namely the UMCD  coding scheme in Algorithm \ref{alg:UMCD}. The UMCD coding scheme consists of two parts:
    \begin{itemize}
        \item  First, its (achievable) broadcast rate is determined using the MCM algorithm in polynomial-time.
        \item Second, its encoding matrix (code) is constructed using the proposed MCD algorithm, where its complexity is in general exponential.
    \end{itemize}
     We provide concrete instances where the proposed UMCD coding scheme outperforms the recursive and ICC coding schemes.
    \item We show the satisfied receivers in each transmission can be identified using a polynomial-time algorithm solving maximum cardinality matching (MCM) problem without any knowledge of the coding coefficients. This requires each column of the encoding matrix to be linearly independent of the space spanned by other columns as much as possible.
    \item In Algorithm \ref{alg:MCD}, which we call the maximum column distance (MCD) algorithm, we propose a new deterministic method to generate the elements of the encoding matrix so that it meets the aforementioned linear independence requirement.
    \item We prove that the broadcast rate of the proposed UMCD coding scheme is no larger than the MDS coding scheme.
    \item We characterize a class of index coding instances for which the gap between the broadcast rates of the recursive coding scheme and the proposed UMCD coding scheme grows linearly with the number of messages.
    \item We characterize a class of index coding instances for which the gap between the broadcast rates of the ICC coding scheme and the proposed UMCD coding scheme grows linearly with the number of messages.
    \item We extend the UMCD coding scheme to its partial and fractional versions by applying time sharing over the subinstances, where each subinstance is solved using the UMCD coding scheme. This brings about the partial UMCD and fractional partial UMCD coding schemes, which strictly improve upon the PCC and FPCC coding schemes, respectively. The fractional partial UMCD coding scheme is optimal for all index coding instances with up to and including five messages.
\end{enumerate}

\subsection{Organization of the Paper}
The rest of this paper is organized as follows. Section \ref{sec:02} provides a brief overview of the system model, relevant background and definitions. In Section \ref{sec:03}, three index coding instances are provided to describe the motivation of this paper. In Section \ref{sec:04}, the UMCD coding scheme is proposed. Section \ref{sec:MCM-MCD} first establishes the relation between finding the satisfied receivers and the MCM problem, and then the MCD algorithm is proposed. In Section \ref{sec:07}, first we prove that the UMCD performs at least as well as the MDS code in terms of the broadcast rate. Then, we provide two classes of index coding instances for which the gap between the broadcast rates of the recursive and ICC coding schemes and the UMCD coding schemes grows linearly with the number of messages. In Section \ref{sec:09}, the UMCD coding scheme is extended to its vector version. Finally, Section \ref{sec:10} concludes the paper.

\section{System Model and Background} \label{sec:02}

\subsection{Notation}
Small letters such as $n$  denote an integer number where  $[n]\triangleq\{1,...,n\}$ and $[n:m]\triangleq\{n,n+1,\dots m\}$ for $n<m$. Capital letters such as $L$ denote a set whose cardinality is denoted by $|L|$ and power set is denoted by $\mathcal{P}(A)$. Symbols in bold face such as $\boldsymbol{l}$ and $\boldsymbol{L}$ denote a vector and a matrix, respectively, with $\boldsymbol{L}^T$ denoting the transpose of matrix $\boldsymbol{L}$. A calligraphic symbol such as $\mathcal{L}$ is used to denote a set whose elements are sets.\\
We use $\mathbb{F}_q$ to denote a finite field of size $q$ and write $\mathbb{F}_{q}^{n\times m}$ to denote the vector space of all $n\times m$ matrices over the field $\mathbb{F}_{q}$.
Given a matrix $\boldsymbol{L}\in \mathbb{F}_{q}^{n\times m}$ with elements $l_{i,j}\in \mathbb{F}_q$, we use $\boldsymbol{L}_{S}, S\subseteq[n]$ to represent the $|S|\times m$ submatrix of $\boldsymbol{L}$ comprised of the rows of $\boldsymbol{L}$ indexed by $S$.
\subsection{System Model}
Consider a broadcast communication system in which a server transmits a set of $mt$ messages $X=\{x_{i}^{j},\ i\in[m],\ j\in [t]\},\ x_{i}^{j}\in \mathbb{F}_q$, to a number of receivers $U=\{u_i,\ i\in[m]\}$ through a noiseless broadcast channel. Each receiver $u_i$ wishes to receive a message of length $t$, $X_i=\{x_{i}^{j},\  j\in[t]\}$ and may have a priori knowledge of a subset of the messages $S_i:=\{x_{l}^{j},\ l\in A_{i},\ j\in[t]\},\ A_{i}\subseteq[m]\backslash \{i\}$, which is referred to as its side information set. The main objective is to minimize the number of coded messages which is required to be broadcast so as to enable each receiver to decode its requested message.
An instance of the index coding problem $\mathcal{I}$ is characterized by the side information set of all receivers and can be represented as  $\mathcal{I}=\{(i|A_i), i\in[m]\}$.

\subsection{General Index Code}
\begin{defn}[Index Code]
Given an instance of the index coding problem $\mathcal{I}=\{(i|A_i), i\in[m]\}$, a $(t,r)$ index code is defined as $\mathcal{C}_{\mathcal{I}}=(\phi_{\mathcal{I}},\{\psi_{\mathcal{I}}^{i}\})$, where
 \begin{itemize}[leftmargin=*]
     \item $\phi_{\mathcal{I}}: \mathbb{F}_{q}^{mt}\rightarrow \mathbb{F}_{q}^{r}$ is the encoding function which maps the $mt$ message symbols $x_{i}^{j}\in \mathbb{F}_{q}$ to the $r$ coded messages as $Y=\{y_1,\dots,y_r\}$, where $y_k\in \mathbb{F}_{q}, k\in [r]$.
     \item $\psi_{\mathcal{I}}^{i}:$ represents the decoder function, where for each receiver $u_i, i\in[m]$, the decoder $\psi_{\mathcal{I}}^{i}: \mathbb{F}_{q}^{r}\times \mathbb{F}_{q}^{|A_i|t}\rightarrow \mathbb{F}_{q}^{t}$ maps the received $r$ coded messages $y_k\in Y, k\in[r]$ and the $|A_i|t$ messages $x_{l}^{j}\in S_i$ in the side information to the $t$ decoded symbols $\psi_{\mathcal{I}}^{i}(Y,S_i)=\{\hat{x}_{i}^{j}, j\in [t]\}$, where $\hat{x}_{i}^{j}$ is an estimate of $x_{i}^{j}$.
 \end{itemize}
\end{defn}

\begin{defn}[$\beta(\mathcal{C}_{\mathcal{I}})$: Broadcast Rate of $\mathcal{C}_{\mathcal{I}}$]
Given an instance of the index coding problem $\mathcal{I}$, the broadcast rate of a $(t,r)$ index code $\mathcal{C}_{\mathcal{I}}$ is defined as $\beta(\mathcal{C}_{\mathcal{I}})=\frac{r}{t}$.
\end{defn}

\begin{defn}[$\beta(\mathcal{I})$: Broadcast Rate of $\mathcal{I}$]
Given an instance of the index coding problem $\mathcal{I}$, the broadcast rate $\beta(\mathcal{I})$ is defined as
\begin{equation} \label{eq:opt-rate}
    \beta(\mathcal{I})=\inf_{t} \inf_{\mathcal{C}_{\mathcal{I}}} \beta(\mathcal{C}_{\mathcal{I}}).
\end{equation}
Thus, the broadcast rate of any index code $\mathcal{C}_{\mathcal{I}}$ provides an upper bound on the broadcast rate of $\mathcal{I}$, i.e., $\beta(\mathcal{I}) \leq \beta(\mathcal{C}_{\mathcal{I}})$.
\end{defn}

\begin{defn}[Index Coding Subinstance]
Given an index coding instance $\mathcal{I}=\{(i|A_i), i\in[m]\}$, for any subset $M\subseteq[m]$, we define a subinstance as $\mathcal{I}_M=\{(i|A_i\cap M), i\in M\}$. Since each subinstance $\mathcal{I}_M$ is completely characterized by $M$, for brevity, we call it subinstance $M$ in the rest of this paper.
\end{defn}

\begin{defn}[Partitioning an index coding instance into subinstances]
We say that the index coding instance $\mathcal{I}$ is partitioned into subinstances $M_1, \dots, M_n$ for some $n\leq m$, if the following conditions are met.
\begin{equation} \label{eq:partition-con}
   \left\{\begin{array}{lc}
        M_1 \cup ... \cup M_n=[m],   \\ \\
        M_{j_1} \cap M_{j_2}=\emptyset, \ \ \text{if} \ j_1 \neq j_2 \ \text{for} \ j_1,j_2 \in [n]. 
    \end{array}
    \right.
\end{equation}
\end{defn}
Note, unless we explicitly use the word `partition', the subsets $M_1, \dots, M_n$ may overlap.

\subsection{Linear Index Code}
Let $\boldsymbol{x}=[\boldsymbol{x}_{1}^{T},\dots,\boldsymbol{x}_{m}^{T}]^{{T}}\in \mathbb{F}_{q}^{mt\times 1}$ denote the message vector, where $\boldsymbol{x}_{i}=[x_{i}^{1},\dots,x_{i}^{t}]^{T}\in \mathbb{F}_{q}^{t\times 1}$ is the requested message vector by receiver $u_i, i\in[m]$.

\begin{defn}[Linear Index Code]
Given an instance of the index coding problem $\mathcal{I}=\{(i|A_i), i\in[m]\}$, a $(t,r)$ linear index code is defined as $\mathcal{C}_{\mathcal{I}}=(\boldsymbol{H},\{\psi_{\mathcal{I}}^{i}\})$, where
  \begin{itemize}[leftmargin=*]
      \item $\boldsymbol{H}: \mathbb{F}_{q}^{mt\times 1}\rightarrow \mathbb{F}_{q}^{r\times 1}$ is the $r\times mt$ encoding matrix which maps the message vector $\boldsymbol{x}\in \mathbb{F}_{q}^{mt\times 1}$  to a coded message vector $\boldsymbol{y}=[y_1,\dots,y_r]^{T}\in \mathbb{F}_{q}^{{r}\times 1}$ as follows
      \begin{equation} \nonumber
      \boldsymbol{y}=\boldsymbol{H}\boldsymbol{x}=\sum_{i\in [m]} \boldsymbol{H}^{\{i\}}\boldsymbol{x}_i.
      \end{equation}
      Here $\boldsymbol{H}^{\{i\}}\in \mathbb{F}_{q}^{r\times t}$ is the local encoding matrix of the $i$-th message vector $\boldsymbol{x}_i$ such that
      $\boldsymbol{H}=
      \left [\begin{array}{c|c|c}
        \boldsymbol{H}^{\{1\}} & \dots & \boldsymbol{H}^{\{m\}}
      \end{array}
     \right ]\in \mathbb{F}_{q}^{r\times mt}$.
     \item $\psi_{\mathcal{I}}^{i}$ represents the linear decoder function for receiver $u_{i}, i\in[m]$, where $\psi_{\mathcal{I}}^{i}(\boldsymbol{y}, S_i)$ maps the received coded message $\boldsymbol{y}$ and its side information messages $S_i$ to $\hat{\boldsymbol{x}}_{i}$, which is an estimate of the requested message vector $\boldsymbol{x}_i$.
  \end{itemize}
\end{defn}

\begin{prop} \label{prop-lineardecoding-condition}
The necessary and sufficient condition for linear decoder $\psi_{\mathcal{I}}^{i}, \forall i\in[m]$ to correctly decode the requested message vector $\boldsymbol{x}_i$ is
   \begin{equation} \label{eq:dec-cond}
       \mathrm{rank} \ \boldsymbol{H}^{\{i\}\cup B_i}= \mathrm{rank} \ \boldsymbol{H}^{B_i} + t,
   \end{equation}
where $B_i=[m]\backslash (A_i\cup \{i\})$ represents the interfering message set of receiver $u_i, i\in [m]$, and $\boldsymbol{H}^L$ denotes the matrix $\left [\begin{array}{c|c|c}
       \boldsymbol{H}^{\{l_1\}} & \dots & \boldsymbol{H}^{\{l_{|L|}\}}
     \end{array}
    \right ]$ for the given set $L=\{l_1,\dots,l_{|L|}\}$. 
\end{prop}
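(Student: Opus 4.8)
The plan is to work directly with the linear decoder for receiver $u_i$ and translate the statement ``$\psi_{\mathcal{I}}^{i}$ can correctly recover $\boldsymbol{x}_i$ from $\boldsymbol{y}$ and $S_i$'' into a rank condition on the relevant blocks of $\boldsymbol{H}$. First I would fix $i\in[m]$ and split the coded vector as
\[
\boldsymbol{y}=\boldsymbol{H}\boldsymbol{x}=\boldsymbol{H}^{\{i\}}\boldsymbol{x}_i+\sum_{l\in A_i}\boldsymbol{H}^{\{l\}}\boldsymbol{x}_l+\sum_{l\in B_i}\boldsymbol{H}^{\{l\}}\boldsymbol{x}_l .
\]
Since receiver $u_i$ knows $S_i=\{\boldsymbol{x}_l: l\in A_i\}$, it can compute the ``reduced'' received vector
\[
\boldsymbol{y}'=\boldsymbol{y}-\sum_{l\in A_i}\boldsymbol{H}^{\{l\}}\boldsymbol{x}_l=\boldsymbol{H}^{\{i\}}\boldsymbol{x}_i+\boldsymbol{H}^{B_i}\boldsymbol{x}_{B_i},
\]
where $\boldsymbol{x}_{B_i}$ stacks the interfering message vectors indexed by $B_i$. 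Thus, without loss of generality, decoding $\boldsymbol{x}_i$ from $(\boldsymbol{y},S_i)$ is equivalent to decoding $\boldsymbol{x}_i$ from $\boldsymbol{y}'$, and the only unknowns in $\boldsymbol{y}'$ are $\boldsymbol{x}_i$ and $\boldsymbol{x}_{B_i}$, which are arbitrary (and mutually independent) over $\mathbb{F}_q$.

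Next I would argue both directions of the equivalence. For \emph{sufficiency}, assume \eqref{eq:dec-cond} holds. Then the column space of $\boldsymbol{H}^{\{i\}}$ intersects the column space of $\boldsymbol{H}^{B_i}$ only in the zero vector, and moreover $\boldsymbol{H}^{\{i\}}$ itself has rank $t$ (so its $t$ columns are independent). Therefore there exists a linear functional --- concretely, a matrix $\boldsymbol{D}\in\mathbb{F}_q^{t\times r}$ obtained by choosing a left inverse on the direct-sum decomposition of the span of $\boldsymbol{H}^{\{i\}\cup B_i}$ that annihilates $\operatorname{col}(\boldsymbol{H}^{B_i})$ and acts as the identity coordinates on $\operatorname{col}(\boldsymbol{H}^{\{i\}})$ --- such that $\boldsymbol{D}\boldsymbol{H}^{\{i\}}=\boldsymbol{I}_t$ and $\boldsymbol{D}\boldsymbol{H}^{\{l\}}=\boldsymbol{0}$ for all $l\in B_i$. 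Setting $\psi_{\mathcal{I}}^{i}(\boldsymbol{y},S_i)=\boldsymbol{D}\boldsymbol{y}'$ recovers $\boldsymbol{x}_i$ exactly, for every choice of interference. For \emph{necessity}, suppose \eqref{eq:dec-cond} fails. Since in general $\operatorname{rank}\boldsymbol{H}^{\{i\}\cup B_i}\le \operatorname{rank}\boldsymbol{H}^{B_i}+t$, failure means the rank is strictly less, i.e. there is a nonzero vector in $\operatorname{col}(\boldsymbol{H}^{\{i\}})\cap\operatorname{col}(\boldsymbol{H}^{B_i})$, or equivalently a choice $\boldsymbol{x}_i=\boldsymbol{a}\neq\boldsymbol{0}$ and $\boldsymbol{x}_{B_i}=\boldsymbol{b}$ with $\boldsymbol{H}^{\{i\}}\boldsymbol{a}=\boldsymbol{H}^{B_i}\boldsymbol{b}$; hence $\boldsymbol{H}^{\{i\}}\boldsymbol{a}+\boldsymbol{H}^{B_i}(-\boldsymbol{b})=\boldsymbol{0}$. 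Then the two message configurations $(\boldsymbol{x}_i,\boldsymbol{x}_{B_i})=(\boldsymbol{0},\boldsymbol{0})$ and $(\boldsymbol{a},-\boldsymbol{b})$ (with the $A_i$-part fixed arbitrarily) produce the \emph{same} $\boldsymbol{y}'$ and hence the same $\boldsymbol{y}$ and side information, so no decoder can distinguish $\boldsymbol{x}_i=\boldsymbol{0}$ from $\boldsymbol{x}_i=\boldsymbol{a}\neq\boldsymbol{0}$; decoding fails.

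The only mildly delicate point --- and the step I would spend the most care on --- is the sufficiency direction: constructing the decoding matrix $\boldsymbol{D}$ cleanly from the rank hypothesis. The cleanest route is linear-algebraic: \eqref{eq:dec-cond} says precisely that $\operatorname{col}(\boldsymbol{H}^{\{i\}})$ and $\operatorname{col}(\boldsymbol{H}^{B_i})$ form an internal direct sum with $\dim\operatorname{col}(\boldsymbol{H}^{\{i\}})=t$; pick any complement of this direct sum inside $\mathbb{F}_q^r$, define $\boldsymbol{D}$ as the projection onto $\operatorname{col}(\boldsymbol{H}^{\{i\}})$ along $\operatorname{col}(\boldsymbol{H}^{B_i})\oplus(\text{complement})$, followed by the coordinate isomorphism $\operatorname{col}(\boldsymbol{H}^{\{i\}})\cong\mathbb{F}_q^t$ induced by the (injective) map $\boldsymbol{H}^{\{i\}}$. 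One then verifies $\boldsymbol{D}\boldsymbol{y}'=\boldsymbol{x}_i$ identically in $\boldsymbol{x}_{B_i}$. I would also remark that the inequality $\operatorname{rank}\boldsymbol{H}^{\{i\}\cup B_i}\le\operatorname{rank}\boldsymbol{H}^{B_i}+t$ always holds because adjoining the $t$ columns of $\boldsymbol{H}^{\{i\}}$ can increase the rank by at most $t$, which is why the decodability condition is naturally an equality rather than a two-sided bound.
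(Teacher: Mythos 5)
Your proof is correct, and it departs from the paper's argument in both directions, most notably in the converse. For sufficiency, the paper also cancels the side-information contribution, but then it chooses a subset $C_i\subseteq B_i$ indexing a column basis of $\boldsymbol{H}^{B_i}$, rewrites the interference as combined messages over $C_i$, and argues that the columns of $\boldsymbol{H}^{\{i\}\cup C_i}$ are linearly independent so that $\boldsymbol{x}_i$ (and the combined interference) can be solved for; your zero-forcing matrix $\boldsymbol{D}$ with $\boldsymbol{D}\boldsymbol{H}^{\{i\}}=\boldsymbol{I}_t$ and $\boldsymbol{D}\boldsymbol{H}^{\{l\}}=\boldsymbol{0}$, $l\in B_i$, reaches the same conclusion via a direct-sum/projection argument and sidesteps the paper's implicit alignment of the interference basis with whole $t$-column blocks (its rank $c_i t$ bookkeeping). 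For necessity, the paper invokes the polymatroidal bound $f(\{i\}\cup B_i)\geq f(\{i\})+f(B_i)$ with $f(L)=\mathrm{rank}\,\boldsymbol{H}^{L}$, combined with the trivial upper bound and the standing requirement $\mathrm{rank}\,\boldsymbol{H}^{\{i\}}=t$; you instead give an elementary confusability argument exhibiting two message vectors that yield the same $\boldsymbol{y}$ and the same side information but different $\boldsymbol{x}_i$, which is self-contained, rules out any decoder (not just linear ones), and automatically covers the degenerate case $\mathrm{rank}\,\boldsymbol{H}^{\{i\}}<t$ rather than assuming it away. One small wording issue: strict failure of \eqref{eq:dec-cond} is not literally equivalent to a nonzero vector in $\operatorname{col}(\boldsymbol{H}^{\{i\}})\cap\operatorname{col}(\boldsymbol{H}^{B_i})$ (take $\boldsymbol{H}^{\{i\}}$ with dependent columns and $\boldsymbol{b}=\boldsymbol{0}$, so the common value is the zero vector), but your restatement that there exist $\boldsymbol{a}\neq\boldsymbol{0}$ and $\boldsymbol{b}$ with $\boldsymbol{H}^{\{i\}}\boldsymbol{a}=\boldsymbol{H}^{B_i}\boldsymbol{b}$ is exactly the right equivalence and is what your argument actually uses, so nothing breaks.
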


\begin{proof}
Refer to Appendix \ref{proof:prop-lineardecoding-condition}.  
\end{proof}

\begin{defn}[Scalar and Vector Linear Index Code \cite{Arbabjolfaei2018}]
The linear index code $\mathcal{C}_{\mathcal{I}}$ is said to be scalar if $t=1$. Otherwise, it is called a vector (or fractional) code. For scalar codes, we use $x_i=x_{i}^{1}, i\in [m]$, for simplicity.
\end{defn}

Since the proposed UMCD coding scheme is a scalar scheme, throughout the paper until Section \ref{sec:09}, where we extend the UMCD to its vector version, we assume that $t=1$. Thus, for any matrix such as $\boldsymbol{H}\in \mathbb{F}_{q}^{r\times m}, r\leq m$, we use $\boldsymbol{H}^{L}, L\subseteq [m]$ to denote the $r\times |L|$ submatrix of $\boldsymbol{H}$ comprised of the columns indexed by $L$. 
\\
Note, by setting $t=1$, we have $\beta(\mathcal{C}_{\mathcal{I}})=r$. Since sending the messages uncoded with rate $r=m$ is a linear index code for any index coding instance, we always have $r\leq m$.

\begin{defn}[$\beta_{\text{MDS}}(\mathcal{I})$ MDS Broadcast Rate for $\mathcal{I}$]
Given an index coding instance $\mathcal{I}=\{(i|A_i), i\in[m]\}$, let $|A|_{\text{min}}=\min_{i\in[m]} A_i$ denote the minimum size of side information. Then, the broadcast rate of the MDS coding scheme is $\beta_{\text{MDS}}(\mathcal{I})=m-|A|_{\text{min}}$.
\end{defn}

A brief overview of the PCC, FPCC, recursive and ICC coding schemes is provided in Appendix \ref{FirstAppendix}. Throughout the paper, $\beta_{\text{PCC}}(\mathcal{I})$, $\beta_{\text{FPCC}}(\mathcal{I})$, $\beta_{\text{R}}(\mathcal{I})$ and $\beta_{\text{ICC}}(\mathcal{I})$, respectively, denote the broadcast rate of the PCC, FPCC, recursive and ICC schemes. The broadcast rate of the proposed UMCD coding scheme in this paper is denoted by $\beta_{\text{UMCD}}(\mathcal{I})$.

\subsection{Graph Definitions}
\begin{defn}[$\mathcal{G}_{\mathcal{I}}$: Graph Representation of $\mathcal{I}$]
The index coding instance $\mathcal{I}=\{(i|A_i), i\in[m]\}$ can be represented as a directed graph $\mathcal{G}_{\mathcal{I}}=(V,E)$, where $V=[m]$ and $E\subseteq[m]\times [m]$, respectively, denote the vertex and edge sets such that $(i,j)\in E$ if and only if (iff) $i\in A_j$, for all $(i,j)\in [m]\times [m]$. Moreover, any subinstance $M\subseteq[m]$ can be represented by the induced subgraph
$\mathcal{G}_{M}=(M, E_M)$, where $E_M\subseteq M\times M$ such that $(i,j)\in E_M$ iff $i\in A_j\cap M$. Since, subgraph $G_{M}$ is completely characterized by subset $M$, for brevity, we call it subgraph $M$ in the rest of this paper.
\end{defn}

Throughout this paper, if there is a pairwise clique between two vertices $i$ and $j$ (i.e., $i\in A_j$ and $j\in A_i$), then they will be connected by a solid bidirectional arrow. Otherwise, we use a dashed unidirectional arrow.

\begin{defn}[Maximum Acyclic Induced Subgraph (MAIS) of $\mathcal{I}$]
Given an index coding instance $\mathcal{I}$, let $\mathcal{M}$ be the set of all acyclic vertex-induced subgraphs of $\mathcal{G}_{\mathcal{I}}$. A subgraph $M\in \mathcal{M}$ with maximum size $|M|$ is said to be the MAIS set of $\mathcal{I}$, and $\beta_{\text{MAIS}}(\mathcal{I})=|M|$ is called the MAIS bound on $\mathcal{I}$.
\end{defn}

\begin{prop}[\mdseries Bar-Yossef \textit{et all}. \cite{blasiak2011lexicographic}]
Given the index coding instance $\mathcal{I}$, we have $\beta(\mathcal{I})\geq \beta_{\text{MAIS}}(\mathcal{I})$. Thus, the MAIS bound imposes a lower bound on the broadcast rate.
\end{prop}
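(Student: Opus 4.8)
The plan is to derive the bound by a cascading (cut-set) decoding argument applied to a maximum acyclic induced subgraph, so that it holds for arbitrary---not merely linear---index codes, as required by the definition of $\beta(\mathcal{I})$ in~\eqref{eq:opt-rate}. Fix $M\subseteq[m]$ with $|M|=\beta_{\text{MAIS}}(\mathcal{I})$, so that $\mathcal{G}_M$ is acyclic. The first step is to relabel the vertices of $M$ as $i_1,\dots,i_k$, with $k=|M|$, so that $A_{i_b}\cap M\subseteq\{i_1,\dots,i_{b-1}\}$ for every $b\in[k]$. Such an order exists because every nonempty directed acyclic graph has a vertex with no incoming edge, which in $\mathcal{G}_M$ is precisely a vertex $i$ with $A_i\cap M=\emptyset$; taking $i_1$ to be such a vertex, deleting it, and iterating on the (still acyclic) induced residual subgraph produces the claimed labelling. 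Informally, in this order receiver $u_{i_1}$ needs no message of $M$, receiver $u_{i_2}$ needs at most $x_{i_1}$, and so on down the chain.

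For the second step, fix any $(t,r)$ index code $\mathcal{C}_{\mathcal{I}}=(\phi_{\mathcal{I}},\{\psi_{\mathcal{I}}^{i}\})$, model the message symbols $x_{i}^{j}$ as i.i.d.\ uniform over $\mathbb{F}_q$, and let $Y=\phi_{\mathcal{I}}(\boldsymbol{x})$ be the length-$r$ transmission. Write $X_M=\{x_i^j:i\in M\}$, let $X_{[m]\setminus M}$ be the complementary block of messages, and let $\mathrm{H}(\cdot)$ denote Shannon entropy. Conditioning on $X_{[m]\setminus M}$, I would show by induction along the order $i_1,\dots,i_k$ that each requested block $X_{i_b}$ is a deterministic function of $(Y,X_{[m]\setminus M})$: the side information $S_{i_b}$ of receiver $u_{i_b}$ is indexed by $A_{i_b}$, and by the ordering each index in $A_{i_b}$ either lies in $[m]\setminus M$ (hence is revealed by the conditioning) or equals some $i_a$ with $a<b$ (hence has already been recovered), so applying $\psi_{\mathcal{I}}^{i_b}$ to $(Y,S_{i_b})$ reconstructs $X_{i_b}$. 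Hence $\mathrm{H}(X_M\mid Y,X_{[m]\setminus M})=0$, and therefore
\begin{align*}
 r\log q \;\geq\; \mathrm{H}(Y) \;&\geq\; \mathrm{H}\big(Y \mid X_{[m]\setminus M}\big) \;\geq\; I\big(Y; X_M \mid X_{[m]\setminus M}\big) \\
 \;&=\; \mathrm{H}(X_M) - \mathrm{H}\big(X_M \mid Y, X_{[m]\setminus M}\big) \;=\; k\,t\log q,
\end{align*}
where the middle line uses $\mathrm{H}(X_M\mid X_{[m]\setminus M})=\mathrm{H}(X_M)=kt\log q$ by independence and uniformity of the symbols. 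Dividing by $t\log q$ gives $\beta(\mathcal{C}_{\mathcal{I}})=r/t\geq k$; since $\mathcal{C}_{\mathcal{I}}$ and $t$ were arbitrary, taking the infimum in~\eqref{eq:opt-rate} yields $\beta(\mathcal{I})\geq k=\beta_{\text{MAIS}}(\mathcal{I})$.

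The entropy inequalities and the existence of the topological order are routine; the only step that is not entirely mechanical is the inductive claim that conditioning on $X_{[m]\setminus M}$ turns the acyclic structure of $\mathcal{G}_M$ into a chain of successful decodings---this is exactly where acyclicity is indispensable, since a directed cycle inside $\mathcal{G}_M$ would create a circular dependence among the $X_{i_b}$ and break the induction. For linear codes one can run a parallel argument through Proposition~\ref{prop-lineardecoding-condition} instead of entropies: since $\{i_{b+1},\dots,i_k\}\subseteq B_{i_b}$ by the same ordering property, the column space of $\boldsymbol{H}^{\{i_{b+1},\dots,i_k\}}$ is contained in that of $\boldsymbol{H}^{B_{i_b}}$, so~\eqref{eq:dec-cond} forces the $t$ columns of $\boldsymbol{H}^{\{i_b\}}$ to be linearly independent of $\mathrm{span}\,\boldsymbol{H}^{\{i_{b+1},\dots,i_k\}}$; chaining this for $b=k,k-1,\dots,1$ gives $\mathrm{rank}\,\boldsymbol{H}^{M}=kt$, whence $r\geq kt$ and $\beta(\mathcal{C}_{\mathcal{I}})\geq k$ once more. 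I would present the entropy version as the main proof, since it also covers nonlinear codes.
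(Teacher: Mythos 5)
Your proof is correct. Note, however, that the paper itself does not prove this proposition at all: it is stated as a known result and attributed to Bar-Yossef \textit{et al.} via the citation, so there is no in-paper argument to compare against. What you have written is essentially the standard proof from the literature, and both of your routes are sound: the topological ordering of the maximum acyclic induced subgraph $M$ (a vertex with no incoming edge in $\mathcal{G}_M$ is exactly one with $A_i\cap M=\emptyset$) is constructed correctly, and the chained-decoding step conditioned on $X_{[m]\setminus M}$ legitimately yields $\mathrm{H}(X_M\mid Y,X_{[m]\setminus M})=0$, from which the entropy chain $r\log q\geq \mathrm{H}(Y)\geq I(Y;X_M\mid X_{[m]\setminus M})=kt\log q$ gives $\beta(\mathcal{C}_{\mathcal{I}})\geq k$ for every $(t,r)$ code, and the infimum in \eqref{eq:opt-rate} finishes the claim. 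Your choice to make the entropy version primary is the right one, since the paper's definition of $\beta(\mathcal{I})$ ranges over arbitrary (not necessarily linear) codes, whereas the rank-based variant via Proposition \ref{prop-lineardecoding-condition} (using $\{i_{b+1},\dots,i_k\}\subseteq B_{i_b}$ to force $\mathrm{rank}\,\boldsymbol{H}^{M}=kt$) only covers linear codes; it is a nice sanity check but could not stand alone. The only implicit assumption worth flagging is that the decoders are required to recover the requested messages exactly for all message realizations, which is consistent with how the paper uses its decoding condition.
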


\section{Motivating Index coding Instances} \label{sec:03}

In this section, three index coding instances are provided to illustrate the motivation behind the proposed UMCD coding scheme for the index coding problem. Note that since the MCM and the MCD algorithms have not yet been discussed, for these instances, we employ neither the MCM algorithm for finding the satisfied receivers, nor the MCD algorithm for determining the elements of the encoding matrix. However, the elements of the encoding matrix are designed to meet the UMCD coding scheme requirement (where each column is designed to be linearly independent of the space spanned by other columns to the extent possible).

\begin{exmp}[The UMCD versus the MDS Coding Scheme]
Consider the index coding instance $\mathcal{I}_1=\{(1|-), (2|3,4), (3|2,4), (4,|2,3)\}$, shown in Figure \ref{fig:motiv:1}. For this instance, the broadcast rate of the MDS code is $\beta_{\text{MDS}}(\mathcal{I}_1)=m-|A|_{\text{min}}=4$, which cannot perform better than the uncoded transmission. The main demerit of the MDS code is that its broadcast rate is determined only by the minimum size of side information. This means the MDS code cannot benefit from the side information set of other receivers. 
Now, consider the following transmission technique. First, the server aims to satisfy receiver $u_1$ which has the minimum size of side information. So, it transmits $x_1$, and then removes this receiver from the instance. Now, the server observes three receivers $u_1, u_2$ and $u_3$, which form a clique with each other. Thus, by sending $x_2+x_3+x_4$ all the receivers will be satisfied. This shows how updating the problem by removing the satisfied receivers after each transmission can provide a more efficient broadcast rate compared to the MDS code. 
\end{exmp}

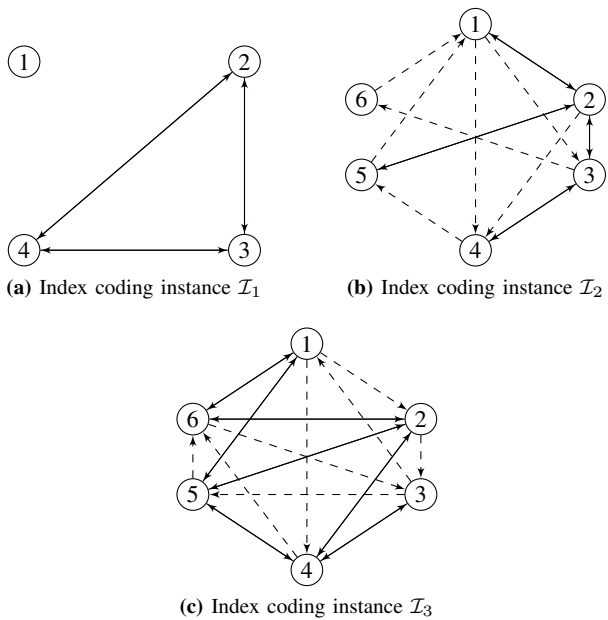
\begin{figure}[t]
\centering
\subfloat[Index coding instance $\mathcal{I}_1$]
{
       \begin{tikzpicture}
                 \tikzset{vertex/.style = {shape=circle,draw,minimum size=1em}}
                 \tikzset{edge/.style = {->,> = latex'}}
                 \node[vertex,inner sep=1.5pt, minimum size=0.5pt] (a1) at  (0,2.5) {\small 1};
                 \node[vertex,inner sep=1.5pt, minimum size=0.5pt] (a2) at  (2.9,2.5) {\small 2};
                 \node[vertex,inner sep=1.5pt, minimum size=0.5pt] (a3) at  (2.9,0) {\small 3};
                 \node[vertex,inner sep=1.5pt, minimum size=0.5pt] (a4) at  (0,0) {\small 4};
                 \draw[edge] (a2) to             (a3);
                 \draw[edge] (a2) to             (a4);
                 \draw[edge] (a3) to             (a2);
                 \draw[edge] (a3) to             (a4);
                 \draw[edge] (a4) to             (a2);
                 \draw[edge] (a4) to             (a3);
            \end{tikzpicture}  \label{fig:motiv:1}
}
\ \ \ \ \ \ \
\subfloat[Index coding instance $\mathcal{I}_2$]
{
           \begin{tikzpicture}
                 \tikzset{vertex/.style = {shape=circle,draw,minimum size=1em}}
                 \tikzset{edge/.style = {->,> = latex'}}
                 \node[vertex,inner sep=1.5pt, minimum size=0.5pt] (a1) at  (0,0)         {\small 1};
                 \node[vertex,inner sep=1.5pt, minimum size=0.5pt] (a2) at  (1.5,-1)       {\small 2};
                 \node[vertex,inner sep=1.5pt, minimum size=0.5pt] (a3) at  (1.5,-2)     {\small 3};
                 \node[vertex,inner sep=1.5pt, minimum size=0.5pt] (a4) at  (0,-3)     {\small 4};
                 \node[vertex,inner sep=1.5pt, minimum size=0.5pt] (a5) at  (-1.5,-2)      {\small 5};
                 \node[vertex,inner sep=1.5pt, minimum size=0.5pt] (a6) at  (-1.5,-1)        {\small 6};
                 
                 \draw[edge] (a1) to                (a2);
                 \draw[edge] (a2) to                (a1);
                 
                 \draw[edge] (a2) to                (a3);
                 \draw[edge] (a3) to                (a2);
                 
                 \draw[edge] (a2) to                (a5);
                 \draw[edge] (a5) to                (a2);
                 
                 \draw[edge] (a3) to                (a4);
                 \draw[edge] (a4) to                (a3);

                 \draw[edge] (a5) [dashed] to      (a1);
                 \draw[edge] (a6) [dashed] to      (a1);

                 \draw[edge] (a1) [dashed] to      (a3);
               
                 \draw[edge] (a1) [dashed] to      (a4);
                 \draw[edge] (a2) [dashed] to      (a4);
                 
                 \draw[edge] (a4) [dashed] to      (a5);
                 
                 \draw[edge] (a3) [dashed] to      (a6);
                 
            \end{tikzpicture} \label{fig:motiv:2}
}
\ \ \ \ \ \ \
\subfloat[Index coding instance $\mathcal{I}_{3}$]
{
           \begin{tikzpicture}
                 \tikzset{vertex/.style = {shape=circle,draw,minimum size=1em}}
                 \tikzset{edge/.style = {->,> = latex'}}
                 \node[vertex,inner sep=1.5pt, minimum size=0.5pt] (a1) at  (0,0)         {\small 1};
                 \node[vertex,inner sep=1.5pt, minimum size=0.5pt] (a2) at  (1.5,-1)       {\small 2};
                 \node[vertex,inner sep=1.5pt, minimum size=0.5pt] (a3) at  (1.5,-2)     {\small 3};
                 \node[vertex,inner sep=1.5pt, minimum size=0.5pt] (a4) at  (0,-3)     {\small 4};
                 \node[vertex,inner sep=1.5pt, minimum size=0.5pt] (a5) at  (-1.5,-2)      {\small 5};
                 \node[vertex,inner sep=1.5pt, minimum size=0.5pt] (a6) at  (-1.5,-1)        {\small 6};
                 
                 \draw[edge] (a6) to                (a1);
                 \draw[edge] (a5) to                (a1);
                 
                 \draw[edge] (a4) to                (a2);
                 \draw[edge] (a5) to                (a2);
                 \draw[edge] (a6) to                (a2);
                 
                 \draw[edge] (a4) to                (a3);
                 
                 \draw[edge] (a2) to                (a4);
                 \draw[edge] (a3) to                (a4);
                 \draw[edge] (a5) to                (a4);
                 
                 \draw[edge] (a1) to                (a5);
                 \draw[edge] (a2) to                (a5);
                 \draw[edge] (a4) to                (a5);
                 
                 \draw[edge] (a1) to                (a6);
                 \draw[edge] (a2) to                (a6);

                 \draw[edge] (a3) [dashed] to      (a1);
                 
                 \draw[edge] (a1) [dashed] to      (a2);
                 
                 \draw[edge] (a2) [dashed] to      (a3);
                 \draw[edge] (a6) [dashed] to      (a3);
                 
                 \draw[edge] (a1) [dashed] to      (a4);
                 
                 \draw[edge] (a3) [dashed] to      (a5);
                 
                 \draw[edge] (a4) [dashed] to      (a6);
                 \draw[edge] (a5) [dashed] to      (a6);
                 
            \end{tikzpicture} \label{fig:motiv:3}
}
\caption{Motivating index coding instances.}
\label{fig:motivation}
\end{figure}

\begin{table}[h!]
\small
\centering
\captionsetup{format=hang}
\caption{\textsc{Comparison Between the Broadcast Rate of Different Coding Schemes for $\mathcal{I}_2$}}
\label{Table1}
\begin{tabular}{  c | c  }
Schemes                                          &    Broadcast rate  \\
\hline 
Maximum distance separable (MDS) \cite{Arbabjolfaei2018}                &       5            \\
Scalar clique cover \cite{Birk2006}                             &       4            \\
Scalar cycle cover \cite{Chaudhry}                              &       4            \\
Partial clique cover (PCC) \cite{Birk1998}                      &       4            \\
Scalar Interlinked-cycle cover (ICC) \cite{Thapa2017}            &       4            \\
Scalar recursive \cite{Arbabjolfaei2014}                                &       4            \\
\hline
Fractional clique cover \cite{Birk1998}                         &       4            \\
Fractional cycle cover \cite{Chaudhry}                          &       3.5          \\
Fractional partial clique cover (FPCC) \cite{Yu2014}          &       3.5          \\
Interlinked-cycle cover (ICC) \cite{Thapa2017}                   &       3.5          \\
Recursive \cite{Arbabjolfaei2014}                                       &       3.5          \\
\hline
MAIS bound                                       &       3            \\
Proposed UMCD                                    &       3            
\end{tabular}
\end{table}

\begin{exmp}[The UMCD versus the MDS-based and Graph-based Coding Schemes]
Consider the index coding instance $\mathcal{I}_2=\{(1|2,5,6), (2|1,3,5), (3|1,2,4),(4|1,2,3), (5|2,4)$, $(6|3)\}$, depicted in Figure \ref{fig:motiv:2}. For the broadcast rate of the MDS code, we have $\beta_{\text{MDS}}(\mathcal{I}_2)=m-|A|_{\text{min}}=5$. It can be verified that for the MDS-based and graph-based schemes, partitioning the instance $\mathcal{I}_2$ can improve the broadcast rate by saving two transmissions. The broadcast rate can even be reduced to 3.5 by the vector version of the schemes, which comes at the expense of higher computational complexity. However, none can still achieve the MAIS bound with $\beta_{\text{MAIS}}(\mathcal{I}_2)=3$. Now, assume that the server first targets receiver $u_6$, which has the minimum size of side information by transmitting a linear combination of messages $x_6$ (requested message) and $x_i, i\in A_6$ as follows
\begin{equation} \nonumber
    y_1=[0\ 0\ h_{1,3}\ 0\ 0\ h_{1,6}] \ \boldsymbol{x},
\end{equation}
where $h_{1,i}\in \mathbb{F}_{q}$ and $\boldsymbol{x}=[x_1\ x_2\ x_3\ x_4\ x_5\ x_6]^T$.
This transmission satisfies only receiver $u_6$. Then, the server targets receiver $u_5$ (which has the minimum size of side information among the unsatisfied receivers) by sending a linear combination of messages $x_5$ (requested message) and $x_i, i\in A_5$ as follows
\begin{equation} \nonumber
    y_2=[0\ h_{2,2}\ 0\ h_{2,4}\ h_{2,5}\ 0]\ \boldsymbol{x}.
\end{equation}
It can be checked that only receiver $u_5$ is satisfied by receiving the coded messages $y_1$ and $y_2$. Now, the remaining unsatisfied receivers $u_i, i=[4]$ have the same size of side information. Assume receiver $u_4$ is chosen randomly. To satisfy receiver $u_4$, a linear combination of messages $x_4$ (requested message) and $x_i, i\in A_4$ is transmitted as follows
\begin{equation} \nonumber
    y_3=[h_{3,1}\ h_{3,2}\ h_{3,3}\ h_{3,4}\ 0\ 0]\ \boldsymbol{x}.
\end{equation}
By setting the field size $q=3$ and fixing the nonzero coefficients $h_{1,3}=h_{1,6}=$$h_{2,2}=h_{2,4}=h_{2,5}=$ $h_{3,1}=h_{3,2}=h_{3,3}=1$ and $h_{3,4}=2$, it can be seen that each subset of columns are linearly independent as much as possible, at each step of transmission. One can verify that all the remaining receivers $u_1$, $u_2$, $u_3$ and $u_4$ are able to decode their requested message upon receiving the coded messages $y_1$, $y_2$ and $y_3$. This index code achieves the MAIS bound and, so it is optimal for $\mathcal{I}_2$. This simple instance illustrates how the proposed UMCD coding scheme works, which can outperform the existing linear coding schemes, including the recursive and ICC coding schemes, as presented in Table \ref{Table1}.
\end{exmp}

\begin{exmp}[The UMCD versus the Scalar Binary Minrank Coding Scheme] \label{exm:motive:3}
Please refer to Appendix \ref{app:minrank~UMCD} to see how the proposed UMCD coding scheme outperforms the scalar binary minrank coding scheme for the index coding instance $\mathcal{I}_{3}$, depicted in Figure \ref{fig:motiv:3}.
\end{exmp}

\section{Main Results} \label{sec:04}

\subsection{A Brief Discussion of the MCM Problem and the MCD Algorithm}
Consider a binary matrix $\boldsymbol{G}\in \mathbb{F}_{2}^{r\times m}, r\leq m$, which can be characterized by either set $G=\{(k,i)\in [r]\times [m], g_{k,i}=1\}$ or its support sets as $G_{k}=\{i\in [m], g_{k,i}=1\}, k\in [r]$. Now, we say that matrix $\boldsymbol{H}\in \mathbb{F}_{q}^{r\times m}$ fits the binary matrix $\boldsymbol{G}$ if for any $q\geq 2$, we have $h_{k,i}=0, \forall (k,i)\in ([r]\times [m])\backslash G$.

Now, we consider the following optimization problem
\begin{align} \label{eq:opt}
     \max_{\substack{h_{k,i}\in \mathbb{F}_q\\  (k,i)\in G}} \ \ \ \ \ &\mathrm{rank} \ \boldsymbol{H}, \ \ \ \ \ \ \
     \nonumber \\
    \text{subject to}\ \ \ \ \ \
    &\boldsymbol{H} \ \text{fits}\ \boldsymbol{G},
\end{align}
which gives the maximum rank of $\boldsymbol{H}$ over all possible values from $\mathbb{F}_{q}$ for its elements $h_{k,i}, (k,i)\in G$ such that $\boldsymbol{H}$ fits $\boldsymbol{G}$.
\\
In Subsection \ref{sec:05}, we prove that for any field size $q\geq 2$, the solution of \eqref{eq:opt} is equal to the MCM value of the bipartite graph associated with the binary matrix $\boldsymbol{G}$, denoted by $\mathrm{mcm}(\boldsymbol{G})$. 
In other words, subject to the condition that $\boldsymbol{H}$ fits $\boldsymbol{G}$, we have
\begin{equation} \nonumber
\mathrm{mcm}(\boldsymbol{G}) = \max_{\substack{h_{k,i}\in \mathbb{F}_q\\  (k,i)\in G}} \ \mathrm{rank} \ \boldsymbol{H}.
\end{equation}
The MCM problem can be solved by the polynomial-time Hopcroft-Karp algorithm where its complexity for $\boldsymbol{G}\in \mathbb{F}_{2}^{r\times m}$ is $\mathcal{O}(r^2m)$.\footnote{It is worth noting that although the minrank problem is NP-hard, the maxrank problem in \eqref{eq:opt} can be solved in polynomial-time.}
\\
In Subsection \ref{sec:06}, given a binary matrix $\boldsymbol{G}\in \mathbb{F}_{2}^{r\times m}$, we propose the MCD algorithm to design a matrix $\boldsymbol{H}\in \mathbb{F}_{q}^{r\times m}$ over a sufficiently large field such that $\boldsymbol{H}$ fits $\boldsymbol{G}$ and $\boldsymbol{H}_{[k]}^{L}$ will reach its maximum rank in $\eqref{eq:opt}$ for all $k\in [r], L\subseteq [m]$. The complexity of the MCD algorithm is in general exponential.
\\
Assuming that the encoding matrix $\boldsymbol{H}$ is designed by the MCD algorithm, for any $k\in [r]$ and $L\subseteq[m]$, we have
\begin{equation} \label{eq:rank(H)=mcm(G)}
    \mathrm{rank}(\boldsymbol{H}_{[k]}^{L})=\mathrm{mcm}(\boldsymbol{G}_{[k]}^{L}).
\end{equation}
Hence, the decoding condition in \eqref{eq:dec-cond} will be equivalent to
\begin{equation} \label{eq:dec-mcm}
    \mathrm{mcm}(\boldsymbol{G}_{[k]}^{\{i\}\cup B_{i}})=\mathrm{mcm}(\boldsymbol{G}_{[k]}^{B_{i}})+1,
\end{equation}
which means that, the satisfied receivers in each transmission can be determined just by having access to the binary matrix $\boldsymbol{G}$. This leads to achieving the broadcast rate of the UMCD scheme without knowing the exact value of the elements in $\boldsymbol{H}$. This results in reducing the computational complexity, especially when the UMCD is extended as to its vector version and used as a basic code for solving index coding subinstances.

\subsection{The Proposed UMCD Coding Scheme} 

In this section, we describe the proposed UMCD coding scheme, which is a scalar linear code and is based on updating the problem instance after each transmission. In this scheme, first, in transmission $k$, the UMCD coding scheme characterizes the support set $G_k$ such that $g_{k,i}=1$ if message $x_i$ is included in the linear coded message $y_k$, and $g_{k,i}=0$ otherwise. So, for the first $k$ transmissions, we have a binary matrix $\boldsymbol{G}_{[k]}$. Then, using this binary matrix, the other satisfied receivers are identified and removed from the instance. After all the receivers are satisfied at the transmission $k=r=\beta_{\text{UMCD}}(\mathcal{I})$, the encoding matrix $\boldsymbol{H}$ which fits $\boldsymbol{G}_{[r]}$ will be designed. More specifically, the UMCD coding scheme is described as follows.

\begin{itemize}[leftmargin=*]
    \item In transmission $k$, the UMCD coding scheme aims at satisfying one of the receivers with the minimum size of side information by sending a linear combination of its desired message and the messages in its side information.
    This will give the support set $G_k$ which will, in turn, determine the $k$-th row of the binary matrix $\boldsymbol{G}_{[k]}$.
    \item Assuming that the encoding matrix $\boldsymbol{H}$ will be designed by the MCD algorithm, the satisfied receivers can be determined by checking the condition in \eqref{eq:dec-mcm} using the polynomial-time Hopcroft-Karp algorithm which solves the MCM problem. This will eventually determine the broadcast rate $\beta_{\text{UMCD}}(\mathcal{I})$ as soon as all receivers are satisfied.
    \item Finally, given the binary matrix $\boldsymbol{G}_{[r]}$, the encoding matrix $\boldsymbol{H}$ will be determined by the proposed MCD algorithm from a sufficiently large field.
\end{itemize}

\begin{rem}
In the UMCD coding scheme, selecting the receiver with the minimum size of side information rather than the receivers with the non-minimum size of side information, in each transmission, can intuitively bring about two main advantages. First, it can lead to a lower broadcast rate as it will be discussed in Example \ref{rem:satisfying-minimum}. Second, since the number of ones at each row of the binary matrix $\boldsymbol{G}$ is determined by the side information of such a receiver (with the minimum size of side information), it will lead to a sparser binary matrix $\boldsymbol{G}$. This, in turn, can result in a lower complexity and smaller field size for the MCD algorithm, which will be discussed in Subsection \ref{sec:06}.
\end{rem}

\subsection{Description of the UMCD Algorithm}

In the UMCD algorithm, which is provided in Algorithm \ref{alg:UMCD}, $N$ denotes the set of unsatisfied receivers which is updated after each transmission. In the beginning, all the receivers are considered to be unsatisfied and $N=[m]$. In transmission $k$, let $W=\mathrm{argmin}_{i\in N} |A_i|$ represent the indices of the unsatisfied receivers with the minimum size of side information. Then, one element, denoted by $w$ is chosen randomly from $W$. Now, to satisfy receiver $u_w$, UMCD designs a linear combination of the messages indexed by $G_k=\{w\}\cup A_w$, which characterizes the $k$-th row of the binary matrix $\boldsymbol{G}$ such that $g_{k,i}=1$ if $i\in G_{k}$, and $g_{k,i}=0$ otherwise. Now, receiver $u_w$ is guaranteed to be removed from $N$, since $G_k$ only contains what $u_w$ wants to decode and what it knows. The instant decodability for one of the remaining receivers with the minimum size of side information is a key to reducing the overall broadcast rate in an adaptive manner.
Then, using the Hopcroft-Karp algorithm, the other receivers satisfying the condition in \eqref{eq:dec-mcm} are identified and removed from $N$. This process continues until all the receivers are satisfied, which gives the broadcast rate $\beta_{\text{UMCD}}(\mathcal{I})=r$. Finally, having obtained the binary matrix $\boldsymbol{G}_{[r]}$, the elements of the encoding matrix $\boldsymbol{H}$ are determined using the deterministic MCD algorithm from a field of size 
$q\geq q_{min}$, where 
\begin{equation} \label{eq:q_{min}}
    q_{min}= {m \choose p}, \ p=\min \{\floor{\frac{m}{2}}, r\}.
\end{equation}
It is worth noting that the broadcast rate $\beta_{\text{UMCD}}(\mathcal{I})$ is obtained independently of knowing the exact value of the elements of the encoding matrix $\boldsymbol{H}$. This helps to reduce the computational complexity of the UMCD scheme by not needing to invoke the MCD algorithm to find satisfied receivers at each transmission.

\begin{algorithm}
\SetAlgoLined
\KwInput{$\mathcal{I}=\{(i|A_i), i\in[m]\}$}
  \KwOutput{$\beta_{\text{UMCD}}(\mathcal{I})$ and $\boldsymbol{H}$}
 initialization\;
 $k=0$\;
 ${N}=[m]$\;
 set $\boldsymbol{G}$ as a full-zero matrix of size $m\times m$\;
 \While{${N}\neq\emptyset$}{
  $k \leftarrow k+1$\;
  $W=\arg \min_{i\in N} |A_i|$\; 
  choose an element $w\in {W}$ at random\;
  set $G_{k}=\{w\}\cup A_{w}$\;
  set $g_{k, i}=1, \forall i\in G_k$\;
  $N \leftarrow N\backslash\{w\}$\;
  \For{$i\in N$}{
   \If{$\mathrm{mcm}\ ({\boldsymbol{G}_{[k]}^{\{i\}\cup B_{i}}})=\mathrm{mcm}\ ({\boldsymbol{G}_{[k]}^{B_{i}}})+1$}{
   \texttt{\\}
   $N \leftarrow N\backslash\{i\}$\;
   }
   }
  }
  $\beta_{\text{UMCD}}(\mathcal{I})=r=k$: Broadcast rate\;
  set the field size $q$ such that $q\geq q_{min}$\;
  $\boldsymbol{H}$= $\mathrm{MCD}\ (\boldsymbol{G}_{[r]}, q)$: Encoding matrix\;
 \caption{UMCD Coding Scheme}
 \label{alg:UMCD}
\end{algorithm}

\subsection{The UMCD Scheme Can Outperform the Recursive and ICC Coding Schemes for Some Index Coding Instances with Five Messages}
\begin{exmp} \label{exmp:Recursive-5}
Consider the instance of index coding problem $\mathcal{I}_{4}=\{(1|2,5)$, $(2|1,4)$, $(3|2)$, $(4|5), (5|1,3)\}$, depicted in Figure \ref{fig:I_3}. In the first round $k=1$, UMCD begins with one of the receivers indexed by $W=\{3,4\}$ which have the minimum size of side information. Let $w=3\in W$ be chosen at random. Then, the first transmission is a linear combination of the messages indexed by $G_1=\{3\}\cup A_3=\{2,3\}$, satisfying receiver $u_3$. So, $N=[5]\backslash \{3\}$.
Using the Hopcroft-Karp algorithm, \eqref{eq:dec-mcm} does not hold for the remaining receivers in $N$.
\\
Now, for the second round $k=2$, we have $W=\{4\}$, $w=4$, $G_2=\{4\}\cup A_4=\{4,5\}$, and $N=[5]\backslash \{3,4\}$.
Using the Hopcroft-Karp algorithm, \eqref{eq:dec-mcm} does not hold for the remaining receivers in $N$. 
\\
For the third round $k=3$, we have $W=\{1,2,5\}$. Let $w=1\in W$, $G_3=\{1\}\cup A_1=\{1,2,5\}$, and  $N=[5]\backslash \{1,3,4\}$. Thus, the binary matrix $\boldsymbol{G}_{[3]}$ will be
\vspace{.1mm}
 \begin{equation} \label{exmp:G-recu}
      \boldsymbol{G}_{[3]}=
      \begin{bmatrix}
      0 & 1 & 1 & 0 & 0\\
      0 & 0 & 0 & 1 & 1\\
      1 & 1 & 0 & 0 & 1
      \end{bmatrix}.
  \end{equation}
Using the Hopcroft-Karp algorithm, it can be verified that \eqref{eq:dec-mcm} holds for all the remaining receivers, leading to $\beta_{\text{UMCD}}(\mathcal{I}_{4})=3<\beta_{\text{R}}(\mathcal{I}_{4})=3.5$. Finally, the encoding matrix will be determined by the MCD algorithm over a field of size $q\geq q_{min}={5 \choose 2}=10$, which will be\footnote{For this case, however, we show in Example \ref{exmp:field&complexity-reduction} that the MCD algorithm over any field of size $q\geq 2$ will satisfy \eqref{eq:rank(H)=mcm(G)}.} 
\begin{equation} \nonumber
  \boldsymbol{H}= \mathrm{MCD}(\boldsymbol{G}_{[3]}, q)=
  \begin{bmatrix}
       0 & 1 & 1 & 0 & 0\\
      0 & 0 & 0 & 1 & 1\\
      1 & 1 & 0 & 0 & 1
      \end{bmatrix}.
\end{equation}
Note that for other possible random selections of $w$ at each transmission, it can be checked that the broadcast rate will be the same.
\end{exmp}

 \begin{figure}
       \centering
\subfloat[Index coding instance $\mathcal{I}_{4}$]
{
           \begin{tikzpicture}
                 \tikzset{vertex/.style = {shape=circle,draw,minimum size=1em}}
                 \tikzset{edge/.style = {->,> = latex'}}
                 \node[vertex,inner sep=1.5pt, minimum size=0.5pt] (a1) at  (0,1.1)   {\small 1};
                 \node[vertex,inner sep=1.5pt, minimum size=0.5pt] (a2) at  (1.5,0)     {\small 2};
                 \node[vertex,inner sep=1.5pt, minimum size=0.5pt] (a3) at  (1,-1.3) {\small 3};
                 \node[vertex,inner sep=1.5pt, minimum size=0.5pt] (a4) at  (-1,-1.3){\small 4};
                 \node[vertex,inner sep=1.5pt, minimum size=0.5pt] (a5) at  (-1.5,0)    {\small 5};
                 \draw[edge] (a1) to             (a2);
                 \draw[edge] (a1) to             (a5);
                 \draw[edge] (a2) to             (a1);
                 \draw[edge] (a5) to             (a1);

                 \draw[edge] (a2) [dashed]   to             (a3);
                 \draw[edge] (a3) [dashed]   to             (a5);
                 \draw[edge] (a4) [dashed]   to             (a2);
                 \draw[edge] (a5) [dashed]   to             (a4);
            \end{tikzpicture} \label{fig:I_3}
}
\ \ \ \ \ \ \ \ \ \ \ \
\subfloat[Index coding instance $\mathcal{I}_{5}$]
{
            \begin{tikzpicture}
                 \tikzset{vertex/.style = {shape=circle,draw,minimum size=1em}}
                 \tikzset{edge/.style = {->,> = latex'}}
                 \node[vertex,inner sep=1.5pt, minimum size=0.5pt] (a1) at  (0,1.1)     {\small 1};
                 \node[vertex,inner sep=1.5pt, minimum size=0.5pt] (a2) at  (1.5,0)     {\small 2};
                 \node[vertex,inner sep=1.5pt, minimum size=0.5pt] (a3) at  (1,-1.3)    {\small 3};
                 \node[vertex,inner sep=1.5pt, minimum size=0.5pt] (a4) at  (-1,-1.3)   {\small 4};
                 \node[vertex,inner sep=1.5pt, minimum size=0.5pt] (a5) at  (-1.5,0)    {\small 5};
                 \draw[edge] (a1) to   (a2);
                 \draw[edge] (a2) to   (a1);
                 \draw[edge] (a1) to   (a5);
                 \draw[edge] (a5) to   (a1);
                 \draw[edge] (a2) to   (a3);
                 \draw[edge] (a3) to   (a2);
                 \draw[edge] (a3) to   (a4);
                 \draw[edge] (a4) to   (a3);
                 \draw[edge] (a4) to   (a5);
                 \draw[edge] (a5) to   (a4);

                 \draw[edge] (a3) [dashed]   to    (a1);
                 \draw[edge] (a2) [dashed]   to    (a4);
                 \draw[edge] (a5) [dashed]   to    (a2);
                 \draw[edge] (a5) [dashed]   to    (a3);
                 
            \end{tikzpicture} \label{fig:I_4}
}
       \caption{(a) The broadcast rate of the UMCD coding scheme is 3, while it is 3.5 for the recursive coding scheme. (b) The broadcast rate of the UMCD coding scheme is 2, while it is 2.5 for the ICC coding scheme. For both instances, the UMCD coding scheme is optimal.}
       \label{fig:ins_3,4}
   \end{figure}
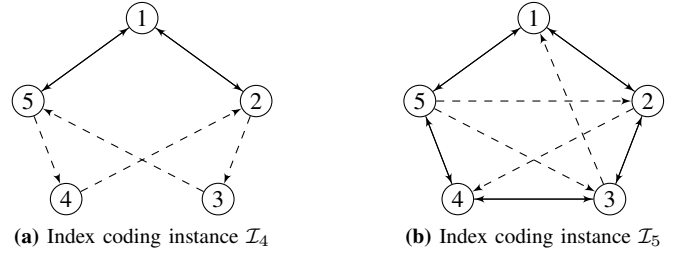

\begin{exmp} \label{exmp:ICC-5}
Consider the index coding instance $\mathcal{I}_{5}=\{(1|2,3,5)$, $(2|1,3,5)$, $(3|2,4,5)$, $(4|2,3,5)$, $(5|1,4)\}$, depicted in Figure \ref{fig:I_4}. In the first round, UMCD begins with the receiver indexed by $W=\{5\}$ which has the minimum size of side information. So, the UMCD coding scheme sets $w=5$, $G_1=\{5\}\cup A_5=\{1,4,5\}$, satisfying receiver $u_5$. So, $N=[5]\backslash \{5\}$.
Using the Hopcroft-Karp algorithm, \eqref{eq:dec-mcm} does not hold for the remaining receivers in $N$.
\\
Now, for the second round $k=2$, we have $W=\{1,2,3,4\}$, $w=2$, $G_{2}=\{2\}\cup A_2=\{1,2,3,5\}$, and $N=[5]\backslash \{2,5\}$. Thus, the binary matrix $\boldsymbol{G}_{[2]}$ will be
\begin{equation} \label{exmp:G-ICC}
      \boldsymbol{G}_{[2]}=
      \begin{bmatrix}
      1 & 0 & 0 & 1 & 1\\
      1 & 1 & 1 & 0 & 1
      \end{bmatrix}.
\end{equation}
Using the Hopcroft-Karp algorithm, \eqref{eq:dec-mcm} holds for all the remaining receivers, which leads to $\beta_{\text{UMCD}}(\mathcal{I}_{5})=2<\beta_{\text{ICC}}(\mathcal{I}_{5})=2.5$. Finally, the encoding matrix will be determined by the MCD algorithm over a field of size $q\geq q_{min}={5 \choose 2}=10$, which will be\footnote{For this case, however, we show in Example \ref{exmp:field&complexity-reduction} that the MCD algorithm over any field of size $q\geq 3$ will satisfy \eqref{eq:rank(H)=mcm(G)}.}
\begin{equation} \nonumber
      \boldsymbol{H}= \mathrm{MCD}(\boldsymbol{G}_{[2]}, q)=
      \begin{bmatrix}
      1 & 0 & 0 & 1 & 1\\
      1 & 1 & 1 & 0 & 2
      \end{bmatrix}.
\end{equation}
Note that for the other possible random selections of $w$ in the second transmission, it can be checked that the broadcast rate will be the same.
\end{exmp}
\vspace{-3ex}

\subsection{Intuition behind the selection of receivers with the minimum size of side information}
Here we discuss the intuition for satisfying receivers with the minimum size of side information. Assume that in the UMCD coding scheme, in each transmission, instead of receiver $u_{w_1}$ with the minimum size of side information, receiver $u_{w_2}$ with a larger size of side information is chosen. Now, since $|A_{w_2}|\geq |A_{w_1}|$, then $|G_k|=|\{w_2\}\cup A_{w_2}|$ will be greater than $|\{w_1\}\cup A_{w_1}|$, which places more ones in the $k$-th row of $\boldsymbol{G}$. 
This, in turn, is expected to increase the value of $\mathrm{mcm}(\boldsymbol{G}_{[k]}^{B_i})$ for receivers with the smaller size of side information (or larger size of interfering message set). In other words, satisfying a receiver with the larger size of side information can increase the dimension of the space spanned by the messages inside the interference message set $B_i$ for receivers with the smaller size of side information.
Hence, it is expected that the decoding condition \eqref{eq:dec-mcm} is not met for receivers with the smaller size of side information. Thus, separate transmissions can still be required to satisfy receivers with the smaller size of side information, which increases the broadcast rate. In the following example, it is shown that selecting receivers with the non-minimum size of side information will lead to suboptimal codes for $\mathcal{I}_{4}$ and $\mathcal{I}_{5}$.

\begin{exmp} \label{rem:satisfying-minimum}
Assume that in the UMCD coding scheme for $\mathcal{I}_{5}$, first we satisfy receiver $u_1$. So, $G_1=\{1\}\cup A_1$. Then, using the Hopcroft-Karp algorithm, \eqref{eq:dec-mcm} holds for receivers $u_1$ and $u_2$. Then, assume we target receiver $u_3$ by setting $G_2=\{3\}\cup A_3$. Then, using the Hopcroft-Karp algorithm, \eqref{eq:dec-mcm} holds for receivers $u_3$ and $u_4$. Thus, we need one more transmission for satisfying receiver $u_5$, which will result in a suboptimal code. Similarly, for the index coding instance $\mathcal{I}_{4}$, if one of the receivers $u_3$ and $u_4$ is not selected for the first three transmissions, then the broadcast rate will be 4, leading to a suboptimal code. 
\end{exmp}

\section{The Maximum Cardinality Matching (MCM) Problem and the Maximum Column Distance (MCD) Algorithm} \label{sec:MCM-MCD}
In this section, first we prove that the solution of the optimization problem in \eqref{eq:opt} will be equal to the MCM value of the bipartite graph associated with binary matrix $\boldsymbol{G}$. This shows that the satisfied receivers in the UMCD coding scheme can be identified by finding $\mathrm{mcm}(\boldsymbol{G})$. Second, we propose the MCD algorithm to design the encoding matrix $\boldsymbol{H}$ such that it fits $\boldsymbol{G}$ and its submatrices will reach their maximum possible rank as desired in \eqref{eq:rank(H)=mcm(G)}.
\vspace{-3ex}
\subsection{The Equivalence of Identifying Satisfied Receivers and Maximum Cardinality Matching} \label{sec:05}

In this subsection, we formalize the relation between finding the satisfied receivers and the MCM problem which will be based on Theorem \ref{thm:01}. 
To prove Theorem \ref{thm:01}, first, we provide Lemma \ref{lem:01} below.
\vspace{-0.5ex}
\begin{lem} \label{lem:01}
For a given matrix $\boldsymbol{G}\in \mathbb{F}_{2}^{r\times r}$, if $g_{k_1,r}=1$ for some $k_1\in [r]$, and there exists a permutation of columns in $\boldsymbol{F}=\boldsymbol{G}_{[r]\backslash \{k_1\}}^{[r-1]}$ which results in $f_{k,k}=1, \forall k\in[r-1]$, then there will also exist a permutation of columns in $\boldsymbol{G}$ which will lead to $g_{k,k}=1, \forall k\in [r]$.
\end{lem}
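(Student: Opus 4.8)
The plan is to translate the statement into the language of bipartite perfect matchings and then simply extend a matching by one edge. Recall that for any square binary matrix $\boldsymbol{M}\in\mathbb{F}_2^{n\times n}$, there is a permutation of its columns producing an all-ones diagonal if and only if there is a permutation $\sigma$ of $[n]$ with $m_{k,\sigma(k)}=1$ for every $k\in[n]$ --- equivalently, a perfect matching in the bipartite graph whose two vertex classes are the rows and columns of $\boldsymbol{M}$ and whose edges are the pairs $(k,i)$ with $m_{k,i}=1$. Thus the hypothesis on $\boldsymbol{F}=\boldsymbol{G}_{[r]\backslash\{k_1\}}^{[r-1]}$ says precisely that the bipartite graph of $\boldsymbol{F}$ admits a perfect matching, and the goal is to exhibit one for $\boldsymbol{G}$.

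First I would fix the bookkeeping needed to pass between $\boldsymbol{G}$ and its submatrix $\boldsymbol{F}$. The rows of $\boldsymbol{F}$ are the rows of $\boldsymbol{G}$ indexed by $[r]\backslash\{k_1\}$, relabeled $1,\dots,r-1$; let $\rho:[r-1]\to[r]\backslash\{k_1\}$ be the unique order-preserving bijection implementing this relabeling. The columns of $\boldsymbol{F}$ are columns $1,\dots,r-1$ of $\boldsymbol{G}$ (column $r$ having been deleted). A column permutation of $\boldsymbol{F}$ achieving a unit diagonal is then a permutation $\tau$ of $[r-1]$ with $f_{k,\tau(k)}=1$ for all $k\in[r-1]$, i.e.\ $g_{\rho(k),\tau(k)}=1$ for all $k\in[r-1]$, where each $\tau(k)$ lies in $[r-1]$.

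Then I would build the required permutation $\sigma$ of $[r]$ directly: set $\sigma(k_1)=r$ and $\sigma(\rho(k))=\tau(k)$ for $k\in[r-1]$. Two checks finish the argument. (i) $\sigma$ is a bijection of $[r]$: its image is $\{r\}\cup\tau([r-1])=\{r\}\cup[r-1]=[r]$, and it is injective because $\rho$ maps onto $[r]\backslash\{k_1\}$ and $\tau$ maps onto $[r-1]$. (ii) $g_{k,\sigma(k)}=1$ for every $k\in[r]$: for $k=k_1$ this is the hypothesis $g_{k_1,r}=1$, and for $k=\rho(j)$ with $j\in[r-1]$ it is $g_{\rho(j),\tau(j)}=1$ from the previous paragraph. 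Hence permuting the columns of $\boldsymbol{G}$ according to $\sigma$ (new column $k$ equal to old column $\sigma(k)$) yields $g_{k,k}=1$ for all $k\in[r]$, as claimed.

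The conceptual content here is just ``adjoin the single edge $(k_1,r)$ to a perfect matching of $\boldsymbol{F}$,'' so I do not anticipate any real obstacle; the only place that demands care is keeping the row/column relabeling between $\boldsymbol{G}$ and $\boldsymbol{F}$ straight, which is exactly the role of the map $\rho$ above. If one prefers to avoid bipartite-graph language altogether, the same construction of $\sigma$ works verbatim purely at the level of $0/1$ matrices and permutations.
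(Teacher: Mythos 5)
Your proof is correct and rests on essentially the same idea as the paper's: extend the unit diagonal (equivalently, the perfect matching) of the minor $\boldsymbol{F}$ by the single entry $(k_1,r)$. The paper realizes the resulting column permutation of $\boldsymbol{G}$ concretely, by applying the permutation inherited from $\boldsymbol{F}$ and then left-circularly shifting columns $k_1,\dots,r$, whereas you define the permutation $\sigma$ directly and verify it is a bijection with $g_{k,\sigma(k)}=1$; this is tidier bookkeeping (and avoids the row/column index shift the paper handles somewhat loosely), but not a different argument.
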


\begin{proof}
It can be seen the relation between the elements of $\boldsymbol{F}$ and $\boldsymbol{G}$ is as follows
\begin{equation} \nonumber
    \left\{\begin{array}{lccc}
         f_{k,j}=g_{k,j}, \ \ \ \ \ \ \ \ \ \ \ \ \forall\ \ k<k_1, \ \ \forall j\in[r-1], 
         \\
         f_{k,j}=g_{k,j+1}, \ \ \ \ \ \ \ \ \ \ \forall \ \ k\geq k_1, \ \ \forall j\in[r-1], 
    \end{array}\right.
\end{equation}
So, if there exists a permutation of columns in $\boldsymbol{F}$ to make $f_{k,k}=1, \forall k\in[r-1]$, the same permutation of columns in $\boldsymbol{G}$ will result in $g_{k,k}=1, \forall k<k_1$ and $g_{k+1,k}=1, \forall k>k_1$. Since $g_{k_1,r}=1$, then we make a left circular shift of the columns $\boldsymbol{G}^{\{k_1\}}, \dots, \boldsymbol{G}^{\{r-1\}}, \boldsymbol{G}^{\{r\}}$ by one position so that they will be placed in a new position as columns $\boldsymbol{G}^{\{k_1+1\}}, \dots, \boldsymbol{G}^{\{r\}}, \boldsymbol{G}^{\{k_1\}}$, respectively. Then, it can be observed that we will have $g_{k,k}=1, \forall k\in [r]$.
\end{proof}

\vspace{2mm}

\begin{thm} \label{thm:01}
    Assume matrix $\boldsymbol{H}$ fits the binary matrix $\boldsymbol{G}\in \mathbb{F}_{2}^{r\times r}$. Now, $\boldsymbol{H}$ can be designed to be full-rank over any field size $q\geq 2$, iff there exists a permutation of its columns which results in $g_{k,k}=1, \forall k\in[r]$.
\end{thm}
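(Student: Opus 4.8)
The statement is an ``if and only if,'' so I would prove the two directions separately, with the reverse direction (permutation $\Rightarrow$ full-rank possible) being the easy one and the forward direction (full-rank possible $\Rightarrow$ permutation exists) being the substantive part. For the reverse direction, suppose a column permutation makes $g_{k,k}=1$ for all $k\in[r]$. Then I would exhibit an explicit assignment of the free entries $h_{k,i}$, $(k,i)\in G$, that forces $\boldsymbol{H}$ to be full-rank over \emph{any} field with $q\ge 2$: put $h_{k,k}=1$ on the diagonal and set every other nonzero-allowed entry to $0$ (which is legal since $0\in\mathbb{F}_q$ and $\boldsymbol{H}$ still fits $\boldsymbol{G}$). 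After undoing the permutation this is a permutation matrix, hence invertible, so $\mathrm{rank}\,\boldsymbol{H}=r$. This shows that the maxrank in \eqref{eq:opt} equals $r$.

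For the forward direction, I would argue the contrapositive: if \emph{no} column permutation yields an all-ones diagonal, then $\boldsymbol{H}$ can never be full-rank, for any $q\ge2$ and any assignment of entries fitting $\boldsymbol{G}$. The natural tool here is the Leibniz/permanent expansion of $\det\boldsymbol{H}$: writing $\det\boldsymbol{H}=\sum_{\sigma\in S_r}\mathrm{sgn}(\sigma)\prod_{k\in[r]}h_{k,\sigma(k)}$, every monomial $\prod_k h_{k,\sigma(k)}$ vanishes identically unless $g_{k,\sigma(k)}=1$ for all $k$, i.e.\ unless $\sigma$ is exactly a column permutation witnessing an all-ones diagonal of $\boldsymbol{G}$. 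If no such $\sigma$ exists, $\det\boldsymbol{H}$ is identically zero as a polynomial in the $h_{k,i}$, hence $\boldsymbol{H}$ is singular for every evaluation in every field, so it cannot be designed to be full-rank. (Equivalently, by K\"onig's theorem the absence of a permutation with all-ones diagonal means the bipartite graph of $\boldsymbol{G}$ has no perfect matching, so $\mathrm{mcm}(\boldsymbol{G})<r$; I expect the paper's later development ties this to \eqref{eq:opt} via $\mathrm{mcm}(\boldsymbol{G})=\max\mathrm{rank}\,\boldsymbol{H}$, but for the theorem itself the determinant argument is self-contained.)

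The place where I expect a subtlety — and where Lemma \ref{lem:01} is presumably invoked — is \emph{not} in the bare existence of a perfect matching, but in an inductive or constructive refinement needed elsewhere (e.g.\ to handle submatrices $\boldsymbol{G}_{[k]}^{L}$ and the MCD construction), since the lemma precisely says that a diagonalizing permutation for a row/column-deleted minor can be extended to one for the full matrix. For the statement of Theorem \ref{thm:01} as written, I would either (i) use Lemma \ref{lem:01} to set up an induction on $r$: pick the last column, find a row $k_1$ with $g_{k_1,r}=1$ (such a row must exist, otherwise column $r$ is zero and $\boldsymbol{H}$ is trivially singular with no matching), delete that row and column, apply the inductive hypothesis to $\boldsymbol{F}=\boldsymbol{G}_{[r]\backslash\{k_1\}}^{[r-1]}$, and then lift the resulting permutation back up via the lemma; or (ii) just cite K\"onig's theorem directly. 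The main obstacle, such as it is, is making the bookkeeping between ``row permutations,'' ``column permutations,'' and transposes consistent with the lemma's phrasing, and ensuring the degenerate case of an all-zero column (no matching, no diagonalizing permutation) is explicitly dispatched at the base of the induction.
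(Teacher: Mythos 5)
Your proposal is correct, and your easy direction (permute, set $h_{k,k}=1$, zero elsewhere, obtaining a permutation matrix) is exactly the paper's argument. For the substantive direction, however, your primary route differs from the paper's: you argue the contrapositive via the Leibniz expansion $\det\boldsymbol{H}=\sum_{\sigma\in S_r}\mathrm{sgn}(\sigma)\prod_{k}h_{k,\sigma(k)}$, noting that if no permutation $\sigma$ satisfies $g_{k,\sigma(k)}=1$ for all $k$ then every monomial is identically zero (since fitting $\boldsymbol{G}$ forces $h_{k,\sigma(k)}=0$ somewhere in each term), so $\boldsymbol{H}$ is singular for every assignment over every field; equivalently one can invoke K\"onig's theorem on the bipartite graph of $\boldsymbol{G}$. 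The paper instead proves the converse by induction on $r$: it Laplace-expands along the last column, extracts a row $k_1$ with $g_{k_1,r}=1$ and a nonsingular minor, applies the induction hypothesis to the pattern $\boldsymbol{F}=\boldsymbol{G}_{[r]\backslash\{k_1\}}^{[r-1]}$, and lifts the diagonalizing column permutation back to $\boldsymbol{G}$ using Lemma \ref{lem:01} (the circular-shift lemma), which is precisely your option (i). Your determinant-support argument is more self-contained and avoids both the induction and Lemma \ref{lem:01} entirely, at the cost of not exercising the lifting machinery the paper has just set up; the paper's inductive route is more constructive in spirit and mirrors how \eqref{eq:opt} and the MCM correspondence are developed afterwards. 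Your remark that an all-zero column must be dispatched at the base of the induction is a fair observation, though in the paper's formulation it is handled implicitly by the requirement $h_{k_1,r}\neq 0$ in the expansion. Either version is a valid proof of Theorem \ref{thm:01}.
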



\begin{proof}

For the \textit{if} condition, assume that there is a permutation of columns in $\boldsymbol{G}$ which results in $g_{k,k}=1$ for all $k\in[r]$, then we do such a permutation for $\boldsymbol{H}$, and set $h_{k,k}=1$ for all $k\in[r]$, and its other elements to zero. Since rearranging the columns does not alter the matrix rank, we have $\mathrm{rank}(\boldsymbol{H})=r$ over any field of size $q\geq 2$.
\\
\textit{Conversely}, having Lemma \ref{lem:01}, we use induction for proving the converse. For $r=1$, it is obvious that $\mathrm{rank}(\boldsymbol{H}=[h_{1,1}])=1$, only if $h_{1,1}\neq 0$, which gives $g_{1,1}=1$. For the induction hypothesis, we assume that the necessary condition holds for $\boldsymbol{H}$ of size ${(r-1)\times (r-1)}$. Now, we need to prove that the necessary condition must also hold for $\boldsymbol{H}$ of size ${r\times r}$. Let $\mathrm{rank}(\boldsymbol{H})=r$. The determinant of $\boldsymbol{H}$ can be obtained using the Laplace expansion along the last column $r$, as follows
\begin{equation} \label{eq:04}
     |\boldsymbol{H}|=\sum_{k\in[r]} (-1)^{k+r}\ h_{k,r} \ \left |\boldsymbol{H}_{[r]\backslash\{k\}}^{[r-1]}\right |.
\end{equation}
Thus, for having $|\boldsymbol{H}|\neq 0$, there must exist at least one $k_1\in [r]$ such that $h_{k_1,r} \ |\boldsymbol{H}_{[r]\backslash\{k_1\}}^{[r-1]}|\neq 0$. So, we must have $h_{k_1,r}\neq 0$, which requires $g_{k_1,r}=1$, and $|\boldsymbol{H}_{[r]\backslash\{k_1\}}^{[r-1]}|\neq 0$. This is guaranteed by the induction hypothesis that for $\boldsymbol{F}=\boldsymbol{G}_{[r]\backslash\{k_1\}}^{[r-1]}$ there exists a permutation of its columns which leads to $f_{k,k}=1, \forall k\in[r]$, which completes the proof according to Lemma \ref{lem:01}.
\end{proof}

\begin{rem}
   It can be easily concluded from Theorem \ref{thm:01} that the maximum rank of
   $\boldsymbol{H}\in \mathbb{F}_{q}^{r\times m}$ which fits $\boldsymbol{G}\in \mathbb{F}_{2}^{r\times m}$
   in \eqref{eq:opt} will be equal to the maximum number of ones that appear on the main diagonal of $\boldsymbol{G}$ after performing an appropriate permutation of its columns. Thus, this will be equal to the maximum number of elements with value of 1, which are positioned in distinct rows and columns of $\boldsymbol{G}$.
\end{rem}

\subsubsection{The MCM Problem}

\begin{figure*}
\centering
\subfloat[Binary matrix $\boldsymbol{G}_{6\times 6}$]
{
\begin{blockarray}{ccccccc}
              & \textcolor{brown}{{\Large \textcircled{\small $c_1$}}} & \textcolor{brown}{{\Large \textcircled{\small $c_2$}}} & \textcolor{brown}{{\Large \textcircled{\small $c_3$}}} & \textcolor{brown}{{\Large \textcircled{\small $c_4$}}} & \textcolor{brown}{{\Large \textcircled{\small $c_5$}}} & \textcolor{brown}{{\Large \textcircled{\small $c_6$}}} \\
             \begin{block}{c[cccccc]}
               \textcolor{blue}{{\large \textcircled{\small $p_1$}}} & 0 & 1 & 0 & 1 & 0 & 0 \\
               \textcolor{blue}{{\large \textcircled{\small $p_2$}}} & 1 & 1 & 1 & 1 & 1 & 1 \\
               \textcolor{blue}{{\large \textcircled{\small $p_3$}}} & 0 & 1 & 0 & 0 & 0 & 0\\
               \textcolor{blue}{{\large \textcircled{\small $p_4$}}} & 0 & 1 & 0 & 1 & 0 & 0\\
               \textcolor{blue}{{\large \textcircled{\small $p_5$}}} & 1 & 0 & 0 & 0 & 1 & 0\\
               \textcolor{blue}{{\large \textcircled{\small $p_6$}}} & 0 & 0 & 1 & 1 & 1 & 1\\
             \end{block}
\end{blockarray}
}
 \
\subfloat[Bipartite graph $\mathcal{G}(P,C,E)$]
{
                 \begin{tikzpicture}[baseline=(4.base)]
                 \tikzset{vertex/.style = {shape=circle,draw, minimum size=1em}}
                 \tikzset{edge/.style = {->,> = latex'}}
                 \node[vertex,inner sep=1.5pt, color=blue, minimum size=0.5pt]  (1)   at  (0,3.75) {\small $p_1$};
                 \node[vertex,inner sep=1.5pt, color=blue, minimum size=0.5pt]  (2)   at  (0,3)    {\small $p_2$};
                 \node[vertex,inner sep=1.5pt, color=blue, minimum size=0.5pt]  (3)   at  (0,2.25) {\small $p_3$};
                 \node[vertex,inner sep=1.5pt, color=blue, minimum size=0.5pt]  (4)   at  (0,1.5)  {\small $p_4$};
                 \node[vertex,inner sep=1.5pt, color=blue, minimum size=0.5pt]  (5)   at  (0,.75)  {\small $p_5$};
                 \node[vertex,inner sep=1.5pt, color=blue, minimum size=0.5pt]  (6)   at  (0,0)    {\small $p_6$};
                 \node[vertex,inner sep=1.5pt, color=brown, minimum size=0.5pt] (7)   at  (3,3.75) {\small $c_1$};
                 \node[vertex,inner sep=1.5pt, color=brown, minimum size=0.5pt] (8)   at  (3,3)    {\small $c_2$};
                 \node[vertex,inner sep=1.5pt, color=brown, minimum size=0.5pt] (9)   at  (3,2.25) {\small $c_3$};
                 \node[vertex,inner sep=1.5pt, color=brown, minimum size=0.5pt] (10)  at  (3,1.5)  {\small $c_4$};
                 \node[vertex,inner sep=1.5pt, color=brown, minimum size=0.5pt] (11)  at  (3,.75)  {\small $c_5$};
                 \node[vertex,inner sep=1.5pt, color=brown, minimum size=0.5pt] (12)  at  (3,0)    {\small $c_6$};
                 \draw[edge] (1) to             (8);
                 \textcolor{red}{\draw[edge] (1) to             (10);}
                 
                 \textcolor{red}{\draw[edge] (2) to             (7);}
                 \draw[edge] (2) to             (8);
                 \draw[edge] (2) to             (9);
                 \draw[edge] (2) to             (10);
                 \draw[edge] (2) to             (11);
                 \draw[edge] (2) to             (12);
                 
                 \textcolor{red}{\draw[edge] (3) to             (8);}
                 
                 \draw[edge] (4) to             (8);
                 \draw[edge] (4) to             (10);
                 
                 \draw[edge] (5) to             (7);
                 \textcolor{red}{\draw[edge] (5) to             (11);}
                 
                 \draw[edge] (6) to             (9);
                 \draw[edge] (6) to             (10);
                 \draw[edge] (6) to             (11);
                 \textcolor{red}{\draw[edge] (6) to             (12);}
           \end{tikzpicture} 
           \label{fig:ex:mcm:g}
}
\
\subfloat[Binary matrix $\boldsymbol{G}_{6\times 6}^{\prime}$]
{
\begin{blockarray}{ccccccc}
               & \textcolor{brown}{{\Large \textcircled{\small $c_1$}}} & \textcolor{brown}{{\Large \textcircled{\small $c_2$}}} & \textcolor{brown}{{\Large \textcircled{\small $c_3$}}} & \textcolor{brown}{{\Large \textcircled{\small $c_4$}}} & \textcolor{brown}{{\Large \textcircled{\small $c_5$}}} & \textcolor{brown}{{\Large \textcircled{\small $c_6$}}}\\
             \begin{block}{c[cccccc]}
               \textcolor{blue}{{\large \textcircled{\small $p_1$}}} & 0 & 0 & 0 & \textcolor{red}{{\large \textcircled{\small 1}}} & 0 & 0 \\
               \textcolor{blue}{{\large \textcircled{\small $p_2$}}} & \textcolor{red}{{\large \textcircled{\small 1}}} & 0 & 0 & 0 & 0 & 0 \\
               \textcolor{blue}{{\large \textcircled{\small $p_3$}}} & 0 & \textcolor{red}{{\large \textcircled{\small 1}}} & 0 & 0 & 0 & 0\\
               \textcolor{blue}{{\large \textcircled{\small $p_4$}}} & 0 & 0 & 0 & 0 & 0 & 0\\
               \textcolor{blue}{{\large \textcircled{\small $p_5$}}} & 0 & 0 & 0 & 0 & \textcolor{red}{{\large \textcircled{\small 1}}} & 0\\
               \textcolor{blue}{{\large \textcircled{\small $p_6$}}} & 0 & 0 & 0 & 0 & 0 & \textcolor{red}{{\large \textcircled{\small 1}}}\\
             \end{block}
\end{blockarray}\label{fig:ex:mcm:c}
}
\caption{Consider the binary matrix $\boldsymbol{G}_{6\times 6}$ in (a) which is represented as a bipartite graph $\mathcal{G}(P,C,E)$ in (b), where $P=\{p_k, k\in[6]\}$, $C=\{c_i, i\in[6]\}$, and $E=\{(k,i)\in [6]\times [6]: g_{k,i}=1\}$. It can be verified that $E^{\prime}=\{(p_1,c_4), (p_2,c_1), (p_3,c_2), (p_5,c_5), (p_6,c_6)\}$ (in red) is an MCM of graph $\mathcal{G}$. So, we have $\mathrm{mcm}(\boldsymbol{G})=|E^{\prime}|=5$. The binary matrix $\boldsymbol{G}^{\prime}$ in (c), associated with the subgraph $\mathcal{G}(P,C,E^{\prime})$, illustrates how the elements with value of one are placed in distinct rows and columns, so by a proper permutation of its columns, all the ones can be positioned on the main diagonal.}
\label{fig:mcm}
\end{figure*}

A binary matrix $\boldsymbol{G}\in \mathbb{F}_{2}^{r \times m}$ can be represented as a bipartite graph $\mathcal{G}(P,C,E)$, where $P=\{p_k, k\in[r]\}$ and $C=\{c_{i}, i\in[m]\}$ are two disjoint sets of vertices, called, the row and column vertex set, respectively. $E\subseteq P\times C$ denotes the set of edges such that $(p_k,c_i)\in E, (k,i)\in [r]\times [m]$, connects $p_k$ to $c_i$ only if $g_{k,i}=1$.

\begin{defn}[Maximum Cardinality Matching of $\mathcal{G}$ \cite{Hopcroft1971}]
For the bipartite graph $\mathcal{G}(P,C,E)$, let $E^{\prime}\subseteq E$ denote the set of edges such that no two edges share a common vertex (i.e., if $(p_{k_1},c_{i_1})\in E^{\prime}$ and $(p_{k_2},c_{i_2})\in E^{\prime}$, then $k_1\neq k_2$ and $i_1\neq i_2$). Such a set $E^{\prime}$ with maximum size is called the maximum cardinality matching (MCM) of $\mathcal{G}$.
\end{defn}

\begin{defn}[$\mathrm{mcm}(\boldsymbol{G})$]
Consider a binary matrix $\boldsymbol{G}$ represented as a graph $\mathcal{G}$. Let $E^{\prime}$ be an MCM of $\mathcal{G}$. Then, the maximum cardinality matching of $\boldsymbol{G}$ is defined as $\mathrm{mcm}(\boldsymbol{G})\triangleq|E^{\prime}|$.
\end{defn}

\begin{rem}
As mentioned above, for each two edges $(p_{k_1},c_{i_1})$ and $(p_{k_2},c_{i_2})$ in $E^{\prime}$, we must have $k_1\neq k_2$ and $i_1\neq i_2$. Now, consider the binary matrix $\boldsymbol{G}^{\prime}$ associated with subgraph $\mathcal{G}(P,C,E^{\prime})$. It can be easily observed that all the elements with value of one are positioned in distinct rows and columns. This implies that $\mathrm{mcm}(\boldsymbol{G})$ gives the maximum number of elements with value of one, which can be placed on the main diagonal after permuting the columns in a specific way. Thus, if $\boldsymbol{H}$ fits $\boldsymbol{G}\in \mathbb{F}_{2}^{r\times m}$, then for any field size $q\geq 2$, we have 

\begin{equation} \nonumber
\mathrm{mcm}(\boldsymbol{G}) = \max_{\substack{h_{k,i}\in \mathbb{F}_q\\  (k,i)\in G}} \ \mathrm{rank} \ \boldsymbol{H}.
\end{equation}

\end{rem}

\begin{exmp}
An example of the MCM problem is illustrated in Figure \ref{fig:mcm}.
\end{exmp}

\subsection{The Maximum Column Distance (MCD) Algorithm} \label{sec:06}
In this subsection, given a binary matrix $\boldsymbol{G}\in \mathbb{F}_{2}^{r\times m}$ with set $G=\{(k,i)\in[r]\times [m], g_{k,i}=1 \}$, the MCD algorithm is proposed to design a matrix $\boldsymbol{H}\in \mathbb{F}_{q}^{r\times m}$ such that $\boldsymbol{H}$ fits $\boldsymbol{G}$, and each submatrix $\boldsymbol{H}_{[k]}^{L}, k\in [r], L\subseteq[m]$ will achieve its maximum possible rank.
\\
First, it is trivial that if we set $\boldsymbol{H}_{[1]}=\boldsymbol{G}_{[1]}$, then each $\boldsymbol{H}_{[1]}^{L}, L\subseteq[m]$ will reach its maximum possible rank. Thus, the MCD algorithm fixes the first row as $\boldsymbol{H}_{[1]}=\boldsymbol{G}_{[1]}$, and in the following, we discuss designing the other rows of $\boldsymbol{H}_{[k]}, k\in [r]\backslash \{1\}$.
First, we begin with the concepts of basis and circuit set in the matroid theory \cite{Rouayheb2010}. Then, for each element $h_{k,i}, (k,i)\in G$, we define a veto set to characterize the values which must be vetoed by the MCD algorithm so that each submatrix $\boldsymbol{H}_{[k]}^{L}$ will achieve its maximum possible rank. This will be followed by a description of the MCD algorithm which is presented as Algorithm \ref{alg:MCD}.

\subsubsection{Prerequisite Material for the MCD Algorithm}
In this part, it is assumed that $k\in [r]\backslash \{1\}$, $i\in L\subseteq [m]$, and all elements of $\boldsymbol{H}_{[k]}^{L}$ are prefixed except element $h_{k,i}$. We also assume that $\mathrm{rank}(\boldsymbol{H}_{[k]}^{\emptyset})=\mathrm{rank}([\ \ ])=0$.
\\
\textbf{Goal}: It is aimed to assign a proper value from $\mathbb{F}_{q}$ to element $h_{k,i}$ such that column $\boldsymbol{H}_{[k]}^{\{i\}}$ will become as much linearly independent of the space spanned by any other columns in $\boldsymbol{H}_{[k]}^{L\backslash \{i\}}$ as possible.

\begin{defn}[Basis and Circuit Set of $\boldsymbol{H}$ \cite{Rouayheb2010}]
We say that set $L$ is an independent set of $\boldsymbol{H}$ if $\mathrm{rank} \ \boldsymbol{H}^{L}=|L|$. Otherwise, $L$ is considered to be a dependent set. A maximal independent set $L$ is referred to as a basis set. A minimal dependent set $L$ is referred to as a circuit set. Let sets $\mathcal{B}_{k}$ and $\mathcal{C}_{k}$, respectively, denote the set of all basis and circuit sets of $\boldsymbol{H}_{[k]}$. It can be shown that
\vspace{-2ex}
\begin{align}
    \mathrm{rank}(\boldsymbol{H}_{[k]})&=\mathrm{rank}(\boldsymbol{H}_{[k]}^{L})=|L|, \ \ \ \ \ \ \ \ \ \ \ \forall L\in \mathcal{B}_{k},
    \nonumber
    \\
    \mathrm{rank}(\boldsymbol{H}_{[k]}^{L})&=\mathrm{rank}(\boldsymbol{H}_{[k]}^{L\backslash \{i\}})=|L|-1, \ \ \forall i\in L,\ \forall L\in \mathcal{C}_{k}. \label{eq:cicuitset}
\end{align}
\end{defn}

\vspace{-2ex}
\begin{prop} \label{prop:vet-value}
Let $L\in \mathcal{C}_{k-1}$. Then, there exists only one value for $h_{k,i}$, denoted by $h_{k,i}(L)$, such that if $h_{k,i}\in \mathbb{F}_{q}\backslash \{h_{k,i}(L)\}$, then $L$ will become an independent set of $\boldsymbol{H}_{[k]}$.
\end{prop}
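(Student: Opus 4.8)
The plan is to reduce the whole statement to a single nondegenerate affine equation in the one free entry $h_{k,i}$, and the point making it nondegenerate is a basic matroid fact about circuits. The starting observation is that, since $L\in\mathcal{C}_{k-1}$, the rank identity \eqref{eq:cicuitset} gives $\mathrm{rank}(\boldsymbol{H}_{[k-1]}^{L})=|L|-1$ while every proper subset of $L$ is independent in $\boldsymbol{H}_{[k-1]}$. Consequently the space of coefficient tuples $(\beta_j)_{j\in L}$ satisfying $\sum_{j\in L}\beta_j\boldsymbol{H}_{[k-1]}^{\{j\}}=\boldsymbol{0}$ is exactly one-dimensional; fix a nonzero generator $(\alpha_j)_{j\in L}$. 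The structural fact I would establish first is that $\alpha_j\neq 0$ for every $j\in L$: otherwise, discarding a coordinate $j_0$ with $\alpha_{j_0}=0$ would exhibit a nontrivial dependence among $\{\boldsymbol{H}_{[k-1]}^{\{j\}}:j\in L\setminus\{j_0\}\}$, contradicting the independence of $L\setminus\{j_0\}$. In particular $\alpha_i\neq 0$.

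Next I would track how adjoining the $k$-th row affects $L$. The $j$-th column of $\boldsymbol{H}_{[k]}^{L}$ is $\binom{\boldsymbol{H}_{[k-1]}^{\{j\}}}{h_{k,j}}$, and since adjoining one row raises the rank by at most one, $\mathrm{rank}(\boldsymbol{H}_{[k]}^{L})\in\{|L|-1,|L|\}$; thus $L$ is independent in $\boldsymbol{H}_{[k]}$ precisely when $\boldsymbol{H}_{[k]}^{L}$ admits no nontrivial column dependence. Now any column dependence of $\boldsymbol{H}_{[k]}^{L}$ restricts, on the first $k-1$ rows, to a column dependence of $\boldsymbol{H}_{[k-1]}^{L}$, hence its coefficient tuple is a scalar multiple of $(\alpha_j)_{j\in L}$; conversely $(\alpha_j)_{j\in L}$ lifts to a dependence of $\boldsymbol{H}_{[k]}^{L}$ if and only if the remaining last-row equation $\sum_{j\in L}\alpha_j h_{k,j}=0$ holds. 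Therefore $L$ remains dependent in $\boldsymbol{H}_{[k]}$ if and only if $\sum_{j\in L}\alpha_j h_{k,j}=0$.

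Finally, all entries $h_{k,j}$ with $j\in L\setminus\{i\}$ are prefixed and $\alpha_i\neq 0$, so $\sum_{j\in L}\alpha_j h_{k,j}=0$ is a nondegenerate affine equation in the single unknown $h_{k,i}$ over $\mathbb{F}_q$, with unique solution
\begin{equation} \nonumber
  h_{k,i}(L)\;=\;-\,\alpha_i^{-1}\sum_{j\in L\setminus\{i\}}\alpha_j\,h_{k,j}\ \in\ \mathbb{F}_q .
\end{equation}
Hence $\boldsymbol{H}_{[k]}^{L}$ is dependent exactly when $h_{k,i}=h_{k,i}(L)$, i.e. $L$ becomes an independent set of $\boldsymbol{H}_{[k]}$ for every choice $h_{k,i}\in\mathbb{F}_q\setminus\{h_{k,i}(L)\}$, which is the claim (and one checks immediately that $h_{k,i}(L)$ does not depend on the scaling of the generator $(\alpha_j)_{j\in L}$). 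I do not anticipate a genuine obstacle: the only points needing care are the matroid facts that the circuit dependence is unique up to a scalar and has full support — so that $\alpha_i\neq 0$ and the derived relation is truly an equation \emph{in} $h_{k,i}$ — both of which follow directly from the definition of a circuit together with \eqref{eq:cicuitset}.
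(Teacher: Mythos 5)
Your proof is correct and follows essentially the same route as the paper's: both rest on the fact that a circuit admits a unique (up to scaling) dependence with full support, so that adjoining the $k$-th row reduces the question to a single linear equation in $h_{k,i}$ with nonzero coefficient, yielding exactly one forbidden value $h_{k,i}(L)$. The only difference is cosmetic --- you phrase the dependence via the one-dimensional kernel vector $(\alpha_j)_{j\in L}$ and a lifting argument, whereas the paper writes $\boldsymbol{H}_{[k-1]}^{\{i\}}=\boldsymbol{H}_{[k-1]}^{L\backslash\{i\}}\boldsymbol{f}$ and performs the elimination explicitly via $\mathrm{rref}$, giving the same veto value $h_{k,i}(L)=\boldsymbol{H}_{\{k\}}^{L\backslash\{i\}}\boldsymbol{f}=-\alpha_i^{-1}\sum_{j\in L\backslash\{i\}}\alpha_j h_{k,j}$.
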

\begin{proof}
Refer to Appendix \ref{app:prop3-MCD}.
\end{proof}

\begin{defn}[Veto Value] \label{veto-value-def}
Let $L\in \mathcal{C}_{k-1}$. Since the unique value $h_{k,i}(L)$ in Proposition \ref{prop:vet-value} is undesirable, it is referred to as the veto value of set $L$ for element $h_{k,i}$.
\end{defn}

\begin{rem} \label{rem:L-empty-h-zero}
Let $L=\{i\}$. Then, $\{i\}\in \mathcal{C}_{k-1}$ means that column $\boldsymbol{H}_{[k-1]}^{\{i\}}$ is a full-zero column. Thus, by setting $h_{k,i}\in \mathbb{F}_{q}\backslash \{0\}$, set $\{i\}$ will be an independent set of $\boldsymbol{H}_{[k]}$. This means that if $L=\{i\}$, then $h_{k,i}(\{i\})=0$.
\end{rem}
\begin{defn}[Veto Set] \label{def:veto-set}
Let $\mathcal{C}_{k-1,i}^{L}$ denote the set of all $L^{\prime}\subseteq L\backslash \{i\}$ such that $L^{\prime}\cup \{i\}\in \mathcal{C}_{k-1}$. We refer to the set $Z_{k,i}^{L}=\{h_{k,i}(L^{\prime}\cup \{i\})\in \mathbb{F}_q: \ L^{\prime}\in \mathcal{C}_{k-1,i}^{L}\}$ as the veto set of $L$ for $h_{k,i}$. 
\end{defn}
\begin{exmp} \label{exmp:veto-set}
Consider the following matrix $\boldsymbol{H}\in \mathbb{F}_{5}^{3\times 6}$, where $\mathbb{F}_{5}=GF(5)$,
 \begin{equation} \label{eq:veto-set-exmp}
     \begin{bmatrix}
     1  &  0  &  1  &  1  &  0  &  0  \\
     0  &  1  &  0  &  1  &  1  &  1  \\
     0  &  1  &  1  &  2  &  0  & h_{3,6} 
     \end{bmatrix}.
 \end{equation}
We want to find the veto set of $L=[6]$ for $h_{3,6}$. We set $k=3$, $i=6$. Now, it can be seen that $\mathcal{C}_{2,6}^{[6]}=\{L_{j}^{\prime}\subseteq L\backslash \{i\}, j\in [4]\}$, where $L_{1}^{\prime}=\{1,4\}$, $L_{2}^{\prime}=\{2\}$, $L_{3}^{\prime}=\{3,4\}$, and $L_{4}^{\prime}=\{5\}$, which means that each $L_{j}^{\prime}\cup \{6\}, j\in[4]$ is a circuit set of $\boldsymbol{H}_{[2]}$. It can also be checked that the corresponding veto value of each $L_{j}^{\prime}\cup \{6\}$ for $h_{3,6}$ is as follows:
$h_{3,6}(L_{1}^{\prime}\cup \{6\})=2$, $h_{3,6}(L_{2}^{\prime}\cup \{6\})=h_{3,6}(L_{3}^{\prime}\cup \{6\})=1$, and $h_{3,6}(L_{4}^{\prime}\cup \{6\})=0$. Thus, $Z_{3,6}^{[6]}=\{0,1,2\}$.
\end{exmp}
\begin{prop} \label{prop:L-ind-Li-dep}
Assume that $L\backslash \{i\}\in \mathcal{B}_{k-1}$ and $L$ is a dependent set of $\boldsymbol{H}_{[k-1]}$. Now, if $h_{k,i}\in \mathbb{F}_{q}\backslash Z_{k,i}^{L}$, then $L\in \mathcal{B}_{k}$ (we also have $|Z_{k,i}^L|=1$).
\end{prop}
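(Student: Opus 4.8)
The plan is to leverage the structure already established: $L\setminus\{i\}\in\mathcal B_{k-1}$ is a basis of $\boldsymbol H_{[k-1]}$, so $\mathrm{rank}(\boldsymbol H_{[k-1]}^{L\setminus\{i\}})=|L|-1=\mathrm{rank}(\boldsymbol H_{[k-1]})$, while $L$ is dependent in $\boldsymbol H_{[k-1]}$, meaning $\mathrm{rank}(\boldsymbol H_{[k-1]}^{L})=|L|-1$ as well (adding column $i$ does not raise the rank). The first step is to identify the circuit contained in $L$: since $L\setminus\{i\}$ is independent but $L$ is dependent, there is a unique circuit $L^\ast\subseteq L$ with $i\in L^\ast$ (every minimal dependent subset of $L$ must use column $i$, and in a matroid the fundamental circuit of an element with respect to a basis is unique). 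Hence $\mathcal C_{k-1,i}^{L}=\{L^\ast\setminus\{i\}\}$ is a singleton, which immediately gives $|Z_{k,i}^{L}|=1$; write its sole element as $h_{k,i}(L^\ast)$.

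Next I would pass from $\boldsymbol H_{[k-1]}$ to $\boldsymbol H_{[k]}$, i.e. append the new row $k$. The claim is: if $h_{k,i}\neq h_{k,i}(L^\ast)$, then $L\in\mathcal B_{k}$, equivalently $\mathrm{rank}(\boldsymbol H_{[k]}^{L})=|L|$. To see $\mathrm{rank}(\boldsymbol H_{[k]})=|L|$ in the first place, note $\boldsymbol H_{[k]}$ has $k$ rows and $L\setminus\{i\}$ was already a basis of the $(k-1)$-row matrix, so the rank can only go up or stay; since we are adding a row, $\mathrm{rank}(\boldsymbol H_{[k]})\le k$, but the relevant bound is that $L\setminus\{i\}$ remains independent in $\boldsymbol H_{[k]}$ (appending rows preserves independence of a column set), so $\mathrm{rank}(\boldsymbol H_{[k]}^{L})\ge|L|-1$; I must show it equals $|L|$. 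The key computation is that the only obstruction to $L$ becoming independent is the value $h_{k,i}$ making $L^\ast$ a circuit again, and by Proposition \ref{prop:vet-value} applied to the circuit $L^\ast\in\mathcal C_{k-1}$, the set $L^\ast$ becomes independent in $\boldsymbol H_{[k]}$ precisely when $h_{k,i}\in\mathbb F_q\setminus\{h_{k,i}(L^\ast)\}$. So choose $h_{k,i}\notin Z_{k,i}^{L}=\{h_{k,i}(L^\ast)\}$; then $L^\ast$ is independent in $\boldsymbol H_{[k]}$.

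It then remains to upgrade "$L^\ast$ independent" to "$L$ independent." This is the step I expect to be the main obstacle, and I would argue it as follows. Suppose for contradiction $L$ is still dependent in $\boldsymbol H_{[k]}$. Then $L$ contains some circuit $L'\in\mathcal C_k$. If $i\notin L'$, then $L'\subseteq L\setminus\{i\}$, contradicting that $L\setminus\{i\}$ is independent in $\boldsymbol H_{[k-1]}$ and hence (appending a row preserves independence) in $\boldsymbol H_{[k]}$. So $i\in L'$. Now consider the matroid on $\boldsymbol H_{[k]}$: $L\setminus\{i\}$ is independent there and $L'\setminus\{i\}\subseteq L\setminus\{i\}$, so $L'$ is the fundamental circuit of $i$ with respect to any basis extending $L\setminus\{i\}$, hence unique; in particular $L'\subseteq$ the closure of $L\setminus\{i\}$. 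But every column of $\boldsymbol H_{[k-1]}$ outside the span of $\boldsymbol H_{[k-1]}^{L\setminus\{i\}}$ — and there are none among indices in $L$ since $\mathrm{rank}(\boldsymbol H_{[k-1]}^{L})=|L|-1$ — forces $L'\setminus\{i\}\subseteq L^\ast\setminus\{i\}$ up to the span, so the dependency of $L'$ in $\boldsymbol H_{[k]}$ is the row-$k$ restriction of the dependency witnessed by $L^\ast$ in $\boldsymbol H_{[k-1]}$; spelling this out, the unique scalar $h_{k,i}$ that keeps $L^\ast$ dependent in $\boldsymbol H_{[k]}$ is exactly $h_{k,i}(L^\ast)$, which we excluded. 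This contradiction shows $L$ is independent, and since $|L|=\mathrm{rank}(\boldsymbol H_{[k]})$ by the rank count above, $L\in\mathcal B_k$. The delicate point throughout is keeping straight which matroid (the $(k-1)$-row one versus the $k$-row one) each circuit/basis claim lives in, and using Proposition \ref{prop:vet-value} only for the single relevant circuit $L^\ast$.
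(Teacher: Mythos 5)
Your proposal is correct and takes essentially the same route as the paper's proof: both identify the unique fundamental circuit of $i$ with respect to the basis $L\backslash\{i\}$ (the paper writes it as $L'\cup\{i\}$, you as $L^{\ast}$), invoke Proposition \ref{prop:vet-value} for that single circuit to get $|Z_{k,i}^{L}|=1$, and conclude $L\in\mathcal{B}_{k}$ when $h_{k,i}$ avoids the one veto value. If anything, you spell out the final step (that any residual dependency of $L$ in $\boldsymbol{H}_{[k]}$ must restrict to the unique dependency supported on $L^{\ast}$ and hence forces $h_{k,i}$ to equal the veto value) more explicitly than the paper, which handles it tersely by splitting $L\backslash\{i\}$ into the circuit part and a complementary independent part.
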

\begin{proof}
Refer to Appendix \ref{app:prop4-MCD}.
\end{proof}
\begin{thm} \label{thm:02}
If $h_{k,i}\in \mathbb{F}_{q}\backslash Z_{k,i}^{L}$, then column $\boldsymbol{H}_{[k]}^{\{i\}}$ will be linearly independent of the space spanned by any other columns in $\boldsymbol{H}_{[k]}^{L\backslash \{i\}}$ to the extent possible.
\end{thm}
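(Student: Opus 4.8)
The plan is first to make precise what ``to the extent possible'' should mean, and then to reduce the statement to Propositions~\ref{prop:vet-value} and~\ref{prop:L-ind-Li-dep} together with Remark~\ref{rem:L-empty-h-zero}. I read the claim as follows: for every $L'\subseteq L\backslash\{i\}$, the rank of $\boldsymbol{H}_{[k]}^{L'\cup\{i\}}$ equals its maximum over all values $h_{k,i}\in\mathbb{F}_q$ (every other entry being fixed); equivalently, $\boldsymbol{H}_{[k]}^{\{i\}}$ fails to be linearly independent of $\mathrm{span}\,\boldsymbol{H}_{[k]}^{L'}$ only when column $i$ is dragged into that span for \emph{every} choice of $h_{k,i}$. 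So I would fix an arbitrary $L'\subseteq L\backslash\{i\}$ and prove this for $L'$.

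The routine cases go by a projection argument. If $\boldsymbol{H}_{[k-1]}^{\{i\}}\notin\mathrm{span}\,\boldsymbol{H}_{[k-1]}^{L'}$, then $\boldsymbol{H}_{[k]}^{\{i\}}\notin\mathrm{span}\,\boldsymbol{H}_{[k]}^{L'}$ for every $h_{k,i}$ (a dependence at level $k$ restricts to one at level $k-1$), so $\mathrm{rank}(\boldsymbol{H}_{[k]}^{L'\cup\{i\}})=\mathrm{rank}(\boldsymbol{H}_{[k]}^{L'})+1$, which is already maximal. Otherwise $\boldsymbol{H}_{[k-1]}^{\{i\}}\in\mathrm{span}\,\boldsymbol{H}_{[k-1]}^{L'}$, and the set of values taken by the $k$-th coordinate along the representations of $\boldsymbol{H}_{[k-1]}^{\{i\}}$ by the columns of $\boldsymbol{H}_{[k-1]}^{L'}$ is an affine image of the space of level-$(k-1)$ dependences among the columns of $L'$; being such an image, it is either a single value $v$ or all of $\mathbb{F}_q$, according to whether $\mathrm{rank}\,\boldsymbol{H}_{[k]}^{L'}=\mathrm{rank}\,\boldsymbol{H}_{[k-1]}^{L'}$ or $\mathrm{rank}\,\boldsymbol{H}_{[k]}^{L'}>\mathrm{rank}\,\boldsymbol{H}_{[k-1]}^{L'}$. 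In the latter case column $i$ lies in $\mathrm{span}\,\boldsymbol{H}_{[k]}^{L'}$ for every $h_{k,i}$, so $\mathrm{rank}\,\boldsymbol{H}_{[k]}^{L'}$ is itself the maximum and there is nothing to prove; in the former, $\mathrm{rank}(\boldsymbol{H}_{[k]}^{L'\cup\{i\}})=\mathrm{rank}(\boldsymbol{H}_{[k]}^{L'})+1$ is attainable and occurs precisely when $h_{k,i}\neq v$.

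The crux is then to show $v\in Z_{k,i}^L$ in that remaining case, so that the hypothesis $h_{k,i}\in\mathbb{F}_q\backslash Z_{k,i}^L$ indeed forces $h_{k,i}\neq v$. Starting from any representation $\boldsymbol{H}_{[k-1]}^{\{i\}}=\sum_{j\in L'}\alpha_j\boldsymbol{H}_{[k-1]}^{\{j\}}$, with $v=\sum_{j}\alpha_j h_{k,j}$, I would repeatedly subtract level-$(k-1)$ dependences supported in $L'$ to shrink the support $\{j:\alpha_j\neq 0\}$; each such dependence contributes $0$ to the $k$-th coordinate in the present case, so $v$ is left unchanged, and the process stops at a representation whose support together with $\{i\}$ is a minimal dependent set, i.e.\ a circuit $L''\cup\{i\}\in\mathcal{C}_{k-1}$ with $L''\subseteq L\backslash\{i\}$ (the degenerate case $\boldsymbol{H}_{[k-1]}^{\{i\}}=\boldsymbol{0}$ gives $L''=\emptyset$ and $v=0$, matching Remark~\ref{rem:L-empty-h-zero}). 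By Proposition~\ref{prop:vet-value}, the unique value that keeps $L''\cup\{i\}$ dependent at level $k$ is the $k$-th coordinate contribution of its circuit dependence, which is exactly $v$; hence $h_{k,i}(L''\cup\{i\})=v$, and since $L''\in\mathcal{C}_{k-1,i}^L$ this gives $v\in Z_{k,i}^L$, completing the argument for this $L'$.

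Ranging over all $L'\subseteq L\backslash\{i\}$ then yields the theorem, with Proposition~\ref{prop:L-ind-Li-dep} covering the base instance $L\backslash\{i\}\in\mathcal{B}_{k-1}$ (where in fact $|Z_{k,i}^L|=1$). I expect the main obstacle to be exactly the support-shrinking step: one has to justify carefully that the reduction always terminates at a circuit through $i$ and that $v$ is a genuine invariant of the reduction, i.e.\ that the ``single value or all of $\mathbb{F}_q$'' dichotomy and its equivalence with $\mathrm{rank}\,\boldsymbol{H}_{[k]}^{L'}$ versus $\mathrm{rank}\,\boldsymbol{H}_{[k-1]}^{L'}$ are proved rather than merely asserted; the rest is bookkeeping over subsets plus the projection observation.
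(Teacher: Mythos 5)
Your argument is correct, but it takes a noticeably different route from the paper's. The paper proves Theorem \ref{thm:02} in a few lines by ranging over the maximal independent subsets $L'\in\mathcal{B}_{k-1,i}^{L}$ of $L\backslash\{i\}$, invoking Proposition \ref{prop:L-ind-Li-dep} for each such basis (each contributing a singleton veto set), and asserting the identity \eqref{Z-size} that expresses $Z_{k,i}^{L}$ as the union of these singletons; you instead fix an arbitrary subset $L'\subseteq L\backslash\{i\}$, set up the dichotomy for the $k$-th coordinate (the attainable values form an affine image of the dependence space of $\boldsymbol{H}_{[k-1]}^{L'}$, hence a single value $v$ or all of $\mathbb{F}_q$, according to whether $\mathrm{rank}\,\boldsymbol{H}_{[k]}^{L'}$ equals or exceeds $\mathrm{rank}\,\boldsymbol{H}_{[k-1]}^{L'}$), and then show directly, by shrinking the support of a representation to a fundamental circuit through $i$, that the unique forbidden value $v$ is a circuit veto value, so Proposition \ref{prop:vet-value} (rather than Proposition \ref{prop:L-ind-Li-dep}) does the work, with Remark \ref{rem:L-empty-h-zero} covering the degenerate case. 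What your version buys is a precise, subset-wise formulation of ``to the extent possible'' (maximality of $\mathrm{rank}\,\boldsymbol{H}_{[k]}^{L'\cup\{i\}}$ over $h_{k,i}$ for every $L'$), which is exactly the form the induction in Proposition \ref{thm:MCD} consumes, and an explicit justification of the bad-value/veto-value identification that the paper compresses into \eqref{Z-size}; what the paper's version buys is brevity, since Proposition \ref{prop:L-ind-Li-dep} already packages the basis-plus-dependent-set case. The one place you still owe details is the support-shrinking step you flag yourself: termination with an independent final support $S$, invariance of $v$ (each subtracted dependence has zero $k$-th coordinate in the single-value case), and the fact that $S\cup\{i\}$ is then a circuit (uniqueness of the representation over the independent set $S$); all of this is routine linear algebra, so I see no genuine gap, only bookkeeping to write out.
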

\begin{proof}
We assume that the column $\boldsymbol{H}_{[k-1]}^{\{i\}}$ is linearly dependent on columns in $\boldsymbol{H}_{[k-1]}^{L\backslash \{i\}}$, since otherwise the statement always holds and $|Z_{k,i}^{L}|=0$.
\\
Now, let $\mathcal{B}_{k-1,i}^{L}$ denote the set of all $L^{\prime}\subseteq L\backslash \{i\}$ such that $L^{\prime}\in \mathcal{B}_{k-1}$. Since column $\boldsymbol{H}_{[k-1]}^{\{i\}}$ is linearly dependent on columns in $\boldsymbol{H}_{[k-1]}^{L\backslash \{i\}}$, for each $L^{\prime}\in \mathcal{B}_{k-1,i}^{L}$, set $L^{\prime}\cup \{i\}$ will be a dependent set of $\boldsymbol{H}_{[k-1]}$. Now, according to Proposition \ref{prop:L-ind-Li-dep}, if $h_{k,i}\in \mathbb{F}_{q}\backslash Z_{k,i}^{L^{\prime}\cup \{i\}}$ (where $|Z_{k,i}^{L^{\prime}\cup \{i\}}|=1$) for each $L^{\prime}\in \mathcal{B}_{k-1,i}^{L}$, then we will have $L^{\prime}\cup\{i\} \in \mathcal{B}_{k}$.
\\
Thus, by setting $h_{k,i}\in \mathbb{F}_{q}\backslash Z_{k,i}^{L}$, where 

\begin{equation} \label{Z-size}
    Z_{k,i}^{L}=\cup_{L^{\prime} \in \mathcal{B}_{k-1,i}^{L}} Z_{k,i}^{L^{\prime}\cup \{i\}},
\end{equation}
column $\boldsymbol{H}_{[k]}^{\{i\}}$ will be linearly independent of the space spanned by any other columns in $\boldsymbol{H}_{[k]}^{L\backslash \{i\}}$ to the extent possible.
\end{proof}

\subsubsection{Description of the MCD Algorithm} \label{subsec:MCD}
Assume a binary matrix $\boldsymbol{G}\in \mathbb{F}_{2}^{r\times m}$ with support sets $G_k=\{i\in[m], g_{k,i}=1\}, k\in[r]$ and a field of size $q\geq q_{min}$ are given, where $q_{min}={m \choose p}, \ p=\min \{\floor{\frac{m}{2}},r \}$.
The MCD algorithm, presented as Algorithm \ref{alg:MCD}, in transmission $k$, assigns a proper value to the elements $h_{k,i}, i\in G_{k}$ such that each submatrix $\boldsymbol{H}_{[k]}^{L}, k\in [r], L\subseteq [m]$ will reach its maximum possible rank.
\\
First, to guarantee that $\boldsymbol{H}$ will always fit $\boldsymbol{G}$, the MCD algorithm sets $h_{k,i}=0, \forall (k,i)\in ([r]\times [m])\backslash G$ where $G=\{(k,i)\in [r]\times [m], g_{k,i}=1\}$. Then, for the first row $k=1$, we fix $h_{1,i}=1, \forall i\in G_1$, leading to $\boldsymbol{H}_{[1]}=\boldsymbol{G}_{[1]}$. 
\\
In the $k$-th row, the MCD algorithm determines a value for the element $h_{k,i}$ where $i$ is chosen randomly from the support set $G_{k}$. Then, element $i$ is removed from set $G_k$ as $h_{k,i}$ is going to be fixed in this round. Now, all elements of $\boldsymbol{H}_{[k]}^{L}$ where $L=[m]\backslash G_{k}$ are fixed except $h_{k,i}$, which is aimed to be designed so that column $\boldsymbol{H}_{[k]}^{\{i\}}$ will be as much linearly independent of any other columns in $\boldsymbol{H}_{[k]}^{L\backslash \{i\}}$ as possible.
Thus, the MCD algorithm computes the veto set $Z_{k,i}^{L}$, and assigns a randomly chosen value from $\mathbb{F}_q\backslash Z_{k,i}^{L}$ to $h_{k,i}$. We repeat this process for the remaining elements in $G_k$ (this process is repeated for all rows of $\boldsymbol{G}$) until all the elements $h_{k,i}, (k,i)\in G$ are assigned a value outside of their veto set.

\begin{algorithm}
\SetAlgoLined
\KwInput{$\boldsymbol{G}\in \mathbb{F}_{2}^{r\times m}$ and $q\geq q_{min}$}
  \KwOutput{$\boldsymbol{H}=\mathrm{MCD}(\boldsymbol{G}, q)$}
 initialization\;
 set $h_{k,i}=0,\forall (k,i)\in ([r]\times [m])\backslash G$\;
 $h_{1,i}=1, \forall i\in G_1$\;
 $k=2$\;
 \While{$k\leq r$}{
  \While{${G}_{k}\neq \emptyset$}{
    choose an element $i\in {G}_{k}$ at random\; 
    $G_k=G_k\backslash \{i\}$\;
    $L=[m]\backslash G_k$\;
    find the veto set $Z_{k,i}^{L}$\;
    assign a value to $h_{k,i}$ from $\mathbb{F}_q\backslash Z_{k,i}^{L}$ at random\;
    }
    $k\leftarrow k+1$
    }
 \caption{MCD Algorithm}
 \label{alg:MCD}
\end{algorithm}
\begin{prop} \label{thm:MCD}
Let $\boldsymbol{H}=\mathrm{MCD}(\boldsymbol{G}, q)$ where $\boldsymbol{G}\in \mathbb{F}_{2}^{r\times m}$ and $q\geq q_{min}$. Then, each submatrix $\boldsymbol{H}_{[k]}^{L}, k\in [r], L\subseteq[m]$ will reach its maximum possible rank.
\end{prop}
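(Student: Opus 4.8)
By Theorem~\ref{thm:01} and the remark following it, $\mathrm{mcm}(\boldsymbol{G}_{[k]}^{L})$ is exactly the maximum rank attainable by a matrix fitting $\boldsymbol{G}_{[k]}^{L}$, and $\mathrm{rank}(\boldsymbol{H}_{[k]}^{L})\le \mathrm{mcm}(\boldsymbol{G}_{[k]}^{L})$ holds automatically because $\boldsymbol{H}$ fits $\boldsymbol{G}$; so the statement is equivalent to proving $\mathrm{rank}(\boldsymbol{H}_{[k]}^{L})=\mathrm{mcm}(\boldsymbol{G}_{[k]}^{L})$ for all $k\in[r]$, $L\subseteq[m]$, i.e.\ exactly \eqref{eq:rank(H)=mcm(G)}, and only the inequality ``$\ge$'' needs work. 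I would run an outer induction on the row index $k$. The base case $k=1$ is immediate: Algorithm~\ref{alg:MCD} sets $\boldsymbol{H}_{[1]}=\boldsymbol{G}_{[1]}$, and a single $0/1$ row attains its term rank on every column subset.

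Next I would settle feasibility, i.e.\ that the inner loop of Algorithm~\ref{alg:MCD} never stalls because $\mathbb{F}_{q}\backslash Z_{k,i}^{L}=\emptyset$. Using \eqref{Z-size} together with the uniqueness of each fundamental veto value (Proposition~\ref{prop:L-ind-Li-dep}) gives $|Z_{k,i}^{L}|\le |\mathcal{B}_{k-1,i}^{L}|$, the number of bases of $\boldsymbol{H}_{[k-1]}$ contained in $L\backslash\{i\}$; each such basis has size $\mathrm{rank}(\boldsymbol{H}_{[k-1]})\le k-1\le r-1$ and lies in a ground set of size at most $m-1$, so $|Z_{k,i}^{L}|\le \binom{m-1}{\rho}$ with $\rho\le \min\{r-1,m\}$. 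By unimodality of the binomial coefficients this is strictly less than $\binom{m}{p}=q_{min}$ for $p=\min\{\lfloor m/2\rfloor,r\}$ (split on whether $r\ge\lfloor m/2\rfloor$), hence $|Z_{k,i}^{L}|<q_{min}\le q$ and a valid assignment always exists.

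For the inductive step, assume $\mathrm{rank}(\boldsymbol{H}_{[k-1]}^{L})=\mathrm{mcm}(\boldsymbol{G}_{[k-1]}^{L})$ for every $L$, and process row $k$ entry by entry as in Algorithm~\ref{alg:MCD}, say in the order $i_{1},\dots,i_{s}$ with $\{i_{1},\dots,i_{s}\}=G_{k}$. I would carry the invariant that after $t$ entries are fixed, the partial matrix $\boldsymbol{H}^{(t)}$ (unassigned row-$k$ entries held at $0$) satisfies $\mathrm{rank}((\boldsymbol{H}^{(t)})^{L})=\mathrm{mcm}((\boldsymbol{G}^{(t)})^{L})$ for all $L$, where $\boldsymbol{G}^{(t)}$ is $\boldsymbol{G}_{[k-1]}$ topped by a $k$-th row with support $\{i_{1},\dots,i_{t}\}$. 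The base $t=0$ holds since appending a zero row changes neither rank nor $\mathrm{mcm}$. In the step $t-1\to t$, column $i_{t}$ is processed against $L_{0}=[m]\backslash\{i_{t+1},\dots,i_{s}\}$; by Propositions~\ref{prop:vet-value} and \ref{prop:L-ind-Li-dep} and Theorem~\ref{thm:02}, choosing $h_{k,i_{t}}\in\mathbb{F}_{q}\backslash Z_{k,i_{t}}^{L_{0}}$ makes $\boldsymbol{H}_{[k]}^{\{i_{t}\}}$ as independent as possible of $(\boldsymbol{H}^{(t)})^{L_{0}\backslash\{i_{t}\}}$; equivalently, every circuit of $\boldsymbol{H}_{[k-1]}$ through $i_{t}$ that lies inside $L_{0}$ becomes an independent set of $\boldsymbol{H}^{(t)}$.

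The remaining and hardest part is to promote this single veto check into the universal statement over all $L$. For $L\subseteq L_{0}$: if the new edge $(k,i_{t})$ does not raise $\mathrm{mcm}$, row-monotonicity of rank plus the $(t{-}1)$ invariant applied to $L\backslash\{i_{t}\}$ give equality; if it does raise it, a maximum matching cannot avoid that edge (else it would contradict the $(t{-}1)$ invariant on $L$), so one passes to a basis of $\boldsymbol{H}_{[k-1]}^{L\backslash\{i_{t}\}}$ whose fundamental circuit with $i_{t}$ has just been made independent, exhibiting the extra independent column. For $L\not\subseteq L_{0}$: write $L=(L\cap L_{0})\sqcup(L\backslash L_{0})$ with $L\backslash L_{0}\subseteq\{i_{t+1},\dots,i_{s}\}$, whose columns are still zero in row $k$, so appending them to $(\boldsymbol{H}^{(t)})^{L\cap L_{0}}$ interacts only with the top $k-1$ rows, where the outer hypothesis already guarantees term rank, making the rank increment equal the $\mathrm{mcm}$ increment. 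The delicate technical core here is the ``restriction/extension'' fact --- maximal independence of a column relative to the already-settled set $L_{0}\backslash\{i_{t}\}$ forces maximal independence relative to \emph{every} subset of it, and zeroed-out future columns remain generic over the top $k-1$ rows --- which is a matroid/K\"onig--Egerv\'ary type argument that I expect to need the most care. Taking $t=s$ yields $\boldsymbol{G}^{(s)}=\boldsymbol{G}_{[k]}$ and closes the outer induction, establishing \eqref{eq:rank(H)=mcm(G)} up to $k=r=\beta_{\mathrm{UMCD}}(\mathcal{I})$.
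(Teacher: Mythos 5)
Your overall skeleton --- reduce the claim to \eqref{eq:rank(H)=mcm(G)}, induct over the rows and over the order in which Algorithm~\ref{alg:MCD} fixes the entries, invoke Theorem~\ref{thm:02} at each assignment, and check field-size feasibility as in Remark~\ref{rem:circuit-base-size} and Proposition~\ref{prop:field-size} --- is the same as the paper's. The genuine gap is the inner invariant you carry: it is \emph{not} true that after $t$ entries of row $k$ are fixed one has $\mathrm{rank}((\boldsymbol{H}^{(t)})^{L})=\mathrm{mcm}((\boldsymbol{G}^{(t)})^{L})$ for all $L\subseteq[m]$, and your case ``$L\not\subseteq L_{0}$'' (``the rank increment equals the mcm increment'') is exactly where it fails. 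Concretely, take $m=r=2$ with $G_{1}=G_{2}=\{1,2\}$, so $\boldsymbol{H}_{[1]}=[1\ 1]$, and suppose column $1$ of row $2$ is processed first. Then $L_{0}=\{1\}$, the set $\{1\}$ is not a circuit of $\boldsymbol{H}_{[1]}$, hence $Z_{2,1}^{\{1\}}=\emptyset$ and the algorithm may legally assign $h_{2,1}=0$. For your $\boldsymbol{G}^{(1)}$ (row-$2$ support $\{1\}$) and $L=\{1,2\}$ we get $\mathrm{mcm}((\boldsymbol{G}^{(1)})^{L})=2$ (match column $2$ to row $1$ and column $1$ to row $2$), while $\mathrm{rank}((\boldsymbol{H}^{(1)})^{L})=\mathrm{rank}\bigl[\begin{smallmatrix}1&1\\0&0\end{smallmatrix}\bigr]=1$: appending the zero-row-$k$ column lets the matching reroute the already-processed column onto row $k$, which the partial $\boldsymbol{H}$ cannot follow, because the veto set at step $1$ is computed only inside $L_{0}$ and never forbade that value. (The final matrix is still fine: at step $2$ the circuit $\{1,2\}$ puts $0$ into $Z_{2,2}^{[2]}$, forcing $h_{2,2}\neq 0$.) So the step you yourself flagged as the delicate core is not merely delicate; as formulated it is false.

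The repair is to weaken the invariant to what the paper's own induction asserts: after each assignment, maximal rank is claimed only for submatrices all of whose columns are already fixed, i.e.\ $L\subseteq L_{0}$. For such $L$ your argument does close: any circuit of $\boldsymbol{H}_{[k-1]}$ inside $L$ that meets $G_{k}$ has all of its row-$k$ support columns already processed, and it was made independent at the step at which its \emph{last} support column was handled --- at that moment the entire circuit lies in the then-current $L_{0}$ with final row-$k$ entries, so its veto value belongs to the avoided set $Z_{k,i}^{L_0}$ via \eqref{Z-size}; the fundamental-circuit/uniqueness argument then turns an mcm increase into a rank increase. Since $L_{0}=[m]$ once row $k$ is finished, \eqref{eq:rank(H)=mcm(G)} is recovered for every $L$ at the end of each row, which is all the proposition needs; intermediate claims about sets containing still-unprocessed support columns must simply be dropped, as they can fail for perfectly legal choices of the algorithm.
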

\begin{proof}
We note that the statement holds for the first row as $\boldsymbol{H}_{[1]}=\boldsymbol{G}_{[1]}$. Now, we prove the statement for all $k\in [r]\backslash \{1\}, L\subseteq [m]$ by induction. First, for $k=[2]\backslash \{1\}$ and $L=\{i\}\subseteq [m]$, the statement is correct according to Remark \ref{rem:L-empty-h-zero}. Let $k_1\in [r], i_1\in L_1\subseteq [m]$.
Now, we assume that except element $h_{k_1,i_1}$ all the elements of $\boldsymbol{H}_{[k_1]}^{L_1}$ are already fixed such that each submatrix $\boldsymbol{H}_{[k]}^{L}, k\in [k_1]\backslash \{1\}, L\subseteq L_1$ except $\boldsymbol{H}_{[k_1]}^{L_1}$ reaches its maximum possible rank. Now, based on Theorem \ref{thm:02}, if $h_{k_1,i_1}\in \mathbb{F}_{q}\backslash Z_{k_1,i_1}^{L_1}$, then the column $\boldsymbol{H}_{[k_1]}^{\{i_1\}}$ will be as much linearly independent of the space spanned by any other columns in $\boldsymbol{H}_{[k_1]}^{L_1\backslash \{i_1\}}$ as possible. Thus, now all the submatrices $\boldsymbol{H}_{[k]}^{L}, k\in [k_1]\backslash \{1\}, L\subseteq L_1$ including $\boldsymbol{H}_{[k_1]}^{L_1}$ reach their maximum possible rank, which completes the proof.
\end{proof}

\subsubsection{On the Required Field Size and Complexity of the MCD Algorithm}

\begin{rem} \label{rem:circuit-base-size}
Based on Definition \ref{def:veto-set}, we have $|Z_{k,i}^{L}|\leq |\mathcal{C}_{k-1,i}^{L}|$, where the equality holds if all the veto values inside $Z_{k,i}^{L}$ are distinct. Now, if column $\boldsymbol{H}_{[k-1]}^{\{i\}}$ is already linearly independent of columns in $\boldsymbol{H}_{[k-1]}^{L\backslash \{i\}}$, then $|\mathcal{C}_{k-1,i}^{L}|=0, i\in L$. Otherwise, based on \eqref{Z-size}, we have $|\mathcal{C}_{k-1,i}^{L}|=|\mathcal{B}_{k-1,i}^{L}|$. For this case, for each $k\in [r]\backslash \{1\}, i\in L\subseteq[m]$, we have
\begin{equation} \label{eq:field-complexity-general}
    |Z_{k,i}^{L}|\leq |\mathcal{B}_{k-1,i}^{L}|\leq \max_{l\in \min \{k-1, |L|-1\}} {|L|-1 \choose l}.
\end{equation}
\end{rem}
\begin{prop} \label{prop:field-size}
Given the binary matrix $\boldsymbol{G}\in \mathbb{F}_{2}^{r\times m}$, there always exists a value outside of the veto set for all the elements $h_{k,i}, i\in G_k, k\in [r]$ if the field size is chosen as $q\geq q_{min}={m \choose p}$, where $p=\min \{\floor{\frac{m}{2}},r \}$.
\end{prop}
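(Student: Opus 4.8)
\textbf{Plan.} The goal is to bound, for every element $h_{k,i}$ with $(k,i) \in G$, the size of the veto set $Z_{k,i}^{L}$ (where $L=[m]\backslash G_k$ at the moment $h_{k,i}$ is assigned) by a quantity strictly smaller than $q_{min}$, so that $\mathbb{F}_q \backslash Z_{k,i}^{L} \neq \emptyset$ whenever $q \geq q_{min}$. The whole argument rests on Remark \ref{rem:circuit-base-size}, which already gives $|Z_{k,i}^{L}| \leq |\mathcal{B}_{k-1,i}^{L}| \leq \max_{l \leq \min\{k-1,|L|-1\}} \binom{|L|-1}{l}$. So the task reduces to showing that this binomial maximum never exceeds $\binom{m}{p}-1$ — actually it suffices to show it is $\leq \binom{m}{p}$ and then argue one value is always free, or directly that it is $< \binom{m}{p}$.

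\textbf{Key steps.} First I would invoke Remark \ref{rem:circuit-base-size} to reduce the claim to a purely combinatorial inequality: for all admissible $k \in [r]\backslash\{1\}$ and $L \subseteq [m]$ with $i \in L$,
\begin{equation} \nonumber
  \max_{\,l \,\leq\, \min\{k-1,\,|L|-1\}} \binom{|L|-1}{l} \;\leq\; \binom{m}{p}, \qquad p=\min\Bigl\{\floor{\tfrac{m}{2}},\, r\Bigr\}.
\end{equation}
Second, I would note that $|L| = |[m]\backslash G_k| \leq m$, hence $|L|-1 \leq m-1 < m$. Third, since the binomial coefficient $\binom{n}{l}$ in $l$ is maximized at $l=\floor{n/2}$ and is monotincreasing up to that point, $\max_l \binom{|L|-1}{l}$ over $l \leq \min\{k-1,|L|-1\}$ is at most $\binom{|L|-1}{\floor{(|L|-1)/2}}$. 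Fourth — and this is the only genuinely delicate point — I would split into the two cases defining $p$. If $p=\floor{m/2}$ (the case $r \geq \floor{m/2}$), then $\binom{|L|-1}{\floor{(|L|-1)/2}} \leq \binom{m-1}{\floor{(m-1)/2}} \leq \binom{m}{\floor{m/2}} = \binom{m}{p}$, using that $\binom{m-1}{j} \leq \binom{m}{j'}$ for the respective central columns (Pascal's rule plus monotonicity). If instead $p = r < \floor{m/2}$, then the constraint $l \leq \min\{k-1,|L|-1\} \leq k-1 \leq r-1 < r \leq \floor{m/2}$ means $l$ never reaches the central column of $\binom{|L|-1}{\cdot}$, so $\max_l \binom{|L|-1}{l} = \binom{|L|-1}{\,l^{*}}$ with $l^{*} \leq r-1$, and since $|L|-1 \leq m-1$ and $l^* \leq r-1 < r = p \leq m-p$, monotonicity gives $\binom{|L|-1}{l^*} \leq \binom{m-1}{r-1} \leq \binom{m}{r} = \binom{m}{p}$. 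Finally, I would observe that Remark \ref{rem:circuit-base-size} combined with Definition \ref{def:veto-set} gives $|Z_{k,i}^{L}| \leq |\mathcal{C}_{k-1,i}^{L}|$ and that the bound above can be sharpened to a strict inequality (or one notes the all-zero column case of Remark \ref{rem:L-empty-h-zero} accounts for the edge case), ensuring $|\mathbb{F}_q| = q \geq q_{min} = \binom{m}{p} > |Z_{k,i}^{L}|$, so a valid choice for $h_{k,i}$ always exists.

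\textbf{Main obstacle.} The crux is making the case analysis on $p = \min\{\floor{m/2}, r\}$ watertight: one must be careful that the constraint $l \leq k-1 \leq r-1$ genuinely caps the binomial below its central column precisely when $r < \floor{m/2}$, and that in the other regime the loss from $m-1$ back to $m$ in the upper argument is covered by passing through the adjacent central binomial coefficient. A secondary subtlety is confirming the strict inequality $|Z_{k,i}^L| < \binom{m}{p}$ rather than just $\leq$; this follows because $\mathcal{B}_{k-1,i}^{L}$ consists of independent subsets of $L\backslash\{i\}$ of a fixed size inside a matroid of rank at most $k-1$, so not every $l$-subset can be a basis simultaneously with the extremal count — but if that refinement proves awkward, it can be sidestepped by choosing $q_{min}$ as stated and noting the algorithm only needs one free field element, which the counting above already supplies with room to spare in all nontrivial cases, with the trivial $L=\{i\}$ case handled separately by Remark \ref{rem:L-empty-h-zero}.
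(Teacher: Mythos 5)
Your proposal takes essentially the same route as the paper: both reduce the claim, via Remark \ref{rem:circuit-base-size} (equation \eqref{eq:field-complexity-general}), to comparing $\max_{l}\binom{|L|-1}{l}$ with $\binom{m}{p}$, and the paper simply asserts the strict inequality while you spell out the case analysis on $p=\min\{\floor{m/2},r\}$. The strictness you flag as a ``secondary subtlety'' is in fact already delivered by the Pascal step you invoke, since $\binom{m}{j}=\binom{m-1}{j}+\binom{m-1}{j-1}$ with both terms positive in the relevant range ($m\geq 2$, $1\leq j\leq m-1$), so neither the matroid refinement nor the sidestep is needed.
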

\begin{proof}
Based on \eqref{eq:field-complexity-general}, for the maximum size of $Z_{k,i}^{L}, k\in [r]\backslash \{1\}, i\in L\subseteq [m]$, we have
\begin{align}
    \max_{\substack{k\in [r]\backslash \{1\}\\ i\in L\subseteq [m] }} |Z_{k,i}^{L}|
    &\leq \max_{\substack{k\in [r]\backslash \{1\}\\ L\subseteq [m], L\neq \emptyset}} \max_{l\in \min \{k-1, |L|-1\}} {|L|-1 \choose l}
    \nonumber
    \\
    &< {m \choose p}= q_{min}
    \nonumber
\end{align}
Thus, if $q\geq q_{min}$, we have $|\mathbb{F}_{q}\backslash Z_{k,i}^{L}|>0$ for all $k\in [r]\backslash \{1\}, i\in L\subseteq[m]$.
\end{proof}
\begin{prop}
Given $\boldsymbol{G}\in \mathbb{F}_{2}^{r\times m}$, the complexity of the MCD algorithm in the worst case scales as $\mathcal{O}(2^{m})$.
\end{prop}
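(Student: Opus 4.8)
The plan is to bound the running time of Algorithm \ref{alg:MCD} by accounting separately for the number of entries $h_{k,i}$ that the algorithm actually fixes and for the cost of the one expensive operation performed for each of them, namely the computation of the veto set $Z_{k,i}^{L}$ followed by the choice of a value outside it. The zero entries and the first row $\boldsymbol{H}_{[1]}=\boldsymbol{G}_{[1]}$ are written down in $\mathcal{O}(m^{2})$ time; thereafter the outer \textbf{while} loop runs over $k\in\{2,\dots,r\}$ and the inner \textbf{while} loop over $i\in G_k$, and since $|G_k|\leq m$ and $r\leq m$, the body of the inner loop is executed at most $\sum_{k=2}^{r}|G_k|\leq r\cdot m\leq m^{2}$ times. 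It therefore suffices to show that each execution of that body costs $\mathcal{O}(2^{m})$ up to factors polynomial in $m$.

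For a fixed entry $h_{k,i}$ with $L=[m]\backslash G_k$, I would compute $Z_{k,i}^{L}$ directly from \eqref{Z-size}, namely $Z_{k,i}^{L}=\bigcup_{L'\in\mathcal{B}_{k-1,i}^{L}}Z_{k,i}^{L'\cup\{i\}}$, where $\mathcal{B}_{k-1,i}^{L}$ is the family of subsets $L'\subseteq L\backslash\{i\}$ that are bases of $\boldsymbol{H}_{[k-1]}$. The natural implementation enumerates all subsets $L'\subseteq L\backslash\{i\}$, of which there are at most $2^{|L|-1}\leq 2^{m-1}$; for each it decides, by a rank computation over $\mathbb{F}_q$, whether $L'$ is a basis, and when it is, Proposition \ref{prop:vet-value} supplies the unique veto value $h_{k,i}(L'\cup\{i\})$, obtained by solving one linear system. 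Each such rank computation and linear solve is Gaussian elimination over $\mathbb{F}_q$, hence polynomial in $m$ once one observes that the minimal admissible field size satisfies $q_{min}\leq {m\choose\floor{m/2}}\leq 2^{m}$, so $\log q_{min}=\mathcal{O}(m)$. Collecting these at most $2^{m-1}$ candidate values yields $Z_{k,i}^{L}$; since $|Z_{k,i}^{L}|<q$ by Proposition \ref{prop:field-size}, a value in $\mathbb{F}_q\backslash Z_{k,i}^{L}$ is found in $\mathcal{O}(q)=\mathcal{O}(2^{m})$ time for the natural choice $q=q_{min}$. Hence each inner-loop body costs $2^{m}\cdot\mathrm{poly}(m)$, and multiplying by the $\mathcal{O}(m^{2})$ iterations gives the claimed scaling.

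The main (and essentially only) subtlety is pinning down what ``scales as $\mathcal{O}(2^{m})$'' is meant to assert: the exponential factor $2^{m}$, equivalently the binomial ${m\choose\floor{m/2}}=\Theta(2^{m}/\sqrt{m})$ that already governs the size estimates in Remark \ref{rem:circuit-base-size} and the bound \eqref{eq:field-complexity-general}, dominates, while the $\mathrm{poly}(m)$ factors coming from the number of fixed entries, the cost of each Gaussian elimination, and the bit-length of field arithmetic are absorbed into it. I would make this explicit by first establishing the bound $2^{m}\cdot\mathrm{poly}(m)$ and then noting that the exponential term governs the asymptotics; if a literal $\mathcal{O}(2^{m})$ is insisted upon, one can replace $2^{m}$ by $2^{(1+\epsilon)m}$ for an arbitrarily small $\epsilon>0$ to swallow the polynomial slack. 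No combinatorial input beyond the subset-enumeration count and the polynomial-time primitives already invoked in the description of the MCD algorithm is needed.
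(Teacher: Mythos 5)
Your argument is sound and reaches the right conclusion, but it takes a coarser counting route than the paper. The paper pins down the worst case explicitly: it takes all entries of $\boldsymbol{G}$ nonzero, observes that then every submatrix $\boldsymbol{H}_{[k]}^{L}$ built by the algorithm is full-rank, so the circuit sets through column $i$ are exactly the $k$-subsets containing $i$, giving $|\mathcal{C}_{k-1,i}^{L}|={|L|-1\choose k-1}$; summing these over the algorithm's execution telescopes via $\sum_{k}{n-1\choose k-1}=2^{n-1}-1$ to exactly $2^{m}-(m+1)$ runs of $\mathrm{rref}$, and it is this count of expensive operations that is declared $\mathcal{O}(2^{m})$. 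You instead bound the number of fixed entries by $\mathcal{O}(m^{2})$ and, for each, enumerate all $2^{|L|-1}$ subsets of $L\backslash\{i\}$ to locate the bases in \eqref{Z-size}, paying a rank computation per subset; this yields $2^{m}\cdot\mathrm{poly}(m)$ and forces the honest caveat about absorbing polynomial slack. What the paper's approach buys is a literal $\mathcal{O}(2^{m})$ bound on the number of $\mathrm{rref}$ invocations (no extra $m^{2}$ factor), obtained from the combinatorial structure of the worst case; what yours buys is implementation-agnosticity and an explicit accounting of the per-operation cost (Gaussian elimination over a field with $\log q_{\min}=\mathcal{O}(m)$), which the paper suppresses — indeed the paper's own total running time also carries hidden $\mathrm{poly}(m)$ factors per $\mathrm{rref}$, so the two statements are loose in the same sense. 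One small imprecision on your side: for a basis $L'$ with $L'\cup\{i\}$ dependent, the unique veto value is supplied by Proposition \ref{prop:L-ind-Li-dep} (whose proof reduces to the circuit inside $L'\cup\{i\}$), not directly by Proposition \ref{prop:vet-value}, which is stated for circuit sets; this does not affect the one-linear-solve cost you charge for it.
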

\begin{proof}
First note that for each circuit set we need to run the $\mathrm{rref}$ (reduced row echelon form) to achieve the corresponding veto value for each nonzero element $h_{k,i}, (k,i)\in G$. Now, the worst case scenario for the complexity occurs when all the elements in $\boldsymbol{G}$ are nonzero. In this case, each submatrix $\boldsymbol{H}_{[k]}^{L}$ will become full-rank designed by the MCD algorithm. Thus, if $k> |L|$, then $|\mathcal{C}_{k-1,i}^{L}|=0$ for each $i\in L$. If $k\leq |L|$, we have $|\mathcal{C}_{k-1,i}^{L}|=|\mathcal{B}_{k-1,i}^{L}|= {|L|-1 \choose k-1}$. Thus, the maximum number of times the $\mathrm{rref}$ needs to run will be equal to 
\begin{align}
    \sum_{\substack{k\in [r]\backslash \{1\}\\ i\in L\subseteq [m]}}  |\mathcal{C}_{k-1,i}^{L}|&=\sum_{\substack{k\in [n]\backslash \{1\}\\ n\in [m]}} {n-1 \choose k-1}=\sum_{\substack{n\in [m]}} (2^{n-1}-1)
    \nonumber
    \\
    &= 2^{m}-(m+1),
\end{align}
where $n=|L|\in [m]$. Thus, the complexity scales as $\mathcal{O}(2^{m})$.
\end{proof}
Appendix \ref{app:discuss:field-complexity} provides more discussion on the required field size and complexity of the MCD algorithm for the index coding problem.

\begin{exmp}
Consider the following binary matrix
\begin{equation} \nonumber
      \boldsymbol{G}=
      \begin{bmatrix}
      0 & 1 & 0 & 0 & 0\\
      0 & 1 & 1 & 1 & 1\\
      1 & 0 & 1 & 1 & 1
      \end{bmatrix}.
\end{equation}
Now, given a finite field of size $q\geq q_{min}=10$, we run the MCD algorithm to generate the encoding matrix $\boldsymbol{H}$. Let $q=11$ and $\mathbb{F}_{11}=GF(11)$. 
\\
First, the MCD algorithm sets $h_{k,i}=0$ whenever $g_{k,i}=0, (k,i)\in [3]\times [5]$, and fixes $h_{1,i}=1, \forall i\in G_1=\{2\}$. Now,
we move to the second row $k=2$, where $G_2=\{2,3,4,5\}$. Let $i=4\in G_2$ be chosen randomly. Then, $G_2\leftarrow G_2\backslash \{4\}=\{2,3,5\}$, $L=([5]\backslash G_2)=\{1,4\}$ and $Z_{2,4}^{L}=\{0\}$. Let $h_{2,4}=1\in \mathbb{F}_{11}\backslash \{0\}$. Now, let $i=5\in G_2$ be chosen randomly, then $G_2\leftarrow G_2\backslash \{5\}=\{2,3\}$, $L=\{1,4,5\}$, and $Z_{2,5}^{L}=\{0\}$. Let $h_{2,5}=2\in \mathbb{F}_{11}\backslash \{0\}$. Now, let $i=2\in G_2$. Then, $G_2\leftarrow G_2\backslash \{2\}=\{3\}$, $L=\{1,2,4,5\}$, and $Z_{2,2}^{L}=\{0\}$. Let $h_{2,2}=1\in \mathbb{F}_{11}\backslash \{0\}$. Now, we have $i=3\in G_2$, $G_2\leftarrow G_2\backslash \{3\}=\emptyset$, $L=[5]$, $Z_{2,3}^{L}=\{0\}$. Let $h_{2,3}=3\in \mathbb{F}_{11}\backslash \{0\}$.
\\
Now, since $G_2=\emptyset$, we move to the third row $k=3$, where $G_3=\{1,3,4,5\}$. Let $i=3\in G_3$ be chosen randomly. Then, $G_3\leftarrow G_3\backslash \{3\}=\{1,4,5\}$, $L=\{2,3\}$, $Z_{3,3}^{L}=\{0\}$. Let $h_{3,3}=4\in \mathbb{F}_{11}\backslash \{0\}$. Let $i=4\in G_3$ be chosen at random. Then, $G_3\leftarrow G_3\backslash\{4\}=\{1,5\}$,  $L=\{2,3,4\}$, $Z_{3,4}^{L}=\{5\}$. Let $h_{3,4}=3\in \mathbb{F}_{11}\backslash \{5\}$. Now, let $i=5\in G_3$ be chosen randomly. Then, $G_3\leftarrow G_3\backslash\{5\}=\{1\}$, $L=\{2,3,4,5\}$, $Z_{3,5}^{L}=\{6,10\}$. Let $h_{3,5}=2\in \mathbb{F}_{11}\backslash \{6,10\}$. Now, $i=1\in G_3$, $G_3\leftarrow G_3\backslash \{1\}=\emptyset$, $L=[5]$, and $Z_{3,1}^{L}=\{0\}$. Let $h_{3,1}=5\in \mathbb{F}_{11}\backslash \{0\}$.
\\
Thus, the encoding matrix $\boldsymbol{H}\in \mathbb{F}_{11}^{3\times 5}$ will be as follows
\begin{equation} \nonumber
      \boldsymbol{H}=\mathrm{MCD}(\boldsymbol{G}, q)=
      \begin{bmatrix}
      0 & 1 & 0 & 0 & 0\\
      0 & 1 & 3 & 1 & 2\\
      5 & 0 & 4 & 3 & 2
      \end{bmatrix}.
\end{equation}
It can be verified each submatrix $\boldsymbol{H}_{[k]}^{L}, k\in [3], L\subseteq[5]$ reaches its maximum possible rank.
\end{exmp}

\section{The Proposed UMCD versus the MDS, Recursive, and ICC Coding Schemes} \label{sec:07}
In this section, first we prove that the broadcast rate of the UMCD scheme is always at least as low as the broadcast rate of the MDS scheme. Then, two classes of index coding instances are characterized to show that the gap between the broadcast rates of the recursive and ICC coding schemes and the proposed UMCD coding scheme can grow linearly with the number of messages.

\subsection{The UMCD versus the MDS Coding Scheme}
First, we show that the binary matrix $\boldsymbol{G}$ of the UMCD scheme meets the linear code condition with a specific distance. Then, we prove that for any index coding instance $\mathcal{I}$, we have $\beta_{\text{UMCD}}(\mathcal{I})\leq \beta_{\text{MDS}}(\mathcal{I})$. 
\begin{defn}[Linear Code Condition \cite{DauMDS}]
Assume the binary matrix $\boldsymbol{G}\in \mathbb{F}_{2}^{r\times m}$ with support sets $G_{k}, k\in [r]$ represents the generator matrix of an $(m,r,d)$ linear code, where $d\leq m-r+1$. Then, the linear code condition is defined as follows
\vspace{-1ex}
\begin{equation} \label{eq:MDS-con}
    |F_{K}| \geq d-1+|K|, \ \ \ \forall K\subseteq [r],
\end{equation}
where $F_{K}=\cup_{k}\in K G_{k}$. Note, $F_{K}$ characterizes the indices of nonzero columns in matrix $\boldsymbol{G}_{K}$.
\end{defn}

\begin{lem} \label{lem:MDS-con-mcm}
Let $\boldsymbol{G}\in \mathbb{F}_{2}^{r\times m}$. Suppose that the binary matrix $\boldsymbol{G}_{[k]}, k\in [r]$ meets the condition in \eqref{eq:MDS-con}. Now, if for $i\in [m]$ the following three conditions are satisfied:
\begin{equation} \nonumber
    (1)\ |F_{[k]}|=d-1+k, \ \ \ (2) \ \{i\}\cup A_i\subseteq F_{[k]}, \ \  (3) \ |A_i|\geq d-1,
\end{equation}
then, we will have $\mathrm{mcm}(\boldsymbol{G}_{[k]}^{\{i\}\cup B_i})=1+\mathrm{mcm}(\boldsymbol{G}_{[k]}^{B_i})$.
\end{lem}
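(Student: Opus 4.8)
The goal is to show, under the three stated hypotheses, that $\mathrm{mcm}(\boldsymbol{G}_{[k]}^{\{i\}\cup B_i}) = 1 + \mathrm{mcm}(\boldsymbol{G}_{[k]}^{B_i})$. Since adding a single column to a bipartite graph can increase the maximum cardinality matching by at most one, we always have $\mathrm{mcm}(\boldsymbol{G}_{[k]}^{\{i\}\cup B_i}) \leq 1 + \mathrm{mcm}(\boldsymbol{G}_{[k]}^{B_i})$; so the real content is the reverse inequality, i.e. showing that the column corresponding to $i$ can be added to \emph{some} maximum matching of $\boldsymbol{G}_{[k]}^{B_i}$ without conflict. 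I would first translate everything into the linear-algebra language that Theorem~\ref{thm:01} and its Remark make available: by that Remark, $\mathrm{mcm}(\boldsymbol{G}_{[k]}^{L})$ equals the maximum rank over all $\boldsymbol{H}_{[k]}^{L}$ fitting $\boldsymbol{G}_{[k]}^{L}$, and also equals $\mathrm{rank}(\boldsymbol{H}_{[k]}^{L})$ when $\boldsymbol{H}$ comes from the MCD algorithm (equation \eqref{eq:rank(H)=mcm(G)}). So it suffices to show that for an MCD-designed $\boldsymbol{H}$, the column $\boldsymbol{H}_{[k]}^{\{i\}}$ is \emph{not} in the span of the columns $\boldsymbol{H}_{[k]}^{B_i}$.

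**Key steps.** First, I would use hypothesis (1), $|F_{[k]}| = d-1+k$, together with the linear code condition \eqref{eq:MDS-con} applied with $K = [k]$, to conclude that $\boldsymbol{G}_{[k]}$ (restricted to its nonzero columns, indexed by $F_{[k]}$) behaves exactly like the generator matrix of an $(|F_{[k]}|, k, d)$ MDS code: the bound in \eqref{eq:MDS-con} is met with equality for $K=[k]$, and every $k$ columns drawn from $F_{[k]}$ are linearly independent in the MCD realization — equivalently, any $k$ of the $c$-vertices in $F_{[k]}$ admit a perfect matching into $P = \{p_1,\dots,p_k\}$. This is the MDS/maximum-distance property restated in matching terms, and I would prove it by induction on subset size using \eqref{eq:MDS-con} and Hall's theorem (or directly cite the MDS characterization from \cite{DauMDS}). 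Second, I would invoke hypothesis (2), $\{i\}\cup A_i \subseteq F_{[k]}$: this guarantees that column $i$ lies among the "active" columns, so it is nonzero and participates in the MDS structure. Third — the crux — I would use hypothesis (3), $|A_i| \geq d-1$, to bound $|B_i| = m - |A_i| - 1$. The point is that $\boldsymbol{G}_{[k]}^{B_i}$ can have rank at most $\min(k, |B_i \cap F_{[k]}|)$; I need to show that column $i$ plus a maximum matching of $\boldsymbol{G}_{[k]}^{B_i}$ still fits. If $\mathrm{mcm}(\boldsymbol{G}_{[k]}^{B_i}) = k$ already, then $\{i\}\cup B_i$ contains $k+1$ active columns but the matching value is capped at $k$, and I must rule this case out — this is exactly where $|A_i| \geq d-1$ enters, forcing $|B_i \cap F_{[k]}| \leq k-1$ (using $|F_{[k]}| = d-1+k \leq (|A_i|+1) + (k-1)$ roughly), so the matching of $\boldsymbol{G}_{[k]}^{B_i}$ leaves at least one $p$-vertex free, and by the MDS property column $i$ can be matched to it.

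**Main obstacle.** The delicate part is the counting in the third step: pinning down $|B_i \cap F_{[k]}|$ and showing it is at most $k-1$, so that a maximum matching of $\boldsymbol{G}_{[k]}^{B_i}$ necessarily leaves a row vertex $p_j$ unsaturated, and then using the MDS/equality-in-\eqref{eq:MDS-con} structure to augment along $\{i\}$ to that $p_j$. One has to be careful that $B_i$ may contain columns outside $F_{[k]}$ (which are all-zero in $\boldsymbol{G}_{[k]}$ and hence irrelevant to the matching), and that the inequality $|A_i| \ge d-1$ is used tightly; I would organize the argument so that the MDS "every $k$ columns independent" fact is established once and then applied to the set consisting of a maximum-matching witness for $B_i$ together with $i$. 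A clean way to finish is: let $S \subseteq B_i \cap F_{[k]}$ be a set of columns realizing $\mathrm{mcm}(\boldsymbol{G}_{[k]}^{B_i})$ via a matching saturating $P' \subsetneq P$; since $|S| \le k-1 < k$ there is $p_j \in P \setminus P'$, and because $\boldsymbol{G}_{[k]}$ restricted to $F_{[k]}$ is MDS, the column $i \in F_{[k]}$ together with $S$ gives $|S|+1$ columns that admit a matching into $P$, hence $\mathrm{mcm}(\boldsymbol{G}_{[k]}^{\{i\}\cup B_i}) \geq 1 + \mathrm{mcm}(\boldsymbol{G}_{[k]}^{B_i})$, which combined with the trivial upper bound closes the proof.
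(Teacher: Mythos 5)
Your proposal is correct and follows essentially the same route as the paper: restrict attention to the nonzero columns $F_{[k]}$, use conditions (1)--(3) to count $|\{i\}\cup(B_i\cap F_{[k]})|\leq k$, and exploit the Hall-type consequence of \eqref{eq:MDS-con} in the equality case $|F_{[k]}|=d-1+k$ that every $k$-subset of $F_{[k]}$ induces a perfectly matchable $k\times k$ submatrix. The only cosmetic difference is in the last step: the paper extends $\{i\}\cup(B_i\cap F_{[k]})$ to a full $k$-subset and invokes its ``potential pivot column'' notion plus a monotonicity remark, whereas you augment a maximum-matching witness of $B_i$ by column $i$ directly (and you make explicit, via Hall's theorem, the matchability fact the paper states without proof in its remark).
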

\begin{proof}
Refer to Appendix \ref{App:lem2}.
\end{proof}

\begin{lem} \label{lem:UMCD=MDS}
The binary matrix $\boldsymbol{G}\in \mathbb{F}_{2}^{r\times m}$, obtained at the end of the UMCD coding scheme, satisfies the linear code condition in \eqref{eq:MDS-con} with $d=|A|_{\text{min}}+1$.
\end{lem}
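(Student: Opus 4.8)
The plan is to verify the linear code condition \eqref{eq:MDS-con}, namely $|F_K|\ge d-1+|K|$ for all $K\subseteq[r]$, where $d=|A|_{\text{min}}+1$, i.e. $d-1=|A|_{\text{min}}$. Recall that at transmission $k$ of the UMCD algorithm, the $k$-th row of $\boldsymbol{G}$ has support $G_k=\{w_k\}\cup A_{w_k}$, where $u_{w_k}$ is a receiver of minimum side-information size \emph{among those still unsatisfied after the first $k-1$ transmissions}. Since unsatisfied receivers form a shrinking set, the minimum side-information size is nondecreasing in $k$; in particular $|A_{w_k}|\ge |A|_{\text{min}}$, hence $|G_k|=1+|A_{w_k}|\ge 1+|A|_{\text{min}}=d$ for every $k\in[r]$. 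This immediately gives the case $|K|=1$.

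For the general case I would induct on $|K|$, or more cleanly argue directly as follows. Fix $K\subseteq[r]$ and let $k^\ast=\max K$. Consider the receiver $u_{w_{k^\ast}}$ targeted at step $k^\ast$: because $u_{w_{k^\ast}}$ was still unsatisfied after transmissions $1,\dots,k^\ast-1$, the decoding condition \eqref{eq:dec-mcm} failed for it using $\boldsymbol{G}_{[k^\ast-1]}$, which (by Theorem~\ref{thm:01} and the matching interpretation) means the column $\boldsymbol{G}_{[k^\ast-1]}^{\{w_{k^\ast}\}}$ lies in the span of $\boldsymbol{G}_{[k^\ast-1]}^{A_{w_{k^\ast}}}$ — equivalently, in matching terms, adding the vertex $c_{w_{k^\ast}}$ does not increase the matching size beyond what $A_{w_{k^\ast}}$ already achieves on rows $1,\dots,k^\ast-1$. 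I want to convert the accumulated failures of \eqref{eq:dec-mcm} across all prior steps into the lower bound $|F_{[k]}|\ge d-1+k$ for each $k$, and then deduce the statement for arbitrary $K$ by the monotonicity $F_K\supseteq F_{[\max K]\cap\{\text{relevant rows}\}}$ combined with the per-step growth $|F_{[k]}|-|F_{[k-1]}|\ge 1$.

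The key lemma to establish is: for every $k\in[r]$, $|F_{[k]}|=|G_1\cup\dots\cup G_k|\ge |A|_{\text{min}}+k$. The base case $k=1$ is $|G_1|\ge d=|A|_{\text{min}}+1$, shown above. For the inductive step, suppose $|F_{[k-1]}|\ge|A|_{\text{min}}+k-1$; I must show $G_k$ contributes at least one new column index, i.e. $G_k\not\subseteq F_{[k-1]}$. This is exactly where the ``minimum side information'' selection rule and the matching-based stopping criterion do the work: if $G_k=\{w_k\}\cup A_{w_k}\subseteq F_{[k-1]}$, I would argue — using that each previously targeted receiver $u_{w_j}$, $j<k$, had side information of size $\le|A_{w_k}|$ and satisfied \eqref{eq:dec-mcm} on its own step — that $\boldsymbol{G}_{[k-1]}^{\{w_k\}\cup A_{w_k}}$ already has a matching of size $k-1 = |A_{w_k}|$ saturating... actually more carefully, that $\mathrm{mcm}(\boldsymbol{G}_{[k-1]}^{\{w_k\}\cup B_{w_k}})=\mathrm{mcm}(\boldsymbol{G}_{[k-1]}^{B_{w_k}})+1$ would already hold, contradicting that $u_{w_k}$ was unsatisfied and hence got targeted at step $k$. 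Once the lemma is in hand, for arbitrary $K$ write $k=\max K$; then $F_K$ contains $G_k$ together with $F_{K\setminus\{k\}}$, and an easy secondary induction on $|K|$ (peeling off the largest index and invoking that each row past $F$ of the earlier rows adds a fresh index, by the same contradiction argument applied to the largest remaining index) yields $|F_K|\ge|A|_{\text{min}}+|K|=d-1+|K|$.

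I expect the main obstacle to be the inductive step $G_k\not\subseteq F_{[k-1]}$ — more precisely, correctly translating ``receiver $u_{w_k}$ was \emph{not} satisfied after step $k-1$'' (a statement about $\mathrm{mcm}$ values with the interference set $B_{w_k}=[m]\setminus(A_{w_k}\cup\{w_k\})$, which involves columns \emph{outside} $F_{[k-1]}$) into the combinatorial statement ``$G_k\not\subseteq F_{[k-1]}$''. The clean way, which I would pursue, is: if $G_k\subseteq F_{[k-1]}$ then the matrix $\boldsymbol{G}_{[k-1]}$ restricted to columns $G_k$ has, by the induction on the linear-code condition \eqref{eq:MDS-con} established for $\boldsymbol{G}_{[k-1]}$ together with Lemma~\ref{lem:MDS-con-mcm}, full matching $\mathrm{mcm}(\boldsymbol{G}_{[k-1]}^{G_k})=k-1$; checking that the three hypotheses of Lemma~\ref{lem:MDS-con-mcm} are met (the size equality $|F_{[k-1]}|=d-1+(k-1)$ being the delicate one — it may require choosing $K$ minimally, or replacing $F_{[k-1]}$ by a $(d-1+(k-1))$-subset) then forces \eqref{eq:dec-mcm} to hold for $u_{w_k}$ at step $k-1$, the desired contradiction. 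Handling the possibility that $|F_{[k-1]}|$ strictly exceeds $d-1+(k-1)$ — in which case one restricts attention to an appropriate subset of rows — is the remaining bookkeeping subtlety.
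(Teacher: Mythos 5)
There is a genuine gap, and it sits exactly where you placed the weight of the argument. The inductive step you propose --- that every transmission adds a fresh column index, i.e.\ $G_k\not\subseteq F_{[k-1]}$ --- is false in general. Counterexample: take $m=6$ with $A_1=\{2\}$, $A_2=\{4\}$, $A_3=\{4\}$, $A_4=\{5\}$, $A_5=\{6\}$, $A_6=\{1\}$, so $|A|_{\text{min}}=1$ and $d=2$. A legitimate run of Algorithm~\ref{alg:UMCD} may pick $w_1=1$, $w_2=3$, $w_3=2$, giving $G_1=\{1,2\}$ and $G_2=\{3,4\}$; one checks that $u_2$ is still unsatisfied after two transmissions (with $B_2=\{1,3,5,6\}$ we have $\mathrm{mcm}(\boldsymbol{G}_{[2]}^{\{2\}\cup B_2})=\mathrm{mcm}(\boldsymbol{G}_{[2]}^{B_2})=2$, so \eqref{eq:dec-mcm} fails), and yet $G_3=\{2\}\cup A_2=\{2,4\}\subseteq F_{[2]}=\{1,2,3,4\}$. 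So ``$u_{w_k}$ was unsatisfied'' does not translate into ``$G_k\not\subseteq F_{[k-1]}$''; that translation is only valid when $|F_{[k-1]}|$ equals $d-1+(k-1)$ exactly, which is precisely hypothesis (1) of Lemma~\ref{lem:MDS-con-mcm} that you flagged as delicate. Note that in the example the desired bound $|F_{[3]}|\geq |A|_{\text{min}}+3$ still holds, because $|F_{[2]}|$ already exceeded $|A|_{\text{min}}+2$: your per-step claim is strictly stronger than what the lemma needs, and it is the part that is false.

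The paper avoids this by structuring the induction as a contradiction, which makes the exact-size hypothesis come for free: assume \eqref{eq:MDS-con} fails for $K=[k]$, i.e.\ $|F_{[k]}|\leq |A|_{\text{min}}+k-1$; together with the induction hypothesis $|F_{[k-1]}|\geq |A|_{\text{min}}+k-1$ and $F_{[k-1]}\subseteq F_{[k]}$ this forces $F_{[k]}=F_{[k-1]}$ with $|F_{[k-1]}|=|A|_{\text{min}}+k-1$ exactly and $G_k=\{w\}\cup A_w\subseteq F_{[k-1]}$; now all three hypotheses of Lemma~\ref{lem:MDS-con-mcm} hold for $\boldsymbol{G}_{[k-1]}$, so $u_w$ would already satisfy \eqref{eq:dec-mcm} after $k-1$ transmissions, contradicting that the algorithm only targets receivers still in $N$. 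Your suggested patches do not repair your formulation: the unconditional claim being patched is simply false, and ``replacing $F_{[k-1]}$ by a $(d-1+(k-1))$-subset'' does not work because deleting nonzero columns can destroy the linear-code condition assumed in Lemma~\ref{lem:MDS-con-mcm}, and an mcm increment certified on a proper subset of the nonzero columns does not imply the decoding condition \eqref{eq:dec-mcm} on the full interference set (Remark~\ref{rem:mcm-properties} only passes from larger to smaller column sets, not the reverse). The remaining step from prefixes $[k]$ to arbitrary $K\subseteq[r]$ is also only asserted, but the essential missing idea is the contradiction structure above.
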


\begin{proof}
Refer to Appendix \ref{App:lem3}.
\end{proof}

\begin{thm} \label{thm:UMCD<=MDS}
For an index coding instance $\mathcal{I}$, we have $\beta_{\text{UMCD}}(\mathcal{I})\leq \beta_{\text{MDS}}(\mathcal{I})$.
\end{thm}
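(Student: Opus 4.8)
The plan is to read the bound directly off the combinatorial structure of the binary matrix $\boldsymbol{G}$ produced by Algorithm~\ref{alg:UMCD}, using Lemma~\ref{lem:UMCD=MDS} as the main lever. Write $\beta_{\text{UMCD}}(\mathcal{I})=r$ for the number of transmissions at which the \texttt{while} loop in Algorithm~\ref{alg:UMCD} terminates; since every round removes at least the chosen receiver $u_w$ from $N$, the loop does terminate and $r$ is well defined with $r\le m$. Recall $\beta_{\text{MDS}}(\mathcal{I})=m-|A|_{\text{min}}$, so it suffices to show $r\le m-|A|_{\text{min}}$.

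First I would invoke Lemma~\ref{lem:UMCD=MDS}: the final binary matrix $\boldsymbol{G}=\boldsymbol{G}_{[r]}\in\mathbb{F}_{2}^{r\times m}$ satisfies the linear code condition \eqref{eq:MDS-con} with $d=|A|_{\text{min}}+1$, i.e.\ $|F_{K}|\ge d-1+|K|$ for every $K\subseteq[r]$, where $F_{K}=\cup_{k\in K}G_{k}$ indexes the nonzero columns of $\boldsymbol{G}_{K}$. Specializing to $K=[r]$ gives $|F_{[r]}|\ge d-1+r=|A|_{\text{min}}+r$. On the other hand $F_{[r]}\subseteq[m]$, so $|F_{[r]}|\le m$. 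Combining the two inequalities yields $r\le m-|A|_{\text{min}}=\beta_{\text{MDS}}(\mathcal{I})$, which is exactly the claim.

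As a complementary (and more explanatory) route that keeps the argument self-contained at the level of the removal test \eqref{eq:dec-mcm}, one can instead argue that Algorithm~\ref{alg:UMCD} cannot reach transmission $k=m-|A|_{\text{min}}+1$ at all. At $k=m-|A|_{\text{min}}$ the inequality above forces $|F_{[k]}|=m$, so $F_{[k]}=[m]$, every column of $\boldsymbol{G}_{[k]}$ is nonzero, and $|F_{[k]}|=d-1+k$ holds with equality. Then for an arbitrary receiver $u_i$ conditions (1)--(3) of Lemma~\ref{lem:MDS-con-mcm} are met: $(1)$ is the equality just noted, $(2)$ holds because $\{i\}\cup A_i\subseteq[m]=F_{[k]}$, and $(3)$ holds because $|A_i|\ge|A|_{\text{min}}=d-1$ by definition of $|A|_{\text{min}}$. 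Hence $\mathrm{mcm}(\boldsymbol{G}_{[k]}^{\{i\}\cup B_i})=1+\mathrm{mcm}(\boldsymbol{G}_{[k]}^{B_i})$, i.e.\ \eqref{eq:dec-mcm} passes for every $u_i$, so after round $k=m-|A|_{\text{min}}$ the set $N$ is empty and the loop stops, again giving $r\le m-|A|_{\text{min}}$.

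The substantive work is therefore entirely inside Lemmas~\ref{lem:MDS-con-mcm} and~\ref{lem:UMCD=MDS} (whose proofs are deferred to the appendices); given those, Theorem~\ref{thm:UMCD<=MDS} is a short counting argument. The one point that needs care is to apply Lemma~\ref{lem:UMCD=MDS} without circularity: its proof should establish \eqref{eq:MDS-con} for the partial matrix $\boldsymbol{G}_{[k]}$ built up through round $k$ as an invariant of the algorithm, not presupposing the final value $r$ or the relation $d\le m-r+1$, so that the bound $r\le m-d+1$ comes out as a \emph{consequence} — which is precisely what the counting step extracts. I expect no other obstacle.
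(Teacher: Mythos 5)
Your proposal is correct, and it actually contains two arguments: your ``complementary'' second route is essentially the paper's own proof, while your primary route is a genuinely shorter variant. The paper argues as you do in the second paragraph: it considers round $k=m-|A|_{\text{min}}$, uses Lemma~\ref{lem:UMCD=MDS} to force $|F_{[k]}|=m$, hence $F_{[k]}=[m]$, and then verifies the three hypotheses of Lemma~\ref{lem:MDS-con-mcm} for every receiver so that \eqref{eq:dec-mcm} holds for all $i\in[m]$ and the algorithm must have emptied $N$ by that round. Your primary route skips Lemma~\ref{lem:MDS-con-mcm} at the theorem level entirely: applying \eqref{eq:MDS-con} with $K=[r]$ gives $|F_{[r]}|\geq |A|_{\text{min}}+r$, and $F_{[r]}\subseteq[m]$ forces $r\leq m-|A|_{\text{min}}$. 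That counting step is valid and more economical, and you correctly flag the only delicate point, namely that Lemma~\ref{lem:UMCD=MDS} must be (and, in the paper's Appendix proof, is) established as an inductive invariant of the algorithm via \eqref{eq:MDS-con} alone, without presupposing $d\leq m-r+1$; otherwise the citation would be circular. What the paper's longer route buys is an explicit demonstration, through the decoding test \eqref{eq:dec-mcm}, that every receiver is actually satisfied by round $m-|A|_{\text{min}}$, which mirrors how the algorithm terminates; what your counting route buys is brevity, at the cost of pushing all the combinatorial content into Lemma~\ref{lem:UMCD=MDS} (whose own proof, note, already invokes Lemma~\ref{lem:MDS-con-mcm}, so the same machinery is used either way, just packaged differently).
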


\begin{proof}
Let matrix $\boldsymbol{G}\in \mathbb{F}_{2}^{r\times m}$ be the binary matrix of the UMCD scheme. Then, based on Lemma \ref{lem:UMCD=MDS}, matrix $\boldsymbol{G}$ meets the linear code condition in \eqref{eq:MDS-con} with $d=|A|_{\text{min}}+1$. Now, we show that the UMCD scheme guarantees that all the receivers will be satisfied at transmission $r\leq m-|A|_{\text{min}}$.
If we set $r=m-|A|_{\text{min}}$, then due to \eqref{eq:MDS-con}, we have $|F_{[r]}|=m$. So, $F_{[r]}=[m]$. \\
Now, since $\{i\}\cup A_i\subseteq F_{[r]}=[m]$ and $|A_i|\geq |A|_{\text{min}}=d-1, \forall i\in[m]$, the three conditions of Lemma \ref{lem:MDS-con-mcm} are met for $r=m-|A|_{\text{min}}$ and $d=|A|_{\text{min}}+1$. So, we have
\begin{equation}
    \mathrm{mcm}(\boldsymbol{G}_{[r]}^{\{i\}\cup B_i})=1+ \mathrm{mcm}(\boldsymbol{G}_{[r]}^{B_i}), \ \ \forall i\in [m],
\end{equation}
which means that the decoding condition in \eqref{eq:dec-mcm} is met for each receiver $u_i, i\in[m]$. 
So, we have $r\leq m-|A|_{\text{min}}$, which gives $\beta_{\text{UMCD}}(\mathcal{I})\leq \beta_{\text{MDS}}(\mathcal{I})$.
\end{proof}

\subsection{The UMCD versus the Recursive Coding Scheme}
In this subsection, a class of index coding instances are characterized to show that the gap between the broadcast rates of the recursive and the proposed UMCD coding schemes can grow linearly with the number of messages.

\begin{defn}[$\mathcal{I}_{6}(l)$: Class-$\mathcal{I}_{6}$ Index Coding Instances]
   Class-$\mathcal{I}_{6}$ index coding instances are defined as $\mathcal{I}_{6}(l)=\{(i|A_i), i\in[4l+1]\}$, where the side information sets are as follows
    \begin{equation} \nonumber
   \left\{\begin{array}{lc}
       A_{2i-1}&=\{2j: j\in [2l]\backslash\{i\}\}\cup \{4l+1\}, \ \ \ \ \ \ \forall i\in[2l], \nonumber
       \\
       A_{2i}&=\{2i-1\}, \ \ \ \ \ \ \ \ \ \ \ \ \ \ \ \ \ \ \ \ \ \ \ \ \ \ \ \ \ \ \ \forall i\in[2l], \nonumber \\
       A_{4l+1}&=\{2i-1: i\in[2l]\}.  \ \ \ \ \ \ \ \ \ \ \ \ \ \ \ \ \ \ \ \ \ \ \ \ \ \ \ \ \ \ \ \ \nonumber
    \end{array}
    \right.
\end{equation}
Note, the instance $\mathcal{I}_{4}$ is a special case of the class-$\mathcal{I}_6$ instances with $l=1$, i.e., $\mathcal{I}_{6}(1)=\mathcal{I}_{4}$.
\end{defn}

\begin{thm}
For the class-$\mathcal{I}_{6}$ index coding instances, we have
\vspace{-1ex}
   \begin{equation}
       \beta_{\text{R}}(\mathcal{I}_{6}(l))-\beta_{\text{UMCD}}(\mathcal{I}_{6}(l))=l-\frac{1}{2},
   \end{equation}
which scales as $\mathcal{O}(m)$. This means that the gap between the broadcast rates of the recursive coding scheme and our proposed UMCD grows linearly with the number of messages.
\end{thm}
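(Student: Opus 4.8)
The plan is to evaluate the two broadcast rates separately and then subtract. The natural anchor is the identity $\mathcal{I}_6(1)=\mathcal{I}_4$, for which Example~\ref{exmp:Recursive-5} already records $\beta_{\text{UMCD}}(\mathcal{I}_4)=3$ and the recursive scheme gives $\beta_{\text{R}}(\mathcal{I}_4)=\tfrac{7}{2}$; I will show that passing from $\mathcal{I}_6(l-1)$ to $\mathcal{I}_6(l)$ (four new messages) raises $\beta_{\text{UMCD}}$ by exactly $2$ and $\beta_{\text{R}}$ by exactly $3$. Concretely I claim $\beta_{\text{UMCD}}(\mathcal{I}_6(l))=2l+1$ and $\beta_{\text{R}}(\mathcal{I}_6(l))=3l+\tfrac{1}{2}$, so the gap equals $l-\tfrac{1}{2}$, which grows linearly in $m=4l+1$.

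\textbf{Evaluating $\beta_{\text{UMCD}}$.} For the lower bound, the $2l$ odd-indexed vertices $\{2i-1:i\in[2l]\}$ induce an independent set, since each $A_{2i-1}$ consists only of even indices together with $4l+1$; adjoining the single vertex $2$ keeps the induced subgraph acyclic, as its only internal edges are $1\to 2$ and $2\to(2j-1)$ for $j\ge 2$, a DAG. Hence $\beta_{\text{MAIS}}(\mathcal{I}_6(l))\ge 2l+1$ and therefore $\beta_{\text{UMCD}}(\mathcal{I}_6(l))\ge 2l+1$. For the upper bound I would trace Algorithm~\ref{alg:UMCD}: the receivers of minimum side-information size are exactly the $2l$ even ones $u_{2i}$, $|A_{2i}|=1$, so the first $2l$ rounds satisfy $u_2,u_4,\dots,u_{4l}$ one at a time and place support $G_k=\{2k-1,2k\}$ in row $k$; a short $\mathrm{mcm}$ computation on $\boldsymbol{G}_{[k]}$ shows no odd receiver and not $u_{4l+1}$ becomes satisfiable during these rounds. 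Round $2l+1$ then picks one remaining receiver (say $w=1$, $G_{2l+1}=\{1\}\cup A_1$), and explicit maximum-matching counts verify \eqref{eq:dec-mcm} for \emph{all} $2l+1$ surviving receivers simultaneously — and likewise for every admissible random choice of $w$ — so the algorithm halts at $r=2l+1$.

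\textbf{Evaluating $\beta_{\text{R}}$.} For the upper bound, partition $[4l+1]$ into the $l-1$ quadruples $\{4k-3,4k-2,4k-1,4k\}$, $k\in[l-1]$, each of which induces a directed $4$-cycle (hence a partial clique of MDS rate $4-1=3$), together with the last block $\{4l-3,4l-2,4l-1,4l,4l+1\}$, whose induced subinstance is a relabelled copy of $\mathcal{I}_6(1)=\mathcal{I}_4$. Since the recursive scheme is subadditive over a partition and degenerates to the MDS code on a single group, $\beta_{\text{R}}(\mathcal{I}_6(l))\le 3(l-1)+\beta_{\text{R}}(\mathcal{I}_4)=3(l-1)+\tfrac{7}{2}=3l+\tfrac{1}{2}$. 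For the matching lower bound I would induct on $l$: the base $\beta_{\text{R}}(\mathcal{I}_4)\ge\tfrac{7}{2}$ is inherited from the cited analysis of the recursive scheme (note it lies strictly above $\beta_{\text{MAIS}}(\mathcal{I}_4)=3$, so it is genuinely a statement about the scheme, not a cut-set bound), and the inductive step analyzes the first move the recursive scheme makes on $\mathcal{I}_6(l)$ — peeling a vertex, or choosing a (fractional) sub-cover — showing that, because each even vertex $u_{2i}$ can be helped only through the $4$-cycle on $\{4k-3,\dots,4k\}$ while $u_{4l+1}$ interacts only with the odd vertices, any such move costs at least $3$ more than an optimal decomposition of a copy of $\mathcal{I}_6(l-1)$.

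\textbf{Main obstacle.} The hard part is the lower bound on $\beta_{\text{R}}$: there is no cheap cut-set argument (even the five-vertex base case sits above the MAIS bound), so one must rule out \emph{every} recursive/fractional decomposition. I expect this to require (i) a characterisation of which small subinstances of $\mathcal{I}_6(l)$ have recursive rate strictly below their MDS rate (essentially only the $4$-cycles and the $\mathcal{I}_4$-type blocks), (ii) an LP-style accounting showing that in any fractional cover the even vertices force total $4$-cycle weight $\ge l$ at unit rate $3$, and (iii) a separate accounting that $u_{4l+1}$ cannot be covered for less than the residual $\tfrac{1}{2}$. Once both rates are pinned down, the conclusion follows by subtraction: $\beta_{\text{R}}(\mathcal{I}_6(l))-\beta_{\text{UMCD}}(\mathcal{I}_6(l))=(3l+\tfrac{1}{2})-(2l+1)=l-\tfrac{1}{2}$.
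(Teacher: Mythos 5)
Your evaluation of $\beta_{\text{UMCD}}(\mathcal{I}_6(l))=2l+1$ is essentially the paper's Proposition \ref{prop:A-1} (trace the algorithm: $2l$ rounds with rows $G_k=\{2k-1,2k\}$, then one final row $\{w\}\cup A_w$ whose induced submatrices $\boldsymbol{G}_{[2l+1]}^{\{i\}\cup B_i}$ all admit an all-ones diagonal), and your MAIS set $\{2i-1:i\in[2l]\}\cup\{2\}$ is a valid, if unnecessary, variant of the paper's acyclic set $\{2i:i\in[2l]\}\cup\{4l+1\}$. Your upper bound $\beta_{\text{R}}(\mathcal{I}_6(l))\le 3(l-1)+\tfrac72=3l+\tfrac12$ via the quadruple/last-block partition is also sound (each quadruple is indeed a directed $4$-cycle and the last five vertices form a relabelled $\mathcal{I}_6(1)$). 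The genuine gap is the matching lower bound $\beta_{\text{R}}(\mathcal{I}_6(l))\ge 3l+\tfrac12$, which you yourself flag as the ``main obstacle'' and only outline. This is precisely where the paper's proof (Proposition \ref{prop:A-2}, Appendix \ref{proof:A-2}) does all of its work, and your proposed route is unlikely to close it as stated: the recursive/fractional schemes optimize over arbitrary, overlapping subsets with fractional weights, so an induction that ``peels'' a quadruple after the scheme's ``first move'' does not control covers built from the cross-quadruple $4$-cycles $M_{i,j}=\{2i-1,2i,2j-1,2j\}$, $i\neq j\in[2l]$, which are exactly the sets an optimal cover uses. Your characterization in item (i) also omits the two-element cliques $M_{i,2l+1}=\{2i-1,4l+1\}$, which are essential: they are the only non-singleton sets covering $u_{4l+1}$ and are what produce the residual $\tfrac12$.

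What the paper supplies, and what your proposal would need, is threefold: (a) a reduction showing the FPCC optimum can be expressed over \emph{minimal} partial clique sets (Proposition \ref{prop:minimal}), together with the explicit enumeration of those sets for this class ($M_{i,j}$, $M_{i,2l+1}$, and singletons, with MDS rates $3$, $1$, $1$); (b) an exact LP computation obtained by summing the covering constraints for the even, odd, and $(4l+1)$-th receivers, yielding $\beta_{\text{FPCC}}(\mathcal{I}_6(l))=3l+\tfrac12$ (Lemma \ref{lem1:prop4}) --- this is the rigorous version of your items (ii)--(iii); and (c) the separate Lemma \ref{lem2:prop4} showing $\beta_{\text{R}}=\beta_{\text{FPCC}}$ for this class, because the minimal recursive sets coincide with the minimal partial cliques and none of the multi-vertex sets is contained in any side-information set $A_i$, so the recursion in \eqref{eq:recursive:broadcastrate} collapses to the FPCC optimization. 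Without (a)--(c), relying on the cited value $\beta_{\text{R}}(\mathcal{I}_4)=3.5$ for the base case and on an unproven inductive step leaves the equality in the theorem unestablished; only the inequality $\beta_{\text{R}}-\beta_{\text{UMCD}}\le l-\tfrac12$ follows from what you have actually argued.
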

Proof can directly be derived from Propositions \ref{prop:A-1} and \ref{prop:A-2}.

\begin{prop} \label{prop:A-1}
The broadcast rate of the proposed UMCD coding scheme for the class-$\mathcal{I}_{6}$ index coding instances is $\beta_{\text{UMCD}}(\mathcal{I}_{6}(l))=2l+1$.
\end{prop}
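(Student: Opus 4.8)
The plan is to run the UMCD algorithm explicitly on $\mathcal{I}_6(l)$ and count the transmissions, then invoke the MCM-based stopping criterion \eqref{eq:dec-mcm} to confirm that no receiver is satisfied early. First I would observe the side-information sizes: the $2l$ receivers $u_{2i-1}$ have $|A_{2i-1}|=2l$ (they know all even messages except $x_{2i}$, plus $x_{4l+1}$); the $2l$ receivers $u_{2i}$ have $|A_{2i}|=1$; and $u_{4l+1}$ has $|A_{4l+1}|=2l$. So the minimum size of side information is $1$, attained by the $2l$ receivers $u_{2i}$, $i\in[2l]$. The UMCD algorithm must therefore first target these $2l$ receivers. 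When it targets $u_{2i}$ it sets $G_k=\{2i\}\cup A_{2i}=\{2i-1,2i\}$, a row with two ones in columns $2i-1$ and $2i$.

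The key step is to show that, during these first $2l$ rounds, condition \eqref{eq:dec-mcm} never holds for any still-unsatisfied receiver, so that all $2l$ of them require a dedicated transmission. For a receiver $u_{2i}$ not yet processed, $B_{2i}=[4l+1]\setminus(\{2i-1,2i\})$ is large, but the binary matrix built so far consists only of rows supported on pairs $\{2j-1,2j\}$; one checks that $\mathrm{mcm}(\boldsymbol G_{[k]}^{\{2i\}\cup B_{2i}})=\mathrm{mcm}(\boldsymbol G_{[k]}^{B_{2i}})$ because column $2i$ only ever appears in the (not-yet-created) row for $u_{2i}$, so adding it cannot increase the matching. For the receivers $u_{2i-1}$ and $u_{4l+1}$, whose interference sets $B_i$ consist of odd-indexed columns, a similar bipartite-matching count shows the demanded column gives no new augmenting edge at this stage. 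Hence after $2l$ rounds exactly the receivers $u_{2i}$, $i\in[2l]$, have been removed, and $N=\{2i-1:i\in[2l]\}\cup\{4l+1\}$, all of which now have the minimum side-information size $2l$ among remaining receivers.

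In round $2l+1$ the algorithm targets one of the remaining receivers, say $u_{4l+1}$ (or any $u_{2i-1}$ — by the symmetry noted in the examples the count is the same), setting $G_{2l+1}=\{4l+1\}\cup A_{4l+1}=\{1,3,\dots,4l-1,4l+1\}$, a single row covering all odd columns plus column $4l+1$. I would then verify, using \eqref{eq:dec-mcm} together with the characterization in Theorem~\ref{thm:01} (ones placeable on distinct rows/columns after permutation), that every remaining receiver $u_{2i-1}$ is now satisfied: its interference set $B_{2i-1}=\{2i\}$ (all even columns except… wait, precisely the columns it does not know and does not demand), and one shows $\mathrm{mcm}(\boldsymbol G_{[2l+1]}^{\{2i-1\}\cup B_{2i-1}})=\mathrm{mcm}(\boldsymbol G_{[2l+1]}^{B_{2i-1}})+1$ by exhibiting an explicit matching that uses the new row for column $2i-1$ while the old pair-rows cover the columns in $B_{2i-1}$. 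This empties $N$, giving $\beta_{\text{UMCD}}(\mathcal{I}_6(l))=2l+1$.

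The main obstacle is the bookkeeping in the second paragraph: carefully establishing, for each of the three receiver types, that the maximum cardinality matching does not grow when the demanded column is appended during the first $2l$ rounds. This requires understanding the structure of $\boldsymbol G_{[k]}$ as a disjoint union of ``pair'' edges $\{2j-1,2j\}$ and arguing about augmenting paths in that restricted bipartite graph; I expect this to reduce to the fact that an odd-indexed column $2i-1$ is matched (through its pair-row) only to itself-vs-$2i$, so it cannot be freed to cover a different column. Everything else — the side-information size computation and the final-round matching in round $2l+1$ — is routine once the matching picture is set up.
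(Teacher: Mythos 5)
Your proposal follows essentially the same route as the paper's proof: simulate the UMCD algorithm, observe that the first $2l$ rounds produce the pair-rows $\{2i-1,2i\}$ satisfying only the even-indexed receivers (indeed, for any still-unsatisfied $u_{2i-1}$ or $u_{4l+1}$ one has $\mathrm{mcm}(\boldsymbol{G}_{[k]}^{B_i})=k$ with only $k\le 2l$ rows available, so \eqref{eq:dec-mcm} cannot hold), and then check that the single row added in round $2l+1$ satisfies all remaining receivers via the matching criterion; the paper performs this last check by exhibiting all-ones diagonals after column permutation, which is equivalent to your explicit matchings. One minor slip worth fixing: $B_{4l+1}$ consists of all even-indexed columns and $B_{2i-1}$ contains the even column $2i$ in addition to the odd ones, but this does not affect the validity of your argument.
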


\begin{proof}
First, note that $|A_{2i}|=1, \ |A_{2i-1}|=2l, \ \forall i\in[2l],$ and $|A_{4l+1}|=2l$. Now, the UMCD coding scheme begins with one of the receivers indexed by $W=\{2i, i\in [2l]\}$ $u_{2i}, \forall i\in[2l]$. It can checked that for the first $2l$ transmissions, we have $w=\{2i\}$, $G_i=\{2i\}\cup A_{2i}=\{2i-1, 2i\}, \forall i\in [2l]$. So, the first $2l$ rows of the binary matrix will be set as follows
   \begin{equation}
      \boldsymbol{G}_{[2l]}=
      \small
      \begin{bmatrix}
      1 & 1 & 0 & 0 &\hdots & 0 & 0 & 0\\
      0 & 0 & 1 & 1 &\hdots & 0 & 0 & 0\\
      \vdots & \vdots & \vdots & \vdots & \ddots & \vdots & \vdots \\
      0 & 0 & 0 & 0 &\hdots & 1 & 1 & 0
      \end{bmatrix},
   \end{equation}
which satisfies all the receivers $u_{2i}, \forall i\in[2l]$. Now, for the round $2l+1$, we have $W=\{2i-1, i\in [2l+1]\}$. Let $w=2i-1,$ for some randomly chosen $i\in[2l]$. To determine whether other receivers $u_{2j-1}, j\neq i\in[2l]$ can also decode their message, we study the mcm of the following matrix:
\begin{equation}
\boldsymbol{G}_{[2l+1]}^{\{2j-1\}\cup B_{2j-1}}=
\small
      \begin{bmatrix} 
      1 & 0  & \hdots & 0 & 1\\
      0 & 1  &\hdots & 0 & 0\\
      \vdots & \vdots & \ddots & \vdots & \vdots \\
      0 & 0 &\hdots & 1 & 0\\
      0 & 0 &\hdots & 0 & 1
      \end{bmatrix}.
 \end{equation}
It can be seen that the elements of the main diagonal are all equal to 1. Thus, the decoding condition in \eqref{eq:dec-cond} holds for all receiver $u_{2i-1}, i\in[2l]$. Now, for receiver $u_{4l+1}$, we have
\begin{equation}
\boldsymbol{G}_{[2l+1]}^{\{4l+1\}\cup B_{4l+1}}=
\small
      \begin{bmatrix}
      1 & 0  & \hdots & 0\\
      0 & 1  & \hdots & 0\\
      \vdots & \vdots & \ddots & \vdots \\
      1 & 1 &\hdots &  1
      \end{bmatrix}.
 \end{equation}
Similarly, it can be seen that the diagonal elements are all set to 1. Thus, the decoding condition in \eqref{eq:dec-cond} holds for receiver $u_{4l+1}$, which completes the proof.
\end{proof}

\begin{prop} \label{prop:A-2}
For the class-$\mathcal{I}_{6}$ index coding instances, we have $\beta_{\text{R}}(\mathcal{I}_{6}(l))=3l+\frac{1}{2}$.
\end{prop}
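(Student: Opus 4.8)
The plan is to prove the two inequalities $\beta_{\text{R}}(\mathcal{I}_{6}(l))\le 3l+\tfrac12$ and $\beta_{\text{R}}(\mathcal{I}_{6}(l))\ge 3l+\tfrac12$ separately. It is worth noting at the outset that the lower bound cannot be inherited from a bound on the capacity: by Proposition~\ref{prop:A-1}, $\beta(\mathcal{I}_{6}(l))\le\beta_{\text{UMCD}}(\mathcal{I}_{6}(l))=2l+1<3l+\tfrac12$ for $l\ge 1$, so $3l+\tfrac12$ is a property of the recursive scheme itself and the lower bound must be argued against the recursive construction, not against $\beta$. Throughout, write $H_i:=2i-1$ and $T_i:=2i$ for $i\in[2l]$, and $s:=4l+1$; the relevant structural facts are that $\{H_i,s\}$ is a clique (bidirectional edge), that inside $\{H_i,H_j,T_i,T_j\}$ (with $i\ne j$) every vertex has exactly one in-neighbour, and that every receiver $u_{T_i}$ has side information of size $1$.

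For the upper bound I would invoke the fact, recalled in Appendix~\ref{FirstAppendix}, that the recursive scheme is never worse than the fractional partial clique cover scheme, $\beta_{\text{R}}\le\beta_{\text{FPCC}}$, and exhibit an explicit fractional partial clique cover of $\mathcal{I}_{6}(l)$ of value $3l+\tfrac12$: assign weight $1-\tfrac1{2l}$ to each of the $l$ vertex-disjoint four-cycles $\{H_1,H_2,T_1,T_2\},\dots,\{H_{2l-1},H_{2l},T_{2l-1},T_{2l}\}$ (each of MDS rate $4-1=3$), weight $\tfrac1{2l}$ to each singleton $\{T_i\}$, and weight $\tfrac1{2l}$ to each clique $\{H_i,s\}$ (each of MDS rate $2-1=1$). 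Every message is then covered to total weight exactly $1$ --- in particular $x_s$ to weight $2l\cdot\tfrac1{2l}=1$ --- and the total rate is $3l\bigl(1-\tfrac1{2l}\bigr)+2l\cdot\tfrac1{2l}+2l\cdot\tfrac1{2l}=3l-\tfrac32+1+1=3l+\tfrac12$. Hence $\beta_{\text{R}}(\mathcal{I}_{6}(l))\le\beta_{\text{FPCC}}(\mathcal{I}_{6}(l))\le 3l+\tfrac12$.

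For the lower bound I would use the per-message weighting $y_{H_i}=\tfrac12$, $y_{T_i}=1$, $y_{s}=\tfrac12$ (the optimal dual of the above cover), and prove by strong induction on $|M|$ that $\sum_{v\in M}y_v\le\beta_{\text{R}}(\mathcal{I}_M)$ for every $M\subseteq[4l+1]$; taking $M=[4l+1]$ then yields $\beta_{\text{R}}(\mathcal{I}_{6}(l))\ge 2l\cdot\tfrac12+2l\cdot 1+\tfrac12=3l+\tfrac12$. For the inductive step one inspects each move the recursive algorithm can make on $\mathcal{I}_M$. If it solves $\mathcal{I}_M$ directly as an MDS code, its rate is $|M|-\min_{i\in M}|A_i\cap M|$, and it suffices to check $\sum_{v\in M}y_v\le |M|-\min_{i\in M}|A_i\cap M|$ for all $M$; this is a short case analysis on which of $s$, the heads and the tails $M$ contains (the binding cases, where equality holds, being $M=\{H_i,s\}$, $M=\{H_i,H_j,T_i,T_j\}$ and $M=\{T_i:i\in I\}$). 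If instead the move reduces $\mathcal{I}_M$ to a possibly time-shared fractional cover by strictly smaller subinstances $M_1,\dots,M_p$ (each $M_j\subsetneq M$), its rate is at least $\sum_{j}\sum_{v\in M_j}y_v\ge\sum_{v\in M}y_v$ by the induction hypothesis, since every $v\in M$ lies in at least one $M_j$ and $y\ge 0$. Combining the two cases gives $\beta_{\text{R}}(\mathcal{I}_M)\ge\sum_{v\in M}y_v$.

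The main obstacle is the lower bound, specifically the task of matching the induction to the precise definition of the recursive algorithm: one must verify that every reduction/recursive-local-rate rule of the scheme ultimately combines only subinstances strictly smaller than $M$ --- so that the additivity of $v\mapsto y_v$ over (fractional) covers closes the induction --- and that the only branch forcing a feasibility test on $y$ is the direct-MDS one; the dual-feasibility case analysis itself is routine. A more computational alternative, in the spirit of the proof of Proposition~\ref{prop:A-1}, is to exploit the symmetry of $\mathcal{I}_{6}(l)$ (the heads are mutually interchangeable, as are the tails) to cut the recursive algorithm's choices down to a short list up to relabelling and then evaluate the minimum over that list directly, obtaining $3l+\tfrac12$.
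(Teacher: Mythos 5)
Your upper bound is correct and is essentially the achievability half of the paper's argument: the cover you exhibit (weight $1-\tfrac{1}{2l}$ on the $l$ disjoint four-cycles, weight $\tfrac{1}{2l}$ on the singletons $\{T_i\}$ and on the cliques $\{H_i,4l+1\}$) is feasible for \eqref{eq:FPCC-opt} with value $3l+\tfrac12$, and $\beta_{\text{R}}\le\beta_{\text{FPCC}}$ is recalled in Appendix \ref{FirstAppendix}; the paper instead solves the FPCC LP exactly over the minimal partial cliques in Lemma \ref{lem1:prop4}, which it needs anyway for the converse.

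The lower bound, however, has a genuine gap in the inductive step. Under the definition \eqref{eq:recursive:broadcastrate}, the rate of a recursive move on $\mathcal{I}_M$ is $\max_{i\in M}\sum_{j:\,M_j\not\subseteq A_i}\gamma_j\beta_{\text{R}}(M_j)$, not $\sum_j\gamma_j\beta_{\text{R}}(M_j)$ (and there is no separate ``direct MDS'' branch: the MDS rate arises from singleton covers through this same formula). So your case-2 estimate ``rate $\ge\sum_j\sum_{v\in M_j}y_v\ge\sum_{v\in M}y_v$'' does not follow: besides the dropped weights $\gamma_j$, every subinstance $M_j$ contained in $A_i$ is excluded from receiver $i$'s sum, and in $\mathcal{I}_6(l)$ such containments genuinely occur (e.g., $\{T_j\}\subseteq A_{H_i}$ for $j\ne i$, and $\{H_j\}\subseteq A_{4l+1}$), so the max of the truncated sums can a priori fall below the dual value. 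Indeed, if your induction were valid as stated it would prove $\beta_{\text{R}}(\mathcal{I})\ge\beta_{\text{FPCC}}(\mathcal{I})$ for every instance, since your case 1 is exactly dual feasibility for the FPCC covering LP; this is false in general, e.g., for $\mathcal{I}_5$ of Example \ref{exmp:ICC-5} one has $\beta_{\text{R}}(\mathcal{I}_5)=2$ while $\beta_{\text{FPCC}}(\mathcal{I}_5)\ge\beta_{\text{ICC}}(\mathcal{I}_5)=2.5$. Hence the ``obstacle'' you flag is the crux and cannot be dismissed as bookkeeping: the exclusion rule and the max over receivers must be handled using the specific structure of $\mathcal{I}_6(l)$. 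That is precisely what the paper does in Lemma \ref{lem2:prop4}: the minimal recursive sets coincide with the minimal partial cliques of \eqref{eq:minimalpartialclique-class-A}, their recursive and MDS rates agree, and none of the non-singleton ones is contained in any $A_i$, so the recursive optimization collapses to the FPCC LP whose value Lemma \ref{lem1:prop4} computes to be $3l+\tfrac12$. Your symmetry-based fallback would face the same issue, so as written the converse direction of the proposition is not established.
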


\begin{proof}
The proof of Proposition \ref{prop:A-2} appears in Appendix \ref{proof:A-2} of supplemental material section.
\end{proof}

\begin{prop}
For the class-$\mathcal{I}_{6}$ index coding instances, the UMCD coding scheme is optimal.
\end{prop}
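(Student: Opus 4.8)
The plan is to prove optimality by sandwiching $\beta(\mathcal{I}_{6}(l))$ between matching upper and lower bounds. The upper bound is already available: since the UMCD scheme produces a valid index code, Proposition~\ref{prop:A-1} gives $\beta(\mathcal{I}_{6}(l)) \le \beta_{\text{UMCD}}(\mathcal{I}_{6}(l)) = 2l+1$. So it remains to show the lower bound $\beta(\mathcal{I}_{6}(l)) \ge 2l+1$, and the natural tool is the MAIS bound $\beta(\mathcal{I}) \ge \beta_{\text{MAIS}}(\mathcal{I})$ (the proposition of Bar-Yossef \emph{et al.}): it suffices to exhibit a vertex-induced acyclic subgraph of $\mathcal{G}_{\mathcal{I}_{6}(l)}$ on $2l+1$ vertices.

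I would take the vertex set $M = \{2i : i \in [2l]\} \cup \{4l+1\}$, which has exactly $2l+1$ elements, and argue that the induced subgraph $\mathcal{G}_{M}$ has no edges at all. Recall that $(a,b)$ is an edge of $\mathcal{G}_{\mathcal{I}_{6}(l)}$ precisely when $a \in A_b$. For two distinct even vertices $2i,2j$ with $i,j\in[2l]$, the set $A_{2j}=\{2j-1\}$ contains no even index, so there is no edge between them. For the vertex $4l+1$: it cannot be the tail of an edge into any $2i$ because $A_{2i}=\{2i-1\}$ and $2i-1\le 4l-1<4l+1$, and it cannot be the head of an edge from any $2i$ because $A_{4l+1}=\{2j-1:j\in[2l]\}$ consists only of odd indices. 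Hence $\mathcal{G}_{M}$ is edgeless, in particular acyclic, and therefore $\beta_{\text{MAIS}}(\mathcal{I}_{6}(l)) \ge |M| = 2l+1$.

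Combining the two bounds yields
\begin{equation} \nonumber
2l+1 \;\le\; \beta_{\text{MAIS}}(\mathcal{I}_{6}(l)) \;\le\; \beta(\mathcal{I}_{6}(l)) \;\le\; \beta_{\text{UMCD}}(\mathcal{I}_{6}(l)) \;=\; 2l+1,
\end{equation}
so every inequality is an equality; in particular $\beta_{\text{UMCD}}(\mathcal{I}_{6}(l)) = \beta(\mathcal{I}_{6}(l))$, i.e., the UMCD scheme is optimal for the class-$\mathcal{I}_{6}$ instances. The only step that needs genuine care is the edge count for $\mathcal{G}_{M}$ --- one has to track the parities of the indices in each side-information set and respect the direction convention of $E$ --- but once Proposition~\ref{prop:A-1} is invoked there is nothing further to compute, so even this step is routine.
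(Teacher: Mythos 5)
Your proof is correct and follows exactly the paper's argument: both exhibit the same acyclic induced subgraph $M=\{2i: i\in[2l]\}\cup\{4l+1\}$ to get $\beta_{\text{MAIS}}(\mathcal{I}_{6}(l))\geq 2l+1$ and match it against the UMCD rate $2l+1$ from Proposition \ref{prop:A-1}. The only difference is that you spell out the parity check showing the induced subgraph is edgeless, which the paper leaves as an easy observation.
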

\begin{proof}
It can be easily seen that subgraph $M=\{2i: i\in[2l]\}\cup \{4l+1\}$ is acyclic. Therefore, $\beta_{\text{MAIS}}(\mathcal{I}_{6}(l))\geq 2l+1$, which is met by the UMCD coding scheme. This proves the optimality of the UMCD coding for $\mathcal{I}_{6}(l)$.
\end{proof}
\vspace{-3ex}
\subsection{The UMCD versus the ICC Coding Scheme}
In this subsection, a class of index coding instances are characterized to show that the gap between the broadcast rates of the ICC and the proposed UMCD coding schemes can grow linearly with the number of messages.

\begin{defn}[$\mathcal{I}_{7}(l)$: Class-$\mathcal{I}_{7}$ Index Coding Instances]
   Class-$\mathcal{I}_{7}$ index coding instances are defined as $\mathcal{I}_{7}(l)=\{(i|A_i), i\in[5l+3]\}$, where the side information sets are as follows
  \begin{equation} \label{eq:Class-B:side}
  A_i=
   \left\{\begin{array}{lc}
       E_{L_1}\cup (L_2\backslash\{i+(2l+1)\})\cup L_3, \ \ \ \ \forall i\in O_{L_1}, 
       \\
       O_{L_2}\cup (L_1\backslash\{i\})\cup L_3, \ \ \ \ \ \ \ \ \ \ \ \ \ \ \ \ \ \forall i\in E_{L_1}, 
       \\ 
       O_{L_2}\cup (L_1\backslash\{i-(2l+1)\})\cup L_3, \ \ \ \ \ \forall i\in E_{L_2}, 
       \\ 
       E_{L_1}\cup (L_2\backslash\{i\})\cup L_3, \ \ \ \ \ \ \ \ \ \ \ \ \ \ \ \ \ \ \forall i\in O_{L_2}, 
       \\ 
       \{2(i-4l)+3, 2(i-3l-2)\}, \ \ \ \ \ \ \ \ \forall i\in L_3, 
    \end{array}
    \right.
\end{equation}
where, $L_1=[2l+1]$, $L_2=[2l+2:4l+2]$, and $L_3=[4l+3:5l+3]$.
Here, $O_{L_i}$, $E_{L_i}$, respectively, denote a set whose elements are the odd and even numbers inside $L_i, \ i=1,2$. 
\end{defn}

\begin{exmp}
For the $\mathcal{I}_{7}(l=1)$, we have $m=8$, and 
\begin{align}
    A_1&=\{2, 5, 6, 7, 8\},\ A_2=\{1, 3, 5, 7, 8\}, \
    A_3=\{2, 4, 5, 7, 8\},
    \nonumber
    \\
    A_4&=\{2, 3, 5, 7, 8\}, \
    A_5=\{2, 3, 6, 7, 8\}, \ A_6=\{1, 2, 5, 7, 8\},
    \nonumber
    \\
    A_7&=\{1, 4\},\ \ \ \ \ \ \ \ \ \ \ A_8=\{3, 6\}.
   \nonumber
\end{align}
\end{exmp}
\begin{thm} \label{thm:class-ICC-UMCD}
For the class-$\mathcal{I}_{7}$ index coding instances, we have
   \begin{equation}
       \beta_{\text{ICC}}(\mathcal{I}_{7}(l))-\beta_{\text{UMCD}}(\mathcal{I}_{7}(l))\geq 2l-\frac{1}{2},
   \end{equation}
which scales as $\mathcal{O}(m)$. This means the gap between the broadcast rates of the ICC coding scheme and our proposed UMCD coding scheme grows linearly with the number of messages. 
\end{thm}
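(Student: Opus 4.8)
The plan is to follow the template of the class-$\mathcal{I}_6$ result: split Theorem \ref{thm:class-ICC-UMCD} into two propositions — one that determines $\beta_{\text{UMCD}}(\mathcal{I}_7(l))$ exactly by running Algorithm \ref{alg:UMCD}, and one that lower-bounds $\beta_{\text{ICC}}(\mathcal{I}_7(l))$ — and then subtract. A direct bookkeeping argument should give $\beta_{\text{UMCD}}(\mathcal{I}_7(l)) = l+2$, while a structural argument about interlinked-cycle covers should give $\beta_{\text{ICC}}(\mathcal{I}_7(l)) \geq 3l + \tfrac{3}{2}$; since $(3l+\tfrac{3}{2})-(l+2)=2l-\tfrac{1}{2}$, the theorem follows, and the growth is linear in $m=5l+3$.

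For the UMCD rate I would run Algorithm \ref{alg:UMCD} on $\mathcal{I}_7(l)$. The receivers of minimum side-information size are exactly those in $L_3$, each with $|A_i|=2$, so the first $l+1$ rounds are spent on $L_3$; the $j$-th such row of $\boldsymbol{G}$ has ones only in the three coordinates $\{i\}\cup A_i$, and together these $l+1$ rows restrict to a permuted identity on the coordinate block $O_{L_1}$ and also on $E_{L_2}$. In round $l+2$ the algorithm targets a big receiver; choosing one from $E_{L_1}$ appends the dense row $G_{l+2}=L_1\cup O_{L_2}\cup L_3$. Because every big receiver has its interference set $B_i$ contained in $O_{L_1}\cup E_{L_2}$, I would then verify the decoding test \eqref{eq:dec-mcm} — equivalently, the ``ones in distinct rows and columns'' criterion behind Theorem \ref{thm:01} — for each of the classes $O_{L_1},E_{L_1},O_{L_2},E_{L_2}$ by exhibiting a matching of size $|B_i|$ in $\boldsymbol{G}_{[l+2]}^{B_i}$ and one of size $|B_i|+1$ in $\boldsymbol{G}_{[l+2]}^{\{i\}\cup B_i}$, where the extra edge is always supplied by row $l+2$. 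The matching lower bound $\beta_{\text{UMCD}}\geq l+2$ is immediate: a receiver $i\in E_{L_1}$ has $|B_i|=l+1$, so \eqref{eq:dec-mcm} forces $\mathrm{mcm}=l+2$, hence at least $l+2$ transmissions. One must also check, as in Example \ref{exmp:ICC-5}, that other tie-breakings produce the same rate; this follows from the symmetry among the four classes. This is the analogue of Proposition \ref{prop:A-1}.

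For the ICC rate I would use the description of the ICC scheme recalled in Appendix \ref{FirstAppendix}: its (fractional) broadcast rate equals $m$ minus the maximum total number of inner vertices over all (fractional) partitions of $[m]$ into interlinked-cycle structures. Lower-bounding $\beta_{\text{ICC}}$ thus amounts to upper-bounding that total, which I would do from sparsity features of $\mathcal{G}_{\mathcal{I}_7(l)}$: $O_{L_1}$ and $E_{L_2}$ are independent sets; the only edges missing inside $O_{L_1}\cup E_{L_2}$ form a perfect matching; each $L_3$-vertex is bidirectionally adjacent only to a single matched pair; and the bidirectional subgraph is triangle-free. Since an interlinked-cycle structure with $K$ inner vertices demands $K-1$ internally vertex-disjoint paths out of each inner vertex, these constraints cap how many vertices any single structure — and so the entire cover — can make inner, which I would convert into $\beta_{\text{ICC}}(\mathcal{I}_7(l))\geq 3l+\tfrac{3}{2}$. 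If the direct count is unwieldy, it may be cleaner to pass through a fractional/LP relaxation, or to reduce to a generalized-MAIS argument on the relevant sub-structures. This is the analogue of Proposition \ref{prop:A-2}.

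The main obstacle is the ICC lower bound: the bare MAIS bound is far too weak here (it stays essentially constant in $l$, whereas we need a bound growing like $3l$), so one has to argue about every admissible interlinked-cycle cover — precisely characterizing which vertex subsets can be inner vertices and showing the disjoint-path requirements of the structures cap the total number of inner vertices at about $2l$. The remaining work — the explicit matching verification of \eqref{eq:dec-mcm} for the four receiver classes and all tie-breakings in the UMCD run — is routine but lengthy.
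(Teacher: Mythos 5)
Your first half (the UMCD rate) is essentially the paper's own Proposition \ref{prop:B-1}: the first $l+1$ rounds target the $L_3$ receivers and produce the three identity blocks, round $l+2$ targets one of the large receivers (the paper happens to pick it from $O_{L_1}$ while you pick from $E_{L_1}$, which is immaterial since the tie-breaking has to be checked anyway), and the decoding test \eqref{eq:dec-mcm} is then verified class by class through explicit matchings, exactly as you outline. That part is sound, up to the routine verifications you defer.

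The ICC half is where the proposal breaks. The bound you aim for, $\beta_{\text{ICC}}(\mathcal{I}_{7}(l))\geq 3l+\tfrac{3}{2}$, is simply false, so no amount of inner-vertex counting can deliver it. Cliques are ICC structures with $\beta_{\text{ICC}}(M)=1$ (the savings of a structure is $|J|-1$, not $|J|$, by the way), and $\mathcal{I}_{7}(l)$ admits cheap clique covers: the clique $E_{L_1}\cup O_{L_2}$, the $l+1$ pairs consisting of each $L_3$ vertex with its $E_{L_2}$ partner, and the $l+1$ singletons of $O_{L_1}$ give $\beta_{\text{ICC}}(\mathcal{I}_{7}(l))\leq 2l+3$; for $l=1$ the cover $\{1,7\},\{6,8\},\{3,4\},\{2,5\}$ even gives rate $4$, below your target of $4.5$, and fractional covers do better still. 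What the paper actually proves is the weaker bound $\beta_{\text{ICC}}(\mathcal{I}_{7}(l))\geq \tfrac{3(l+1)}{2}$, and by a different route than yours: Lemma \ref{lem1:prop7} lower-bounds $\beta_{\text{FPCC}}$ by an LP argument (enumerating the relevant minimal partial cliques and summing the covering constraints), and Lemma \ref{lem2:prop7} shows, via the clique-outgoing/clique-incoming vertex analysis, that every ICC-structured subgraph of this class degenerates to a clique, so the ICC optimization cannot beat the FPCC one. Combined with $\beta_{\text{UMCD}}(\mathcal{I}_{7}(l))=l+2$ this yields a gap of $\tfrac{3(l+1)}{2}-(l+2)=\tfrac{l-1}{2}$, which is what the paper's own propositions support (and which still grows linearly in $m$); the constant $2l-\tfrac{1}{2}$ in the theorem statement is not what those propositions give, and your attempt to reach it through $\beta_{\text{ICC}}\geq 3l+\tfrac{3}{2}$ cannot be repaired, since that inequality contradicts the explicit ICC solutions above. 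If you want to salvage the second half, you should redirect your effort to the paper's two-step structure: a lower bound on $\beta_{\text{FPCC}}$ via the covering LP, plus a proof that admissible ICC structures here are only cliques.
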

\begin{proof}
The proof of Theorem \ref{thm:class-ICC-UMCD} appears in Appendix \ref{proof:thm-class-ICC} of supplemental material section.
\end{proof}

\begin{prop}
For the class-$\mathcal{I}_{7}$ index coding instances, the UMCD coding scheme is optimal.
\end{prop}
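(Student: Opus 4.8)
The plan is to certify optimality through the MAIS lower bound of Bar-Yossef \emph{et al.}, which gives $\beta(\mathcal{I}_{7}(l))\geq\beta_{\text{MAIS}}(\mathcal{I}_{7}(l))$, together with the trivial upper bound $\beta(\mathcal{I}_{7}(l))\leq\beta_{\text{UMCD}}(\mathcal{I}_{7}(l))$ that holds because the UMCD code is a valid index code. Hence it suffices to exhibit a vertex-induced acyclic subgraph of $\mathcal{G}_{\mathcal{I}_{7}(l)}$ whose number of vertices equals $\beta_{\text{UMCD}}(\mathcal{I}_{7}(l))$. From the proof of Theorem~\ref{thm:class-ICC-UMCD} the broadcast rate of the UMCD scheme on these instances is $\beta_{\text{UMCD}}(\mathcal{I}_{7}(l))=l+2$ (the scheme spends one transmission on each of the $l+1$ receivers in $L_{3}$ and then a single further transmission satisfies all receivers in $L_{1}\cup L_{2}$), so the target size is $l+2=|L_{3}|+1$.

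The candidate subgraph I would use is $M=L_{3}\cup\{v\}$ with $v$ an arbitrary element of $E_{L_{1}}$ (for instance $v=2$), which has $|M|=l+2$ vertices. To check that $\mathcal{G}_{M}$ is acyclic I would read its arcs off the five-case description~\eqref{eq:Class-B:side}. First, $L_{3}$ is an independent set: for $i,i'\in L_{3}$ the set $A_{i'}$ is, by the last line of~\eqref{eq:Class-B:side}, a two-element subset of $O_{L_{1}}\cup E_{L_{2}}$, so $i\notin A_{i'}$ and there is no arc between $i$ and $i'$. Second, there is no arc from $v$ into $L_{3}$: an arc $v\to i$ with $i\in L_{3}$ would require $v\in A_{i}\subseteq O_{L_{1}}\cup E_{L_{2}}$, impossible since $v\in E_{L_{1}}$ and $E_{L_{1}}$ is disjoint from $O_{L_{1}}\cup E_{L_{2}}$. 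Therefore the only arcs present in $\mathcal{G}_{M}$ are the arcs $i\to v$ for $i\in L_{3}$ (these do occur, since $A_{v}\supseteq L_{3}$ whenever $v\in E_{L_{1}}$), so $\mathcal{G}_{M}$ is a directed star with all arcs pointing at the single sink $v$ and is acyclic.

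Combining these, $\beta_{\text{MAIS}}(\mathcal{I}_{7}(l))\geq|M|=l+2=\beta_{\text{UMCD}}(\mathcal{I}_{7}(l))$; together with $\beta_{\text{MAIS}}(\mathcal{I}_{7}(l))\leq\beta(\mathcal{I}_{7}(l))\leq\beta_{\text{UMCD}}(\mathcal{I}_{7}(l))$ this forces equality throughout, so the UMCD scheme attains $\beta(\mathcal{I}_{7}(l))$ and is optimal for the class-$\mathcal{I}_{7}$ instances. The only real obstacle is the bookkeeping with~\eqref{eq:Class-B:side}: one must verify carefully that every $L_{3}$ receiver knows only messages in $O_{L_{1}}\cup E_{L_{2}}$ and that every $L_{1}\cup L_{2}$ receiver has all of $L_{3}$ in its side information --- the two structural facts that make the chosen $M$ work (and that also underlie the value $\beta_{\text{UMCD}}(\mathcal{I}_{7}(l))=l+2$). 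If that value were in fact larger than $l+2$, the MAIS bound alone would not close the gap and one would need either a larger acyclic induced subgraph or a tighter lower bound.
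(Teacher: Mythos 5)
Your proposal is correct and is essentially the paper's own argument: the paper also certifies optimality via the MAIS bound using the acyclic induced subgraph $M=L_3\cup\{2\}$ (your choice $v\in E_{L_1}$ with $v=2$) together with $\beta_{\text{UMCD}}(\mathcal{I}_{7}(l))=l+2$ from Proposition \ref{prop:B-1}. Your write-up merely spells out the acyclicity check from \eqref{eq:Class-B:side} in more detail than the paper does.
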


\begin{proof}
It can be easily seen that subgraph $M=L_3\cup\{2\}$ is acyclic. Therefore, $\beta_{\text{MAIS}}(\mathcal{I}_{7}(l))\geq l+2$, which is met by the broadcast rate of the UMCD coding scheme. This proves the optimality of the UMCD coding for $\mathcal{I}_{7}(l)$.
\end{proof}

\section{Extensions of the UMCD Coding Scheme} \label{sec:09}
Inspired by the PPC and FPCC coding schemes, in this section, we extend the UMCD coding scheme to its partial and fractional versions, where each is, respectively, motivated through Examples \ref{exm:partial} and \ref{exm:frac}.

\subsection{Partial UMCD (P-UMCD) Coding Scheme}

\begin{exmp} \label{exm:partial}
Consider the index coding instance $\mathcal{I}_{8}=\{(1|2,5), (2|3,4), (3|2,4), (4|2,3)$, $(5|1,4)\}$.
For this instance $\beta_{\text{UMCD}}(\mathcal{I}_{8})=3$, while $\beta({\mathcal{I}_{8}})=2$, indicating that the UMCD is suboptimal. However, we first partition $\mathcal{I}_{8}$ into subinstances $M_1=\{1,5\}$ and $M_2=\{2,3,4\}$. Then we apply the UMCD coding scheme for each subinstance separately, which results in $\beta_{\text{UMCD}}(M_1)=\beta_{\text{UMCD}}(M_2)=1$. This achieves the broadcast rate.
\end{exmp}

\begin{defn}[P-UMCD Coding Scheme]
Given an index coding instance $\mathcal{I}=\{(i|A_i), i\in[m]\}$, the P-UMCD coding scheme first partitions $\mathcal{I}$ into subinstances $M_1,\dots, M_n, n\leq m$, satisfying the condition in \eqref{eq:partition-con}. Then, the UMCD coding scheme is used for solving each subinstance individually. So, the broadcast rate of the P-UMCD will be achieved as below
\begin{equation}
    \beta_{\text{P-UMCD}}(M_j, j\in[n])=\sum_{j\in [n]} \beta_{\text{UMCD}}(M_j).
\end{equation}
\end{defn}
\vspace{-1ex}

\begin{thm}
Given an index coding instance $\mathcal{I}$, the broadcast rate $\beta(\mathcal{I})$ is upper bounded by $\beta_{\text{P-UMCD}}(\mathcal{I})$, which is the solution to the following optimization problem
\vspace{-1ex}
\begin{equation} \label{eq:opt-F-UMCD}
    \min_{M_j, j\in [n]} \beta_{\text{P-UMCD}}(M_j, j\in[n]),
\end{equation}
subject to the constraint in \eqref{eq:partition-con}.
\end{thm}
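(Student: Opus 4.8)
The plan is to prove the inequality for an \emph{arbitrary} partition $M_1,\dots,M_n$ of $[m]$ satisfying \eqref{eq:partition-con} and then minimise. Fix such a partition. For each $j\in[n]$, run Algorithm \ref{alg:UMCD} on the subinstance $\mathcal{I}_{M_j}=\{(i\mid A_i\cap M_j),\ i\in M_j\}$. This yields a scalar linear index code for $\mathcal{I}_{M_j}$ with encoding matrix $\boldsymbol{H}^{(j)}\in\mathbb{F}_q^{r_j\times|M_j|}$, where $r_j:=\beta_{\text{UMCD}}(M_j)$ and the columns of $\boldsymbol{H}^{(j)}$ are indexed by $M_j$; by the correctness of the UMCD scheme (Proposition \ref{prop-lineardecoding-condition} applied to $\mathcal{I}_{M_j}$, equivalently \eqref{eq:dec-mcm}), every receiver $u_i$ with $i\in M_j$ recovers $x_i$ from the $r_j$ coded symbols $\boldsymbol{H}^{(j)}$ produces, using only the side information $\{x_l:\ l\in A_i\cap M_j\}$.

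Next I would assemble these into a single code for $\mathcal{I}$. Since $M_1,\dots,M_n$ partition $[m]$, the message index set splits accordingly; define the overall encoding matrix $\boldsymbol{H}$ with $\sum_{j\in[n]}r_j$ rows in block-diagonal form, placing $\boldsymbol{H}^{(j)}$ in the rows contributed by block $j$ and in the columns indexed by $M_j$, and zeros elsewhere. I then verify the decoding condition \eqref{eq:dec-cond} for $\boldsymbol{H}$ and each $i\in M_j$. Writing $B_i=[m]\setminus(A_i\cup\{i\})$ and decomposing $B_i=\bigsqcup_{j'\in[n]}(B_i\cap M_{j'})$, the block-diagonal structure gives that ranks add across blocks, so
\begin{equation} \nonumber
\mathrm{rank}\,\boldsymbol{H}^{\{i\}\cup B_i}-\mathrm{rank}\,\boldsymbol{H}^{B_i}
=\mathrm{rank}\,(\boldsymbol{H}^{(j)})^{\{i\}\cup(B_i\cap M_j)}-\mathrm{rank}\,(\boldsymbol{H}^{(j)})^{B_i\cap M_j},
\end{equation}
and the right-hand side equals $1$ because $B_i\cap M_j=M_j\setminus(\{i\}\cup(A_i\cap M_j))$ is precisely the interfering set of $u_i$ in $\mathcal{I}_{M_j}$ and $\boldsymbol{H}^{(j)}$ satisfies \eqref{eq:dec-cond} for that subinstance (with $t=1$). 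Hence $\boldsymbol{H}$ is a valid $(1,\sum_j r_j)$ index code for $\mathcal{I}$, and by the definition of the broadcast rate in \eqref{eq:opt-rate},
\begin{equation} \nonumber
\beta(\mathcal{I})\le\sum_{j\in[n]}\beta_{\text{UMCD}}(M_j)=\beta_{\text{P-UMCD}}(M_j,\ j\in[n]).
\end{equation}

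Finally, since $[m]$ is finite there are finitely many partitions obeying \eqref{eq:partition-con}, so the minimum in \eqref{eq:opt-F-UMCD} is attained, say at $M_1^\star,\dots,M_{n^\star}^\star$; applying the displayed bound to this optimal partition yields $\beta(\mathcal{I})\le\beta_{\text{P-UMCD}}(\mathcal{I})$, as claimed. I do not expect a real obstacle: the argument is the standard code-concatenation bound used for the PCC scheme, with the MDS block replaced by a UMCD block, and the only point needing care is confirming that block-diagonal stacking creates no cross-subinstance interference at any receiver — which is exactly the rank identity above, inherited from the per-subinstance guarantee of Proposition \ref{prop-lineardecoding-condition}.
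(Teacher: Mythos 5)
Your proposal is correct, and it is essentially the argument the paper implicitly relies on: the theorem is stated without an explicit proof precisely because it follows from the standard block-diagonal concatenation bound (as for the PCC scheme), which you have spelled out by verifying \eqref{eq:dec-cond} blockwise using the per-subinstance UMCD guarantee. The only detail worth a passing remark is that the per-block encoding matrices should be taken over a common sufficiently large field (e.g., one of size at least the maximum of the per-subinstance $q_{min}$ values), which the MCD algorithm permits, so nothing in your argument breaks.
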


\begin{prop}
For the index coding instance $\mathcal{I}$, we have $\beta_{\text{P-UMCD}}(\mathcal{I})\leq \beta_{\text{PCC}}(\mathcal{I})$.
\end{prop}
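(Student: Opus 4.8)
The plan is to show that the partial UMCD coding scheme always does at least as well as the PCC scheme by exhibiting, for any partition witnessing $\beta_{\text{PCC}}(\mathcal{I})$, that the UMCD scheme applied to each part of that same partition uses no more transmissions than the MDS code used by PCC on that part. Recall that the PCC scheme (reviewed in Appendix \ref{FirstAppendix}) partitions $\mathcal{I}$ into subinstances $M_1,\dots,M_n$ and solves each $M_j$ with the MDS coding scheme, so that $\beta_{\text{PCC}}(M_j, j\in[n]) = \sum_{j\in[n]} \beta_{\text{MDS}}(\mathcal{I}_{M_j})$, and $\beta_{\text{PCC}}(\mathcal{I})$ is the minimum of this quantity over all partitions satisfying \eqref{eq:partition-con}. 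Since $\beta_{\text{P-UMCD}}(\mathcal{I})$ is defined as the minimum of $\sum_{j\in[n]} \beta_{\text{UMCD}}(\mathcal{I}_{M_j})$ over the \emph{same} family of partitions, it suffices to prove the term-by-term inequality $\beta_{\text{UMCD}}(\mathcal{I}_{M_j}) \le \beta_{\text{MDS}}(\mathcal{I}_{M_j})$ for every subinstance $M_j$.

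The key step is therefore just an invocation of Theorem \ref{thm:UMCD<=MDS}. That theorem states $\beta_{\text{UMCD}}(\mathcal{I}) \le \beta_{\text{MDS}}(\mathcal{I})$ for an \emph{arbitrary} index coding instance $\mathcal{I}$; applying it to the instance $\mathcal{I}_{M_j}$ (which is itself a bona fide index coding instance, by the Definition of an index coding subinstance) gives $\beta_{\text{UMCD}}(\mathcal{I}_{M_j}) \le \beta_{\text{MDS}}(\mathcal{I}_{M_j})$. Summing over $j\in[n]$ yields
\begin{equation} \nonumber
\sum_{j\in[n]} \beta_{\text{UMCD}}(\mathcal{I}_{M_j}) \;\le\; \sum_{j\in[n]} \beta_{\text{MDS}}(\mathcal{I}_{M_j}) \;=\; \beta_{\text{PCC}}(M_j, j\in[n]).
\end{equation}
Now let $M_1^\star,\dots,M_{n^\star}^\star$ be a partition achieving $\beta_{\text{PCC}}(\mathcal{I})$. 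Then $\beta_{\text{P-UMCD}}(\mathcal{I}) \le \beta_{\text{P-UMCD}}(M_j^\star, j\in[n^\star]) \le \beta_{\text{PCC}}(M_j^\star, j\in[n^\star]) = \beta_{\text{PCC}}(\mathcal{I})$, where the first inequality holds because $\beta_{\text{P-UMCD}}(\mathcal{I})$ is a minimum over all admissible partitions and the second is the summed inequality above. This is the claimed bound.

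There is essentially no hard obstacle here; the only point requiring a small amount of care is bookkeeping about which optimization is being compared to which — one must make sure that P-UMCD and PCC are minimized over exactly the same set of partitions (those satisfying \eqref{eq:partition-con}), so that a single fixed partition can be used as the common witness, and that the local comparison is legitimately an instance of Theorem \ref{thm:UMCD<=MDS} rather than requiring a fresh argument. One might also remark that the inequality is in fact \emph{strict} for some instances (e.g.\ $\mathcal{I}_2$ or the class-$\mathcal{I}_6$ instances), so that P-UMCD strictly improves upon PCC, matching the claim made in the contributions list; but that strengthening is not needed for the statement as written and would be handled by the separate examples already in the paper.
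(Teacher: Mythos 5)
Your argument is correct and matches the paper's proof, which likewise deduces the claim directly from Theorem \ref{thm:UMCD<=MDS} applied to each subinstance $M_j$; your write-up merely makes explicit the witness step of evaluating both schemes on a PCC-optimal partition, which the paper leaves implicit. No gaps.
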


\begin{proof}
This can directly be concluded from Theorem \ref{thm:UMCD<=MDS}, which states that for each subinstance $M_j, j\in[n]$, we always have $\beta_{\text{UMCD}}(M_j)\leq \beta_{\text{MDS}}(M_j)$.
\end{proof}

\subsection{Fractional Partial UMCD (FP-UMCD) Coding Scheme}

\begin{exmp} \label{exm:frac}
Consider the index coding instance $\mathcal{I}_{9}=\{(1|2,5), (2|1,3), (3|2,4), (4|3,5)$, $(5|1,4)\}$.
For this instance, we have $\beta_{\text{UMCD}}(\mathcal{I}_{9})=\beta_{\text{P-UMCD}}(\mathcal{I}_{9})=3$, while $\beta(\mathcal{I}_{9})=2.5$, which requires a vector index code to be optimal. This can be achieved by applying time sharing over the subinstances $M_1=\{1,2\}$, $M_2=\{2,3\}$, $M_3=\{3,4\}$, $M_4=\{4,5\}$, and $M_5=\{1,5\}$, each solved by the UMCD coding scheme.
\end{exmp}

\begin{defn}[FP-UMCD Coding Scheme]
Given an index coding instance $\mathcal{I}=\{(i|A_i), i\in[m]\}$, the FP-UMCD coding scheme applies time sharing over the UMSD coding solution of the subinstances $M_j, j\in [n]$ for some $n\in 2^{[m]}$, as follows
\begin{equation}
    \beta_{\text{FP-UMCD}}(M_j,\gamma_j, j\in [n])=\sum_{j\in [n]} \gamma_j\beta_{\text{UMCD}}(M_j),
\end{equation}
\vspace{-4ex}
such that 
\begin{align} 
    \gamma_j &\in [0,1],\ \ \ \ \sum_{j\in P_i} \gamma_j \geq 1,  \ \ \ \ \ \forall i\in[m], 
    \label{eq:partition-frac1}
\end{align}
where $P_i=\{j\in [n], i\in M_j\}$. Note that subsets $M_j, j\in[n]$ may have overlap each other.
\end{defn}

\begin{thm}
Given an index coding instance $\mathcal{I}$, the broadcast rate $\beta(\mathcal{I})$ is upper bounded by $\beta_{\text{FP-UMCD}}(\mathcal{I})$, which is the solution of the following optimization problem
\begin{equation} \label{eq:opt-FP-UMCD}
    \min_{M_j,\gamma_j, j\in [n]} \beta_{\text{FP-UMCD}}(M_j,\gamma_j, j\in [n]),
\end{equation}
subject to the constraint in \eqref{eq:partition-frac1}.
\end{thm}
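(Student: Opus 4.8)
The plan is constructive: from an \emph{optimal} feasible point $(M_j,\gamma_j)_{j\in[n]}$ of the program \eqref{eq:opt-FP-UMCD}--\eqref{eq:partition-frac1} I would exhibit an explicit $(t,r)$ vector index code with broadcast rate exactly $\sum_{j\in[n]}\gamma_j\beta_{\text{UMCD}}(M_j)$; by the definition of $\beta(\mathcal{I})$ in \eqref{eq:opt-rate} this immediately gives $\beta(\mathcal{I})\le\beta_{\text{FP-UMCD}}(\mathcal{I})$. Because the objective is linear and the constraint set \eqref{eq:partition-frac1} is a rational (indeed integer-coefficient) polyhedron, and the outer minimisation over the subsets $\{M_j\}$ is discrete, the optimum is attained at a point with rational $\gamma_j$, so we may assume the $\gamma_j$ are rational. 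This is the standard time-sharing/superposition argument underlying FPCC, now with the UMCD code in the role of the MDS code.

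First I would let $t$ be a common denominator of the $\gamma_j$ and put $b_j:=\gamma_j t\in\mathbb{Z}_{\ge 0}$ (so $b_j\le t$ since $\gamma_j\le 1$), and split every message $x_i$ into $t$ sub-symbols $x_i^{1},\dots,x_i^{t}$ over a sufficiently large common field $\mathbb{F}_q$, chosen large enough that the MCD construction inside Algorithm~\ref{alg:UMCD} succeeds for every subinstance $\mathcal{I}_{M_j}$ (cf. \eqref{eq:q_{min}} applied to each $M_j$). For each $i\in[m]$, since $\sum_{j\in P_i}b_j=t\sum_{j\in P_i}\gamma_j\ge t$ by \eqref{eq:partition-frac1}, I can choose subsets $T_{i,j}\subseteq[t]$, $j\in P_i$, with $|T_{i,j}|=b_j$ and $\bigcup_{j\in P_i}T_{i,j}=[t]$ (e.g.\ lay down consecutive blocks of lengths $b_j$ cyclically around $[t]$; the blocks wrap around at least once, so they cover $[t]$). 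The code built for subinstance $M_j$ will be responsible for delivering, to each $u_i$ with $i\in M_j$, exactly the sub-symbols $\{x_i^{\ell}:\ell\in T_{i,j}\}$.

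Next, for each $j\in[n]$ I would run Algorithm~\ref{alg:UMCD} on $\mathcal{I}_{M_j}$ to get the scalar encoding matrix $\boldsymbol{H}^{(j)}\in\mathbb{F}_q^{r_j\times|M_j|}$ with $r_j=\beta_{\text{UMCD}}(M_j)$, and then lift it position by position: fix bijections $\sigma_{i,j}\colon[b_j]\to T_{i,j}$ for $i\in M_j$, and for each $p\in[b_j]$ transmit the $r_j$ symbols $y^{(j,p)}_k=\sum_{i\in M_j}H^{(j)}_{k,i}\,x_i^{\sigma_{i,j}(p)}$, $k\in[r_j]$. For a fixed $p$ this is nothing but the scalar linear index code $\boldsymbol{H}^{(j)}$ applied to the symbols $\{x_i^{\sigma_{i,j}(p)}:i\in M_j\}$ of the instance $\mathcal{I}_{M_j}$, and for different $p$ these transmissions involve disjoint sets of sub-symbols. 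Since the UMCD code satisfies the decoding condition \eqref{eq:dec-cond} for $\mathcal{I}_{M_j}$, Proposition~\ref{prop-lineardecoding-condition} lets every $u_i$ with $i\in M_j$ recover $x_i^{\sigma_{i,j}(p)}$ from $\{y^{(j,p)}_k\}_k$ together with $\{x_l^{\sigma_{l,j}(p)}:l\in A_i\cap M_j\}$ --- and these side-information symbols are available to $u_i$ precisely because $A_i\cap M_j\subseteq A_i$, i.e.\ a receiver with strictly more side information is only better off. Ranging $p$ over $[b_j]$, receiver $u_i$ obtains all of $\{x_i^{\ell}:\ell\in T_{i,j}\}$.

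Finally I would count and conclude. The total number of transmitted symbols is $r=\sum_{j\in[n]}r_j b_j=t\sum_{j\in[n]}\gamma_j\beta_{\text{UMCD}}(M_j)$, for messages of length $t$, so the broadcast rate of this index code is $r/t=\beta_{\text{FP-UMCD}}(M_j,\gamma_j,j\in[n])$, which equals $\beta_{\text{FP-UMCD}}(\mathcal{I})$ for the chosen optimal $(M_j,\gamma_j)$. Each $u_i$ is satisfied because $\bigcup_{j\in P_i}T_{i,j}=[t]$, so the sub-symbols it recovers across the subinstances containing $i$ assemble into all of $x_i$. Hence this is a valid $(t,r)$ index code and $\beta(\mathcal{I})\le r/t=\beta_{\text{FP-UMCD}}(\mathcal{I})$. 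I expect the only genuinely delicate point to be the bookkeeping for \emph{overlapping} subinstances: one must check that the position-wise lift of the scalar matrix $\boldsymbol{H}^{(j)}$ remains a legitimate index code even though different messages of $M_j$ index their $b_j$ sub-symbols through different coordinate sets $T_{i,j}$ (this is exactly where one needs the coordinatewise/rank form of Proposition~\ref{prop-lineardecoding-condition}), and that the coverage $\bigcup_{j\in P_i}T_{i,j}=[t]$ can be realised simultaneously with $|T_{i,j}|=b_j$ for every $i$ (this is where the slack $\sum_{j\in P_i}\gamma_j\ge 1$ in \eqref{eq:partition-frac1} is used). Everything else is routine.
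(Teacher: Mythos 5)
Your proposal is correct and is exactly the standard time-sharing/superposition construction that the paper has in mind: the paper states this theorem without an explicit proof, treating it as the same argument that makes the FPCC rate achievable, with the scalar UMCD code on each subinstance $M_j$ playing the role of the MDS code. Your write-up simply supplies the routine details (rationality of the optimal $\gamma_j$, a common denominator $t$, coverage sets $T_{i,j}$ with $|T_{i,j}|=\gamma_j t$, the coordinate-wise lift of each matrix $\boldsymbol{H}^{(j)}$, decodability via Proposition \ref{prop-lineardecoding-condition} since $A_i\cap M_j\subseteq A_i$, and the count $r=t\sum_j\gamma_j\beta_{\text{UMCD}}(M_j)$), all of which check out.
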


\begin{prop}
For an index coding instance $\mathcal{I}$, we have $\beta_{\text{FP-UMCD}}(\mathcal{I})\leq \beta_{\text{FPCC}}(\mathcal{I})$.
\end{prop}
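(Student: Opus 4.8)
The plan is to reduce this proposition to the per-subinstance inequality $\beta_{\text{UMCD}}(M)\le\beta_{\text{MDS}}(M)$ already established in Theorem~\ref{thm:UMCD<=MDS}, exactly as the partial (non-fractional) counterpart $\beta_{\text{P-UMCD}}(\mathcal{I})\le\beta_{\text{PCC}}(\mathcal{I})$ was handled. The key observation is that the FPCC scheme (reviewed in Appendix~\ref{FirstAppendix}) and the FP-UMCD scheme optimize time-sharing objectives of the \emph{same} form over the \emph{same} feasible set of (subinstance, weight) tuples, and differ only in whether each chosen subinstance $M_j$ is solved by the MDS code or by the UMCD code. So any fractional cover that is admissible for FPCC is admissible for FP-UMCD, and at that cover the FP-UMCD objective cannot exceed the FPCC objective.

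Concretely, I would proceed as follows. First, recall from the FPCC description that $\beta_{\text{FPCC}}(\mathcal{I})$ is the optimal value of $\min_{M_j,\gamma_j}\sum_{j}\gamma_j\,\beta_{\text{MDS}}(M_j)$ subject to $\gamma_j\in[0,1]$ and $\sum_{j\in P_i}\gamma_j\ge 1$ for all $i\in[m]$, with $P_i=\{j:\ i\in M_j\}$ — that is, subject to precisely the constraint \eqref{eq:partition-frac1}, which involves only the $M_j$ and the $\gamma_j$ and makes no reference to how the subinstances are encoded. Second, let $(M_j^\star,\gamma_j^\star)_{j\in[n]}$ attain the FPCC optimum, so $\beta_{\text{FPCC}}(\mathcal{I})=\sum_{j}\gamma_j^\star\,\beta_{\text{MDS}}(M_j^\star)$; this tuple satisfies \eqref{eq:partition-frac1} and is therefore a feasible point of the FP-UMCD program \eqref{eq:opt-FP-UMCD}. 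Third, evaluate the FP-UMCD objective at this tuple and apply Theorem~\ref{thm:UMCD<=MDS} to each $M_j^\star$ together with $\gamma_j^\star\ge 0$:
\[
\beta_{\text{FP-UMCD}}(\mathcal{I})\ \le\ \sum_{j\in[n]}\gamma_j^\star\,\beta_{\text{UMCD}}(M_j^\star)\ \le\ \sum_{j\in[n]}\gamma_j^\star\,\beta_{\text{MDS}}(M_j^\star)\ =\ \beta_{\text{FPCC}}(\mathcal{I}),
\]
where the first inequality holds because $\beta_{\text{FP-UMCD}}(\mathcal{I})$ is the minimum of \eqref{eq:opt-FP-UMCD} over all feasible tuples. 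This chain is the whole argument.

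I do not anticipate a genuine obstacle here; the only points needing a line of care are (i) verifying that the admissible sets of the two optimization problems literally coincide, so that an FPCC-optimal tuple is FP-UMCD-feasible, and (ii) noting that $\beta_{\text{UMCD}}(M_j)$ is finite (and hence the objective well defined) on every subinstance, which is immediate from $\beta_{\text{UMCD}}(M_j)\le\beta_{\text{MDS}}(M_j)=|M_j|-|A|_{\text{min}}\le|M_j|$. If one also wished to record that the inequality can be strict, it would suffice to invoke an instance such as $\mathcal{I}_{9}$ of Example~\ref{exm:frac} (or the families $\mathcal{I}_6(l),\mathcal{I}_7(l)$), but the proposition as stated only claims "$\le$", for which the above monotonicity-of-time-sharing argument is enough.
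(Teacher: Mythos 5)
Your proposal is correct and takes essentially the same route as the paper, which likewise deduces the claim directly from Theorem \ref{thm:UMCD<=MDS} applied termwise to each subinstance $M_j$ of an FPCC-feasible fractional cover; you merely spell out the feasibility and monotonicity steps that the paper leaves implicit.
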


\begin{proof}
This can directly be concluded from Theorem \ref{thm:UMCD<=MDS}, which states that for each subinstance $M_j, j\in[n]$, we always have $\beta_{\text{UMCD}}(M_j)\leq \beta_{\text{MDS}}(M_j)$.
\end{proof}

\begin{rem} \label{rem:complexity-extention}
Note that the broadcast rate of the UMCD algorithm for the subinstance $M_i$, i.e., $\beta_{UMCD}(M_i), i\in [n]$, is achieved independently of using the MCD algorithm. Thus, finding the optimal solution (optimal subinstances) of the P-UMCD and FP-UMCD schemes, respectively in \eqref{eq:opt-F-UMCD} and \eqref{eq:opt-FP-UMCD} can be achieved only using the MCM algorithm. Since the MCM is a polynomial-time algorithm, it can be shown that the computational complexity of finding the optimal solution of \eqref{eq:opt-F-UMCD} and \eqref{eq:opt-FP-UMCD} can be achieved, respectively, from the optimal solution of the PCC and FPCC schemes, respectively in \eqref{eq:opt-PCC} and \eqref{eq:FPCC-opt}, by a polynomial reduction.
Once the optimal solutions \eqref{eq:opt-F-UMCD} and \eqref{eq:opt-FP-UMCD} are found using the MCM algorithm, then the encoding matrix for each optimal subinstance can be generated using the MCD algorithm. This highlights the importance of separating the MCM and the MCD algorithms in the UMCD coding scheme to reduce the computational complexity of the P-UMCD and FP-UMCD coding schemes.
\end{rem}

\begin{rem}
Among all the existing coding schemes, only minrank and composite coding schemes are optimal for index coding instances of up to and including five receivers. Now, we find that the proposed FP-UMCD coding scheme can also achieve the broadcast rate of the instances with up to and including five receivers (9846 non-isomorphic instances). However, for six receivers, there are some index coding instances for which the FP-UMCD coding scheme is suboptimal. One of these instances is illustrated in the following example.
\end{rem}

\begin{exmp}
For the index coding instance $\mathcal{I}_{10}=\{(1|4,5)$, $(2|1,6), (3|1,2,4,5,6), (4|1,2,3)$, $(5|2,3), (6|3,4)\}$,
we have $\beta(\mathcal{I}_{10})=3$, which can be achieved by index code $\{y_1=x_1+x_4+x_5, y_2=x_1+x_2+x_6, y_3=x_2+x_3+x_5\}$. However, it can be verified that $\beta_{\text{FP-UMCD}}(\mathcal{I}_{10})=\frac{10}{3}<\beta_{\text{ICC}}(\mathcal{I}_{10})=\beta_{\text{R}}(\mathcal{I}_{10})=3.5$, implying that while the proposed FP-UMCD coding scheme outperforms both the recursive and ICC coding schemes, it is still suboptimal for $\mathcal{I}_{10}$.
\end{exmp}

\section{Conclusion} \label{sec:10}
In this paper, a new index coding scheme, referred to as update-based maximum column distance (UMCD) was proposed in which for each step of transmission, a linear coded message is designed with the aim of satisfying at least one receiver with the minimum size of side information. The problem is updated after each transmission using the polynomial-time Hopcroft-Karp algorithm, which is able to identify the satisfied receivers in each transmission. The maximum column distance (MCD) algorithm was proposed to generate the encoding matrix of the UMCD coding scheme such that each subset of its columns achieves its maximum possible rank.
Several index coding instances were provided to show that the UMCD can outperform the ICC \cite{Thapa2017} and recursive \cite{Arbabjolfaei2014} schemes. Moreover, we proved that the broadcast rate of the proposed UMCD scheme is never larger than the MDS coding scheme.
Then, we characterized two classes of index coding instances for which the gap between the broadcast rates of the recursive and ICC coding schemes and the UMCD coding scheme grows linearly with the number of messages. The UMCD coding scheme was extended to its fractional version by dividing the instance into subinstances, and then applying the time sharing over their UMCD solutions. The fractional UMCD is optimal for all index coding instances with up to and including five receivers. Extending the UMCD coding scheme for the more general index coding scenarios, including the groupcast index coding \cite{Tehrani2012, Shanmugam2014, Unal2016, Sharififar-groupcast}, the secure index coding \cite{Narayanan-secure,Dau,Ong-secure,Mojahedian-secure,Liu-secure}, and the distributed index coding \cite{Liu-distri, Liu2020} would be directions for future studies.

\appendices

\section{Proof of Proposition \ref{prop-lineardecoding-condition}} \label{proof:prop-lineardecoding-condition}
First note that we must have $\mathrm{rank}\ \boldsymbol{H}^{\{i\}}=t,\forall i\in[m]$. For the \textit{if} condition, we suppose that \eqref{eq:dec-cond} holds. Now, we show that there is a decoder function $\psi_{\mathcal{I}}^{i}(\boldsymbol{y}, S_i)=\boldsymbol{x}_i$, which can correctly decode $\boldsymbol{x}_i$ as follows
\begin{align} 
       \boldsymbol{y}-\boldsymbol{H}\boldsymbol{x}&= \sum_{j\in [m]} \boldsymbol{H}^{\{j\}} \boldsymbol{x}_j - \sum_{j\in A_i} \boldsymbol{H}^{\{j\}} \boldsymbol{x}_j \nonumber
       \\ 
       &=\boldsymbol{H}^{\{i\}} \boldsymbol{x}_i + \sum_{j\in B_i} \boldsymbol{H}^{\{j\}} \boldsymbol{x}_j. \label{eq:1:2}
\end{align}
Let $\mathrm{rank}\ \boldsymbol{H}^{B_{i}}=c_it$, then we partition $B_i=C_i\cup (B_i\backslash C_i)$ such that $|C_i|=c_i$ and $\{\boldsymbol{H}^{\{j\}}: j\in C_i\}$ represents the set of local encoding matrices whose columns are linearly independent, i.e., $\mathrm{rank}\ \boldsymbol{H}^{C_{i}}=c_it$. Then, the remaining local encoding matrices $\boldsymbol{H}^{\{l\}}, \forall l\in (B_i\backslash C_i)$ can be expressed as
\begin{equation} \nonumber 
       \boldsymbol{H}^{\{l\}}= \sum_{j\in C_i} \boldsymbol{H}^{\{j\}}\boldsymbol{p}_{l,j},  
\end{equation}
where, $\boldsymbol{p}_{l,j}\in \mathbb{F}_q^{t\times 1}, \forall j\in C_i$. So, \eqref{eq:1:2} will be equal to
\begin{equation} \label{eq:1:03}
       \boldsymbol{H}^{\{i\}} \boldsymbol{x}_i + \sum_{j\in C_i} \boldsymbol{H}^{\{j\}} (\boldsymbol{x}_j+ \sum_{l\in B_i\backslash C_i} \boldsymbol{p}_{l,j} \boldsymbol{x}_l).
\end{equation}
Since \eqref{eq:dec-cond} holds, then all the columns of $\boldsymbol{H}^{\{i\}\cup C_i}$ are linearly independent. Hence, the set of messages $\{\boldsymbol{x}_i\}\cup \{\boldsymbol{x}_j+ \sum_{l\in B_i\backslash C_i} \boldsymbol{p}_{l,j} \boldsymbol{x}_l, j\in C_i\}$ can be decoded by receiver $u_i$, which completes the proof for the \textit{if} condition. 
\\
\textit{Conversely}, according to the polymatroidal bound \cite{Arbabjolfaei2018}, in order to decode all the messages correctly for the described system model, the following constraint must be met by any polymatroidal function $f: 2^{[m]} \rightarrow r$
\begin{equation} \nonumber
       f(\{i\}\cup B_i)\geq f(\{i\})+f(B_i),  \ \ \ \forall i\in[m].
\end{equation}
Since the $\mathrm{rank}$ function is polymatroidal, by setting $f(L)=\mathrm{rank}(\boldsymbol{H}^{L})$, we must have,
\begin{align} 
    \mathrm{rank}\ \boldsymbol{H}^{\{i\}\cup B_i}&\geq \mathrm{rank} \ \boldsymbol{H}^{\{i\}}+\mathrm{rank}\ \boldsymbol{H}^{B_{i}}
    \nonumber
    \\
    &=t+\mathrm{rank} \ \boldsymbol{H}^{B_i}. \label{eq:conv1}
\end{align}
On the other hand, it can be easily observed that
\begin{align}
    \mathrm{rank} \ \boldsymbol{H}^{\{i\}\cup B_i} &\leq \mathrm{rank}\ \boldsymbol{H}^{\{i\}} +  \mathrm{rank} \ \boldsymbol{H}^{B_i}  
    \nonumber
    \\
    &=t+ \mathrm{rank} \ \boldsymbol{H}^{B_i}. \label{eq:conv2}
\end{align}
Now, combining \eqref{eq:conv1} and \eqref{eq:conv2} will complete the proof.

\section{A Brief Overview of the PCC, FPCC, Recursive and ICC Coding Schemes}  \label{FirstAppendix}

\begin{defn}[PCC Coding Scheme]
In the PCC scheme, the index coding instance $\mathcal{I}$ is partitioned into $n\leq m$ subinstances $M_i, i\in[n]$, satisfying the condition in \eqref{eq:partition-con}. Then, the PCC coding scheme solves each subinstace using the MDS code. So, the broadcast rate of the PCC scheme, $\beta_{\text{PCC}}(\mathcal{I})$, will be achieved by solving the following optimization problem
\begin{equation} \label{eq:opt-PCC}
    \min_{M_j, j\in[n]} \sum_{j\in [n]} \beta_{\text{MDS}}(M_j),
\end{equation}
subject to the constraint in \eqref{eq:partition-con}. 
Each subset $M_i, j\in[n]$ is called a partial clique set.
\end{defn}

\begin{rem}
For the broadcast rate of the MDS code for each subset $M_j$, we have
\begin{equation} \label{eq:MDS-local}
    \beta_{\text{MDS}}(M_j)=|M_j|- \min_{i\in M_j} |M_j\cap A_i|,
\end{equation}
where $M_j\cap A_i$ is the local side information set of receiver $u_i$ with regards to subset $M_j$.
\end{rem}

\begin{defn}[FPCC Coding Scheme]
The FPCC scheme is an extension of the PPC scheme to the vector version through time-sharing over the MDS solutions of the subinstances $M_j, j\in[n]$ for some $n\in 2^m$. Thus, the broadcast rate of the FPCC scheme, $\beta_{\text{FPCC}}(\mathcal{I})$, will be equal to the solution of the following optimization problem
\begin{equation} \label{eq:FPCC-opt}
    \min_{M_j, \gamma_j, j\in[n]} \sum_{j\in [n]} \gamma_j\beta_{\text{MDS}}(M_j),
\end{equation}
subject to the constraints
\begin{align}
    \sum_{j\in P_i} \gamma_j &\geq 1,  \ \ \ \ \ \forall i\in [m], \label{eq:FPCC:cond1}
    \\
    \gamma_j &\in [0,1],
    \label{eq:FPCC:cond2}
\end{align}
where $P_i=\{j\in[n], i\in M_j\}$.
\end{defn}

\begin{defn}[Recursive Coding Scheme]
The recursive scheme begins with fixing an initial broadcast rate $\beta_{\text{R}}(\{i\})=1, \forall i\in[m]$. Now, let $M_j\subsetneq M, j\in[n]$ for some $n\in 2^{|M|}$.
Then, the broadcast rate for subinstance $\beta_{\text{R}}(M\subseteq[m])$ is recursively achieved as follows
\begin{equation} \label{eq:recursive:broadcastrate}
   \beta_{\text{R}}(M)=\min_{M_j, \gamma_j, j\in[n]} \max_{i\in M} \sum_{j\in[n], M_j\not\subseteq A_i} \gamma_{j}\beta_{\text{R}}(M_j) ,
\end{equation}
subject to
\begin{align} \label{eq:partition-frac-rec}
    \sum_{j\in P_i} \gamma_j &\geq 1,  \ \ \ \ \ \forall i\in M, \\
    \gamma_j &\in [0,1], \label{eq:partition-frac-rec1}
\end{align}
where $P_i=\{j\in[n], i\in M_j\}$.
\end{defn}

\begin{prop}
Given an index coding instance $\mathcal{I}$, we have \cite{Arbabjolfaei2018}
\begin{equation}
    \beta_{\text{R}}(\mathcal{I})\leq \beta_{\text{FPCC}}(\mathcal{I})\leq \beta_{\text{PCC}}(\mathcal{I})\leq \beta_{\text{MDS}}(\mathcal{I}).
\end{equation}
\end{prop}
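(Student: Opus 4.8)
The plan is to establish the chain from right to left, disposing of the two rightmost inequalities by the observation that each scheme's optimization program ranges over a superset of the previous one's feasible region. For $\beta_{\text{PCC}}(\mathcal{I})\le\beta_{\text{MDS}}(\mathcal{I})$, I would note that the trivial one-block partition $\{[m]\}$ is feasible for \eqref{eq:opt-PCC} with objective value $\beta_{\text{MDS}}([m])=m-|A|_{\text{min}}=\beta_{\text{MDS}}(\mathcal{I})$, so the minimum cannot be larger. For $\beta_{\text{FPCC}}(\mathcal{I})\le\beta_{\text{PCC}}(\mathcal{I})$, I would take an optimal PCC partition $M_1,\dots,M_n$ and use it in \eqref{eq:FPCC-opt} with all $\gamma_j=1$: since the $M_j$ are disjoint and cover $[m]$, every index lies in exactly one block, so $\sum_{j\in P_i}\gamma_j=1$ and the constraints \eqref{eq:FPCC:cond1}, \eqref{eq:FPCC:cond2} hold, while the objective equals $\sum_j\beta_{\text{MDS}}(M_j)=\beta_{\text{PCC}}(\mathcal{I})$.

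The substantive inequality is $\beta_{\text{R}}(\mathcal{I})\le\beta_{\text{FPCC}}(\mathcal{I})$. I would first prove the lemma that $\beta_{\text{R}}(N)\le\beta_{\text{MDS}}(N)=|N|-\min_{i\in N}|A_i\cap N|$ for every $N\subseteq[m]$ (cf. \eqref{eq:MDS-local}): feeding the singleton partition $\{\{v\}:v\in N\}$ with unit weights into the recursion \eqref{eq:recursive:broadcastrate} gives cost $\max_{i\in N}\sum_{v\in N:\,v\notin A_i}\beta_{\text{R}}(\{v\})=\max_{i\in N}\bigl(|N|-|A_i\cap N|\bigr)=\beta_{\text{MDS}}(N)$, using $\beta_{\text{R}}(\{v\})=1$ (the case $|N|=1$ is immediate, and for $|N|\ge2$ the singletons are proper subsets, as \eqref{eq:recursive:broadcastrate} requires). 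Then, given any feasible point $(N_j,\gamma_j)_{j\in[n]}$ of the FPCC program for $\mathcal{I}$, I would turn it into a feasible choice for the recursion on $[m]$ by keeping every block with $N_j\subsetneq[m]$ and replacing every block with $N_j=[m]$ by the $m$ singletons $\{v\}$, $v\in[m]$, each inheriting weight $\gamma_j$. All resulting blocks are then proper subsets of $[m]$, and the coverage of each index $i$ is unchanged (a block $N_j=[m]$ used to contribute $\gamma_j$ to $i$, and now the singleton $\{i\}$ does), so \eqref{eq:partition-frac-rec} still holds and the weights stay in $[0,1]$. Evaluating \eqref{eq:recursive:broadcastrate} on this choice, splitting the inner sum into a retained part and a singleton part, applying $\max_i(a_i+b_i)\le\max_i a_i+\max_i b_i$, bounding the retained part by $\sum_{j:\,N_j\subsetneq[m]}\gamma_j\beta_{\text{R}}(N_j)\le\sum_{j:\,N_j\subsetneq[m]}\gamma_j\beta_{\text{MDS}}(N_j)$ via the lemma, and evaluating the singleton part exactly as $\bigl(\sum_{j:\,N_j=[m]}\gamma_j\bigr)\max_i(m-|A_i|)=\bigl(\sum_{j:\,N_j=[m]}\gamma_j\bigr)\beta_{\text{MDS}}([m])$, I would conclude that the recursion cost is at most $\sum_j\gamma_j\beta_{\text{MDS}}(N_j)$. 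Minimizing the right-hand side over FPCC-feasible $(N_j,\gamma_j)$ yields $\beta_{\text{R}}(\mathcal{I})\le\beta_{\text{FPCC}}(\mathcal{I})$, and chaining the three bounds finishes the proof; notably, no induction on $|M|$ is needed, since the lemma alone controls all retained-block terms.

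The only real friction is the structural mismatch between the two programs: the recursion \eqref{eq:recursive:broadcastrate} admits only \emph{proper} subsets $M_j\subsetneq M$, whereas the FPCC program may put weight on the whole ground set $[m]$, and the recursion minimizes a maximum over receivers rather than a single linear functional. The singleton-replacement step is what reconciles the first point without any loss, precisely because replacing a block $[m]$ of weight $\gamma_j$ by weighted singletons costs exactly $\gamma_j\beta_{\text{MDS}}([m])$, the same contribution that block would have made to the FPCC objective; the elementary $\max$-of-sums bound absorbs the second point, which is harmless because we only need an upper bound. I would be careful to verify explicitly that after the replacement the covering constraint and the $[0,1]$ bounds are genuinely preserved and that the singletons are proper subsets (true once $m\ge2$, with $m=1$ trivial), and to double-check in the lemma that the singleton partition is a legitimate choice inside the recursion with base value $\beta_{\text{R}}(\{i\})=1$.
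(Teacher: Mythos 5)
Your chain of three inequalities is correct, but note that the paper does not prove this proposition at all: it is imported verbatim from \cite{Arbabjolfaei2018} as background in the appendix, so your argument is a genuinely self-contained alternative built only on the optimization programs \eqref{eq:opt-PCC}, \eqref{eq:FPCC-opt} and \eqref{eq:recursive:broadcastrate} as stated there. The two easy inequalities are handled exactly as one would expect (the one-block partition for $\beta_{\text{PCC}}\le\beta_{\text{MDS}}$, unit weights on an optimal PCC partition for $\beta_{\text{FPCC}}\le\beta_{\text{PCC}}$). The substantive contribution is your treatment of $\beta_{\text{R}}\le\beta_{\text{FPCC}}$: the lemma $\beta_{\text{R}}(N)\le\beta_{\text{MDS}}(N)$ via the singleton partition is sound (the term $v=i$ is correctly counted since $i\notin A_i$, giving cost $\max_{i\in N}(|N|-|A_i\cap N|)$, matching \eqref{eq:MDS-local}), and the singleton-replacement of any block $N_j=[m]$ is exactly what is needed to reconcile the FPCC program with the recursion's restriction to proper subsets; the subsequent estimate via $\max_i(a_i+b_i)\le\max_i a_i+\max_i b_i$ and dropping the conditions $N_j\not\subseteq A_i$ (legitimate, since all summands are nonnegative) cleanly yields $\sum_j\gamma_j\beta_{\text{MDS}}(N_j)$, and no induction on $|M|$ is required. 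Two negligible technicalities you may wish to make explicit: if several FPCC blocks equal $[m]$, the replacement produces repeated singleton blocks, which you can either keep as separate indexed copies or merge and cap the weight at $1$ (the objective only decreases and the covering constraints $\ge 1$ are preserved); and the degenerate case $m=1$ should be dispatched directly, as you indicate.
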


\begin{defn}[ICC Coding Scheme]
Given an index coding instance $\mathcal{I}=\{(i|A_i), i\in[m]\}$, the ICC coding scheme first identifies all the ICC-structured subgraphs such as $M\subseteq[m]$, as follows:

\begin{itemize}[leftmargin=*]
    \item Inner vertex set $J\subseteq M$, which consists of vertices, where there exists a path between each ordered pair of vertices $(i,j)\in J, i\neq j$ such that the path does not include any other vertex inside $J\backslash\{i,j\}$.
    \item $J$-path condition: There is only one $J$-path between any pair of vertices inside inner vertex set $J$, where the $J$-path is defined as follows: A path in which only the first and the last vertices (distinct vertices) belong to $J$.
    \item $J$-cycle condition: There is no $J$-cycle, where $J$-cycle is defined as follows: If in the $J$-path, the first and the last vertices are the same, it is considered as a $J$-cycle.
\end{itemize}
It has been shown that $\beta_{\text{ICC}}(M)=|M|-|J|+1$, where all the savings in the transmissions are due to the inner vertex set. Then, the broadcast rate of the ICC coding scheme for $\mathcal{I}$ is achieved as follows
\begin{equation}
    \min_{M_j, \gamma_j, j\in[n]} \sum_{j\in[n]} \gamma_{j} \beta_{\text{ICC}}(M_j),
\end{equation}
subject to the constraints in \eqref{eq:FPCC:cond1} and \eqref{eq:FPCC:cond2}.
\end{defn}

\begin{conj} \label{conj:1}
Given an index coding instance $\mathcal{I}$, it is conjectured in \cite{Thapa2017} that
\begin{equation}
    \beta_{\text{ICC}}(\mathcal{I})\leq\beta_{\text{FPCC}}(\mathcal{I}).
\end{equation}
\end{conj}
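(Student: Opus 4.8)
The statement is an open conjecture of Thapa \emph{et al.}\ comparing the fractional ICC cover with the fractional partial clique cover (FPCC). Both $\beta_{\text{ICC}}(\mathcal{I})$ and $\beta_{\text{FPCC}}(\mathcal{I})$ are optima of linear programs of exactly the same shape --- fractionally cover $[m]$ by weighted subgraphs $M_j$, minimising $\sum_j \gamma_j\, c(M_j)$ subject to $\sum_{j\in P_i}\gamma_j\ge 1$ for all $i$ and $\gamma_j\in[0,1]$ --- the only difference being the cost, $c(M)=\beta_{\text{MDS}}(M)=|M|-\min_{i\in M}|M\cap A_i|$ for FPCC (with $M$ an arbitrary subset) versus $c(M)=|M|-|J|+1$ for ICC (with $M$ required to be ICC-structured with inner vertex set $J$). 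The natural plan is a substitution/refinement argument: start from an optimal FPCC solution $(M_j,\gamma_j)$, replace each partial clique $M_j$ by an optimal fractional ICC cover of the subinstance $M_j$ scaled by $\gamma_j$, and check that the resulting weighted family is a feasible ICC cover of $[m]$ --- this part is routine, since the covering constraint $\sum_{j\in P_i}\gamma_j\ge 1$ composes multiplicatively with the inner covers --- whose objective equals $\sum_j \gamma_j\,\beta_{\text{ICC}}(M_j)$. Hence it suffices to prove the \emph{local} inequality: $\beta_{\text{ICC}}$ of the subinstance $M$ is at most $\beta_{\text{MDS}}(M)$, for every subinstance $M$, i.e.\ for every digraph in which each vertex has in-degree at least $s:=\min_{i\in M}|M\cap A_i|$.

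So the real content reduces to a purely combinatorial claim: a digraph $M$ with minimum in-degree $\ge s$ admits a fractional cover by interlinked-cycle subgraphs of total cost at most $|M|-s$. I would first check the extreme cases, which are encouraging. When $s=0$, singletons give cost $|M|$; when $M$ is a complete digraph, $M$ itself is ICC-structured with $J=M$, giving cost $1=|M|-(|M|-1)$; when $M$ is a single directed cycle, taking $|J|=2$ gives cost $|M|-1$, matching $\beta_{\text{MDS}}$. The plan for general $s\ge 1$ would be to peel off ICC-structured subgraphs (cycles, and more generally interlinked cycles) in a way that uniformly ``uses up'' $s$ units of in-degree per vertex, or --- more promisingly --- to argue fractionally via LP duality on the local ICC cover program: exhibit a feasible dual solution of value $\le|M|-s$, bounding the dual variables using the min-in-degree hypothesis, rather than constructing an explicit decomposition.

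I expect this combinatorial claim to be the genuine obstacle, and since the statement is recorded here only as a conjecture, it presumably resists all the obvious attacks. The difficulty is that the conditions defining an admissible inner vertex set (a unique $J$-path between every ordered pair of inner vertices and no $J$-cycle) are fragile: a dense partial clique need not contain any large ICC-structured subgraph, greedy peeling tends to destroy the $J$-path-uniqueness of what remains, and there is no evident matching/flow certificate that converts ``min in-degree $\ge s$'' into ``total inner-vertex weight $\ge s$ per vertex.'' A sensible partial programme would be to first settle the non-fractional inequality $\beta_{\text{ICC-cover}}(\mathcal{I})\le\beta_{\text{PCC}}(\mathcal{I})$ (each partial clique partitioned into ICC-structured pieces), then attempt to lift it through the time-sharing LP, while in parallel verifying $\beta_{\text{ICC}}(\mathcal{I})\le\beta_{\text{FPCC}}(\mathcal{I})$ by exhaustive computation over all small instances to rule out a counterexample.
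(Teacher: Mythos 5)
Your reduction is set up correctly: composing an optimal FPCC solution $(M_j,\gamma_j)$ with fractional ICC covers of the subinstances $M_j$ is indeed how the inequality is obtained, and the problem does localize to comparing $\beta_{\text{ICC}}(M_j)$ with $\beta_{\text{MDS}}(M_j)$. But your proposal stops there: you leave the local inequality unproven for general subinstances and explicitly flag it as the genuine obstacle, so what you have is a reduction, not a proof. The paper closes exactly this gap, and it does so without ever needing the local claim for arbitrary (in particular, dense non-minimal) partial cliques. The two missing ingredients are: (i) Proposition \ref{prop:minimal}, which shows that any optimal FPCC solution can be refined, without increasing the objective, into one supported only on \emph{minimal} partial clique sets --- each non-minimal partial clique $M_j$ is split into minimal pieces $M_{j,z}$ with $\sum_z\beta_{\text{MDS}}(M_{j,z})\leq\beta_{\text{MDS}}(M_j)$, reusing the same time-sharing coefficient $\gamma_j$, so feasibility of the covering constraints is preserved; and (ii) the fact, already proved in \cite{Thapa2017}, that on every minimal partial clique the ICC scheme performs at least as well as the MDS (FPCC) scheme. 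Substituting (ii) into the refined solution from (i) yields $\beta_{\text{ICC}}(\mathcal{I})\leq\beta_{\text{FPCC}}(\mathcal{I})$, which is how the paper settles Conjecture \ref{conj:1} in the affirmative.

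Concretely, then, the gap in your argument is twofold: you were unaware that the minimal-partial-clique case is a known result of Thapa \emph{et al.}, and you lacked the minimality reduction that makes this known case sufficient. Your concern that a dense partial clique may contain no large ICC-structured subgraph is real but moot, since such cliques are never handled directly --- if they were worth using in the FPCC optimum they would be minimal, and otherwise they can be partitioned away. Your proposed fallbacks (settling the integral PCC comparison first, or exhaustive search on small instances) are unnecessary once the refinement argument is in place; note also that the general local inequality you wanted does follow \emph{a posteriori} from the minimal case by the same partition-and-cover step, but the logical route runs through minimality, not around it.
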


\begin{rem}
Neither the recursive nor the ICC is always outperformed by the other scheme. Consider the index coding instances $\mathcal{I}_{4}$ and $\mathcal{I}_{5}$, where $3=\beta_{\text{ICC}}(\mathcal{I}_4)<\beta_{\text{R}}(\mathcal{I}_4)=3.5$ and $2=\beta_{\text{R}}(\mathcal{I}_5)<\beta_{\text{ICC}}(\mathcal{I}_5)=2.5$.
\end{rem}

Now, we provide Propositions \ref{prop: FPCC-modification} and \ref{prop:minimal}, which will be used in the proof of Lemmas \ref{lem1:prop4} and \ref{lem1:prop7}.

\begin{prop}  \label{prop: FPCC-modification}
If for the constraint $\sum_{j\in P_i} \gamma_j \geq 1$ in the optimization problem of the FPCC, recursive, ICC, and the proposed FP-UMCD schemes, we consider only its equality case, i.e., $\sum_{j\in P_i} \gamma_j = 1, \forall i\in [m]$, the solution does not change.
\end{prop}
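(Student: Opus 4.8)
The statement to prove is that the optimal value of the optimization problem of each of these four schemes is unchanged if every covering constraint $\sum_{j\in P_i}\gamma_j\ge 1$ is tightened to $\sum_{j\in P_i}\gamma_j= 1$. Write $\beta^{\ge}$ and $\beta^{=}$ for the two optima. The inequality $\beta^{=}\ge\beta^{\ge}$ is trivial since the equality problem has the smaller feasible region, so the whole content is $\beta^{=}\le\beta^{\ge}$: given any feasible (in particular, optimal) family $\{(M_j,\gamma_j)\}$ for the $\ge$ version, I will convert it into a feasible family for the $=$ version with objective no larger. The tool is a trimming step: if some message $i$ is over-covered, say $\sum_{j\in P_i}\gamma_j=1+\delta$ with $\delta>0$, move a total weight $\delta$ off the subinstances $M_j$ with $i\in M_j$ and $\gamma_j>0$, depositing the weight taken from each such $M_j$ onto $M_j\setminus\{i\}$ (discarding it if $M_j=\{i\}$). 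Afterwards the coverage of $i$ equals exactly $1$, while the coverage of every other message is untouched, because any weight removed from a set containing some $x\neq i$ is replaced by an equal weight on a set that still contains $x$.

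I then need that such a step does not increase the objective. For the FPCC, ICC and FP-UMCD problems the objective is linear and its change equals $\sum_j\eta_j\bigl(\beta_{\star}(M_j\setminus\{i\})-\beta_{\star}(M_j)\bigr)$ with $\eta_j\ge 0$ the moved weights, which is $\le 0$ as soon as $\beta_{\star}(M\setminus\{i\})\le\beta_{\star}(M)$ for all $M$ and all $i\in M$. For the recursive problem, whose objective is $\max_{i'\in M}\sum_{j:\,M_j\not\subseteq A_{i'}}\gamma_j\beta_{\text{R}}(M_j)$, I would argue that every inner sum is non-increasing under a trimming step: for each $i'$, the weight that stayed on $M_j$ contributes no more than before, and the portion moved to $M_j\setminus\{i\}$ is either not counted in that sum (when $M_j\setminus\{i\}\subseteq A_{i'}$) or is counted with coefficient $\beta_{\text{R}}(M_j\setminus\{i\})\le\beta_{\text{R}}(M_j)$; hence no $i'$-sum grows and neither does their maximum. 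Finally, applying the trimming to the messages $i=1,\dots,m$ in turn terminates — within one round the affected weights only decrease, so it is a finite greedy process — and, crucially, trimming a later message never changes the coverage of an already-processed one, so at the end all $m$ covering constraints hold with equality and the objective has not increased, giving $\beta^{=}\le\beta^{\ge}$. (The box constraints $\gamma_j\in[0,1]$ require only routine bookkeeping, e.g.\ splitting a transfer over several sets of the form $M_j\setminus\{i\}$, or the observation that they are not binding at an optimum.)

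The one nontrivial ingredient, and the main obstacle, is the monotonicity $\beta_{\star}(M\setminus\{i\})\le\beta_{\star}(M)$ of each scheme's basic-block rate under vertex deletion. It is immediate for the MDS rate from $\beta_{\text{MDS}}(M)=|M|-\min_{l\in M}|M\cap A_l|$, since deleting $i$ lowers $|M|$ by one and lowers $\min_{l\in M}|M\cap A_l|$ by at most one. For the ICC block rate $\beta_{\text{ICC}}(M)=|M|-|J|+1$ it reduces to showing that a maximum inner vertex set of $M$ shrinks by at most one when a vertex is removed, which I would check by verifying that $J^{\star}_M\setminus\{i\}$ still satisfies the inner-vertex, $J$-path and $J$-cycle conditions inside $M\setminus\{i\}$ (alternatively, one extends $\beta_{\text{ICC}}$ to all subsets by its optimal ICC covering and uses the standard restriction argument). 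For the recursive rate monotonicity follows by induction on $|M|$ directly from the recursion \eqref{eq:recursive:broadcastrate}, restricting an optimal family for $M$ to $M\setminus\{i\}$ by intersecting each $M_j$ with $M\setminus\{i\}$. For the UMCD rate the restriction argument does not apply verbatim, because UMCD is algorithmic rather than variational, so a short dedicated argument via the MCM decoding test \eqref{eq:dec-mcm} is needed; I expect this case, together with the ICC inner-vertex-set claim, to be the most delicate part of the proof.
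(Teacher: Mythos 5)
Your overall route --- trimming any over-covered message $i$ by shifting weight from each $M_j\ni i$ onto $M_j\setminus\{i\}$, and reducing the whole claim to monotonicity of the block rate under deletion of a single element --- is the natural way to make rigorous what the paper itself dismisses in one line as ``standard techniques in linear programming'' (the paper gives no further argument). You are also right that the monotonicity/subset-closure property is the real content: for a generic covering LP the equality-constrained problem can be strictly worse, so a bare LP appeal is not enough. For MDS blocks (the FPCC problem) your argument is complete, and that is in fact the only case the paper later uses (Lemmas \ref{lem1:prop4} and \ref{lem1:prop7}); the recursive case goes through essentially as you sketch, up to the small wrinkle that $M_j\setminus\{i\}$ may equal $M\setminus\{i\}$ and hence fail to be a \emph{proper} subset in the recursion, and the bookkeeping with duplicate sets and the box constraints is routine, as you say.

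The genuine gaps are in the two cases you yourself flag. (a) ICC: your proposed verification --- that $J^{\star}_M\setminus\{i\}$ still satisfies the inner-vertex, $J$-path and $J$-cycle conditions in $M\setminus\{i\}$ --- works when $i\in J^{\star}_M$ but fails when $i$ is a non-inner vertex, because the unique $J$-paths may be routed through $i$. Concretely, take $M=\{1,2,3,4,v\}$, $J=\{1,2,3,4\}$, with arcs $1\to v$, $2\to v$, $v\to 3$, $v\to 4$, $1\leftrightarrow 2$, $3\leftrightarrow 4$, $3\to 1$, $3\to 2$, $4\to 1$, $4\to 2$: this is a valid ICC structure of rate $|M|-|J|+1=2$, yet after deleting $v$ there is no path from $\{1,2\}$ to $\{3,4\}$, so every inner vertex set of $M\setminus\{v\}$ has size at most $2$ and the single-structure rate jumps to $3$. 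So even the weaker claim ``the maximum inner vertex set shrinks by at most one'' is false; the trimming step must instead be allowed to redeposit the weight on a fractional ICC cover of $M_j\setminus\{i\}$ (here the two cliques $\{1,2\}$ and $\{3,4\}$, total cost $2$), and the assertion that such a cover of cost at most $|M_j|-|J_j|+1$ always exists is precisely what remains to be proved. (b) FP-UMCD: you leave $\beta_{\text{UMCD}}(M\setminus\{i\})\le\beta_{\text{UMCD}}(M)$ entirely open, yet the proposition explicitly includes this scheme; since $\beta_{\text{UMCD}}$ is produced by a greedy algorithm with random tie-breaking, even its well-definedness is not settled in the paper, so a dedicated argument via the decoding test \eqref{eq:dec-mcm} is indeed required and is missing. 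As it stands, your proposal establishes the proposition for the FPCC (and essentially the recursive) optimization, but not for the ICC and FP-UMCD cases.
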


The proof can be achieved using standard techniques in the linear programming (LP).

\begin{defn} [Minimal Partial Clique Set]
The partial clique set $M\subseteq[m]$ is said to be minimal if its broadcast rate $\beta_{\text{MDS}}(M)$ cannot be reduced by being further partitioned into any subsets $M_j\subseteq M, j\in[n]$. 
This means
\begin{equation}
    \beta_{\text{MDS}}(M)\leq \sum_{j\in [n]} \beta_{\text{MDS}}(M_j).
\end{equation}
For example, all the cycles and cliques are minimal partial clique sets. 
Note that, any partial clique can be partitioned into minimal partial cliques.
\end{defn}

\begin{prop}  \label{prop:minimal}
The optimal solution of the FPCC scheme in \eqref{eq:FPCC-opt} can always be expressed in minimal partial clique sets.
\end{prop}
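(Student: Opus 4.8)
The plan is to start from an arbitrary optimal solution $\{(M_j,\gamma_j):j\in[n]\}$ of \eqref{eq:FPCC-opt} and rewrite it so that every subset carrying positive weight is a minimal partial clique set, while neither increasing the objective nor violating the constraints \eqref{eq:FPCC:cond1}--\eqref{eq:FPCC:cond2}. First I would invoke the definition of a minimal partial clique set: whenever some $M_j$ is not minimal, it admits a partition into proper subsets the sum of whose $\beta_{\text{MDS}}$-values is strictly smaller than $\beta_{\text{MDS}}(M_j)$. Applying this to any piece that is still not minimal, and observing that each such split replaces a set by at least two strictly smaller sets (sizes being bounded below by $1$ and above by $m$), the procedure terminates. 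Hence, for each $j$, one obtains a partition $M_j=M_j^{1}\cup\cdots\cup M_j^{n_j}$ into minimal partial clique sets with $\sum_{l\in[n_j]}\beta_{\text{MDS}}(M_j^{l})\le\beta_{\text{MDS}}(M_j)$; when $M_j$ is already minimal one simply takes the trivial partition.

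Next I would perform the substitution: for every $j$, delete $(M_j,\gamma_j)$ and insert the pairs $(M_j^{l},\gamma_j)$, $l\in[n_j]$, and then verify feasibility and optimality. For feasibility, fix $i\in[m]$; since $M_j^{1},\dots,M_j^{n_j}$ partition $M_j$, the vertex $i$ lies in $M_j$ precisely when it lies in exactly one of the $M_j^{l}$. Thus, for each $j$, the total weight of the new subsets that contain $i$ and arose from $M_j$ is $\gamma_j$ if $i\in M_j$ and $0$ otherwise, which equals the contribution of $M_j$ to $\sum_{j\in P_i}\gamma_j$ in the original solution. Summing over $j$, the value $\sum_{j\in P_i}\gamma_j$ is unchanged and hence still at least $1$, and all weights remain in $[0,1]$. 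For the objective, the term $\gamma_j\beta_{\text{MDS}}(M_j)$ becomes $\sum_{l}\gamma_j\beta_{\text{MDS}}(M_j^{l})=\gamma_j\sum_{l}\beta_{\text{MDS}}(M_j^{l})\le\gamma_j\beta_{\text{MDS}}(M_j)$, so the total objective does not increase. Consequently the new solution is feasible with objective value at most the optimum, hence is itself optimal, and by construction uses only minimal partial clique sets.

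The one point that needs a little care is bookkeeping: different $M_j$'s may decompose into overlapping, or even identical, minimal pieces, so the new family may list a set more than once. This is harmless — one can merge coincident sets by adding their weights and truncating the result to lie in $[0,1]$. Because every constraint in \eqref{eq:FPCC:cond1} has the form ``$\ge 1$'' and $\beta_{\text{MDS}}\ge 0$, replacing a total weight $w>1$ on a set by $1$ still leaves that set alone contributing $1$ to each of its vertices (so feasibility is preserved) and changes the objective by $(1-w)\beta_{\text{MDS}}\le 0$ (so it does not increase); after merging, the family has at most $2^m$ distinct subsets. I expect this normalization to be the only, and minor, obstacle; once it is dealt with, the substitution argument above yields the claim.
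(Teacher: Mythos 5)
Your proposal is correct and follows essentially the same route as the paper: partition each non-minimal set into minimal partial cliques, reassign the same weights $\gamma_j$ to the pieces, and observe that coverage (hence feasibility of \eqref{eq:FPCC:cond1}--\eqref{eq:FPCC:cond2}) is preserved while the objective cannot increase by the definition of minimality. Your added termination argument and the merging/truncation of duplicate sets are harmless bookkeeping refinements that the paper's proof leaves implicit.
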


\begin{proof}
First, we suppose that subsets $M_j, j\in [n]$ give the optimal solution in \eqref{eq:FPCC-opt} such that subsets $M_j, j\in[r-1]$ are minimal partial clique set, while $M_j, j\in[r:n]$ are not minimal. Thus,
\begin{align}
       \beta_{\text{FPCC}}(\mathcal{I})&= \sum_{j\in[n]} \gamma_{j}\beta_{\text{MDS}}(M_j) 
       \label{eq:opt:minimal-pc0}
       \\
       &=\sum_{j\in[r-1]} \gamma_{j}\beta_{\text{MDS}}(M_j) + \sum_{j\in[r:n]} \gamma_{j}\beta_{\text{MDS}}(M_j). \label{eq:opt:minimal-pc}
\end{align}
Now, assume that each non-minimal partial clique set $M_j, j\in [r:n]$ is further partitioned into $p_j+1$ minimal partial clique sets $M_{j,z}, z\in[0:p_j]$. Then, for the second term in \eqref{eq:opt:minimal-pc}, we have
\begin{align}
         \sum_{j\in[r:n]} \gamma_{j}\beta_{\text{MDS}}(M_j) &\geq \sum_{j\in[r:n]}\gamma_{j} \sum_{z\in[0:p_j]} \beta_{\text{MDS}}(M_{j,z})
         \label{eq:113}
\end{align}
Now, if for all $z\in [0: p_j], \forall j\in [r:n]$, we set 
\begin{align}
    M_{j+z+\sum\limits_{z\in [j-1]} p_{z}}^{\prime} &= M_{j,z},
    \\
    \gamma_{j+z+\sum\limits_{z\in [j-1]} p_{z}}^{\prime}&=\gamma_{j},
\end{align}
then the right hand side of \eqref{eq:113} will be equal to
\begin{align}
    \sum_{j\in[r:r+b]} \gamma_{j}^{\prime}\beta_{\text{MDS}}(M_j^{\prime}), \label{eq:114}
\end{align}
where $b=n+\sum\limits_{j\in[r:n]} p_j$. Since subsets $M_{j,z}$ are minimal partial clique set, so are subsets $M_{j}^{\prime}$. This means the second term of \eqref{eq:opt:minimal-pc} can be expressed in minimal partial clique sets. 

Now, from \eqref{eq:opt:minimal-pc0}, \dots, \eqref{eq:114}, we have
\begin{equation}
    \sum_{j\in[n]} \gamma_{j}\beta_{\text{MDS}}(M_j)\geq \sum_{j\in[r+b]} \gamma_{j}^{\prime}\beta_{\text{MDS}}(M_j^{\prime}),
\end{equation}
where for $j\in [r-1]$, we set $M_j^{\prime}=M_j$ and $\gamma_j^{\prime}=\gamma_j$.
This means that if subsets $M_j, j\in [n]$ with coefficients $\gamma_{j}$ give the optimal solution in \eqref{eq:FPCC-opt}, so do subsets $M_{j}^{\prime}, j\in [r+b]$ with coefficients $\gamma_{j}^{\prime}$. Thus, the optimal solution of \eqref{eq:FPCC-opt} can always be expressed in minimal partial cliques.
\end{proof}

\begin{rem}
In \cite{Thapa2017}, the authors proved that for any minimal partial clique set, the ICC scheme performs at least as well as the FPCC scheme. Then, they provided Conjecture \ref{conj:1}. Now, having proved Proposition \ref{prop:minimal}, this conjecture is settled in the affirmative.
\end{rem}

\begin{defn} [Minimal Recursive Set]
The recursive set $M\subseteq[m]$ is said to be minimal if its broadcast rate $\beta_{\text{R}}(M)$ cannot be reduced by being further partitioned into any subsets $M_j\subseteq M, j\in[n]$. 
This means
\begin{equation}
    \beta_{\text{R}}(M)\leq \sum_{j\in [n]} \beta_{\text{R}}(M_j).
\end{equation}
For instance, all the cliques and minimal cycles are minimal recursive sets. 
\end{defn}

\begin{prop}  \label{prop:minimal-recursive}
The optimal solution of the recursive scheme in \eqref{eq:recursive:broadcastrate} can always be expressed in minimal recursive sets.
\end{prop}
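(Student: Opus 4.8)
The plan is to run the same refinement argument used for Proposition~\ref{prop:minimal} (the FPCC case), the only new ingredient being the inner $\max_{i\in M}$ that appears in the recursive objective \eqref{eq:recursive:broadcastrate}. First I would record an auxiliary fact: every $M\subseteq[m]$ admits a partition into minimal recursive sets. This follows by strong induction on $|M|$. If $M$ is minimal, take the trivial partition $\{M\}$. If not, by definition there is a partition into proper subsets $M_j\subsetneq M$ with $\beta_{\text{R}}(M)\ge\sum_j\beta_{\text{R}}(M_j)$; by the induction hypothesis each $M_j$ splits into minimal recursive sets, and composing these refinements together with the monotonicity $\beta_{\text{R}}(M)\ge\sum_j\beta_{\text{R}}(M_j)\ge\sum_j\sum_z\beta_{\text{R}}(M_{j,z})$ yields the claim.

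Now let $M_j,\gamma_j$, $j\in[n]$, attain the optimum in \eqref{eq:recursive:broadcastrate}, reordered so that $M_j$ is a minimal recursive set for $j\in[r-1]$ and is non-minimal for $j\in[r:n]$. For each non-minimal $M_j$ fix a partition into minimal recursive sets $M_{j,z}$, $z\in[0:p_j]$, with $\beta_{\text{R}}(M_j)\ge\sum_{z}\beta_{\text{R}}(M_{j,z})$, and relabel the resulting pieces into one family $M'_{j'}$ exactly as in the proof of Proposition~\ref{prop:minimal}: keep $M'_{j'}=M_j,\ \gamma'_{j'}=\gamma_j$ for $j\in[r-1]$, and give every piece $M_{j,z}$ arising from the same $M_j$ the weight $\gamma'_{j'}=\gamma_j$.

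Then I would verify two things. Feasibility: since each $M_j$ is split into a partition, a fixed $i$ lies in exactly one of its pieces, so $\sum_{j'\in P'_i}\gamma'_{j'}\ge\sum_{j\in P_i}\gamma_j\ge 1$ and the box constraints $\gamma'_{j'}\in[0,1]$ of \eqref{eq:partition-frac-rec}--\eqref{eq:partition-frac-rec1} are inherited. Non-increase of the objective: fix $i\in M$. For a split set $M_j$ with $M_j\not\subseteq A_i$ its contribution to receiver $i$ becomes $\sum_{z:\,M_{j,z}\not\subseteq A_i}\gamma_j\,\beta_{\text{R}}(M_{j,z})\le\gamma_j\sum_{z}\beta_{\text{R}}(M_{j,z})\le\gamma_j\,\beta_{\text{R}}(M_j)$; for $M_j\subseteq A_i$ every piece has $M_{j,z}\subseteq M_j\subseteq A_i$, so neither $M_j$ nor any $M_{j,z}$ contributes; and the minimal sets $M_j$, $j\in[r-1]$, contribute the same term before and after. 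Hence for every $i\in M$ the new inner sum $\sum_{j':\,M'_{j'}\not\subseteq A_i}\gamma'_{j'}\beta_{\text{R}}(M'_{j'})$ is at most the old one, so its maximum over $i\in M$ does not increase. Therefore $M'_{j'},\gamma'_{j'}$ is also optimal, and it is expressed in minimal recursive sets.

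The main obstacle I expect is not conceptual but the bookkeeping around the inner $\max$: one must check that the pieces $M_{j,z}\subseteq A_i$ that get dropped from receiver $i$'s sum after the split (they were previously "hidden" inside a set $M_j\not\subseteq A_i$) can only lower each receiver's inner sum, never raise it, so that the pointwise-in-$i$ inequality survives taking the maximum. Once that monotonicity is established, the relabeling argument is identical to that of Proposition~\ref{prop:minimal}.
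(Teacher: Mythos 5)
Your proof is correct and follows essentially the same route as the paper, whose entire argument is the one-line instruction to repeat the proof of Proposition~\ref{prop:minimal} with $\beta_{\text{MDS}}$ replaced by $\beta_{\text{R}}$. You simply make explicit the details that this reference leaves implicit, namely the inductive fact that every set partitions into minimal recursive sets and the pointwise-in-$i$ handling of the inner $\max$ and of the pieces $M_{j,z}\subseteq A_i$ that drop out of receiver $i$'s sum, both of which you verify correctly.
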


\begin{proof}
The proof can be easily achieved by replacing $\beta_{\text{MDS}}$ with $\beta_{\text{R}}$ in the proof of Proposition \ref{prop:minimal}.
\end{proof}

\section{Example \ref{exm:motive:3}} \label{app:minrank~UMCD}
Since the proposed UMCD scheme is a scalar linear code, we make a comparison between the broadcast rates of the UMCD and the scalar binary minrank for the index coding instance $\mathcal{I}_{3}=\{(1|3,5,6), (2|1,4,5,6), (3|2,4,6),(4|1,2,3,5), (5|1,2,3,4)$, $(6|1,2,4,5)\}$, depicted in Figure \ref{fig:motiv:3}. One can verify that the broadcast rate of the scalar binary minrank is 3. Now, assume that the server first targets receivers with the minimum size of side information. It begins with receiver $u_1$ and transmits a linear combination of messages $x_1$ (requested message) and $x_i, i\in A_1$ as follows
\begin{equation}
    y_1=[h_{1,1}\ 0\ h_{1,3}\ 0\ h_{1,5}\ h_{1,6}]\ \boldsymbol{x},
\end{equation}
where $h_{1,j}\in \mathbb{F}_{q}$ and $\boldsymbol{x}=[x_1\ x_2\ x_3\ x_4\ x_5\ x_6]$. This transmission satisfies only receiver $u_1$. Then, the server targets receiver $u_3$ by sending a linear combination of messages $x_3$ (requested) and $x_i, i\in A_3$ as follows
\begin{equation}
    y_2=[0\ h_{2,2}\ h_{2,3}\ h_{2,4}\ 0\ h_{2,6}]\ \boldsymbol{x}.
\end{equation}
Now, we set $q=3$, and fix $h_{1,1}=h_{1,3}=h_{1,5}=h_{1,6}=h_{2,2}=h_{2,3}=h_{2,4}=1$ and $h_{2,6}=2$. The reason of fixing $h_{2,6}=2$ is to make the two column vectors $\boldsymbol{H}^{\{3\}}=[h_{1,3}\ h_{2,3}]^T$ and $\boldsymbol{H}^{\{6\}}=[h_{1,6}\ h_{2,6}]^T$ linearly independent so that receiver $u_6$ is able to to decode its requested message $x_6$. Now, it can be checked that the remaining receivers are able to decode their requested message upon receiving the coded messages $y_1$ and $ y_2$. This index code achieves the MAIS bound and, so it is optimal for $\mathcal{I}_{3}$. This simple instance illustrates how the proposed UMCD coding scheme can outperform the scalar binary minrank.

\section{Proof of Propositions  \ref{prop:vet-value} and \ref{prop:L-ind-Li-dep}}  \label{app:prop-MCD}
\subsection{Proof of Proposition \ref{prop:vet-value}}\label{app:prop3-MCD}

First, since $L$ is a circuit set of $\boldsymbol{H}_{[k-1]}$ and any circuit set is a minimal dependent set, we have
\begin{equation} \nonumber
    \boldsymbol{H}_{[k-1]}^{\{i\}}=\boldsymbol{H}_{[k-1]}^{L\backslash\{i\}}\boldsymbol{f},
\end{equation}
where $\boldsymbol{f}=[f_{1},\dots,f_{|L|-1}]^T$ is a unique vector and $f_{i}$ must be nonzero for all $i\in{[|L|-1]}$, since otherwise it contradicts \eqref{eq:cicuitset}. Note that vector $\boldsymbol{f}$ can be achieved using reduced row echelon form ($\mathrm{rref}$) as follows
\vspace{.1mm}
\begin{align}
    \mathrm{rref} \ 
    \left [\begin{array}{c|c}
       \boldsymbol{H}_{[k-1]}^{{L\backslash\{i\}}}   &  \boldsymbol{H}_{[k-1]}^{\{i\}}
    \end{array}
    \right ]
    &=
    \left [\begin{array}{c|c}
       \boldsymbol{I}_{|L|-1}                 &  \boldsymbol{f} \\
       \hline
       \boldsymbol{0}_{|L|-1}                 & 0
     \end{array}
    \right ] \nonumber
    \\
    &\overset{r}{=}
    \left [\begin{array}{c|c}
       \boldsymbol{I}_{|L|-1}                     &  \boldsymbol{f}
       \end{array}
    \right ],
    \label{eq:07}
\end{align}
where $\boldsymbol{H}_1\overset{r}{=}\boldsymbol{H}_2$ means that matrices $\boldsymbol{H}_1$ and $\boldsymbol{H}_2$ have an equal rank.
Now, we have
\begin{align}
    \mathrm{rref} \ 
     \boldsymbol{H}_{[k]}^{{L}}
     &=
    \mathrm{rref} \ 
    \left [\begin{array}{c|c}
       \boldsymbol{H}_{[k-1]}^{{L\backslash \{i\}}}       &  \boldsymbol{H}_{[k-1]}^{\{i\}} \\
       \hline
       \boldsymbol{H}_{\{k\}}^{L\backslash{\{i\}}}    &      h_{k,i}
    \end{array}
    \right ]
    \nonumber
    \\
    &\overset{r}{=}
    \mathrm{rref} \
    \left [\begin{array}{c|c}
       \boldsymbol{I}_{|L|-1}                         &  \boldsymbol{f}  \\
       \hline
       \boldsymbol{H}_{\{k\}}^{L\backslash{\{i\}}}  &   h_{k,i}
    \end{array}
    \right ]
    \label{eq:08}
    \\
    &=
    \mathrm{rref} \
    \left [\begin{array}{c|c}
        \boldsymbol{I}_{|L|-1}                &     \boldsymbol{f}  \\
       \hline
       \boldsymbol{0}_{|L|-1}                 &      h_{k,i}^{\ast}
    \end{array}
    \right ],
    \label{eq:09}
\end{align}
where $\eqref{eq:08}$ is due to $\eqref{eq:07}$ and $\eqref{eq:09}$ is achieved by running $\mathrm{rref}$ over the last row such that
\vspace{.1mm}
\begin{equation} \nonumber
    h_{k,i}^{\ast}=h_{k,i}-\boldsymbol{H}_{\{k\}}^{L\backslash\{i\}}\boldsymbol{f}.
\end{equation}
Thus, only the value $h_{k,i}=\boldsymbol{H}_{\{k\}}^{L\backslash \{i\}}\boldsymbol{f}$, denoted by $h_{k,i}(L)$ will cause $h_{k,i}^{\ast}=0$, which keeps the rank unchanged. 
In other words, choosing any value $h_{k,i}\in \mathbb{F}_{q}\backslash \{h_{k,i}(L)\}$ will lead to 
\begin{equation} \nonumber
\mathrm{rank} \ \boldsymbol{H}_{[k]}^{L} = \mathrm{rank} \ \boldsymbol{H}_{[k-1]}^{L}+1=|L|,
\end{equation}
which means while $L$ is a circuit set of $\boldsymbol{H}_{[k-1]}$, it will be an independent set of $\boldsymbol{H}_{[k]}$. Note that any field of size $q\geq 2$ guarantees that we can always find a value inside $\mathbb{F}_{q}\backslash \{h_{k}^{L}( \{i\})\}$.

\subsection{Proof of Proposition \ref{prop:L-ind-Li-dep}} \label{app:prop4-MCD}
Since sets $L\backslash \{i\}$ and $L$, respectively, are a basis and a dependent set of $\boldsymbol{H}_{[k-1]}$, column $\boldsymbol{H}_{[k-1]}^{\{i\}}$ is linearly dependent on columns in $\boldsymbol{H}_{[k-1]}^{L\backslash \{i\}}$. So, set $L\backslash \{i\}$ can be partitioned into two subsets $L^{\prime}$ and $L^{\prime\prime}$ such that sets $L^{\prime}\cup \{i\}$ and $L^{\prime\prime}\cup \{i\}$, respectively, will be a circuit and an independent set of $\boldsymbol{H}_{[k-1]}$.
\\
For the circuit set $L^{\prime}\cup \{i\}$, based on Proposition \ref{prop:vet-value}, if $h_{k,i}\in \mathbb{F}_{q}\backslash \{h_{k,i}(L^{\prime}\cup \{i\})\}$, then $L^{\prime}\cup \{i\}$ will be an independent set of $\boldsymbol{H}_{[k]}$.
\\
It is obvious that any independent set of $\boldsymbol{H}_{[k-1]}$ will be an independent set of $\boldsymbol{H}_{[k]}$ as well. Thus, set $L^{\prime\prime}\cup \{i\}$ is also an independent set of $\boldsymbol{H}_{[k]}$.
\\
Thus, by setting $h_{k,i}\in \mathbb{F}_{q}\backslash Z_{k,i}^L$, where $Z_{k,i}^L=\{h_{k,i}(L^{\prime}\cup \{i\})\}$, column $\boldsymbol{H}_{[k]}^{\{i\}}$ will be linearly independent of the columns in $\boldsymbol{H}_{[k]}^{L\backslash \{i\}}$. Now, since $L\backslash \{i\}\in \mathcal{B}_{k-1}$, by setting $h_{k,i}\in \mathbb{F}_{q}\backslash Z_{k,i}^L$, we have $L\in \mathcal{B}_{k}$, which completes the proof.

\section{More Discussion on the Field size and Complexity of the MCD Algorithm} 
\label{app:discuss:field-complexity}
\begin{rem}
According to the UMCD algorithm, the nonzero elements of matrix $\boldsymbol{G}$ are determined by the receivers with the minimum size of side information. This implies that many elements of $\boldsymbol{G}$ can be zero, which may reduce both the computational complexity and required filed size for the MCD algorithm.
\end{rem}

\begin{exmp} \label{exmp:field&complexity-reduction}
Consider the binary matrix $\boldsymbol{G}$ in
\begin{itemize}[leftmargin=*]
    \item Example \ref{exmp:Recursive-5}, equation \eqref{exmp:G-recu}. First, it can be verified that for any $q\geq 2$, the size of the veto set for all the nonzero elements is always one. Thus, fixing the field size as $q\geq 2$ guarantees the existence of at least one element outside of the veto set for each nonzero element. Second, determining the veto value for only two elements $h_{3,2}$ and $h_{3,5}$ requires the $\mathrm{rref}$ operation over two submatrices of size $3\times 2$, implying the low-complexity of the MCD algorithm for this case.
    \item Example \ref{exmp:ICC-5}, equation \eqref{exmp:G-ICC}. First, it can be verified that for any $q\geq 3$, the maximum size of the veto set for the nonzero elements is always two. Thus, fixing the field size as $q\geq 3$ guarantees the existence of at least one element outside of the veto set for each nonzero element. Second, determining the veto value for only two elements $h_{2,1}$ and $h_{2,5}$ requires the $\mathrm{rref}$ operation over two submatrices of size $2\times 2$, implying the low-complexity of the MCD algorithm for this case.
\end{itemize}
\end{exmp}

\begin{rem} \label{rem:square-m}
In the MCD algorithm, the aim is to design the elements of the encoding matrix $\boldsymbol{H}$ such that each of its submatrices achieves its maximum possible rank. However, in the following, we show that, for the UMCD scheme with the binary matrix $\boldsymbol{G}\in \mathbb{F}_{2}^{r\times m}$, to satisfy all the receivers $u_i, i\in [m]$, we only need to make sure that $m$ specific square submatrices of $\boldsymbol{H}\in \mathbb{F}_{q}^{r\times m}$ are full-rank. This can significantly reduce the complexity of the MCD algorithm as well as the required field size.
Let $U_k=\{i\in [m], u_i\ \text{is satisfied at transmission}\ k\}$ denote the set of receivers who are satisfied at transmission $k\leq r$ by the UMCD algorithm. This means that
\begin{equation} \label{eq:rem:dec:con}
    \mathrm{mcm}(\boldsymbol{G}_{[k]}^{\{i\}\cup B_i})=1+\mathrm{mcm}(\boldsymbol{G}_{[k]}^{B_i}), \ \ i\in U_k.
\end{equation}
Now, assume that $\mathrm{mcm}(\boldsymbol{G}_{[k]}^{\{i\}\cup B_i})=l_i$. This implies that there exists a square submatrix $\boldsymbol{G}_{L_i}^{\{i\}\cup B_{i}^{\prime}}$, where $L_i\subseteq [k], |L_i|=l_i$ and $B_{i}^{\prime}\subseteq B_i, |B_{i}^{\prime}|=l_i-1$, such that
\begin{equation} \label{eq:square:mcm}
    \mathrm{mcm}(\boldsymbol{G}_{L_i}^{\{i\}\cup B_i^{\prime}})=1+\mathrm{mcm}(\boldsymbol{G}_{L_i}^{B_i^{\prime}}), \ \ i\in U_k.
\end{equation}
Thus, if the elements of the encoding matrix $\boldsymbol{H}$ are designed such that all the submatrices $\boldsymbol{H}_{L_i}^{\{i\}\cup B_{i}^{\prime}}$ are full-rank (which is possible due to \eqref{eq:square:mcm}), then all the receivers $u_i, i\in U_k$ are able to decode their requested message in the $k$-th transmission.
\end{rem}

\begin{exmp}
As seen in Example \ref{exmp:ICC-5}, in the first transmission, receiver $u_5$ and in the second transmission, other receivers are satisfied. Thus, $U_1=\{5\}$ and $U_2=\{1,2,3,4\}$. It can be verified that for any field size $q\geq 2$, the decoding condition in \eqref{eq:rem:dec:con} will be met by assigning value one to all nonzero elements of the encoding matrix. In fact, in the MCD algorithm, since columns 1 and 5 must be linearly independent, the size of the veto set for one of the elements $h_{2,1}$ and $h_{2,5}$ will be equal to two. Thus, $q\geq 3$ is required (and sufficient as said in Example \ref{exmp:field&complexity-reduction}). However, based on \eqref{eq:rem:dec:con}, the linear independence of columns 1 and 5 is not required (because $1\in A_5$ and $5\in A_1$). Thus, any field size $q\geq 2$ is sufficient for satisfying the decoding condition \eqref{eq:rem:dec:con} for all receivers.
\end{exmp}

\section{Proof of Lemmas \ref{lem:MDS-con-mcm} and \ref{lem:UMCD=MDS}} \label{App-Lemma2,3}
First, we begin with the following definitions and remarks.
\begin{defn} [Potentially Full-rank Binary Matrix]
Let $r\leq m$. We say that binary matrix $\boldsymbol{G}\in \mathbb{F}_{2}^{r\times m}$ is potentially full-rank, if $\mathrm{mcm}(\boldsymbol{G})=r$.
\end{defn}

\begin{defn}[Potential Pivot Column]
Let $i\in [m]$. We say that the $i$-th column of $\boldsymbol{G}\in \mathbb{F}_{2}^{r\times m}$ is a potential pivot column if $\mathrm{mcm}(\boldsymbol{G}^{[m]})=1+\mathrm{mcm}(\boldsymbol{G}^{[m]\backslash \{i\}})$. 
\end{defn}

\begin{rem} \label{rem:mcm-properties}
It can be easily shown that the $\mathrm{mcm}$ function captures some properties of the $\mathrm{rank}$ function such as follows:
\begin{enumerate}[label=(\roman*)]
    \item Assume that $\boldsymbol{G}^{\prime}$ is an $k\times r$ submatrix of the square matrix $\boldsymbol{G}_{r\times r}$, where $k\leq r$, and $\boldsymbol{G}$ is potentially full-rank, i.e., $\mathrm{mcm}(\boldsymbol{G})=r$. Then $\mathrm{mcm}(\boldsymbol{G}^{\prime})=k$. Moreover, every column of $\boldsymbol{G}$ is a potential pivot column. \label{rem:mcm-properties-i}
    \item Let $B\subseteq C$ and $i\not\in C$. If $\mathrm{mcm}(\boldsymbol{G}^{\{i\}\cup C})=1+\mathrm{mcm}(\boldsymbol{G}^{C})$, then $\mathrm{mcm}(\boldsymbol{G}^{\{i\}\cup B})=1+\mathrm{mcm}(\boldsymbol{G}^{B})$. \label{rem:mcm-properties-ii}
\end{enumerate}
\end{rem}

\begin{rem} \label{rem:F-Fprime}
Let set $C_K\subset F_{K}$ be obtained by removing $d-1$ elements from $F_{K}$ in \eqref{eq:MDS-con}. Then, it can be easily verified that (i) $C_K$ satisfies the linear code condition in \eqref{eq:MDS-con} with $d=1$, and (ii) for any set $C_K$ satisfying \eqref{eq:MDS-con} with $d=1$, we have $\mathrm{mcm}(\boldsymbol{G}_{[k]}^{C_{[k]}})=k, \forall k\in [r]$. This means, $\boldsymbol{G}_{[k]}^{C_{[k]}}$ is potentially full-rank.
\end{rem}

\subsection{Proof of Lemma \ref{lem:MDS-con-mcm}} \label{App:lem2}
Let $B_{i}^{\prime}=F_{[k]}\cap B_i$ denote the indices of nonzero columns related to $B_i$. So, 
\begin{equation}
    \mathrm{mcm}(\boldsymbol{G}_{[k]}^{\{i\}\cup B_i})=\mathrm{mcm}(\boldsymbol{G}_{[k]}^{\{i\}\cup B_i^{\prime}}).
\end{equation}
Since $|F_{[k]}|=d-1+k$ and $|A_i|\geq d-1$, then
$|F_{[k]}\backslash A_i|\leq k$, and so, we have
\begin{equation}
    |\{i\}\cup B_{i}^{\prime}|\leq k.
\end{equation}
This means that there exists a set $C_K\subseteq F_{K}$ such that $i\in C_{[k]}$, $B_{i}^{\prime}\subseteq C_{[k]}$, and $|C_{[k]}|=k$. This set $C_K$ can be achieved by removing $d-1$ elements from $F_K$. Based on Remark \ref{rem:F-Fprime}, we have
\begin{equation}
    \mathrm{mcm}(\boldsymbol{G}_{[k]}^{C_{[k]}})=k.
\end{equation}
And because $|C_{[k]}|=k$, then $\boldsymbol{G}_{[k]}^{C_{[k]}}$ is a square matrix, and based on Remark \ref{rem:mcm-properties}-\ref{rem:mcm-properties-i}, each of its columns is a potential pivot column. 
Since $i\in C_{[k]}$, then its corresponding column will be a potential pivot column. Note, since $B_{i}^{\prime}\subseteq C_{[k]}$, based on Remark \ref{rem:mcm-properties}-\ref{rem:mcm-properties-ii}, the corresponding column of the $i$-th element in matrix $\boldsymbol{G}^{\{i\}\cup B_i^{\prime}}$ is a potential pivot column.

\subsection{Proof of Lemma \ref{lem:UMCD=MDS}}\label{App:lem3}
The proof is achieved by induction.
\begin{itemize}[leftmargin=*]
    \item  Consider the condition in \eqref{eq:MDS-con} for $r=1$. Since $|A_i|\geq |A|_{\text{min}}, \forall i\in[m]$, we have $|F_{\{k\}}|=|G_k|=|\{i\}\cup A_i|\geq |A|_{\text{min}}+1$.
    \item As the induction hypothesis, we suppose that the condition in \eqref{eq:MDS-con} holds for $r=k-1$. Thus, \eqref{eq:MDS-con} holds for $K=[k-1]$, which means
    \begin{equation} \label{proof:MDS-1}
        |F_{[k-1]}|\geq |A|_{\text{min}}+ k-1.
    \end{equation}
    \item Now, we need to prove that the condition in \eqref{eq:MDS-con} will also hold for $r=k$. Now, assume that the linear code condition does not hold for $K=[k]$. So,
    \begin{equation} \label{proof:MDS-2}
        |F_{[k]}|\leq |A|_{\text{min}}+ k-1.
    \end{equation}
    Based on \eqref{proof:MDS-1} and \eqref{proof:MDS-2}, we have
    \begin{equation} \label{proof:MDS-3}
        F_{[k]}=F_{[k-1]}=|A|_{\text{min}}+ k-1.
    \end{equation}
    Let $w$ denote the index of receiver $u_{w}$, which is selected by the UMCD coding scheme for the $k$-th transmission. Now, from \eqref{proof:MDS-3}, we must have 
    \begin{equation} \label{proof:MDS-4}
        G_k=\{w\}\cup A_{w}\subseteq F_{[k-1]}.
    \end{equation} 
    Now, since $|A_w|\geq |A|_{\text{min}}=d-1$, the three conditions of Lemma  \ref{lem:MDS-con-mcm} are met for $r=k-1$ and $d=|A|_{\text{min}}+1$. So, we will have
    \begin{equation}
         \mathrm{mcm}(\boldsymbol{G}_{[k-1]}^{\{w\}\cup B_w})=1+\mathrm{mcm}(\boldsymbol{G}_{[k-1]}^{B_w}).
    \end{equation}
    Hence, receiver $u_{w}$ is able to decode its requested messages from the first $k-1$ transmissions. However, this contradicts the UMCD coding scheme's logic, where at each step of transmission, it picks a receiver which has not been satisfied by the previous transmissions (as it can be seen in Algorithm \ref{alg:UMCD} that in each transmission, we remove $i$ from $N$ as $N\leftarrow N\backslash\{i\}$ when $u_i$ is satisfied). This completes the proof.
\end{itemize}

\bibliographystyle{IEEEtran}
	\bibliography{References1} 
	
\vskip -2\baselineskip plus -12fil

\begin{IEEEbiography}[{\includegraphics[width=1in,height=1.25in,clip,keepaspectratio]{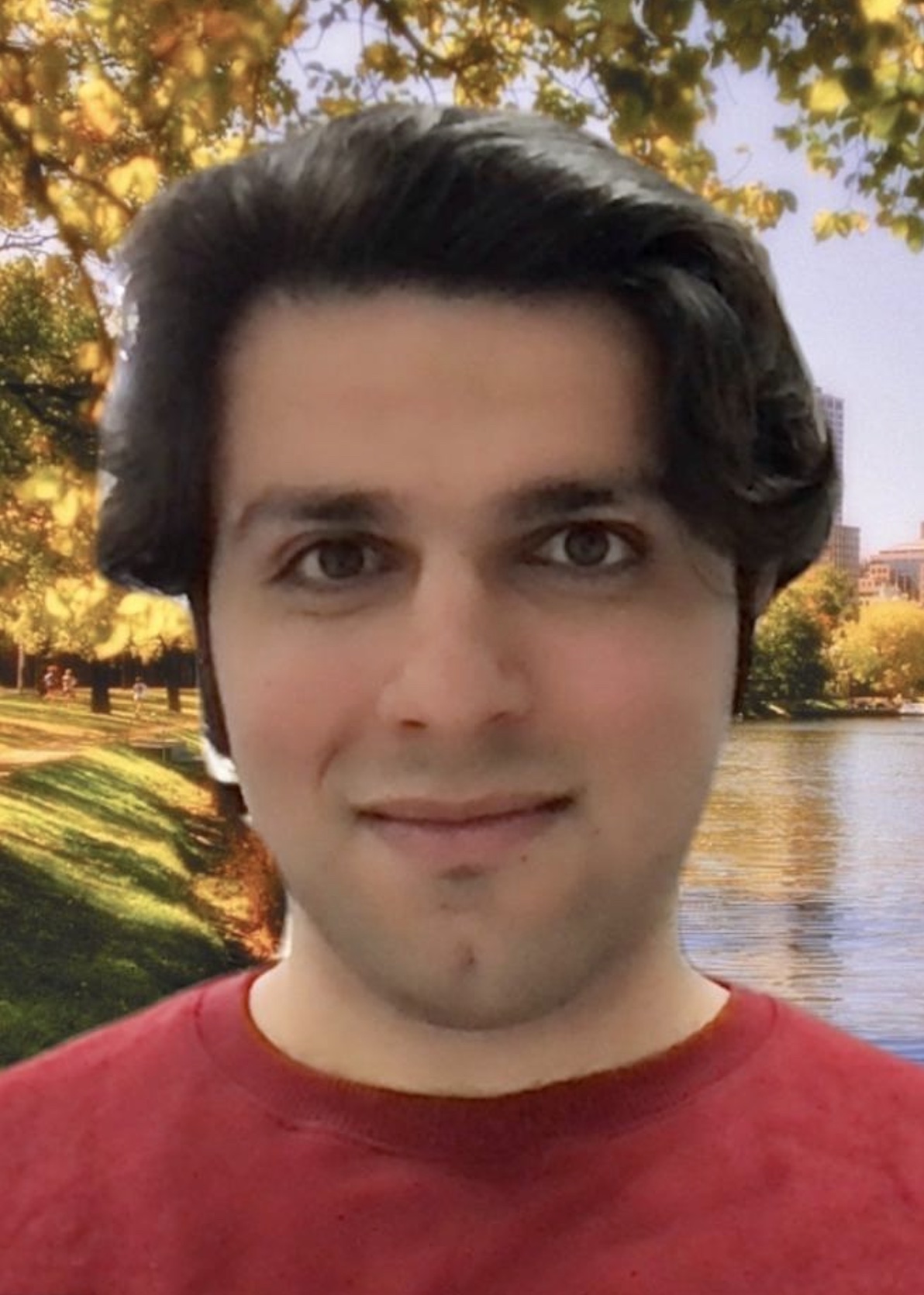}}]{Arman Sharififar}
received his B.Sc. degree in Electrical Engineering from Bahonar University, Iran. He completed his M.Sc. degree in the field of coding and communication systems at Shiraz University, Iran. Currently, he is pursuing his PhD degree at the School of Engineering and Information Technology, the University of New South Wales, Canberra, Australia. His research interests include index and network coding, private and secured index coding, coded caching, and space-time coding in the MIMO systems.
\end{IEEEbiography}

\vskip -2\baselineskip plus -1fil

\begin{IEEEbiography}[{\includegraphics[width=1in,height=1.25in,clip,keepaspectratio]{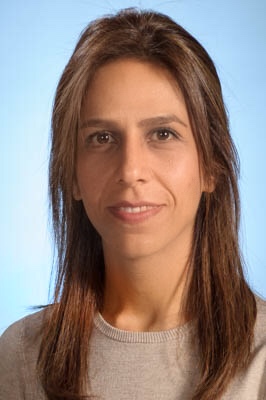}}]{Neda Aboutorab}
(S’09-M’12-SM’17) is currently a Senior Lecturer at the School of Engineering and Information Technology at the University of New South Wales, Canberra, Australia. She received her PhD in Electrical Engineering from the University of Sydney, Australia, in 2012. From 2012-2015 and before joining the University of New South Wales, she was a Postdoctoral Research Fellow at the Research School of Engineering, the Australian National University. Her research interests include index and network coding, applied information theory, big data caching and storage systems, wireless communications and signal processing.
\end{IEEEbiography}

\vskip -2\baselineskip plus -1fil

\begin{IEEEbiography}[{\includegraphics[width=1in,height=1.25in,clip,keepaspectratio]{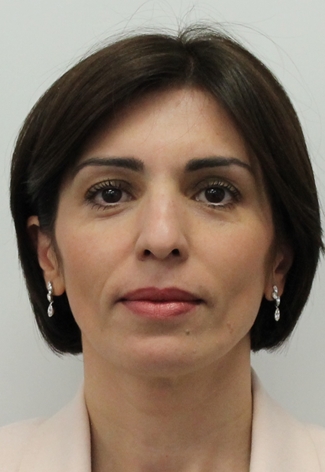}}]{Parastoo Sadeghi}
(Senior Member, IEEE) received the bachelor’s and master’s degrees in electrical engineering from the Sharif University of Technology, Tehran, Iran, in 1995 and 1997, respectively, and the Ph.D. degree in electrical engineering from the University of New South Wales, Sydney, NSW, Australia, in 2006. She is currently a Professor with the School of Engineering and Information Technology, University of New South Wales, Canberra, ACT, Australia. She has co-authored the book Hilbert Space Methods in Signal Processing (Cambridge University Press, 2013) and around 190 refereed journal articles and conference papers. Her research interests include information theory, data privacy, index coding, and network coding. She was a recipient of the 2019 Future Fellowship from the Australian Research Council. From 2016 to 2019, she served as an Associate Editor for the IEEE
TRANSACTIONS ON INFORMATION THEORY. From 2019 to 2020, she served as a member on the Board of Governors of the IEEE Information Theory Society. She was the General Co-chair of the 2021 IEEE International Symposium on Information Theory.

\end{IEEEbiography}

\pagebreak

\section{Proof of Proposition \ref{prop:A-2}}   \label{proof:A-2}
The proof of Proposition \ref{prop:A-2} can be directly concluded from the following Lemmas \ref{lem1:prop4} and \ref{lem2:prop4}.  In Lemma \ref{lem1:prop4}, we prove that $\beta_{\text{FPCC}}(\mathcal{I}_{6}(l))=3l+\frac{1}{2}$. Then, in Lemma \ref{lem2:prop4}, it will be shown that the recursive scheme cannot outperform the FPCC scheme for the class-$\mathcal{I}_{6}$ index coding instances.
The associated graph $\mathcal{G}_{\mathcal{I}_{6}}$ is depicted in Figure \ref{fig:class-A}.

\begin{lem} \label{lem1:prop4}
For the class-$\mathcal{I}_{6}$ index coding instances, we have $\beta_{\text{FPCC}}(\mathcal{I}_{6}(l))=3l+\frac{1}{2}$.
\end{lem}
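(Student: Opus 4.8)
The claim is that for the class-$\mathcal{I}_6(l)$ instances (with $m = 4l+1$ messages), the FPCC broadcast rate equals $3l + \tfrac12$. Recall the structure: for $i \in [2l]$, receiver $u_{2i-1}$ has $A_{2i-1} = \{2j : j \in [2l]\setminus\{i\}\} \cup \{4l+1\}$ (so $|A_{2i-1}| = 2l$), receiver $u_{2i}$ has $A_{2i} = \{2i-1\}$ (so $|A_{2i}|=1$), and $u_{4l+1}$ has $A_{4l+1} = \{2i-1 : i\in[2l]\}$ (so $|A_{4l+1}|=2l$). An FPCC bound is a linear program: minimize $\sum_j \gamma_j \beta_{\text{MDS}}(M_j)$ over fractional covers $\{(M_j,\gamma_j)\}$ with $\sum_{j : i \in M_j} \gamma_j \ge 1$ for all $i$.

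**Upper bound ($\le 3l+\tfrac12$).** I would first exhibit an explicit fractional partial-clique cover achieving $3l+\tfrac12$. The natural building blocks are: (a) the $2l$ ``pendant'' pairs $P_i = \{2i-1, 2i\}$, each of which is a $2$-cycle (since $2i-1 \in A_{2i}$ and $2i \in A_{2i-1}$), so $\beta_{\text{MDS}}(P_i) = 1$; (b) cliques/cycles among the odd vertices together with $4l+1$. One checks that $\{2i-1 : i \in [2l]\} \cup \{4l+1\}$ forms a large clique-like structure: the odd vertices are pairwise bidirectionally connected via the even vertices? No — they are connected only through vertex $4l+1$ and through the even indices they exclude. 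I would instead look at the set $S = \{1,3,\dots,4l-1, 4l+1\}$ of size $2l+1$. Each odd vertex $2i-1$ knows all of $S \setminus \{2i-1\}$ except it does NOT know the other odd vertices directly; but $4l+1 \in A_{2i-1}$, and $2i-1 \in A_{4l+1}$. Careful bookkeeping of $\beta_{\text{MDS}}(S)$ via $|S| - \min_{i\in S}|S \cap A_i|$ gives the clique-savings here. The cleanest route: time-share the $2l$ pairs $P_i$ (contributing $2l$) with a fractional cover of $S$ of rate $l+\tfrac12$, and show each vertex is covered with total weight $\ge 1$. This yields $\beta_{\text{FPCC}} \le 2l + (l+\tfrac12) = 3l + \tfrac12$.

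**Lower bound ($\ge 3l+\tfrac12$).** This is the harder direction and the main obstacle. By Proposition~\ref{prop:minimal}, I may assume the optimal FPCC solution uses only minimal partial-clique sets (minimal cycles and cliques). I would classify which minimal partial cliques can appear: any cycle/clique containing a pendant vertex $2i$ must, by minimality, be exactly the pair $P_i$ (since $u_{2i}$ only knows $2i-1$, so $2i$ lies on no longer minimal cycle). Hence each $u_{2i}$ can only be covered by the set $P_i$, forcing $\gamma_{P_i} \ge 1$, contributing at least $2l$ to the objective from these $2l$ disjoint pieces. The remaining burden is to cover each odd vertex and $u_{4l+1}$: I would set up the residual LP on the subgraph induced by $S = \{1,3,\dots,4l-1,4l+1\}$ and its minimal partial cliques, and show its optimum is $l + \tfrac12$. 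The key combinatorial fact to pin down is the structure of minimal cycles/cliques inside $S$: the only cycles are of the form $(2i-1) \to (4l+1) \to (2i-1)$ (length-2, needing $2i-1 \in A_{4l+1}$ — true — and $4l+1 \in A_{2i-1}$ — true), plus possibly longer ones; and any clique among $\{2i-1, 2j-1, 4l+1\}$ requires $2i-1 \in A_{2j-1}$, which is false. So essentially only the $2l$ two-cycles $\{2i-1, 4l+1\}$ and singletons are available, and an LP-duality / counting argument (assign dual weight to each odd vertex and to $4l+1$) closes the bound at $l + \tfrac12$. I expect the delicate part to be rigorously ruling out all ``unexpected'' minimal partial cliques that might span pendant and odd vertices simultaneously, and handling the fractional weight on $u_{4l+1}$ correctly — this is where I would spend the most care, likely via an explicit feasible dual solution to the residual LP rather than enumerating primal structures.
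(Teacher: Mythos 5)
Your argument breaks at the very first structural claim, and everything downstream inherits the error. You assert that each pair $P_i=\{2i-1,2i\}$ is a $2$-cycle because ``$2i\in A_{2i-1}$'', but in the class-$\mathcal{I}_6$ definition $A_{2i-1}=\{2j: j\in[2l]\setminus\{i\}\}\cup\{4l+1\}$ explicitly \emph{excludes} $2i$. Hence $A_{2i-1}\cap P_i=\emptyset$, $\beta_{\text{MDS}}(P_i)=2$, and the pairs give no saving at all; your upper-bound cover ``$2l$ from the pairs plus $l+\tfrac12$ on the odd part'' therefore does not produce $3l+\tfrac12$ (nor do its pieces even fit together: within $S=\{1,3,\dots,4l-1,4l+1\}$ the only nontrivial structures are the $2$-cycles $\{2i-1,4l+1\}$ and singletons, all of MDS rate $1$, so covering the $2l$ odd vertices of $S$ already costs $2l$, not $l+\tfrac12$). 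The sets that actually carry the savings for the even vertices are the $4$-cycles $M_{i,j}=\{2i-1,2i,2j-1,2j\}$, $i\neq j$, with $\beta_{\text{MDS}}(M_{i,j})=3$ (the cycle is $2i-1\to 2i\to 2j-1\to 2j\to 2i-1$); a correct achieving solution is, e.g., uniform weight $1/(2l)$ on all $M_{i,j}$, on all $\{2i-1,4l+1\}$, and on the even singletons, giving exactly $3l+\tfrac12$.

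The same misreading invalidates your lower bound. Your forcing step ``$u_{2i}$ only knows $2i-1$, so $2i$ lies on no minimal cycle other than $P_i$, hence $\gamma_{P_i}\ge 1$'' is false: $2i$ lies on every $4$-cycle $M_{i,j}$ (which is chordless, hence minimal), so the even vertices can be, and in the optimum are, covered fractionally by these $4$-sets rather than by pairs or singletons. Consequently the problem does not decompose into ``pendant pairs'' plus a residual LP on $S$, and the dual/counting argument you sketch is aimed at the wrong polytope. The paper's proof instead enumerates the full list of minimal partial cliques ($M_{i,j}$, the $2$-cycles $\{2i-1,4l+1\}$, and singletons), writes the FPCC LP over exactly these sets with equality constraints (its Propositions on minimality and on tightness of the covering constraints), and closes the bound by summing the constraints in two different ways to express the objective as $3l+\tfrac12$ plus a nonnegative term in the odd-singleton weights. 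To repair your proof you would need to (i) correct the side-information reading, (ii) include the $4$-sets $M_{i,j}$ in your catalogue of minimal partial cliques and justify that no other minimal partial cliques exist, and (iii) redo both bounds over that catalogue.
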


\begin{proof}
Regarding Proposition \ref{prop:minimal}, we only consider the minimal partial clique sets in class-$\mathcal{I}_{6}$ index coding instances. It can be verified that the minimal partial clique sets can be all characterized as follows
\begin{equation}
   \left\{\begin{array}{lc}
     M_{i,j}&=\{2i-1, 2i, 2j-1, 2j\}, \ \ \ \forall i\neq j\in[2l],  \label{eq:minimalpartialclique-class-A}
     \\ \\
     M_{i,2l+1}&=\{2i-1, 4l+1\}, \ \ \ \ \ \ \ \ \ \ \   \forall i\in[2l], \ \ \ \ \ \ \ 
     \\ \\
     M_{i}&=\{i\}, \ \ \  \ \ \ \ \ \ \ \ \ \ \  \ \ \ \ \ \ \ \ \ \ \ \ \ \forall i\in[4l+1].
    \end{array}
    \right.
\end{equation}
Moreover, it can also be easily seen that
\begin{equation} \nonumber
   \left\{\begin{array}{lc}
     \beta_{\text{MDS}}(M_{i,j})&=3, \ \ \ \forall i\neq j\in[2l],  \nonumber\\ \\
     \beta_{\text{MDS}}(M_{i,2l+1})&=1, \ \ \  \forall i\in[2l], \ \ \ \ \ \ \ \nonumber\\ \\
     \beta_{\text{MDS}}(M_{i})&=1,  \ \ \ \forall i\in[4l+1]. \nonumber
    \end{array}
    \right.
\end{equation}
Thus, the optimal solution for the FPCC scheme in \eqref{eq:FPCC-opt} will be equal to (note, since $M_{i,j}=M_{j,i}, \forall i\neq j\in[2l]$, to avoid repetition, we only consider $i<j\in[2l]$)
\begin{align}
    &\beta_{\text{FPCC}}(\mathcal{I}_{6}(l))= \min_{\substack{\gamma_{i,j}, \ \ i<j\in [2l] \\ \gamma_{i,2l+1}, \ \ i\in[2l] \\ \gamma_{i},\ \ i\in[4l+1]}} 
   \Big ( \sum_{i<j \in[2l]} \gamma_{i,j} \ \beta_{\text{MDS}}(M_{i,j}) 
   \nonumber
   \\
    & 
    +\sum_{i\in[2l]} \gamma_{i,2l+1} \ \beta_{\text{MDS}}(M_{i,2l+1})
    + \sum_{i\in[4l+1]} \gamma_{i} \ \beta_{\text{MDS}}(M_i) \nonumber
   \ \Big )\\
   =& 
    \min_{\substack{\gamma_{i,j}, \ \ i<j\in [2l] \\ \gamma_{i,2l+1}, \ \ i\in[2l] \\ \gamma_{i},\ \ i\in[4l+1]}}  
   \Big ( \sum_{i<j \in[2l]} 3\gamma_{i,j} +\sum_{i\in[2l]} \gamma_{i,2l+1}  + \sum_{i\in[4l+1]} \gamma_{i} 
   \ \Big ), \label{eq:pr:prop4:1}
\end{align}
subject to the constraints in \eqref{eq:FPCC:cond1}, which will be expressed as follows (according to Proposition \ref{prop: FPCC-modification}, we just consider the equality case and also we let $\gamma_{i,j}=\gamma_{j,i}, i<j\in[2l]$)
\begin{align}
        &\Big (\sum\limits_{\substack{j\in[2l]\\ j\neq i}} \gamma_{\frac{i+1}{2},j}\Big ) + \gamma_{i,2l+1}+\gamma_i=1, \ \forall i=2i_1-1, i_1\in[2l],  
        \label{eq:pr:prop4:con1} 
        \\ 
        &\Big (\sum\limits_{\substack{j\in[2l]\\ j\neq i}} \gamma_{\frac{i}{2},j}\Big ) +\gamma_i=1, \ \ \ \ \ \ \ \ \ \ \ \ \ \forall i=2i_1, i_1\in[2l], \label{eq:pr:prop4:con2}
        \\
        &\Big (\sum\limits_{i\in[2l]} \gamma_{i,2l+1} \Big ) + \gamma_{4l+1}=1, \ \ \ \ \ \ \ \ \ \ \ \ i=4l+1. \label{eq:pr:prop4:con3}
\end{align}
Now, we add all the constraints in \eqref{eq:pr:prop4:con1}, \eqref{eq:pr:prop4:con2} and \eqref{eq:pr:prop4:con3} with each other, which will result in
\begin{equation} \label{eq:pr:prop4:2}
    \sum_{i<j\in[2l]} 4\gamma_{i,j} + \sum_{i\in[2l]} 2\gamma_{i,2l+1} + \sum_{i\in[4l+1]} \gamma_{i} =4l+1.
\end{equation}
Now, using \eqref{eq:pr:prop4:2}, we can express \eqref{eq:pr:prop4:1} as follows
\begin{align} 
    &\sum_{i<j \in[2l]} 3\gamma_{i,j} +\sum_{i\in[2l]} \gamma_{i,2l+1}  + \sum_{i\in[4l+1]} \gamma_{i} =
    4l+1
    \nonumber
    \\
    &
    -\Big (\sum_{i<j\in[2l]} \gamma_{i,j} + \sum_{i\in[2l]} \gamma_{i,2l+1} \Big ), \label{eq:pr:prop4:result-1}
\end{align}
Moreover, adding all the constraints in \eqref{eq:FPCC:cond1} and \eqref{eq:pr:prop4:con3} will result in
\begin{align} 
    \sum_{i<j\in[2l]} \gamma_{i,j} + \sum_{i\in[2l]} \gamma_{i,2l+1} + \sum_{i\in [2l+1]} \gamma_{2i-1}= \frac{2l+1}{2}, \label{eq:pr:prop4:result-2}
\end{align}
From \eqref{eq:pr:prop4:result-1} and \eqref{eq:pr:prop4:result-2}, we have
\begin{align}
    \beta_{\text{FPCC}}(\mathcal{I}_{6}(l))&= 3l+\frac{1}{2}-\min_{\gamma_{2i-1}, i\in[2l+1]} \sum_{i\in [2l+1]} \gamma_{2i-1} \nonumber
    \\
    &=3l+\frac{1}{2}, \label{eq:pr:prop4:final}
\end{align}
where \eqref{eq:pr:prop4:final} is achieved by setting  $\gamma_{2i-1}=0, \forall i\in [2l+1]$. This completes the proof.

\end{proof}

\begin{figure}
    \centering
     \begin{tikzpicture}
                 \tikzset{vertex/.style = {shape=circle,draw,minimum size=1em}}
                 \tikzset{edge/.style = {->,> = latex'}}
                 \node[vertex,inner sep=2pt, minimum size=0.5pt] (1)    at  (0,2)   {\small 1};
                 \node[vertex,inner sep=2pt, minimum size=0.5pt] (2)    at  (0,0)     {\small 2};
                 \node[vertex,inner sep=2pt, minimum size=0.5pt] (3)    at  (1.5,2) {\small 3};
                 \node[vertex,inner sep=2pt, minimum size=0.5pt] (4)    at  (1.5,0)   {\small 4};
                 \node[vertex,inner sep=2pt, minimum size=0.5pt] (4l-1) at  (5,2)   {\scriptsize 4$l$-1};
                 \node[vertex,inner sep=2pt, minimum size=0.5pt] (4l)   at  (5,0)     {\scriptsize 4$l$};
                 \node[vertex,inner sep=2pt, minimum size=0.5pt] (4l+1) at  (2.5,4)   {\scriptsize 4$l$+1};
                 
                 \draw[edge] (1)       [dashed]    to          (2);
                 \draw[edge] (3)       [dashed]    to          (4);
                 \draw[edge] (4l-1)    [dashed]    to          (4l);

                 \draw[edge] (2)     [dashed]    to            (3);
                 \draw[edge] (2)     [dashed]    to            (4l-1);
                 
                 \draw[edge] (4)     [dashed]    to            (1);
                 \draw[edge] (4)     [dashed]    to            (4l-1);
                 
                 \draw[edge] (4l)    [dashed]    to            (3);
                 \draw[edge] (4l)    [dashed]    to            (1);

                 \draw[edge] (1)         to            (4l+1);
                 \draw[edge] (3)         to            (4l+1);
                 \draw[edge] (4l-1)      to            (4l+1);
                 \draw[edge] (4l+1)      to            (1);
                 \draw[edge] (4l+1)      to            (3);
                 \draw[edge] (4l+1)      to            (4l-1);

                 \path (3)   [dashed]    to            node {\dots} (4l-1);
                 \path (4)   [dashed]    to            node {\dots} (4l);
            \end{tikzpicture} 
    \caption{The class-$\mathcal{I}_{6}$ index coding instances for which $\beta_{\text{R}}(\mathcal{I}_{6}(l))-\beta_{\text{UMCD}}(\mathcal{I}_{6}(l))=l-\frac{1}{2}$.}
    \label{fig:class-A}
\end{figure}
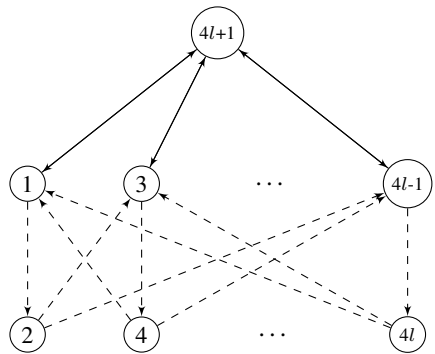
\begin{lem} \label{lem2:prop4}
For the class-$\mathcal{I}_{6}$ index coding instances, we have $\beta_{\text{R}}(\mathcal{I}_{6}(l))=\beta_{\text{FPCC}}(\mathcal{I}_{6}(l))$.
\end{lem}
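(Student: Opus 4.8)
The plan is to use the general inequality $\beta_{\text{R}}(\mathcal{I})\le\beta_{\text{FPCC}}(\mathcal{I})$ from Appendix~\ref{FirstAppendix}, so that only the reverse inequality $\beta_{\text{R}}(\mathcal{I}_{6}(l))\ge\beta_{\text{FPCC}}(\mathcal{I}_{6}(l))$ is left to prove; by Lemma~\ref{lem1:prop4} this is exactly the lower bound $\beta_{\text{R}}(\mathcal{I}_{6}(l))\ge 3l+\tfrac12$. The whole argument therefore reduces to establishing that single bound.

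First I would invoke Proposition~\ref{prop:minimal-recursive} together with a refinement observation: replacing any set $M$ (with weight $\gamma$) in a recursive decomposition by a partition $\{M_z\}$ of it (each with weight $\gamma$) never increases the recursive objective, since $\beta_{\text{R}}(M)\le\sum_z\beta_{\text{R}}(M_z)$ and, after the split, the restriction ``$M_z\not\subseteq A_i$'' can only remove more terms. Hence, without loss of generality, an optimal recursive decomposition of $[4l+1]$ uses only the strictly minimal recursive sets, which for $\mathcal{I}_{6}(l)$ one checks to be the $4$-cycles $M_{i,j}=\{2i-1,2i,2j-1,2j\}$ with $\beta_{\text{R}}(M_{i,j})=\beta_{\text{MDS}}(M_{i,j})=3$, the $2$-cliques $M_{i,2l+1}=\{2i-1,4l+1\}$ with $\beta_{\text{R}}=1$, and the singletons with $\beta_{\text{R}}=1$ — precisely the atoms used in the FPCC analysis of Lemma~\ref{lem1:prop4}. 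This last step rests on the fact that every directed cycle of $\mathcal{G}_{\mathcal{I}_{6}}$ is either a $2$-cycle through the hub $4l+1$, a $4$-cycle $M_{i,j}$, or carries a chord that makes it reducible with equality into these pieces and isolated vertices.

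The crux is the structural observation that in $\mathcal{I}_{6}(l)$ every side information set $A_i$ induces an \emph{edgeless} subgraph of $\mathcal{G}_{\mathcal{I}_{6}}$: for odd $i$, $A_i$ consists of even vertices together with the hub $4l+1$, which are pairwise nonadjacent; for even $i$, $A_i$ is a single vertex; and $A_{4l+1}$ is the independent set of odd vertices. Consequently no multi-vertex atom can lie inside any $A_i$, so the ``$M_j\not\subseteq A_i$'' filter discards only singleton terms, and the recursive optimum collapses to
$$\beta_{\text{R}}(\mathcal{I}_{6}(l))=\min_{\{M_j,\gamma_j\}}\Big(\textstyle\sum_j\gamma_j\beta_{\text{MDS}}(M_j)-\min_{i}\sum_{v\in A_i}\delta_v\Big),$$
where $\delta_v$ is the total weight placed on the singleton $\{v\}$ and the outer sum runs over $4$-cycles, $2$-cliques and singletons. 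I would then finish with a small linear-programming argument paralleling \eqref{eq:pr:prop4:1}--\eqref{eq:pr:prop4:final}: the $\delta=0$ FPCC optimum of Lemma~\ref{lem1:prop4} shows the right-hand side is $\le 3l+\tfrac12$, while a dual receiver-weighting certificate — feasible for the FPCC program and, thanks to the edgeless-$A_i$ property, still certifying $3l+\tfrac12$ against the above max-of-sums — gives the matching lower bound.

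The main obstacle I anticipate is exactly the correction term $\min_i\sum_{v\in A_i}\delta_v$: it is the precise amount by which the recursive scheme might undercut FPCC, so the naive estimate ``$\beta_{\text{R}}\ge\sum_j\gamma_j\beta_{\text{R}}(M_j)$'' fails and one genuinely has to solve (or dualize) the displayed linear program and show its optimum is unchanged at $3l+\tfrac12$; intuitively this works because the covering constraints force any nonzero singleton weights to be paid for with extra weight in the $4$-cycle/$2$-clique terms, but making this quantitative is the heart of the proof. A secondary, more routine task is the rigorous verification of the list of strictly minimal recursive sets — in particular ruling out an exotic strictly minimal recursive set on five or more vertices — which follows from the description of the cycles of $\mathcal{G}_{\mathcal{I}_{6}}$ above together with the reducibility-with-equality bounds already used in the FPCC argument.
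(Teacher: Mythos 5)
Your route is the same as the paper's up to its last step: invoke Proposition~\ref{prop:minimal-recursive} to restrict to minimal recursive sets, identify these with the atoms of \eqref{eq:minimalpartialclique-class-A} (on which $\beta_{\text{R}}=\beta_{\text{MDS}}$), and observe that no multi-vertex atom lies inside any $A_i$ (your edgeless-$A_i$ remark; the paper states exactly this for the sets $M_{i,j}$ and $M_{i,2l+1}$). At this point the paper simply declares that, by \eqref{eq:recursive:broadcastrate}, the recursive optimization reduces verbatim to the FPCC optimization, and appeals to Lemma~\ref{lem1:prop4}. You are in fact more careful than the paper here: you notice that singleton atoms $\{v\}$ with $v\in A_i$ do satisfy the drop condition $M_j\subseteq A_i$, so the recursive objective is not literally the FPCC objective but carries the discount $-\min_i\sum_{v\in A_i}\delta_v$.

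The problem is that your proposal stops exactly where the lemma still needs an argument: you never prove that this discount cannot push the optimum below $3l+\tfrac12$, i.e.\ the lower bound $\beta_{\text{R}}(\mathcal{I}_6(l))\ge 3l+\tfrac12$, and you explicitly defer it (``making this quantitative is the heart of the proof''), offering only the intuition that singleton weight must be paid for elsewhere together with an unspecified ``dual receiver-weighting certificate''. Since the discount is precisely the mechanism by which the recursive scheme can beat FPCC, this is the one nontrivial step and cannot be waved at; as written the proposal asserts the lemma rather than proves it. The gap is closable along the lines you gesture at: note that $\{v\}\not\subseteq A_v$, so receiver $v$ never drops its own singleton weight, lower-bound $\max_i$ by an average of the per-receiver objectives taken over the even receivers (whose side information sets consist of a single odd vertex), and redo the constraint summation of \eqref{eq:pr:prop4:con1}--\eqref{eq:pr:prop4:final}; the surplus in that identity is carried by the odd-indexed singletons, and one checks that each such $\delta_v$ enters the averaged discount with total coefficient at most $\tfrac{1}{2l}$, so the bound $3l+\tfrac12$ survives. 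Until some such certificate is written down, your proof is incomplete at its decisive point, whereas the paper's (admittedly terse) proof sidesteps the issue by treating the drop condition as vacuous once the multi-vertex atoms are excluded and concluding the two programs coincide.
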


\begin{proof}
First, it can be seen that the subsets in \eqref{eq:minimalpartialclique-class-A} are all minimal recursive sets for which the broadcast rate of the recursive and the MDS coding schemes are equal. Second, since none of the sets $M_{i,j}, i\neq j\in [2l]$ and $M_{i,2l+1}, i\in [2l]$ in \eqref{eq:minimalpartialclique-class-A} are subset of any side information set $A_i, i\in [m]$, then due to \eqref{eq:recursive:broadcastrate}, the broadcast rate of the recursive coding scheme will be equal to the time sharing over the subsets in \eqref{eq:minimalpartialclique-class-A}. Therefore, the recursive optimization problem will be reduced to the FPCC optimization problem, which completes the proof.
\end{proof}

\section{Proof of Theorem \ref{thm:class-ICC-UMCD}}   \label{proof:thm-class-ICC}

The proof of Theorem \ref{thm:class-ICC-UMCD} can directly be concluded from Propositions \ref{prop:B-1} and \ref{prop:B-2}. 

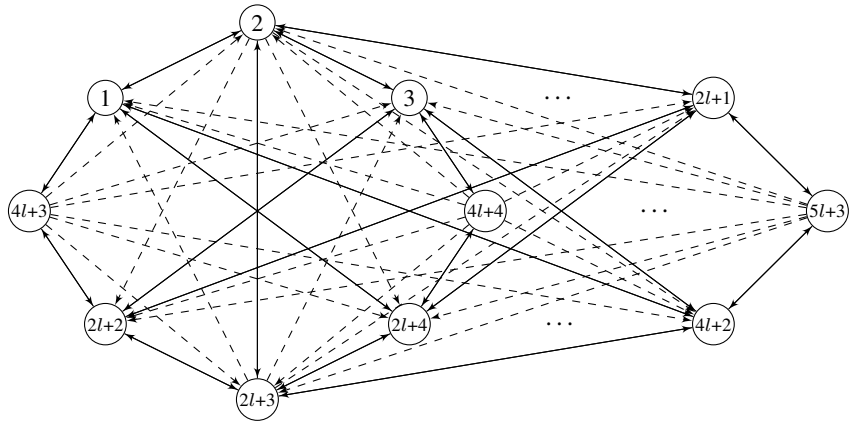
\begin{figure*}
    \centering
    \begin{tikzpicture}
                 \tikzset{vertex/.style = {shape=circle,draw,minimum size=1em}}
                 \tikzset{edge/.style = {->,> = latex'}}

                 \node[vertex,inner sep=2pt, minimum size=0.5pt] (1)       at  (0,3)      {\small 1};
                 \node[vertex,inner sep=2pt, minimum size=0.5pt] (2)       at  (2,4)      {\small 2};
                 \node[vertex,inner sep=2pt, minimum size=0.5pt] (3)       at  (4,3)      {\small 3};
                 \node[vertex,inner sep=.5pt, minimum size=0.5pt] (2k+1)    at  (8,3)      {\scriptsize 2$l$+1};
                 
                 \node[vertex,inner sep=.5pt, minimum size=0.5pt] (2k+2)    at  (0,0)      {\scriptsize 2$l$+2};
                 \node[vertex,inner sep=.5pt, minimum size=0.5pt] (2k+3)    at  (2,-1)      {\scriptsize 2$l$+3};
                 \node[vertex,inner sep=.5pt, minimum size=0.5pt] (2k+4)    at  (4,0)      {\scriptsize 2$l$+4};
                 \node[vertex,inner sep=.5pt, minimum size=0.5pt] (4k+2)    at  (8,0)      {\scriptsize 4$l$+2};
                 
                 \node[vertex,inner sep=.5pt, minimum size=0.5pt] (4k+3)    at  (-1,1.5)   {\scriptsize 4$l$+3};
                 \node[vertex,inner sep=.5pt, minimum size=0.5pt] (4k+4)    at  (5,1.5)    {\scriptsize 4$l$+4};
                 \node[vertex,inner sep=.5pt, minimum size=0.5pt] (5k+3)    at  (9.5,1.5)    {\scriptsize 5$l$+3};
                 

                 \draw[edge] (1)           to          (2);
                 \draw[edge] (2)           to          (1);
                 \draw[edge] (2)           to          (3);
                 \draw[edge] (3)           to          (2);
                 
                 \draw[edge] (2k+2)        to          (2k+3);
                 \draw[edge] (2k+3)        to          (2k+2);
                 \draw[edge] (2k+3)        to          (2k+4);
                 \draw[edge] (2k+4)        to          (2k+3);
                 
                 \draw[edge] (2)           to          (2k+1);
                 \draw[edge] (2k+1)        to          (2);
                 \draw[edge] (2k+3)        to          (4k+2);
                 \draw[edge] (4k+2)        to          (2k+3);
                 
                 \draw[edge] (1)           to          (2k+4);
                 \draw[edge] (2k+4)        to          (1);
                 \draw[edge] (3)           to          (2k+2);
                 \draw[edge] (2k+2)        to          (3);
                 
                 \draw[edge] (1)           to          (4k+2);
                 \draw[edge] (4k+2)        to          (1);
                 \draw[edge] (3)           to          (4k+2);
                 \draw[edge] (4k+2)        to          (3);
                 
                 \draw[edge] (2k+2)        to          (2k+1);
                 \draw[edge] (2k+1)        to          (2k+2);
                 \draw[edge] (2k+4)        to          (2k+1);
                 \draw[edge] (2k+1)        to          (2k+4);
                 
                 \draw[edge] (1)           to          (4k+3);
                 \draw[edge] (2k+2)        to          (4k+3);
                 \draw[edge] (4k+3)        to          (1);
                 \draw[edge] (4k+3)        to          (2k+2);
                 
                 \draw[edge] (3)           to          (4k+4);
                 \draw[edge] (2k+4)        to          (4k+4);
                 \draw[edge] (4k+4)        to          (3);
                 \draw[edge] (4k+4)        to          (2k+4);
                 
                 \draw[edge] (2k+1)        to          (5k+3);
                 \draw[edge] (4k+2)        to          (5k+3);
                 \draw[edge] (5k+3)        to          (2k+1);
                 \draw[edge] (5k+3)        to          (4k+2);
                 
                 \draw[edge] (2)           to          (2k+3);
                 \draw[edge] (2k+3)        to          (2);
               
                 \draw[edge] (2)     [dashed]       to          (2k+2);
                 \draw[edge] (2k+3)  [dashed]       to          (1);
                 \draw[edge] (2)     [dashed]       to          (2k+4);
                 \draw[edge] (2k+3)  [dashed]       to          (3);
                 \draw[edge] (2)     [dashed]       to          (4k+2);
                 \draw[edge] (2k+3)  [dashed]       to          (2k+1);
                 
                  \draw[edge] (4k+3)  [dashed]       to          (2);
                  \draw[edge] (4k+3)  [dashed]       to          (3);
                  \draw[edge] (4k+3)  [dashed]       to          (2k+3);
                  \draw[edge] (4k+3)  [dashed]       to          (2k+4);
                  \draw[edge] (4k+3)  [dashed]       to          (2k+1);
                  \draw[edge] (4k+3)  [dashed]       to          (4k+2);
                  
                  \draw[edge] (4k+4)  [dashed]       to          (2);
                  \draw[edge] (4k+4)  [dashed]       to          (1);
                  \draw[edge] (4k+4)  [dashed]       to          (2k+3);
                  \draw[edge] (4k+4)  [dashed]       to          (2k+2);
                  \draw[edge] (4k+4)  [dashed]       to          (2k+1);
                  \draw[edge] (4k+4)  [dashed]       to          (4k+2);
                  
                  \draw[edge] (5k+3)  [dashed]       to          (2);
                  \draw[edge] (5k+3)  [dashed]       to          (1);
                  \draw[edge] (5k+3)  [dashed]       to          (3);
                  \draw[edge] (5k+3)  [dashed]       to          (2k+2);
                  \draw[edge] (5k+3)  [dashed]       to          (2k+3);
                  \draw[edge] (5k+3)  [dashed]       to          (2k+4);

                 \path (3)                 to         node {\dots} (2k+1);
                 \path (2k+4)              to         node {\dots} (4k+2);
                 
                 \path (4k+4)              to         node {\dots} (5k+3);
                 
\end{tikzpicture}  
    \caption{The class-$\mathcal{I}_{7}$ index coding instances for which $\beta_{\text{ICC}}(\mathcal{I}_{7}(l))-\beta_{\text{UMCD}}(\mathcal{I}_{7}(l))\geq 2l-\frac{1}{2}$.}
  \label{fig:class-B-s}
\end{figure*}

\begin{prop} \label{prop:B-1}
The broadcast rate of the proposed UMCD coding scheme for the class-$\mathcal{I}_{7}$ index coding instances is 
\begin{equation} \nonumber
    \beta_{\text{UMCD}}(\mathcal{I}_{7}(l))=l+2.
\end{equation}
\end{prop}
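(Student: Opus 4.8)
The plan is to run the UMCD algorithm explicitly on $\mathcal{I}_{7}(l)$, following exactly the structure used in the proof of Proposition~\ref{prop:A-1}, and exhibit that the while-loop terminates after $l+2$ transmissions. First I would compute the side-information sizes: from \eqref{eq:Class-B:side} one checks $|A_i| = 3l+2$ for every $i\in L_1\cup L_2$ (each such $A_i$ consists of $O_{L_2}$ or $E_{L_1}$ of size $l+1$, a set of size $2l$ obtained by removing one element from $L_1$ or $L_2$, and all of $L_3$ of size $l+1$), whereas $|A_i|=2$ for every $i\in L_3$. Hence the receivers with minimum side information are exactly those indexed by $L_3$, and the first $l+1$ transmissions of the UMCD scheme will target $u_i$, $i\in L_3$, one per round, setting $G_k=\{i\}\cup A_i$ of size $3$ for these rounds.

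The second step is to show that after these $l+1$ rounds no receiver in $L_1\cup L_2$ is yet satisfied (this is where I would invoke the mcm decoding test \eqref{eq:dec-mcm}, arguing that the rows $G_k$ placed so far span too little of each $B_i$ — each $B_i$ for $i\in L_1\cup L_2$ has size $m-1-|A_i| = (5l+3)-1-(3l+2)=2l$, and the $l+1$ sparse rows indexed by $L_3$-receivers cannot produce an mcm increment of $1$ on $\{i\}\cup B_i$ over $B_i$). Then in round $l+2$ the set $W$ of remaining minimum-side-information receivers is $L_1\cup L_2$ (all with $|A_i|=3l+2$), and I would pick, say, $w=2\in O_{L_1}$, set $G_{l+2}=\{2\}\cup A_2$, and verify via the Hopcroft--Karp / mcm test that \eqref{eq:dec-mcm} now holds for \emph{all} remaining receivers $u_i$, $i\in (L_1\cup L_2)\setminus\{2\}$. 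Concretely, I would display the relevant submatrices $\boldsymbol{G}_{[l+2]}^{\{i\}\cup B_i}$ (as in Proposition~\ref{prop:A-1}) and show a system of distinct representatives exists, using that the $l+1$ rows from $L_3$-receivers cover $L_3$ and parts of $L_1, L_2$ with the requisite disjointness, and that the last row $G_{l+2}$ supplies the missing pivot for the $i$-th column. By symmetry of the construction under the various choices of $w$ in round $l+2$, the count is independent of the random choices, giving $\beta_{\text{UMCD}}(\mathcal{I}_{7}(l)) = l+2$.

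The main obstacle I anticipate is the bookkeeping in the combined mcm argument of the last step: one must carefully track which columns of $\boldsymbol{G}_{[l+2]}$ can be matched to which of the $l+2$ rows, and confirm both that all of $u_i$, $i\in L_1\cup L_2$, become satisfied \emph{simultaneously} after round $l+2$ (so the loop terminates at exactly $l+2$) and that no receiver was satisfied prematurely (so the loop does not terminate earlier, which would give a \emph{smaller} rate and still be fine for the upper bound but would break the claimed equality). The cleanest way to handle this is probably to fix a canonical permutation of columns witnessing $\mathrm{mcm}(\boldsymbol{G}_{[l+2]}^{\{i\}\cup B_i}) = \mathrm{mcm}(\boldsymbol{G}_{[l+2]}^{B_i}) + 1$ uniformly in $i$, exactly mirroring the block-triangular pattern exhibited in Proposition~\ref{prop:A-1}, and to defer the full verification to the supplemental material as the paper already does for this proposition.
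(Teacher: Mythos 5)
Your proposal follows essentially the same route as the paper's own proof: the first $l+1$ UMCD rounds target the minimum-side-information receivers indexed by $L_3$ (yielding, after a column reordering, the block matrix $\left[\begin{array}{c|c|c|c|c}\boldsymbol{I}_{l+1} & \boldsymbol{I}_{l+1} & \boldsymbol{I}_{l+1} & \boldsymbol{0} & \boldsymbol{0}\end{array}\right]$), and then one further round with a randomly chosen $w\in L_1\cup L_2$ satisfies all remaining receivers, which the paper verifies exactly as you plan to, by displaying the submatrices $\boldsymbol{G}_{[l+2]}^{\{i\}\cup B_i}$ and exhibiting an all-ones diagonal, with the other choices of $w$ dispatched by symmetry. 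The only quibble is arithmetic in your sketch: for $i\in L_1\cup L_2$ one has $|A_i|=4l+1$ (not $3l+2$; your own listed components sum to $4l+2$) and hence $|B_i|=l+1$ (not $2l$), but this does not affect the argument since the $L_3$ receivers still uniquely have the minimum side-information size.
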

\begin{proof}
First, without loss of generality, we rearrange the elements of the message vector as follows
\begin{equation}
    \boldsymbol{x}=[\boldsymbol{o}_{l_1}\ |\ \boldsymbol{e}_{l_2}\ |\ \boldsymbol{l}_3\ |\ \boldsymbol{e}_{l_1}\ |\ \boldsymbol{o}_{l_2}],
\end{equation}
where $\boldsymbol{o}_{l_1}$, $\boldsymbol{e}_{l_2}$, $\boldsymbol{l}_3$, $\boldsymbol{e}_{l_1}$, $\boldsymbol{o}_{l_2}$ represent the vectors whose elements are selected, respectively, from the sets $O_{L_1}$, $E_{L_2}$, $L_3$, $E_{L_1}$ and $O_{L_2}$ in an ascending order. Now, note that $|A_i|=4l+2, \ \forall i\in L_1\cup L_2,$ and $|A_{i}|=2, \ \forall i\in L_3$. Therefore, the UMCD coding scheme begins with the receivers indexed by $W=L_3$, having the minimum size of side information. By choosing $w=i+4l+1$, in the $i$-th transmission, $i\in [l+1]$, it can be seen that the binary matrix for the first $l+1$ transmissions will be equal to
\begin{equation} \nonumber
  \boldsymbol{G}_{[l+1]}=
      \left [\begin{array}{c|c|c|c|c}
        \boldsymbol{I}_{l+1}   &  \boldsymbol{I}_{l+1} & \boldsymbol{I}_{l+1} &  \boldsymbol{0}_{(l+1)\times l} & \boldsymbol{0}_{(l+1)\times l}
    \end{array}
    \right ],
\end{equation}
which satisfies all receivers $u_{i}, \forall i\in L_3$. Since the size of the side information of the remaining receivers is equal, one of them is selected randomly. Assume that $w=j$ for some $j\in O_{L_1}$. Then, the binary matrix $\boldsymbol{G}_{[l+2]}$ will be
 \begin{equation}
   \left [ 
      \begin{array}{c|c|c|c|c}
        \boldsymbol{I}_{l+1}   &  \boldsymbol{I}_{l+1} & \boldsymbol{I}_{l+1} &  \boldsymbol{0}_{(l+1)\times l} & \boldsymbol{0}_{(l+1)\times l} \\
       \hline
       \boldsymbol{0}_{l+1}(\frac{j+1}{2}) &  \boldsymbol{1}_{l+1}(\frac{j+1}{2}) & \boldsymbol{1}_{l+1} & \boldsymbol{1}_{l} & \boldsymbol{1}_{l}
    \end{array}
   \right ],
  \end{equation}
where, $\boldsymbol{1}_{l}$ and $\boldsymbol{0}_{l}$, respectively, denote a row vector of size $l$ whose elements are all set to one and zero. Moreover, $\boldsymbol{1}_{l}(j)$ and $\boldsymbol{0}_{l}(j)$, respectively, represent a row vector of size $l$ whose elements are all set to one and zero, except the $j$-th element. 
Now, we show that all the remaining unsatisfied receivers will be able to decode their requested message using $\boldsymbol{G}_{[l+2]}$. It can be verified that binary matrix $\boldsymbol{G}_{[l+2]}^{\{i\}\cup B_i}$ for the remaining receivers is as follows

\begin{equation} \nonumber
   \left\{\begin{array}{lc}

   \left [\begin{array}{c|c}
        \boldsymbol{I}_{l+1}   &  \boldsymbol{0}_{l+1}^{T}(\frac{i+1}{2}) \\
       \hline
       \boldsymbol{0}_{l+1}(\frac{j+1}{2}) &  1
    \end{array}
    \right ], \ \ \ \ \  \ \ \ \forall i\in O_{L_1}\backslash\{j\},
    \\
    \\
   \left [\begin{array}{c|c}
        \boldsymbol{I}_{l+1}   &  \boldsymbol{0}_{l+1}^{T}\\
       \hline
       \boldsymbol{1}_{l+1}(\frac{j+1}{2}) &  1
    \end{array}
    \right ], \ \ \ \ \ \ \ \ \ \ \ \ \ \ \ \forall i\in E_{L_1},
    \\ 
    \\
   \left [\begin{array}{c|c}
        \boldsymbol{0}_{l+1}^{T}(\frac{i-2k}{2})   &  \boldsymbol{I}_{l+1}\\
       \hline
       0 &  \boldsymbol{1}_{l+1}(\frac{j+1}{2})
    \end{array}
    \right ], \ \ \forall i\in E_{L_2}\backslash\{j+2k+1\}, \nonumber 
    \\
    \\
    \left [\begin{array}{c|c}
        \boldsymbol{0}_{l+1}^{T}(\frac{i-2k}{2})   &  \boldsymbol{I}_{l+1}\\
       \hline
       1 &  \boldsymbol{1}_{l+1}(\frac{j+1}{2})
    \end{array}
    \right ], \ \ \  i=j+2k+1\in E_{L_2}, \nonumber 
    \\ 
    \\
  \left [\begin{array}{c|c}
        \boldsymbol{I}_{l+1}   &  \boldsymbol{0}_{l+1}^{T} \\
       \hline
       \boldsymbol{0}_{l+1}(\frac{j+1}{2}) &  1
    \end{array}
    \right ], \ \ \ \ \ \ \ \ \  \ \ \ \ \ \ \forall i\in O_{L_2}. \nonumber
    \end{array}
    \right. 
\end{equation}
Now, it can be easily observed that for each $\boldsymbol{G}_{[l+2]}^{\{i\}\cup B_i}$, the elements of the main diagonal can be all fixed to one. Thus, the decoding condition \eqref{eq:dec-mcm} will hold for all the receivers, which completes the proof. One can check that for the other possible selection of $w$ in transmission $l+2$, the broadcast rate will be the same.
\end{proof}

\begin{prop} \label{prop:B-2}
For the broadcast rate of the ICC coding scheme for the class-$\mathcal{I}_{7}$ index coding instances, we have $\beta_{\text{ICC}}(\mathcal{I}_{7}(l))\geq \frac{3(l+1)}{2}$.

\end{prop}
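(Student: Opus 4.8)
The proof will lower-bound the value of the fractional covering linear program defining $\beta_{\text{ICC}}(\mathcal{I}_{7}(l))$, namely $\min \sum_{j}\gamma_{j}\beta_{\text{ICC}}(M_{j})$ over all ICC-structured subgraphs $M_{j}$ (with inner set $J_{j}$ and $\beta_{\text{ICC}}(M_{j})=|M_{j}|-|J_{j}|+1$), subject to $\sum_{j\in P_{i}}\gamma_{j}\ge 1$ for all $i$ and $0\le\gamma_{j}\le 1$. Dropping the upper bounds $\gamma_{j}\le 1$ only enlarges the primal feasible region, so it suffices to lower-bound the relaxed program; by weak LP duality, for every vertex weighting $z=(z_{i})_{i\in[5l+3]}$ with $z_{i}\ge 0$ and $\sum_{i\in M}z_{i}\le\beta_{\text{ICC}}(M)$ for every ICC-structured $M$, one has $\beta_{\text{ICC}}(\mathcal{I}_{7}(l))\ge\sum_{i}z_{i}$. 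The plan is therefore to exhibit one such weighting attaining $\sum_{i}z_{i}=\tfrac{3(l+1)}{2}$.

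Set $S:=O_{L_{1}}\cup E_{L_{2}}\cup L_{3}$ (the three parts of size $l+1$) and take $z_{i}=\tfrac12$ for $i\in S$ and $z_{i}=0$ otherwise, so that $\sum_{i}z_{i}=\tfrac12|S|=\tfrac{3(l+1)}{2}$. Feasibility then becomes the structural claim
\[
|M\cap S|\ \le\ 2\bigl(|M|-|J|+1\bigr)\qquad\text{for every ICC-structured }M\text{ with inner set }J .
\]
Writing $|M|=|M\cap S|+|M\setminus S|$, $|J|=|J\cap S|+|J\setminus S|$ and using $|J\setminus S|\le|M\setminus S|$, this reduces to showing $|J\cap S|\le\tfrac12|M\cap S|+1$, i.e.\ at most (essentially) half of the $S$-vertices of $M$ can be inner vertices. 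This is where the construction of $\mathcal{I}_{7}(l)$ enters: the subgraph induced on $S$ is tripartite, each of $O_{L_{1}}$, $E_{L_{2}}$, $L_{3}$ being an independent set, and the cross-edges obey a ``partner-exception'' pattern (a vertex of $O_{L_{1}}$ is bidirectionally joined to every vertex of $E_{L_{2}}$ except its partner and forward-joined to every vertex of $L_{3}$, and symmetrically). Hence any two inner vertices lying in the same part are joined by a $J$-path of length $\ge 2$ that must consume a non-inner vertex, while the no-$J$-cycle and unique-$J$-path conditions forbid clusters of mutually adjacent inner $S$-vertices; in particular a clique inside $S$ has at most two vertices, and a directed $k$-cycle inside $S$ has $\beta_{\text{ICC}}=k-1\ge k/2$. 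A case analysis on the part-memberships of $J\cap S$, tracking which non-inner vertices the required $J$-paths are forced to traverse, yields $|J\cap S|\le\tfrac12|M\cap S|+1$ in all cases.

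Combining, every dual feasibility constraint holds, so $\beta_{\text{ICC}}(\mathcal{I}_{7}(l))\ge\sum_{i}z_{i}=\tfrac{3(l+1)}{2}$. I expect the structural lemma to be the main obstacle: since $\mathcal{G}_{\mathcal{I}_{7}(l)}$ is dense --- it contains the clique $E_{L_{1}}\cup O_{L_{2}}$ of size $2l$ and even long directed cycles inside $S$ --- ruling out ICC subgraphs that are ``too efficient'' on $S$ cannot rely on any crude independence argument and must exploit the $J$-path uniqueness and $J$-cycle conditions rather precisely. A concrete alternative, in the spirit of the proof of Lemma~\ref{lem1:prop4}, is to first characterize all minimal ICC-structured subsets of $\mathcal{I}_{7}(l)$, write the covering LP explicitly over them, restrict to the equality form of the covering constraints via Proposition~\ref{prop: FPCC-modification}, and derive the bound by an appropriate weighted sum of those constraints --- which is exactly the dual certificate above rephrased in primal language.
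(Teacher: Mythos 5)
Your LP-duality framing is itself sound: the ICC rate is the optimum of the covering program $\min\sum_j\gamma_j\beta_{\text{ICC}}(M_j)$ over ICC-structured subgraphs, weak duality (after dropping $\gamma_j\le 1$, which only helps) shows that any weights $z_i\ge 0$ with $\sum_{i\in M}z_i\le\beta_{\text{ICC}}(M)$ for every ICC-structured $M$ certify a lower bound, and your choice $z_i=\tfrac12$ on $S=O_{L_1}\cup E_{L_2}\cup L_3$ totals $\tfrac{3(l+1)}{2}$; your reduction of feasibility to $|J\cap S|\le\tfrac12|M\cap S|+1$ is also correct. This packaging differs from the paper, which instead proves the bound $\tfrac{3(l+1)}{2}$ for the FPCC program by an explicit manipulation of the covering constraints over minimal partial cliques (Lemma \ref{lem1:prop7}) and separately shows that the ICC scheme cannot do better than FPCC on this class (Lemma \ref{lem2:prop7}).

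The genuine gap is that the one claim carrying all the instance-specific content --- the structural inequality $|J\cap S|\le\tfrac12|M\cap S|+1$ for every ICC-structured $M$ --- is only asserted via an unexecuted ``case analysis,'' and you yourself flag it as the main obstacle. As sketched it would not close: the observation that two same-part inner vertices need a connecting path through a non-inner vertex buys nothing when that detour runs through $E_{L_1}\cup O_{L_2}$, i.e.\ outside $S$, because those vertices were already discarded in your step $|J\setminus S|\le|M\setminus S|$; for example a hypothetical inner set of three vertices of $O_{L_1}$ linked through non-inner vertices of $E_{L_1}$ would violate your inequality, so it must be excluded, and excluding it requires the $J$-cycle and unique-$J$-path conditions together with the fact that every vertex of $O_{L_1}\cup E_{L_2}$ is bidirectionally joined to all of its out-neighbors and every vertex of $L_3$ to all of its in-neighbors. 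That is precisely what the paper's Lemma \ref{lem2:prop7} establishes: for such clique-outgoing (resp.\ clique-incoming) inner vertices, the no-$J$-cycle condition forces their neighbors in $M$ to be inner, and the inner-set condition then forces every other inner vertex to be such a neighbor, so $J$ must be a pairwise clique; since $O_{L_1}$, $E_{L_2}$, $L_3$ are independent sets and the two partners of any $L_3$-vertex are mutual non-neighbors, a clique meets $S$ in at most two vertices, from which your inequality (and hence your certificate's feasibility) follows immediately. Without this, or an equivalent characterization of admissible inner vertex sets of $\mathcal{I}_7(l)$, the proof is incomplete.
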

The proof of Proposition \ref{prop:B-2} can be directly concluded from the following Lemmas \ref{lem1:prop7} and \ref{lem2:prop7}.  In Lemma \ref{lem1:prop7}, we prove that $\beta_{\text{FPCC}}(\mathcal{I}_{7}(l))\geq\frac{3(l+1)}{2}$. Then, in Lemma \ref{lem2:prop7}, it will be shown that the ICC scheme cannot outperform the FPCC scheme for the class-$\mathcal{I}_{7}$ index coding instances.

\begin{lem} \label{lem1:prop7}
For the class-$\mathcal{I}_{7}$ index coding instance, we have $\beta_{\text{FPCC}}(\mathcal{I}_{7}(l))\geq\frac{3(l+1)}{2}$.
\end{lem}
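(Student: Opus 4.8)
The plan is to prove the bound by weak linear‑programming duality applied to the covering LP that defines $\beta_{\text{FPCC}}$. First I would invoke Proposition \ref{prop:minimal} to restrict the subsets in \eqref{eq:FPCC-opt} to minimal partial clique sets, and, since for a \emph{lower} bound the feasible region may be freely enlarged, drop the upper bounds $\gamma_j\le 1$, leaving the pure covering LP $\min\sum_j \gamma_j\,\beta_{\text{MDS}}(M_j)$ subject to $\sum_{j:\,i\in M_j}\gamma_j\ge 1$ for all $i\in[5l+3]$ and $\gamma_j\ge 0$. It then suffices to exhibit one nonnegative vertex‑weight vector $(z_i)_{i\in[5l+3]}$ with $\sum_{i\in M} z_i\le \beta_{\text{MDS}}(M)$ for every minimal partial clique set $M$ and $\sum_i z_i = \tfrac{3(l+1)}{2}$; indeed, for any feasible $\gamma$, $\sum_j\gamma_j\beta_{\text{MDS}}(M_j)\ge \sum_j\gamma_j\sum_{i\in M_j} z_i=\sum_i z_i\sum_{j:\,i\in M_j}\gamma_j\ge\sum_i z_i$, so $\beta_{\text{FPCC}}(\mathcal{I}_{7}(l))\ge\sum_i z_i$. (Equivalently, this is just taking a suitable nonnegative combination of the covering constraints, exactly as the summed identities \eqref{eq:pr:prop4:2}--\eqref{eq:pr:prop4:result-2} are used in the proof of Lemma \ref{lem1:prop4}; no strong duality is needed.)

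Next I would enumerate the minimal partial clique sets of $\mathcal{I}_{7}(l)$ directly from \eqref{eq:Class-B:side}. Because $|A_i|=4l+1$ for every $i\in L_1\cup L_2$, almost every subset $M$ avoiding $L_3$ is a (near‑)clique with $\beta_{\text{MDS}}(M)=|M|-\min_{i\in M}|M\cap A_i|$ equal to $1$ or $2$, so the distinguished minimal sets are: the $2$‑cliques and $3$‑cliques inside $L_1\cup L_2$ (value $1$); the short partial cycles/triangles through a single $L_3$‑vertex together with its two designated neighbours; and the acyclic sets of the form $L_3\cup\{i\}$. Here I would use the fact that $\beta_{\text{MDS}}$ is monotone nondecreasing under adding a vertex (adding $v$ increases $|M|$ by $1$ and $\min_{i}|M\cap A_i|$ by at most $1$), which lets me restrict the feasibility check to the inclusion‑minimal sets in each family. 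The certificate itself I expect to be fractional, taking values in $\{0,\tfrac12,1\}$: most of the mass on $L_3$ (whose $l+1$ vertices have the tiny side‑information sets $|A_i|=2$ and form the acyclic/MAIS set), balanced by half‑weights on a carefully chosen transversal of the $L_1$/$L_2$ cliques, arranged so the totals add to $\tfrac{3(l+1)}{2}$; for $l=1$ this should specialize to $z_2=z_7=z_8=1$ with all other $z_i=0$, which one checks is dual‑feasible. Feasibility is then verified family by family.

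The main obstacle I anticipate is precisely this combinatorial bookkeeping: correctly identifying \emph{which} subsets of $\mathcal{I}_{7}(l)$ are minimal partial clique sets — the side information is so dense that many plausible candidates collapse to cliques with $\beta_{\text{MDS}}=1$, which severely limits the admissible weights — and then confirming that the single proposed $(z_i)$ simultaneously satisfies $\sum_{i\in M}z_i\le\beta_{\text{MDS}}(M)$ across \emph{all} of them, uniformly in $l$. The delicate checks are exactly those in which a heavy $L_3$‑vertex sits inside a clique or short cycle, so the half‑weight transversal on $L_1$/$L_2$ must be chosen clique‑disjoint from the unit mass to avoid over‑saturating any $2$‑ or $3$‑clique; making that choice work for every $l$ is the crux. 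Finally, note only the lower bound is required here — equality $\beta_{\text{FPCC}}=\tfrac{3(l+1)}{2}$ is not needed — and Lemma \ref{lem2:prop7} will then transfer the bound to $\beta_{\text{ICC}}$, completing Proposition \ref{prop:B-2}.
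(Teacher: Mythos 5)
Your framing via weak LP duality is indeed the same mechanism the paper uses (its proof is a nonnegative combination of the covering constraints, i.e.\ a dual certificate), so the strategy is not the problem. The genuine gap is that you never produce the certificate for general $l$, and the construction you gesture at cannot work once $l\ge 2$. Generalizing your $l=1$ choice $z_2=z_7=z_8=1$ (unit mass on $L_3$ plus half-weights on a transversal) is impossible: each $L_3$-vertex $4l+2+j$ lies in the two $2$-cliques $\{2j-1,\,4l+2+j\}$ and $\{2l+2j,\,4l+2+j\}$, each of MDS rate $1$, so giving it weight $1$ forces weight $0$ on its neighbours in $O_{L_1}$ and $E_{L_2}$; ranging over $j\in[l+1]$ this zeroes out all of $O_{L_1}\cup E_{L_2}$, and the remaining vertices $E_{L_1}\cup O_{L_2}$ form a single clique whose total dual weight is capped at $1$. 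Hence any certificate with unit weights on $L_3$ has value at most $(l+1)+1=l+2<\tfrac{3(l+1)}{2}$ for $l\ge 2$ (more generally, summing the pair constraints shows the total $L_3$ mass can be at most about $\tfrac{l+3}{2}$, so ``most of the mass on $L_3$'' is the wrong intuition). The certificate that works, and which the paper's argument amounts to, is uniform weight $\tfrac12$ on the $3(l+1)$ vertices of $O_{L_1}\cup E_{L_2}\cup L_3$ and $0$ on $E_{L_1}\cup O_{L_2}$; its feasibility is exactly the content you leave unproved, namely the per-set computation from \eqref{eq:Class-B:side} showing that for every $M\subseteq L_1\cup L_2$ the rate $\beta_{\text{MDS}}(M)$ dominates $\max\{|M\cap O_{L_1}|,|M\cap E_{L_2}|\}$ (this is the bound extracted around \eqref{eq:pr:class-B-1}), plus the identification of the partial cliques meeting $L_3$; the paper then sums the constraints \eqref{eq:class-B-FPCC-con-1}--\eqref{eq:class-B-FPCC-con-4} with weight $\tfrac12$ to get $\tfrac{3}{2}(l+1)$.

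Two subsidiary steps are also unsound. First, monotonicity of $\beta_{\text{MDS}}$ under adding a vertex does not let you check dual feasibility only on inclusion-minimal sets: $z(M)$ grows as well, so the constraint for a superset is not implied by the constraint for a subset; Proposition \ref{prop:minimal} only permits restricting \eqref{eq:FPCC-opt} to sets whose MDS rate does not split additively, and one still needs a bound on $\beta_{\text{MDS}}(M)$ valid for all $M\subseteq L_1\cup L_2$, which is the heart of the paper's proof. Second, the structural claim that almost every subset avoiding $L_3$ is a (near-)clique of rate $1$ or $2$ is false: $O_{L_1}$ and $E_{L_2}$ are independent sets, and the rate of $M\subseteq L_1\cup L_2$ grows with $\max\{|M\cap O_{L_1}|,|M\cap E_{L_2}|\}$ --- precisely the growth that makes the half-weight certificate feasible. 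Until the per-set rate bound is derived and the certificate corrected to the uniform $\tfrac12$ weighting, the proposal is a plan rather than a proof.
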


\begin{proof}
First, we note that any subset $M_j\subseteq L_1\cup L_2, j\in[n]$ can be expressed as follows
\begin{equation} \label{eq:class-B:M_j}
    M_j=O_{L_1,j}\cup E_{L_1,j} \cup E_{L_2,j} \cup O_{L_2,j},
\end{equation}
where, $O_{L_i,j}\subseteq O_{L_i}$ and $E_{L_i,j}\subseteq E_{L_i}$ for $i=1,2$ and $\forall j\in \mathcal{P}([5l+3])$. Now, based on \eqref{eq:class-B:M_j} and \eqref{eq:Class-B:side}, it can be seen that the local side information $M_j\cap A_i$ will be equal to
\begin{equation} \nonumber
   \left\{\begin{array}{lc}
   
       E_{L_1,j}\cup O_{L_2,j} \cup E_{L_2,j}\backslash \{i+(2l+1)\},  \ \ \ \forall i\in O_{L_1,j}, \nonumber
       \\
       \\
       E_{L_1,j}\backslash \{i\} \cup O_{L_2,j} \cup O_{L_1,j},  \ \ \ \ \ \ \ \ \ \ \ \ \ \ \ \ \forall i\in E_{L_1,j}, \nonumber
       \\
       \\
       E_{L_1,j} \cup O_{L_2,j} \cup O_{L_1,j}\backslash \{i-(2l+1)\},  \ \ \ \forall i\in E_{L_2,j}, \nonumber
       \\
       \\
       E_{L_1,j} \cup O_{L_2,j}\backslash \{i\} \cup E_{L_2,j},  \ \ \ \ \ \ \ \ \ \ \ \ \ \ \ \ \  \forall i\in O_{L_2,j}. \nonumber
    \end{array}
    \right.
\end{equation}
So, $|M_j\cap A_i|$ will be equal to
\begin{align} \nonumber
       &e_{L_1,j}+ o_{L_2,j}+ e_{L_2,j}-|E_{L_2,j}\cap \{i+(2l+1)\}|, \forall i\in O_{L_1,j}, 
       \nonumber
       \\
       &
       e_{L_1,j}+ o_{L_2,j}+ o_{L_1,j}-|E_{L_1,j}\cap \{i\}|,  \ \ \ \ \ \ \ \ \ \ \ \ \forall i\in E_{L_1,j}, 
       \nonumber
       \\
       &
       e_{L_1,j}+ o_{L_2,j}+ o_{L_1,j}- |O_{L_1,j}\cap \{i-(2l+1)\}|,\forall i\in E_{L_2,j}, 
       \nonumber
       \\
       &
       e_{L_1,j}+ o_{L_2,j}+ e_{L_2,j}-|O_{L_2,j}\backslash \{i\}|,\ \ \ \ \ \ \ \ \ \ \ \ \ \  \forall i\in O_{L_2,j}, \nonumber
\end{align}
   
       
where $e_{L_i,j}=|e_{L_i,j}|$, $o_{L_i,j}=|O_{L_i,j}|$ for $i=1,2$.
Now, for the minimum size of the local side information, we have
\begin{align}
    \min_{i\in M_j} &|M_j\cap A_i|= e_{L_1,j}+o_{L_2,j}+ \min \Big \{ e_{L_2,j}
    \nonumber
    \\
    &
    -|E_{L_2,j}\cap \{i+(2l+1)\}|, \ o_{L_1,j}-1, 
    \nonumber
    \\
    &
    \ o_{L_1,j}-|O_{L_1,j}\cap \{i-(2l+1)\}|, \ e_{L_2,j}-1\Big \} 
    \nonumber
    \\
    =
    &\ e_{L_1,j}+o_{L_2,j}-1+\min \Big \{o_{L_1,j}, e_{L_2,j}\Big \}. \label{eq:pr:class-B-1}
\end{align}
Now, for the broadcast rate of the MDS code \eqref{eq:MDS-local} for each subset $M_j$ in \eqref{eq:class-B:M_j}, we have
\begin{align}
    \beta_{\text{MDS}}(M_j)&=|M_j|- \min_{i\in M_j} |M_j\cap A_i| \nonumber
    \\
    & = e_{L_2,j}+o_{L_1,j}+1-\min \Big \{o_{L_1,j}, e_{L_2,j}\Big \} \nonumber
    \\
    & \geq e_{L_2,j}+o_{L_1,j}+1-\frac{e_{L_2,j}+o_{L_1,j}}{2} \nonumber
    \\
    &= \frac{e_{L_2,j}+o_{L_1,j}}{2}+1.
\end{align}
It can also be verified that all the minimal partial clique sets which include receivers indexed by $L_3$, are characterized as follows
\begin{equation}
 \left\{\begin{array}{lc}
        M_{j}^{\prime}=\{2j-1, 4l+2+j\},\ \ \ \ \ \ \forall j\in[l+1], \nonumber
       \\
       \\
        M_{j}^{\prime \prime}=\{2l+2j, 4l+2+j\},\ \  \ \ \forall j\in[l+1], \nonumber
       \\
       \\
        M_{j}^{\prime \prime \prime}=\{4l+2+j\}, \ \ \ \ \ \ \ \ \ \ \ \ \ \ \forall j\in[l+1],\nonumber
    \end{array}
    \right.    
\end{equation}
where their MDS broadcast rate is as below
\begin{equation} \label{eq:pr:class-B-rate-Mprime}
    \beta_{\text{MDS}}(M_{j}^{\prime})=\beta_{\text{MDS}}(M_{j}^{\prime\prime})=\beta_{\text{MDS}}(M_{j}^{\prime\prime\prime})=1, \forall j\in[l+1].
\end{equation}
Now, having \eqref{eq:pr:class-B-1} and  \eqref{eq:pr:class-B-rate-Mprime}, the broadcast rate of the FPCC scheme is equal to

\vspace{-4ex}

\begin{align}
    &\beta_{\text{FPCC}}(\mathcal{I}_{7}(l))= \min_{\substack{\gamma_j, M_j, j\in[n] \\ M_{j}^{\prime}, M_{j}^{\prime\prime}, M_{j}^{\prime\prime\prime}, \\ \gamma_{j}^{\prime}, \gamma_{j}^{\prime\prime}, \gamma_{j}^{\prime\prime\prime}, j\in [l+1]}}\Bigg [ \sum_{j\in[n]} \gamma_{j}\beta_{\text{MDS}}(M_j) 
    \nonumber
    \\
    &
    + 
    \sum_{j\in[l+1]} \Big ( \gamma_{j}^{\prime}\beta_{\text{MDS}}(M_j^{\prime})
    + \gamma_{j}^{\prime\prime}\beta_{\text{MDS}}(M_j^{\prime\prime}) 
    + \gamma_{j}^{\prime\prime\prime}\beta_{\text{MDS}}(M_j^{\prime\prime\prime}) \Big )  \Bigg ]
    \nonumber
    \\
    &\geq \min_{\substack{\gamma_j, e_{L_2,j}, o_{L_1,j}\\ j\in[n], \\ \gamma_{j}^{\prime}, \gamma_{j}^{\prime\prime}, \gamma_{j}^{\prime\prime\prime}\\ j\in[l+1]}} \Bigg [ \frac{1}{2} \sum_{j\in[n]} \gamma_{j} \Big ( e_{L_2,j}+o_{L_1,j} +1\Big )  
    \nonumber
    \\
    &
    +\sum_{j\in[l+1]} \Big ( \gamma_{j}^{\prime} + \gamma_{j}^{\prime\prime}  +\gamma_{j}^{\prime\prime\prime}  \Big )\Bigg ], \label{eq:pr:class-B-rate-1}
\end{align}
subject to the following constraints
\vspace{-1ex}
\begin{align} 
       \Big ( \sum\limits_{j\in P_i} \gamma_{j}   \Big ) + \gamma_{\frac{i+1}{2}}^{\prime}&=1,  \ \ \ \ \ \ \ \ \ \ \ \ \forall i\in O_{L_1}, \label{eq:class-B-FPCC-con-1}
       \\
       \Big ( \sum\limits_{j\in P_i} \gamma_{j}   \Big ) + \gamma_{\frac{i-2l}{2}}^{\prime\prime}&=1,  \ \ \ \ \ \ \ \ \ \ \ \ \forall i\in E_{L_2}, \label{eq:class-B-FPCC-con-2}
       \\
       \Big ( \sum\limits_{j\in P_i} \gamma_{j}   \Big )&=1,  \ \ \ \ \ \ \ \ \ \ \ \ \forall i\in E_{L_1}\cup O_{L_2}, \label{eq:class-B-FPCC-con-3}
       \\
       \gamma_{i}^{\prime} + \gamma_{i}^{\prime\prime} + \gamma_{i}^{\prime\prime\prime}&=1, \ \ \ \ \ \ \ \ \ \ \ \ \ \forall i\in L_3. \label{eq:class-B-FPCC-con-4}
\end{align}
Now, we add all the constraints in \eqref{eq:class-B-FPCC-con-1} and \eqref{eq:class-B-FPCC-con-2} with each other, which results in

\begin{align}
    2&(l+1)=
    \sum_{i\in {O_{L_1}}} \sum_{j\in P_i} \gamma_{j} + \sum_{i\in O_{L_1}} \gamma_{\frac{i+1}{2}}^{\prime} 
    \nonumber
    \\
    &
    + \sum_{i\in {E_{L_2}}} \sum_{j\in P_i} \gamma_{j}  +  \sum_{i\in E_{L_2}} \gamma_{\frac{i-2l}{2}}^{\prime \prime}  
    \nonumber
    \\
    &
    = \sum_{j\in[n]} \gamma_{j} o_{L_1,j} +  \sum_{i\in[l+1]} \gamma_{j}^{\prime} +
    \sum_{j\in[n]} \gamma_{j} e_{L_2,j}
    +\sum_{i\in[l+1]} \gamma_{j}^{\prime \prime} \nonumber
    \\
    &
    = \sum_{j\in[n]} \gamma_{j}\Big ( e_{L_2,j}+o_{L_1,j} \Big ) +  \sum_{i\in[l+1]} \Big ( \gamma_{j}^{\prime} + \gamma_{j}^{\prime \prime} \Big ). \nonumber
\end{align}
So, we have
\begin{equation}
    \frac{1}{2}\sum_{j\in[n]} \gamma_{j}\Big ( e_{L_2,j}+o_{L_1,j} \Big ) = (l+1) - \frac{1}{2} \sum_{i\in[l+1]} \Big ( \gamma_{j}^{\prime} + \gamma_{j}^{\prime \prime} \Big ). \label{eq:pr:class-B-sum}
\end{equation}
Now, using \eqref{eq:pr:class-B-sum}, for the broadcast rate of the FPCC scheme in \eqref{eq:pr:class-B-rate-1}, we have
\begin{align}
    \beta_{\text{FPCC}}&(\mathcal{I}_{7}(l))\geq (l+1) + \min_{\substack{\gamma_{j}^{\prime}, \gamma_{j}^{\prime\prime}, \gamma_{j}^{\prime\prime\prime}\\ j\in[n]}} \Bigg [\sum_{j\in[n]} \gamma_{j} 
    \nonumber
    \\
    &+ \frac{1}{2} \sum_{i\in[l+1]} \Big ( \gamma_{j}^{\prime} + \gamma_{j}^{\prime \prime}\Big )+ \sum_{i\in[l+1]} \gamma_{j}^{\prime\prime\prime} \Bigg ]  
    \nonumber
    \\
    &
    \geq (l+1)+ \frac{1}{2}  \min_{\substack{\gamma_{j}^{\prime}, \gamma_{j}^{\prime\prime}, \gamma_{j}^{\prime\prime\prime}\\ j\in[n]}} \sum_{i\in[l+1]} \Big ( \gamma_{j}^{\prime} + \gamma_{j}^{\prime \prime}+ \gamma_{j}^{\prime\prime\prime}\Big )  \nonumber
    \\
    &=(l+1)+\frac{1}{2}(l+1)
    \label{eq:pr:class-B-last}
    \\
    &=\frac{3}{2}(l+1), \nonumber
\end{align}
where \eqref{eq:pr:class-B-last} is due to \eqref{eq:class-B-FPCC-con-4}. This completes the proof.
\end{proof}

Now, we provide the following definitions, which will be used in the proof of Lemma \ref{lem2:prop7}. 

\begin{defn}[$V_{i}^{\text{out}}, V_{i}^{\text{in}}$: Outgoing and Incoming Neighbor Set of $i$]
In an arbitrary graph $\mathcal{G}_{\mathcal{I}}=([m], E)$, for any vertex $i\in [m]$, we define an outgoing neighbor set as $V_{i}^{\text{out}}=\{j\in [m]\backslash \{i\}, (i,j)\in E\}$ and an incoming neighbor set as $V_{i}^{\text{in}}=\{j\in [m]\backslash \{i\}, (j,i)\in E\}$.
\end{defn}
For any vertex $i$, its neighbor set can be partitioned into three subsets $V_{i,1}=V_{i}^{\text{out}}\cap V_{i}^{\text{in}}$, $V_{i,2}=V_{i}^{\text{out}}\backslash V_{i,1}$, and $V_{i,3}=V_{i}^{\text{in}}\backslash V_{i,1}$. It is obvious that vertex $i$ forms a pairwise clique with the vertices in $V_{i,1}$.

\begin{defn}[Clique-outgoing Vertex]
Vertex $i\in [m]$ is said to be a clique-outgoing vertex, if $V_{i,2}=\emptyset$. 
\end{defn}

\begin{defn}[Clique-incoming Vertex]
Vertex $i\in [m]$ is said to be a clique-incoming vertex, if $V_{i,3}=\emptyset$.
\end{defn}

\begin{lem} \label{lem2:prop7}
For the class-$\mathcal{I}_{7}$ index coding instances, we have $\beta_{\text{ICC}}(\mathcal{I}_{7}(l))=\beta_{\text{FPCC}}(\mathcal{I}_{7}(l))$.
\end{lem}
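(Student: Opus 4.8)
The proof splits into the two inequalities $\beta_{\text{ICC}}(\mathcal{I}_{7}(l))\le\beta_{\text{FPCC}}(\mathcal{I}_{7}(l))$ and $\beta_{\text{ICC}}(\mathcal{I}_{7}(l))\ge\beta_{\text{FPCC}}(\mathcal{I}_{7}(l))$. The first inequality is already settled: the remark following Proposition \ref{prop:minimal} establishes Conjecture \ref{conj:1}, so $\beta_{\text{ICC}}(\mathcal{I})\le\beta_{\text{FPCC}}(\mathcal{I})$ holds for every instance $\mathcal{I}$. For the reverse inequality I will use that $\beta_{\text{ICC}}(\mathcal{I}_{7}(l))$ is the value of a linear program with the same constraints \eqref{eq:FPCC:cond1}--\eqref{eq:FPCC:cond2} as the FPCC program, but with objective $\sum_{j}\gamma_{j}\,\beta_{\text{ICC}}(M_{j})$, where $\beta_{\text{ICC}}(M_{j})=|M_{j}|-|J_{j}|+1$ and the sets $M_{j}$ range over the ICC-structured subgraphs of $\mathcal{G}_{\mathcal{I}_{7}}$ with inner vertex set $J_{j}$. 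Hence it is enough to prove the covering claim $(\star)$: every ICC-structured subgraph $M$ of $\mathcal{G}_{\mathcal{I}_{7}}$ can be partitioned into minimal partial clique sets $N_{1},\dots,N_{p}$ with $\sum_{z\in[p]}\beta_{\text{MDS}}(N_{z})\le|M|-|J|+1$. Granting $(\star)$, replace each $M_{j}$ in an optimal ICC solution by its pieces $N_{j,1},\dots,N_{j,p_{j}}$, each carrying the coefficient $\gamma_{j}$; the covering constraints \eqref{eq:FPCC:cond1} are still met because $\bigcup_{z}N_{j,z}=M_{j}$, and the objective does not increase, so $\beta_{\text{FPCC}}(\mathcal{I}_{7}(l))\le\beta_{\text{ICC}}(\mathcal{I}_{7}(l))$.

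To prove $(\star)$ I will analyse the ICC-structured subgraphs of $\mathcal{G}_{\mathcal{I}_{7}}$ with the help of the clique-outgoing/clique-incoming classification. Reading the neighbourhoods off \eqref{eq:Class-B:side}, each vertex in $O_{L_{1}}\cup E_{L_{2}}$ turns out to be clique-outgoing, each vertex in $L_{3}$ is clique-incoming (its two incoming neighbours are pairwise-clique partners), and the only $2l$ vertices that are neither are those of $E_{L_{1}}\cup O_{L_{2}}$. The key step is then to show that a clique-type inner vertex contributes nothing beyond cliques and cycles: if $i$ lies in the inner vertex set $J$ of an ICC-structured subgraph $M$ and $i$ is clique-outgoing (resp. clique-incoming) in $\mathcal{G}_{\mathcal{I}_{7}}$, then the $J$-path uniqueness and $J$-cycle-free conditions, together with the density of bidirectional edges among $L_{1}\cup L_{2}$, force $i$ to form a clique with the vertices of $M$ it reaches along $J$-paths, so $i$ can be split off as a clique and deleted from $J$ without increasing $|M|-|J|+1$. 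Iterating this reduction collapses $M$ down to a subgraph whose inner vertices lie in $E_{L_{1}}\cup O_{L_{2}}$ and whose intersection with $L_{3}$ consists of edges and triangles that are already minimal partial cliques; on this residual core the $J$-cycle-free condition caps $|J|$ (each bidirectional clique of $\mathcal{G}_{\mathcal{I}_{7}}$ can contain at most one inner vertex), and an explicit decomposition into the pairwise cliques implicit in \eqref{eq:Class-B:side} together with the $L_{3}$-edges realises $(\star)$.

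The main obstacle is $(\star)$, specifically the reduction step for clique-type inner vertices: it must be argued uniformly in $l$ and for all ICC-structured subgraphs at once, not by a finite case check. I plan to carry it out by induction on $|J|$: at each stage select a clique-outgoing or clique-incoming inner vertex $i$ (one exists unless $J\subseteq E_{L_{1}}\cup O_{L_{2}}$), use the explicit neighbour lists in \eqref{eq:Class-B:side} to show that every $J$-path incident to $i$ is forced to be a single edge into $V_{i,1}\cup\{i\}$, peel the resulting clique off, and compare $\beta_{\text{ICC}}$ before and after via $\beta_{\text{ICC}}(M)=|M|-|J|+1$. Once $(\star)$ is in hand, combining it with the already-established $\beta_{\text{ICC}}(\mathcal{I}_{7}(l))\le\beta_{\text{FPCC}}(\mathcal{I}_{7}(l))$ yields the claimed equality.
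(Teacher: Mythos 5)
Your overall reduction---getting $\beta_{\text{ICC}}(\mathcal{I}_{7}(l))\le\beta_{\text{FPCC}}(\mathcal{I}_{7}(l))$ from the settled Conjecture \ref{conj:1} and then converting an optimal ICC time-sharing solution into an FPCC one via the covering claim $(\star)$---is sound in outline, and your classification of the vertices of $\mathcal{G}_{\mathcal{I}_{7}}$ into clique-outgoing ($O_{L_1}\cup E_{L_2}$), clique-incoming ($L_3$) and the remaining $E_{L_1}\cup O_{L_2}$ matches the paper's. But the heart of the matter, claim $(\star)$, is not proved, and the plan you sketch for it does not close. The paper proves the stronger, cleaner fact that for every ICC-structured subgraph $M$ of $\mathcal{G}_{\mathcal{I}_{7}}$ the whole inner vertex set $J$ is a pairwise clique: if $i\in J$ is clique-outgoing, the $J$-cycle condition forces every outgoing neighbour of $i$ inside $M$ to lie in $J$, and since every path leaving $i$ starts with such a neighbour, every other inner vertex must be reached by a $J$-path of length one, i.e., must be a bidirectional neighbour of $i$; symmetrically for clique-incoming vertices, while the vertices of $E_{L_1}\cup O_{L_2}$ already form a clique among themselves and the $J$-cycle condition excludes any of them from $M\setminus J$ once one of them is inner. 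From ``$J$ is a clique'', $(\star)$ is immediate with the pieces $J$ plus singletons, of total cost $1+(|M|-|J|)=|M|-|J|+1$. Your proposal never establishes this fact.

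Moreover, the substitute you offer---peeling off one clique-type inner vertex at a time---fails on its own accounting. If you split off a clique $C\subseteq J$ with $1\le |C|<|J|$ as a separate piece (cost $1$) and the residual keeps the ICC value $(|M|-|C|)-(|J|-|C|)+1=|M|-|J|+1$, the running total is already $|M|-|J|+2$, violating the budget in $(\star)$; and the peeled clique cannot absorb non-inner vertices of $M$ to repair this, precisely because the $J$-cycle condition forces the in-$M$ clique partners of an inner clique-outgoing vertex to be inner. So the peeling respects the budget only if a single peel removes all of $J$ at once, i.e., only if $J$ is one clique---exactly the statement you left unproved. Finally, your description of the residual core is wrong: the parenthetical claim that ``each bidirectional clique of $\mathcal{G}_{\mathcal{I}_{7}}$ can contain at most one inner vertex'' contradicts the fact that a clique is itself an ICC structure with all of its vertices inner, and contradicts the structure here, where several vertices of $E_{L_1}\cup O_{L_2}$ may simultaneously be inner provided none of the others appears in $M\setminus J$. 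The repair is to drop the peeling induction and prove the pairwise-clique property for every pair of inner vertices directly, as above.
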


\begin{proof}
We show that for the class-$\mathcal{I}_{7}$ index coding instances, each ICC-structured subgraph is reduced to a clique, and since the broadcast rate of the ICC and FPCC schemes  for a clique is equal, this will complete the proof.\\
First, it can be verified that (i) the vertices $i\in O_{L_1}\cup E_{L_2}$ are all clique-outgoing vertices and (ii) the vertices $i\in K_3$ are all clique-incoming vertices. Now, suppose that subgraph $M\subseteq[m=5l+3]$ is an ICC-structured subgraph, subset $J\subseteq M$ is its inner vertex set and $i\in J$. Now, we note the followings.
\begin{itemize}[leftmargin=*]
    \item Let vertex $i$ be a clique-outgoing vertex. (i) Since vertex $i$ forms a clique with all of its outgoing neighbors $j_1\in V_{i}^{\text{out}}$, to meet the $J$-cycle condition, we must have $j_1\not\in M\backslash J$. (ii) Since every path from vertex $i$ will include at least one of its outgoing neighbors, then, to meet the inner vertex set definition, none of the vertices in $j_2\in M\backslash V_{i}^{\text{out}}$ can be inside the inner vertex set, i.e., $j_2\not\in J$. This means that if a clique-outgoing vertex is inside the vertex set $i\in J$, then only its outgoing neighbors, which forms a pairwise clique with them, are allowed to be inside the inner vertex set $J$. 
    \item Let vertex $i$ be a clique-incoming vertex. With the same arguments, (i) for all of its incoming neighbors $j_1\in V_{i}^{\text{in}}$, we must have $j_1\not\in M\backslash J$ and (ii) only its incoming neighbors, which forms a pairwise clique with them, are allowed to be inside the inner vertex set.
    \item The vertices inside $E_{L_1}\cup O_{L_2}$ form a clique with each other. Let $i_1, i_2\in E_{L_1}\cup O_{L_2}$. Now, if $i_1\in J$, then to meet the $J$-cycle condition, we must have $i_2\not\in M\backslash J$. Note, if two vertices $i_1\in E_{L_1}\cup O_{L_2}$ and $i_3\in O_{L_1}\cup E_{L_2}\cup K_3$ are inside the inner vertex set, then in the previous cases, we already proved that $i_1$ and $i_3$ must form a clique with each other.
\end{itemize}
Thus, this shows that any two vertices $j_1\in J$ and $j_2\in J$ must form a clique with each other to meet the definition of the ICC-structure. In other words,
we showed that all the vertices inside the inner vertex set must form a pairwise clique with each other. This means that any ICC-structured subgraph will be reduced to a clique. This completes the proof.
\end{proof}

\end{document}